\documentclass[aps,amssymb,showkeys]{revtex4}

\voffset=0.5cm

\usepackage{array}
\usepackage{scalerel}
\usepackage{tabularx}
\usepackage{multirow}
\usepackage{amsmath}
\usepackage{amsthm}
\usepackage{graphicx} 
\usepackage{amssymb}
\usepackage[pdftex, bookmarks, colorlinks=true, plainpages = false, citecolor = red, urlcolor = blue, filecolor = blue]{hyperref}
\usepackage{color}
\usepackage{mathrsfs}
\usepackage[shortlabels]{enumitem}
\setenumerate{listparindent=\parindent}
\usepackage{diagbox}
\usepackage{subeqnarray}

\newcommand{\red}{\textcolor{red}}
\newcommand{\be}{\begin{equation}}
\newcommand{\ee}{\end{equation}}
\newcommand{\bea}{\begin{eqnarray}}
\newcommand{\eea}{\end{eqnarray}}
\newcommand{\bean}{\begin{eqnarray*}}
\newcommand{\eean}{\end{eqnarray*}}

\theoremstyle{plain}
\newtheorem{theorem}{Theorem}

\newtheorem{cor}[theorem]{Corollary}
\newtheorem{lem}[theorem]{Lemma}

\theoremstyle{definition}
\newtheorem{defn}[theorem]{Definition}

\def\clap#1{\hbox to 0pt{\hss#1\hss}}
\def\mathclap{\mathpalette\mathclapinternal}

\def\mathclapinternal#1#2{%
\clap{$\mathsurround=0pt#1{#2}$}}

\allowdisplaybreaks

\begin{document}
\title{A solution space for a system of null-state partial differential equations III}

\date{\today}

\author{Steven M.\ Flores}
\email{steven.flores@helsinki.fi} 
\affiliation{
Department of Mathematics, University of Michigan, Ann Arbor, Michigan, 48109-2136, USA \\ and \\
Department of Mathematics \& Statistics, University of Helsinki, P.O. Box 68, 00014, Finland}

\author{Peter Kleban}
\email{kleban@maine.edu} 
\affiliation{LASST and Department of Physics \& Astronomy, University of Maine, Orono, Maine, 04469-5708, USA}

\begin{abstract}  

This article is the third of four that completely and rigorously characterize a solution space $\mathcal{S}_N$ for a homogeneous system of $2N+3$ linear partial differential equations (PDEs) in $2N$ variables that arises in conformal field theory (CFT) and multiple Schramm-L\"owner evolution (SLE$_\kappa$). The system comprises $2N$ null-state equations and three conformal Ward identities that govern CFT correlation functions of $2N$ one-leg boundary operators.  In the first two articles \cite{florkleb, florkleb2}, we use methods of analysis and linear algebra to prove that $\dim\mathcal{S}_N\leq C_N$, with $C_N$ the $N$th Catalan number.

Extending these results, we prove in this article that $\dim\mathcal{S}_N=C_N$ and $\mathcal{S}_N$ entirely consists of (real-valued) solutions constructed with the CFT Coulomb gas (contour integral) formalism.  In order to prove this claim, we show that a certain set of $C_N$ such solutions is linearly independent.  Because the formulas for these solutions are complicated, we prove linear independence indirectly.  We use the linear injective map of lemma \red{15} in \cite{florkleb} to send each solution of the mentioned set to a vector in $\mathbb{R}^{C_N}$, whose components we find as inner products of elements in a Temperley-Lieb algebra.  We gather these vectors together as columns of a symmetric $C_N\times C_N$ matrix, with the form of a \emph{meander matrix}.  If the determinant of this matrix does not vanish, then the set of $C_N$ Coulomb gas solutions is linearly independent.  And if this determinant does vanish, then we construct an alternative set of $C_N$ Coulomb gas solutions and follow a similar procedure to show that this set is linearly independent.  The latter situation is closely related to CFT minimal models.  We emphasize that, although the system of PDEs arises in CFT in a way that is typically non-rigorous, our treatment of this system here and in \cite{florkleb,florkleb2,florkleb4} is completely rigorous.

\end{abstract}

\keywords{conformal field theory, Schramm-L\"{o}wner evolution, Coulomb gas formalism}
\maketitle

\section{Introduction}\label{intro}

This article follows the analysis begun in \cite{florkleb,florkleb2} and concluded in \cite{florkleb4}.  In this introduction, we state the problem under consideration and summarize the results from \cite{florkleb,florkleb2}.  The introduction \red{I} and appendix \red{A} of \cite{florkleb} explain the origin of this problem in conformal field theory (CFT) \cite{bpz,fms,henkel}, its relation to multiple Schramm-L\"owner evolution (SLE$_\kappa$) \cite{bbk,dub2,graham,kl,sakai}, and its application \cite{dots,gruz,rgbw,bbk,bauber,bpz,c3,c1} to critical lattice models \cite{bax,grim,wu,fk,stan} and random walks \cite{law1,schrsheff,weintru,zcs,madraslade}.

The goal of this article and its predecessors \cite{florkleb,florkleb2} is to completely and rigorously determine a certain solution space $\mathcal{S}_N$ of the system of $2N$ null-state partial differential equations (PDEs) from CFT,
\be\label{nullstate}\Bigg[\frac{\kappa}{4}\partial_j^2+\sum_{k\neq j}^{2N}\left(\frac{\partial_k}{x_k-x_j}-\frac{(6-\kappa)/2\kappa}{(x_k-x_j)^2}\right)\Bigg]F(\boldsymbol{x})=0,\quad j\in\{1,2,\ldots,2N\},\ee
and three conformal Ward identities from CFT,
\be\label{wardid}\sum_{k=1}^{2N}\partial_kF(\boldsymbol{x})=0,\quad \sum_{k=1}^{2N}\left[x_k\partial_k+\frac{(6-\kappa)}{2\kappa}\right]F(\boldsymbol{x})=0,\quad \sum_{k=1}^{2N}\left[x_k^2\partial_k+\frac{(6-\kappa)x_k}{\kappa}\right]F(\boldsymbol{x})=0,\ee
with $\boldsymbol{x}:=(x_1,x_2,\ldots,x_{2N})$.  (In this article, but unlike its predecessors \cite{florkleb,florkleb2}, we refer to the coordinates of $\boldsymbol{x}$ as ``points.")  The main results of this article require $\kappa\in(0,8)$, but at times we need to consider the broader range $\kappa\in(0,8)\times i\mathbb{R}$.  The solution space $\mathcal{S}_N$ of interest comprises all (classical) solutions $F:\Omega_0\rightarrow\mathbb{R}$, where
\be\label{components}\Omega_0:=\{\boldsymbol{x}\in\mathbb{R}^{2N}\,|\,x_1<x_2<\ldots< x_{2N-1}< x_{2N}\},\ee
such that for each $F\in\mathcal{S}_N$, there exist some positive constants $C$ and $p$ (which we may choose to be as large as needed) such that
\be\label{powerlaw} |F(\boldsymbol{x})|\leq C\prod_{i<j}^{2N}|x_j-x_i|^{\mu_{ij}(p)},\quad\text{with}\quad\mu_{ij}(p):=\begin{cases}-p, & |x_j-x_i|<1 \\ +p, & |x_j-x_i|\geq1\end{cases}\quad\text{for all $\boldsymbol{x}\in\Omega_0.$}\ee
(We use this bound to prove many of the results in \cite{florkleb, florkleb2}.)  Restricting our attention to $\kappa\in(0,8)$, our goals are as follows:
\begin{enumerate}
\item\label{item1} Rigorously prove that $\mathcal{S}_N$ is spanned by real-valued Coulomb gas solutions. (See definition \ref{CGsolnsdef} below.)
\item\label{item2} Rigorously prove that $\dim\mathcal{S}_N=C_N$, with $C_N$ the $N$th Catalan number:
\be\label{catalan}C_N=\frac{(2N)!}{N!(N+1)!}.\ee
\item\label{item3} Argue that $\mathcal{S}_N$ has a basis consisting of $C_N$ \emph{connectivity weights} (physical quantities defined in the introduction \red{I} to \cite{florkleb}) and find formulas for all of the connectivity weights.
\end{enumerate}

To begin, we summarize some of the results in \cite{florkleb,florkleb2}.  In those articles, we use certain elements of the dual space $\mathcal{S}_N^*$ to prove that $\dim\mathcal{S}_N\leq C_N$, and in this article, we use these linear functionals again to complete goals \ref{item1}--\ref{item3}.  To construct these linear functionals, we prove in \cite{florkleb} that for all $F\in\mathcal{S}_N$ and all $i\in\{1,2,\ldots,2N-1\}$, the limit
\be\label{lim}\bar{\ell}_1F(x_1,x_2,\ldots,x_i,x_{i+2},\ldots,x_{2N})\,\,\,:=\lim_{x_{i+1}\rightarrow x_i}(x_{i+1}-x_i)^{6/\kappa-1}F(\boldsymbol{x})\ee
exists, is independent of $x_i$, and (after implicitly taking the trivial limit $x_i\rightarrow x_{i-1}$) is an element of $\mathcal{S}_{N-1}$.  (Another type of limit $\underline{\ell}_1$ fixes $x_{2N}=-x_1=R$ and sends $R\rightarrow\infty$ with the same consequences, and we denote either as $\ell_1$.)  Following $\ell_1$, we apply $N-1$ more such limits $\ell_2$, $\ell_3,\ldots,\ell_N$ sequentially to (\ref{lim}), sending $F$ to an element of $\mathcal{S}_0:=\mathbb{R}$.

\begin{figure}[b]
\centering
\includegraphics[scale=0.3]{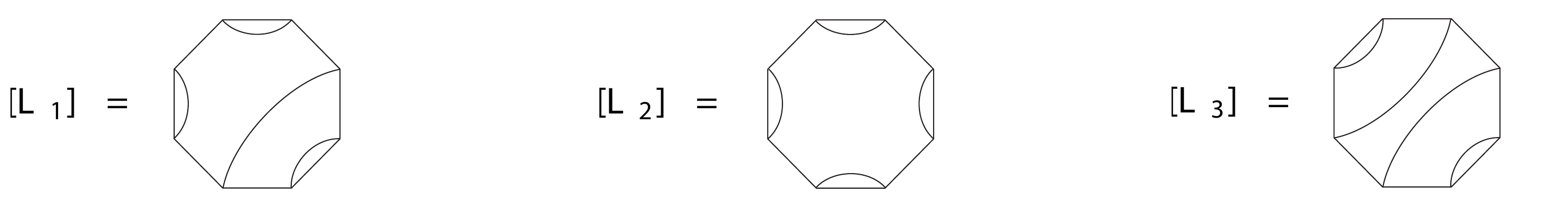}
\caption{Polygon diagrams for three different equivalence classes of allowable sequences of $N=4$ limits.  We find the other $C_4-3=11$ diagrams by rotating one of these three.}
\label{Csls}
\end{figure}

There are many ways that we may order a sequence of these limits, and in \cite{florkleb}, we list the conditions necessary to avoid various inconsistencies such as having the limit $\bar{\ell}_j$ that sends $x_{i_{2j}}\rightarrow x_{i_{2j-1}}$ precede the limit $\bar{\ell}_k$ that sends $x_{i_{2k}}\rightarrow x_{i_{2k-1}}$ if $x_{i_{2j-1}}<x_{i_{2k-1}}<x_{i_{2k}}<x_{i_{2j}}$.  We call the linear functional $\mathscr{L}:\mathcal{S}_N\rightarrow\mathbb{R}$ with $\mathscr{L}:=\ell_{j_N}\ell_{j_{N-1}}\dotsm\ell_{j_2}\ell_{j_1}$ and with the limits ordered to fulfill these conditions an \emph{allowable sequence of limits}.  Because it is linear, an allowable sequence of limits is an element of the dual space $\mathcal{S}_N^*$.

In \cite{florkleb}, we further prove that two allowable sequences $\mathscr{L}$ and $\mathscr{L}'$, which bring together the same pairs of points in different orders, have $\mathscr{L}'F=\mathscr{L}F$ for all $F\in\mathcal{S}_N$.  This fact establishes an equivalence relation among the allowable sequences of limits that partitions them into $C_N$ equivalence classes $[\mathscr{L}_1]$, $[\mathscr{L}_2],\ldots,[\mathscr{L}_{C_N}]$.  We represent the equivalence class $[\mathscr{L}_\varsigma]$ by a unique \emph{interior arc half-plane diagram}, called the \emph{half-plane diagram for $[\mathscr{L}_\varsigma]$}.  Such a diagram consists of $N$ non-intersecting curves, called \emph{interior arcs}, in the upper half-plane, with the endpoints of each interior arc brought together by a limit in every element of $[\mathscr{L}_\varsigma]$.  For convenience, we convert the half-plane diagram for $[\mathscr{L}_\varsigma]$ into an \emph{interior arc polygon diagram}, called the \emph{polygon diagram for $[\mathscr{L}_\varsigma]$} (figure \ref{Csls}), by mapping it continuously onto a regular polygon $\mathcal{P}$, with the points $x_1$, $x_2,\ldots,x_{2N}$ sent to the vertices $w_1$, $w_2,\ldots,w_{2N}$ of $\mathcal{P}$.  We call both types of diagrams \emph{interior arc connectivity diagrams}, and we refer to either of the diagrams representing $[\mathscr{L}_\varsigma]$ simply as the \emph{diagram for $[\mathscr{L}_\varsigma]$}.  We enumerate the equivalence classes $[\mathscr{L}_1]$, $[\mathscr{L}_2],\ldots,[\mathscr{L}_{C_N}]$, let $\mathscr{B}_N^*:=\{[\mathscr{L}_1],[\mathscr{L}_2,]\ldots,[\mathscr{L}_{C_N}]\}\subset\mathcal{S}_N^*$, and define for each $\varsigma\in\{1,2,\ldots,C_N\}$ the \emph{$\varsigma$th connectivity} as the arc connectivity exhibited by the diagram for $[\mathscr{L}_{\varsigma}]$.  The interior arc connectivity diagrams have a natural interpretation as multiple-SLE$_\kappa$ arc connectivities \cite{florkleb}.

We conclude our analysis in \cite{florkleb} by proving that the linear map $v:\mathcal{S}_N\rightarrow\mathbb{R}^{C_N}$ with $v(F)_\varsigma:=[\mathscr{L}_\varsigma]F$ is injective, and therefore $\dim\mathcal{S}_N \leq C_N$.  This proof assumes the following statement, whose justification spans all of \cite{florkleb2}.  If all of $(x_2,x_3)$, $(x_3,x_4),\ldots,(x_{2N-2},x_{2N-1})$, and $(x_{2N-1},x_{2N})$ are two-leg intervals of $F$, where $(x_i,x_{i+1})$ is defined to be a \emph{two-leg interval} of $F\in\mathcal{S}_N$ if the limit (\ref{lim}) vanishes, then $F$ is zero. 

In this article, we complete goals \ref{item1} and \ref{item2} listed above.  In section \ref{CGsolnsSect}, we briefly explain a method for constructing explicit (real-valued) formulas for elements of $\mathcal{S}_N$, called \emph{Coulomb gas (contour integral) solutions}, originally proposed in \cite{df1, df2}.  Then in section \ref{MeanderMatrix}, we use the map $v$ mentioned above to show that a particular set of $C_N$ Coulomb gas solutions is linearly independent, thus achieving goals \ref{item1} and \ref{item2}.  We state this result as theorem \ref{maintheorem} in section \ref{MeanderMatrix}.  In this section, our proof establishes an interesting connection between the system (\ref{nullstate}, \ref{wardid}), the Temperley-Lieb algebra, and the meander matrix \cite{fgg,fgut,difranc,franc}.  Appendix \ref{asymp} presents most of the calculations required for this proof.

In the last article \cite{florkleb4} of this series, we prove some theorems and corollaries concerning the system (\ref{nullstate}, \ref{wardid}) that follow from these results and that relate to CFT and multiple SLE$_\kappa$.  In particular, we prove that any solution equals a sum of at most two Frobenius series in powers of the distance between two neighboring points, except for certain special $\kappa$ values, where a logarithmic term is possible.  This establishes part of the operator product expansion (OPE) of two one-leg boundary operators, generally assumed to exist in CFT.  Addressing item \ref{item3} above, we also discuss connectivity weights, which are proportional to the probability that the curves of a multiple-SLE$_\kappa$ process join in a particular arc connectivity of the $C_N$ that are available.  Finally, we point out and propose a reason for the connection between certain {\it exceptional speeds} (particular $\kappa$ values, see definition \ref{exceptionalspeed} below) and the minimal models of CFT.  We mention again for emphasis that, although the system (\ref{nullstate}, \ref{wardid}) arises in CFT in a way that is typically non-rigorous, our treatment of this system here and in \cite{florkleb,florkleb2,florkleb4} is completely rigorous.

In a future article \cite{fkz}, we combine the formulas for the connectivity weights, found in \cite{florkleb4,fsk}, with a physical interpretation of the elements of the so-called ``Temperley-Lieb set" $\mathcal{B}_N\subset\mathcal{S}_N$ (definition \ref{Fkdefn}) to derive formulas for continuum-limits of cluster crossing probabilities for critical lattice models (such as percolation, Potts models, and random cluster models) in a polygon with a free/fixed side-alternating boundary condition (FFBC).  We verify our predictions with high-precision computer simulations of the $Q\in\{2,3\}$ critical random cluster model in a hexagon, finding excellent agreement.

\section{The Coulomb gas solutions}\label{CGsolnsSect}

Remarkably, one may construct many exact solutions of the system (\ref{nullstate}, \ref{wardid}) via the Coulomb gas (contour integral) formalism, introduced by V.S.\ Dotsenko and V.A.\ Fateev \cite{df1,df2}.  This approach centers on using a perturbed free boson, or Gaussian free field \cite{rgbw}, and N.\ Kang and N.\ Makarov have given a rigorous account for how one may do this  \cite{kangmak}.  To motivate the approach, we first realize each element of $\mathcal{S}_N$ as a CFT $2N$-point correlation function, 
\be\label{2Npoint}\langle\psi_1(x_1)\psi_1(x_2)\dotsm\psi_1(x_{2N})\rangle,\ee
where $\psi_1$ is a one-leg boundary operator, or a $(1,2)$ (resp.\ $(2,1)$) Kac operator in the dense, or $\kappa>4$, (resp.\ dilute, or $\kappa\leq4$) phase of SLE$_\kappa$, in a CFT with central charge 
\be\label{central}c=(6-\kappa)(3\kappa-8)/2\kappa,\quad\kappa>0,\ee
as discussed in the introduction \red{I} of the preceding article \cite{florkleb}.  In CFT, an $(r,s)$ Kac operator is a primary operator with conformal weight \cite{bpz, fms, henkel}
\be\label{Kacweight}h_{r,s}(\kappa)=\frac{1-c(\kappa)}{96}\Bigg[\Bigg(r+s+(r-s)\sqrt{\frac{25-c(\kappa)}{1-c(\kappa)}}\,\Bigg)^2-4\Bigg]=\frac{1}{16\kappa}\begin{cases}(\kappa r-4s)^2-(\kappa-4)^2&\kappa>4\\(\kappa s-4r)^2-(\kappa-4)^2&\kappa\leq4\end{cases}.\ee
(We note that this formula, and all others that we encounter below, are continuous at the phase transition $\kappa=4$.)

Next, we use the Coulomb gas formalism to write explicit formulas for this $2N$-point function (\ref{2Npoint}).  In this approach, we realize a primary operator with conformal weight $h$ as a chiral operator $V_\alpha(x)$ with the same conformal weight.  This chiral operator is the (normal ordered) exponential of $-i\alpha\sqrt{2}\varphi(x)$, with $\varphi(x)$ the holomorphic part of the free boson \cite{df1}, and with the \emph{charge} $\alpha=\alpha(h)$ given by
\be\label{alphapm}\alpha^\pm(h)=\alpha_0\pm\sqrt{\alpha_0^2+h},\quad\alpha_0:=\sqrt{\frac{1-c(\kappa)}{24}}=\frac{1}{2}\left(\frac{\sqrt{\kappa}}{2}-\frac{2}{\sqrt{\kappa}}\right)\times\begin{cases}+1,&\kappa>4 \\ -1, & \kappa\leq4\end{cases}.\ee
We say that the charge $\alpha^\mp(h)$ is \emph{conjugate} to the charge $\alpha^\pm(h)$, and we call the quantity $\alpha_0$ the \emph{background charge} because Coulomb gas calculations implicitly assume the presence of a chiral operator with charge $-2\alpha_0$ at infinity.  In this formalism, we realize an $(r,s)$ Kac operator as the chiral operator $V_{r,s}^\pm:=V_{\alpha_{r,s}^\pm}$ with the \emph{Kac charge}
\be\label{alphars}\alpha_{r,s}^\pm=\alpha^\pm(h_{r,s})=\alpha_0\pm\sqrt{\alpha_0^2+h_{r,s}}=\frac{1}{4\sqrt{\kappa}}\times\begin{cases}\kappa-4\pm|r\kappa-4s|,&\kappa>4 \\ 4-\kappa\pm|s\kappa-4r|, &\kappa\leq 4\end{cases}.\ee
In addition to these charges, the two \emph{screening charges} $\alpha^\pm$ are useful.  By definition, a screening charge is either one of the two possible charges that a chiral operator with conformal weight one may have.  According to (\ref{alphapm}), these are
\be\label{screeningcharges}\alpha^\pm:=\alpha_0\pm\sqrt{\alpha_0^2+1}=\pm\begin{cases}(\sqrt{\kappa}/2)^{\pm1}, & \kappa>4 \\ (\sqrt{\kappa}/2)^{\mp1}, & \kappa\leq4\end{cases}.\ee
One reason that screening charges are useful is that any Kac charge may be written as a sum of half-integer multiples of either or both of them:
\be\label{kaccharge} \alpha_{r,s}^\pm=\frac{(1+r)}{2}\alpha^++\frac{(1+s)}{2}\alpha^-\quad\text{or}\quad\frac{(1-r)}{2}\alpha^++\frac{(1-s)}{2}\alpha^-.\ee
For example, in the dense and dilute phases respectively, the charges $\alpha_{1,s}^\pm$ and $\alpha_{r,1}^\pm$ (\ref{alphars}), respectively corresponding to the conformal weights $h_{s,1}$ and $h_{r,1}$, may be written as half-integer multiples of the screening charges thus:
\be\label{densedilute}\kappa>4:\quad\left\{\begin{array}{l}\alpha_{1,s}^+=\dfrac{(1-s)}{2}\alpha^- \\ \alpha_{1,s}^-=\alpha^++\dfrac{(1+s)}{2}\alpha^-\end{array}\right.,\qquad\kappa\leq4:\quad\left\{\begin{array}{l}\alpha_{r,1}^+=\dfrac{(1+r)}{2}\alpha^++\alpha^- \\ \alpha_{r,1}^-=\dfrac{(1-r)}{2}\alpha^{+}\end{array}\right..\ee
(In this article we follow  the superscript sign conventions established in (\ref{alphars}, \ref{kaccharge}, \ref{densedilute}), which differ from those used in our previous articles \cite{skfz, pinchpt} and in \cite{js}.)

If we realize each one-leg boundary operator of the correlation function (\ref{2Npoint}) representing $F$ as a chiral operator, then we have
\be\label{corrfunc}F(x_1,x_2,\ldots,x_{2N})=\begin{cases}\langle V_{1,2}^\pm(x_1)V_{1,2}^\pm(x_2)\dotsm V_{1,2}^\pm(x_{2N})\rangle, & \kappa>4\\
\langle V_{2,1}^\pm(x_1)V_{2,1}^\pm(x_2)\dotsm V_{2,1}^\pm(x_{2N})\rangle, & \kappa\leq4\end{cases}.\ee
We are free to choose either the plus sign or the minus sign on each individual chiral operator in this correlation function.  After we do this, we may use the simple formula for a correlation function of chiral operators,
\be\label{vertexcorrformula}\langle V_{\alpha_1}(x_1)V_{\alpha_2}(x_2)\dotsm V_{\alpha_M}(x_M)\rangle=\delta_{\sum_j\alpha_j,2\alpha_0}\prod_{i<j}^M|x_j-x_i|^{2\alpha_i\alpha_j},\ee
and the formula (\ref{alphars}) for the charges to write explicit solutions for the system (\ref{nullstate}, \ref{wardid}).

The product on the right side of (\ref{vertexcorrformula}) satisfies the CFT conformal Ward identities \cite{bpz,fms,henkel} if and only if the sum of the charges of the chiral operators on the left side equals $2\alpha_0$.  We call this the \emph{neutrality condition}.  Thus, the $2N$-point correlation function (\ref{corrfunc}) is nontrivial if the collection of chiral operators within it satisfies the neutrality condition.  (Interestingly, there are some examples of such nontrivial correlation functions that do not satisfy the neutrality condition \cite{kype2}.)  Unfortunately, if $N>2$, then no assignment of $\pm$ signs to the chiral operators in (\ref{corrfunc}) satisfies this condition, so this approach seems to produce only the trivial solution.

However, we may circumvent this problem and glean nontrivial (potential) solutions by inserting screening operators into the correlation function (\ref{corrfunc}).  We create a \emph{screening operator} $Q_m^\pm$ by integrating the location $u_m$ of the chiral operator $V^\pm(u_m)$ with charge $\alpha^\pm$ (and thus conformal weight one) around a loop $\Gamma$ in the complex plane \cite{df1,df2}:
\be\label{screeningop}Q_m^\pm:=\oint_{\Gamma} V^\pm(u_m)\,{\rm d}u_m.\ee
This operator is primary, is non-local, and has conformal weight zero.  Therefore, it is effectively an identity operator, and its insertion into a correlation function cannot alter the pointwise information of that function.  But unlike the identity chiral operator, which has charge zero or $2\alpha_0$ because its conformal weight is zero, the screening operator $Q^\pm$ has charge $\alpha^\pm$.  Thus, we may change the total charge of the correlation function in (\ref{vertexcorrformula}) by positive integer multiples $M$ of $\alpha^\pm$ by inserting $M$ distinct screening charges $Q_1^\pm$, $Q_2^\pm,\ldots, Q_M^\pm$ into that correlation function.

After selecting some $c\in\{1,2,\ldots,2N\}$, if we choose in (\ref{corrfunc}) the plus (resp.\ minus) sign for all of the chiral operators except the one at $x_c$ and the minus (resp.\ plus) sign for the chiral operator at $x_c$ in the dense (resp.\ dilute) phase of SLE$_\kappa$, then the sum of the charges of the chiral operators is
\be\label{totalcharge}\begin{cases}(2N-1)\alpha_{1,2}^++\alpha_{1,2}^-=2\alpha_0-(N-1)\alpha^-, & \kappa>4 \\ (2N-1)\alpha_{2,1}^-+\alpha_{2,1}^+=2\alpha_0-(N-1)\alpha^+, & \kappa\leq4\end{cases}.\ee
(Here, we have used the property $\alpha^++\alpha^-=2\alpha_0$ implied by (\ref{alphapm}, \ref{screeningcharges}).)  Thus, by inserting $N-1$ screening operators of charge $\alpha^-$ (resp.\ $\alpha^+$) into the correlation function (\ref{corrfunc}), we satisfy the neutrality condition:
\begin{multline}\label{chiralrep}F(x_1,x_2,\ldots,x_{2N})\\
=\begin{cases}\langle V_{1,2}^+(x_1)V_{1,2}^+(x_2)\dotsm V_{1,2}^+(x_{c-1})V_{1,2}^-(x_c)V_{1,2}^+(x_{c+1})\dotsm V_{1,2}^+(x_{2N})Q_1^-Q_2^-\dotsm Q^-_{N-1}\rangle, & \kappa>4 \\ 
\langle V_{2,1}^-(x_1)V_{2,1}^-(x_2)\dotsm V_{2,1}^-(x_{c-1})V_{2,1}^+(x_c)V_{2,1}^-(x_{c+1})\dotsm V_{2,1}^-(x_{2N})Q_1^+Q_2^+\dotsm Q^+_{N-1}\rangle, & \kappa\leq4\end{cases}.\end{multline} 
Our choice of signs for (\ref{corrfunc}) is the choice that requires the fewest screening operators.  (See appendix \ref{proofappendix}.)  Equation (\ref{vertexcorrformula}) with (\ref{alphars}, \ref{screeningcharges}, \ref{screeningop}) leads to an explicit formula for (\ref{chiralrep}).  This is
\begin{multline}\label{CGsolns}F\Big(\kappa\,\Big|\,\Gamma_1,\Gamma_2,\ldots,\Gamma_{N-1}\,\Big|\,\boldsymbol{x}\Big):=\Bigg(\prod_{\substack{i<j \\ i,j\neq c}}^{2N}(x_j-x_i)^{2/\kappa}\Bigg)\Bigg(\prod_{k\neq c}^{2N}(x_c-x_k)^{1-6/\kappa}\Bigg)\\
\times \mathcal{J}^{(N-1,2N)}\left(\beta_l=\left\{\begin{array}{ll}-4/\kappa,&l\neq c\\ 12/\kappa-2,&l=c\end{array}\right\};\gamma=\frac{8}{\kappa}\,\,\,\Biggl|\,\Gamma_1,\Gamma_2,\ldots,\Gamma_{N-1}\,\Biggl|\,\boldsymbol{x}\right),
\end{multline}
where $c\in\{1,2,\ldots,2N\}$ (and we call $x_c$ \emph{the point bearing the conjugate charge}), $\mathcal{J}^{(M,K)}$ with $M\in\mathbb{Z}^+$ is the $M$-fold \emph{Coulomb gas} (or \emph{Dotsenko-Fateev}) \emph{integral}
\begin{multline}\label{eulerintegralch2}\mathcal{J}^{(M,K)}\Big(\{\beta_l\};\gamma\,\Big|\,\Gamma_1,\Gamma_2,\ldots,\Gamma_M\,\Big|\,x_1,x_2,\ldots,x_K\Big):=\\
\sideset{}{_{\Gamma_M}}\oint\dotsm\sideset{}{_{\Gamma_2}}\oint\sideset{}{_{\Gamma_1}}\oint\left(\prod_{l=1}^K\prod_{m=1}^{M}(x_l-u_m)^{\beta_l}\right)\left(\prod_{p<q}^{M}(u_p-u_q)^{\gamma}\right)\,{\rm d}u_1\,{\rm d}u_2\dotsm\,{\rm d}u_M,\end{multline}
and $\Gamma_1,$ $\Gamma_2,\ldots,\Gamma_M$ are any nonintersecting, closed contours in the complex plane.  (If $\kappa>0$, as it is for the main results of this article, then these contours actually may intersect because $\gamma=8/\kappa>0$.  Rather, it must be possible for us to continuously deform the contours so they do not intersect.  See appendix \ref{proofappendix}.)  According to (\ref{alphars}, \ref{screeningcharges}, \ref{vertexcorrformula}), the powers in the algebraic factors multiplying $\mathcal{J}^{(N-1,2N)}$ in (\ref{CGsolns}) are 
\be\label{powers1}\left\{\begin{array}{ll}2\alpha_{1,2}^+\alpha_{1,2}^+, & \kappa>4\\ 2\alpha_{2,1}^-\alpha_{2,1}^-, & \kappa\leq 4\end{array}\right\}=2/\kappa,\quad \left\{\begin{array}{ll}2\alpha_{1,2}^+\alpha_{1,2}^-, & \kappa>4\\ 2\alpha_{2,1}^-\alpha_{2,1}^+, & \kappa\leq 4\end{array}\right\}=1-6/\kappa,\ee
and the powers in the Coulomb gas integral $\mathcal{J}^{(N-1,2N)}$ in (\ref{CGsolns}) are 
\begin{gather}\label{powers2}\beta_l=\left\{\begin{array}{ll}2\alpha_{1,2}^+\alpha^-, & \kappa>4\\ 2\alpha_{2,1}^-\alpha^+, & \kappa\leq 4\end{array}\right\}=-4/\kappa\quad \text{if $l\neq c$,} \qquad\beta_c=\left\{\begin{array}{ll}2\alpha_{1,2}^-\alpha^-, & \kappa>4\\ 2\alpha_{2,1}^+\alpha^+, & \kappa\leq 4\end{array}\right\}=12/\kappa-2,\\
\label{powers3}\gamma=\left\{\begin{array}{ll}2\alpha^-\alpha^-, & \kappa>4\\ 2\alpha^+\alpha^+, & \kappa\leq 4\end{array}\right\}=8/\kappa.\end{gather}
We note that the formulas for these powers are the same in either phase.  (In more general scenarios, the powers $\beta_l$ and $\gamma$ in (\ref{eulerintegralch2}) may carry double indices $m,l$ and $p,q$ respectively, but we do not encounter those cases in this article.)  We also note that with these powers and $\kappa>0$, the integrand of (\ref{CGsolns}) is absolutely integrable, so we may use Fubini's theorem to change the order of integration.  At times, we do this without explicit reference to this theorem.

Throughout this article, we use the branch of the logarithm for each power function $z^p$ in the integrand of (\ref{eulerintegralch2}) such that $-\pi<\arg z\leq\pi$ for all complex $z$.  This choice determines the orientations of the branch cuts of the integrand in (\ref{eulerintegralch2}).  In spite of this choice, each power function could, in principle, use a different branch.  But instead of allowing this variance, if necessary, we either switch the order of the terms in the differences inside these power functions or explicitly show the phase factors that would otherwise accompany a different choice of logarithm branch. Furthermore, we allow any integration contour to cross these branch cuts and pass onto a different Riemann sheet of the integrand.  If this happens, then the contour must end on the same Riemann sheet as its starting point in order for it to close.

With these conventions established, we define the following:
\begin{defn}\label{CGsolnsdef} Supposing that $\text{Re}\,\kappa>0$, we define a \emph{Coulomb gas function} to be any function of the form (\ref{CGsolns}) (allowing the order of the terms in the differences in the integrand to be switched, or allowing the power functions to use different branches of the logarithm), with all of its integration contours closed and no two contours intersecting.
\begin{enumerate}
\item\label{itemdef1} Also, we define a \emph{Coulomb gas solution} to be any linear combination of Coulomb gas functions, or the following:
\item\label{itemdef2} For some $M\in\mathbb{Z}^+$ and every $j\in\{1,2,\ldots,M\}$, we let $F_j$ be a Coulomb gas function (\ref{CGsolns}) and $a_j$ be a function analytic at every speed $\kappa=\varkappa$ with $\text{Re}\,\varkappa>0$.  If for some $m\in\mathbb{Z}^+$ and a particular $\kappa$ with $\text{Re}\,\kappa>0$, the function
\be\label{CGsumm} F(\varkappa):=(\varkappa-\kappa)^{-m}\sideset{}{_{j=1}^M}\sum a_j(\varkappa)F_j(\varkappa)\ee
extends to a function analytic at $\kappa$ (which clearly happens if and only if the sum in (\ref{CGsumm}) has an order-$m$ zero at $\kappa$), then we call the limit $F(\kappa):=\lim_{\varkappa\rightarrow\kappa}F(\varkappa)$ a \emph{Coulomb gas solution} too.
\end{enumerate}
\end{defn}
\noindent
The requirement in item \ref{itemdef2} of definition \ref{CGsolnsdef} that $F_j(\varkappa)$ is a Coulomb gas function (\ref{CGsolns}) for all $\text{Re}\,\varkappa>0$  is not entirely trivial.  Indeed, if $\varkappa$ is rational, then some contours of $F_j(\varkappa)$ may close by wrapping around the branch points $x_1,$ $x_2,\ldots,x_{2N}$ of the integrand a finite number of times.  But if $\varkappa$ is irrational, then these same contours do not close, so $F_j(\varkappa)$ is not a Coulomb gas function.  Thus, this requirement mandates us to choose contours that close for all $\text{Re}\,\varkappa>0$.  With this requirement, $F_j$ is evidently an analytic function of $(\varkappa,\boldsymbol{x})\in\{\varkappa\in\mathbb{C}\,|\,\text{Re}\,\varkappa>0\}\times\Omega_0$.

The construction of the Coulomb gas functions (\ref{CGsolns}) via the Coulomb gas formalism strongly suggests, but does not rigorously prove, that these candidate solutions actually satisfy the system (\ref{nullstate}, \ref{wardid}).  In \cite{dub}, J.\ Dub\'{e}dat provides this proof for Coulomb gas solutions of item \ref{itemdef1} in definition \ref{CGsolnsdef}, and we present a slightly altered exposition of his proof in appendix \ref{proofappendix}.  The extension of this proof to item \ref{itemdef2} of definition \ref{CGsolnsdef} is straightforward, and we present it below.  Because any Coulomb gas solution obviously satisfies the bound (\ref{powerlaw}) too, we have the following theorem.
\begin{theorem}\label{vertexop} Suppose that $\kappa>0$.  Then every real-valued Coulomb gas solution is an element of $\mathcal{S}_N$.\end{theorem}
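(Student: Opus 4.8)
The plan is to check the two defining properties of $\mathcal{S}_N$ separately for a Coulomb gas solution. By Definition~\ref{CGsolnsdef} such a solution is a finite linear combination $F=\sum_r a_r(\kappa)\,F(\kappa\,|\,\Gamma_1^{(r)},\ldots,\Gamma_{N-1}^{(r)}\,|\,\boldsymbol{x})$ of functions of the form (\ref{CGsolns}) that is, by hypothesis, real-valued on $\Omega_0$. Since the system (\ref{nullstate}, \ref{wardid}) is linear with real coefficients and the real and imaginary parts of a complex-valued classical solution are real classical solutions, it suffices to show that each single function in (\ref{CGsolns}) is (a) a $C^2$ solution of (\ref{nullstate}, \ref{wardid}) on $\Omega_0$ and (b) obeys the bound (\ref{powerlaw}). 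Both properties then descend to the real-valued combination $F$ at the fixed value of $\kappa$: property (a) by linearity, and property (b) because a finite sum of functions bounded by $C_r\prod_{i<j}^{2N}|x_j-x_i|^{\mu_{ij}(p_r)}$ is bounded by $\big(\sum_r|a_r(\kappa)|\,C_r\big)\prod_{i<j}^{2N}|x_j-x_i|^{\mu_{ij}(p)}$ with $p:=\max_r p_r$, using that $|x_j-x_i|^{\mu_{ij}(p_r)}\le|x_j-x_i|^{\mu_{ij}(p)}$ whenever $p\ge p_r$ (check $|x_j-x_i|<1$ and $|x_j-x_i|\ge1$ separately). Regularity is immediate: on $\Omega_0$ the contours in (\ref{eulerintegralch2}) stay a positive distance from $x_1,\ldots,x_{2N}$ and from one another, so the integrand of $\mathcal{I}_{N-1}$ times the prefactor of (\ref{CGsolns}) is jointly real-analytic in $(\boldsymbol{x},u_1,\ldots,u_{N-1})$ near $\Omega_0$ times the contours; hence $F(\kappa\,|\,\Gamma_1,\ldots,\Gamma_{N-1}\,|\,\boldsymbol{x})$ is real-analytic on $\Omega_0$, and differentiation under the integral sign is valid.

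For item (a) I would run Dub\'{e}dat's Coulomb gas verification \cite{dub}, given in full in Appendix~\ref{proofappendix}. Let $\Phi(\boldsymbol{x},\boldsymbol{u})$ denote the product of the prefactor in (\ref{CGsolns}) with the integrand of (\ref{eulerintegralch2}); this is exactly the chiral Coulomb gas correlator of the $2N$ degenerate vertex operators in (\ref{chiralrep}) together with the $N-1$ screening currents of weight one, before the $\boldsymbol{u}$-integrations are carried out, with all exponents ($2/\kappa$ and $1-6/\kappa$ in the prefactor, $\beta_l$ and $\gamma$ in (\ref{powers1}, \ref{powers2})) fixed by (\ref{vertexcorrformula}) and the neutrality condition (\ref{totalcharge}). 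Because the conformal weight $h_{1,2}$ (resp.\ $h_{2,1}$ for $\kappa\le4$) is precisely the weight at which a level-two null vector appears, and because the screening currents have weight one, one checks by a direct computation that applying the $j$-th operator of (\ref{nullstate}) to $\Phi$ yields $\sum_{m=1}^{N-1}\partial_{u_m}\Psi_{j,m}(\boldsymbol{x},\boldsymbol{u})$ for explicit functions $\Psi_{j,m}$ that return to their original values after one traversal of $\Gamma_m$, and likewise the three operators of (\ref{wardid}) --- implementing the global conformal generators $L_{-1},L_0,L_1$ --- annihilate $\Phi$ modulo such total $u_m$-derivatives. Integrating these identities over the closed contours $\Gamma_m$, and using the differentiation under the integral sign noted above, every right-hand side vanishes, so $F(\kappa\,|\,\Gamma_1,\ldots,\Gamma_{N-1}\,|\,\boldsymbol{x})$ solves (\ref{nullstate}, \ref{wardid}) on $\Omega_0$. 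The substantive part of the whole argument, and the step I expect to be the main obstacle, is establishing this algebraic identity explicitly --- writing down the $\Psi_{j,m}$ and confirming that each is single-valued along its contour $\Gamma_m$, so that the closed-contour integrals genuinely vanish; this is the computation organized in Appendix~\ref{proofappendix}.

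For item (b), fix $\boldsymbol{x}\in\Omega_0$. The prefactor in (\ref{CGsolns}) equals $\prod_{i<j}^{2N}|x_j-x_i|^{e_{ij}}$ with each exponent $e_{ij}\in\{2/\kappa,\,1-6/\kappa\}$, so it is already dominated by $\prod_{i<j}^{2N}|x_j-x_i|^{\mu_{ij}(p_0)}$ with $p_0:=\max\{2/\kappa,\,|1-6/\kappa|\}$, since a power of a gap whose absolute value is at most $p_0$ does not exceed that gap raised to $-p_0$ when the gap is below $1$ and does not exceed it raised to $p_0$ when the gap is at least $1$. For the factor $\mathcal{I}_{N-1}$ I would invoke item (a): $F(\kappa\,|\,\Gamma_1,\ldots,\Gamma_{N-1}\,|\,\boldsymbol{x})$ is then translation invariant and homogeneous of a fixed degree (both visible from (\ref{wardid})), which reduces the estimate to the bounded slice $\{0=x_1<x_2<\cdots<x_{2N}=1\}$; on that slice $\mathcal{I}_{N-1}$ is real-analytic, and a power-counting argument at each collision wall --- made transparent by the decomposition of the Pochhammer contours shown in Figure~\ref{BreakDown} --- shows it grows at most like a fixed negative power of the smallest gap. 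Rescaling and translating back, and combining with the bound on the prefactor via $\mu_{ij}(a)+\mu_{ij}(b)=\mu_{ij}(a+b)$, one obtains $|F(\kappa\,|\,\Gamma_1,\ldots,\Gamma_{N-1}\,|\,\boldsymbol{x})|\le C\prod_{i<j}^{2N}|x_j-x_i|^{\mu_{ij}(p)}$ on $\Omega_0$ for suitable $C>0$ and $p\ge p_0$. Combined with the reduction of the first paragraph, this shows that every real-valued Coulomb gas solution is a classical solution of (\ref{nullstate}, \ref{wardid}) satisfying (\ref{powerlaw}), hence an element of $\mathcal{S}_N$, which is the assertion of the theorem.
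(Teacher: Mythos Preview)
Your proposal is correct and takes essentially the same approach as the paper: the paper defers the verification of (\ref{nullstate}, \ref{wardid}) to Dub\'edat's computation in Appendix~\ref{proofappendix}, exactly as you do, and for the growth bound simply asserts that ``each Coulomb gas solution obviously satisfies the bound (\ref{powerlaw}).'' Your item (b) in fact supplies more justification than the paper does (reducing by translation and scaling to a bounded slice and power-counting at collision walls), and your treatment of the Ward identities via total $u_m$-derivatives is a standard variant of the paper's direct M\"obius-covariance check in Appendix~\ref{proofappendix}; both lead to the same conclusion.
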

\begin{proof} Ref.\ \cite{dub} proves that any Coulomb gas solution of item \ref{itemdef1} in definition \ref{CGsolnsdef} satisfies the system (\ref{nullstate}, \ref{wardid}) for $\kappa>0$.  In appendix \ref{proofappendix}, we give a version of the same proof that more explicitly links the neutrality condition described beneath (\ref{vertexcorrformula}) to the conformal Ward identities (\ref{wardid}).

To show that any Coulomb gas solution of item \ref{itemdef2} in definition \ref{CGsolnsdef}  satisfies this system too, we insert the Taylor series for $F(\varkappa)$ centered on $\varkappa=\kappa$ into (\ref{nullstate}) with $\kappa$ replaced by $\varkappa$.  After recalling that $F$ is an analytic function of $(\varkappa,\boldsymbol{x})\in\{\varkappa\in\mathbb{C}\,|\,\text{Re}\,\varkappa>0\}\times\Omega_0$, we differentiate the series term by term with respect to $x_1$, $x_2,\ldots,x_{2N}$ to find
\be \sum_{m=0}^\infty\frac{(\varkappa-\kappa)^m}{m!}\Bigg[\frac{\varkappa}{4}\partial_j^2+\sum_{k\neq j}^{2N}\left(\frac{\partial_k}{x_k-x_j}-\frac{(6-\varkappa)/2\varkappa}{(x_k-x_j)^2}\right)\Bigg]\partial_\varkappa^mF(\kappa\,|\,\boldsymbol{x})=0.\ee
By sending $\varkappa\rightarrow\kappa$, we find that $F(\kappa)$ satisfies (\ref{nullstate}).  A similar procedure shows that $F(\kappa)$ satisfies the conformal Ward identities (\ref{wardid}) too.
\end{proof}

\begin{figure}[t]
\centering
\includegraphics[scale=0.28]{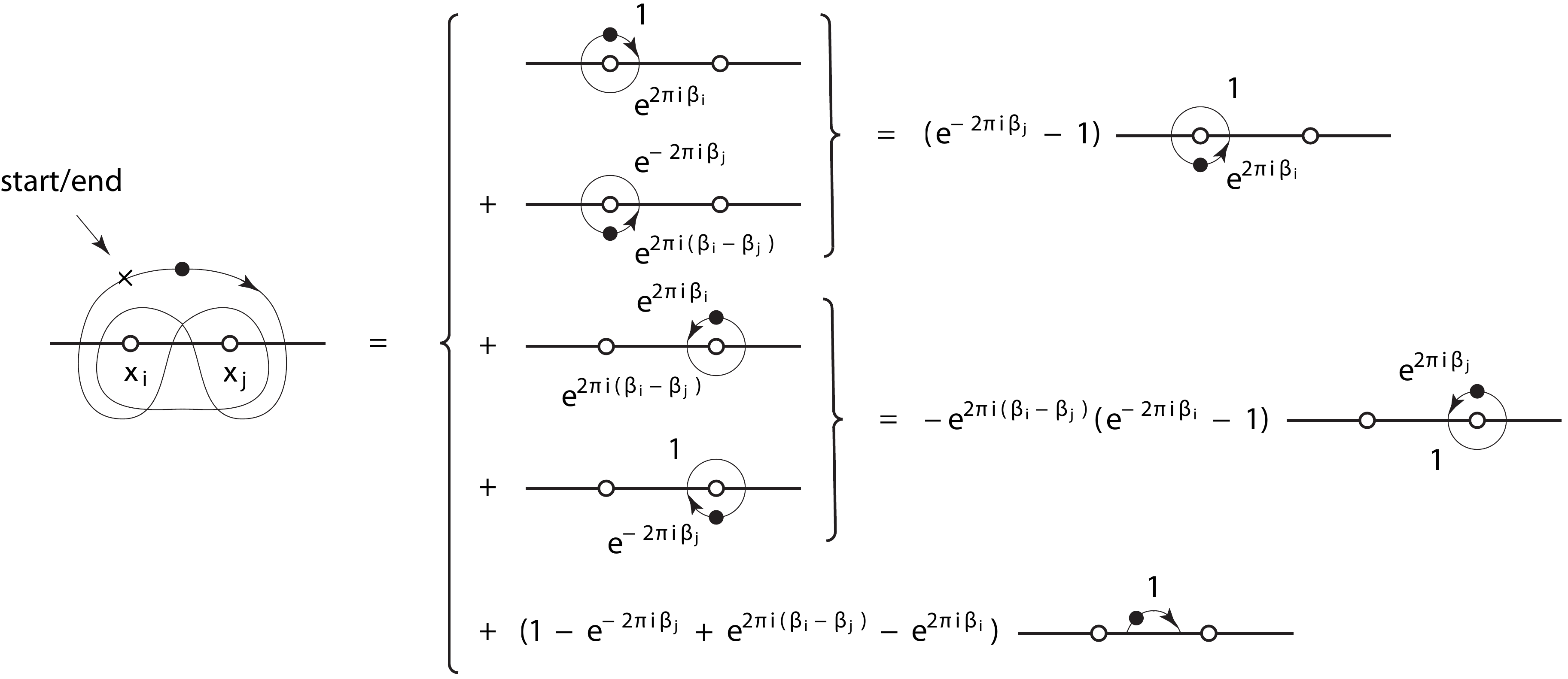}
\caption{The Pochhammer contour $\mathscr{P}(x_i,x_j)$, and the decomposition (\ref{PochDecomp}).  The phase factor of the integrand at the start point and end point (at the tip of the arrow) of each contour is shown.}
\label{BreakDown}
\end{figure}

\begin{figure}[b]
\centering
\includegraphics[scale=0.28]{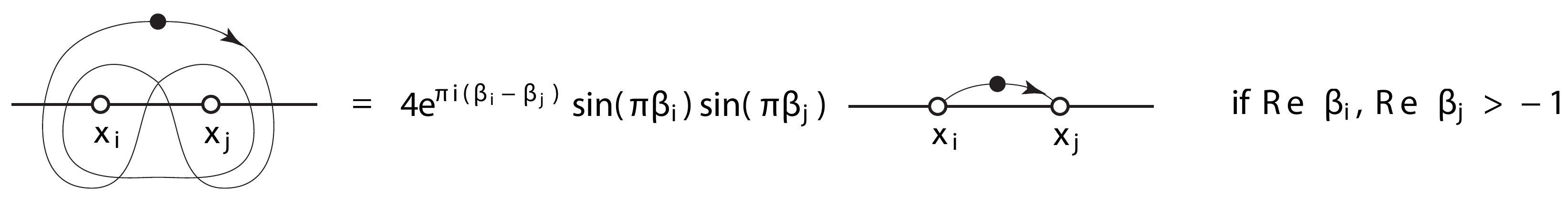}
\caption{If $e^{2\pi i\beta_i}$ and $e^{2\pi i\beta_j}$ are the monodromy factors associated with $x_i$ and $x_j$ respectively and $\text{Re}\,\beta_i,\text{Re}\,\beta_j>-1$, then we may replace the Pochhammer contour $\mathscr{P}(x_i,x_j)$ shown on the left with the simple contour shown on the right.}
\label{PochhammerContour}
\end{figure}

Now we comment on choices of integration contours for (\ref{eulerintegralch2}).  In order to guarantee that (\ref{CGsolns}) satisfies the system (\ref{nullstate}, \ref{wardid}), each integration contour in (\ref{eulerintegralch2}) must close, and no two may intersect. Moreover, the Cauchy integral theorem \cite{kod} implies that if (\ref{CGsolns}) is nontrivial, then every contour must surround at least one of the branch points $x_1$, $x_2,\ldots,x_{2N}$ of the integrand.  (A contour may surround other contours too.)  Now, if the powers $\beta_l$ and $\gamma$ of (\ref{CGsolns}, \ref{eulerintegralch2}) are irrational (as is usually the case), then the winding number of every contour around each of the points $x_1$, $x_2,\ldots,x_{2N}$ must be zero in order for it to close.  The simplest such contour is a Pochhammer contour $\mathscr{P}(x_i,x_j)$ entwining $x_i$ with $x_j$.  Figure \ref{BreakDown} illustrates this contour.  Its start point is directly above $x_i$ on the outer part of the contour, and its end point matches its start point on the same Riemann sheet.  (Also, see page 257 of \cite{witt} for another definition of this contour, noting that the authors' orientation and start point is different from ours in this article.)  Even more complicated choices of integration contours for (\ref{eulerintegralch2}) that satisfy the mentioned requirements are available.  For example, a contour may surround one or more contours and wind multiple times around many branch points at once.  Fortunately, we do not need to consider these complicated possibilities in this article.  (In fact, identity (\ref{ointext}) and figure \ref{Oint} of appendix \ref{reallemproof} indicate that many, but not all, of these possibilities give the trivial solution.)

Figure \ref{BreakDown} also shows that we may decompose a Pochhammer contour $\mathscr{P}(x_i,x_j)$ into a sum of integrations along simple contours and integrations around loops surrounding $x_i$ and $x_j$:
\bea\sideset{}{_{\mathscr{P}(x_i,x_j)}}\oint(u-x_i)^{\beta_i}(x_j-u)^{\beta_j}\dotsm\,{\rm d}u
&=&(e^{-2\pi i\beta_j}-1)\oint_{x_i}(u-x_i)^{\beta_i}(x_j-u)^{\beta_j}\dotsm\,{\rm d}u\nonumber\\
\label{PochDecomp}&-&e^{2\pi i(\beta_i-\beta_j)}(e^{-2\pi i\beta_i}-1)\oint_{x_j}(u-x_i)^{\beta_i}(x_j-u)^{\beta_j}\dotsm\,{\rm d}u\\
&+&4e^{\pi i(\beta_i-\beta_j)}\sin\pi \beta_i\sin\pi \beta_j\sideset{}{_{x_i+\epsilon}^{x_j-\epsilon}}\int (u-x_i)^{\beta_i}(x_j-u)^{\beta_j}\dotsm\,{\rm d}u.\nonumber\eea
Here, the ellipses stand for a function of $u$ analytic in the interior of a region containing $\mathscr{P}(x_i,x_j)$, and the subscript $x_i$ (resp.\ $x_j$) on the integral sign indicates that $u$ traces counterclockwise a circle centered on $x_i$ (resp.\ $x_j$) with  radius $\epsilon\ll |x_j-x_i|$, starting just above $x_i+\epsilon$ (resp.\ below $x_j-\epsilon$) where the integrand's phase is zero.  If $\text{Re}\,\beta_i,\text{Re}\,\beta_j>-1$, then sending $\epsilon\rightarrow0$ in (\ref{PochDecomp}) gives the useful identity
\be\label{Pochtostraight}\sideset{}{_{\mathscr{P}(x_i,x_j)}}\oint(u-x_i)^{\beta_i}(x_j-u)^{\beta_j}\dotsm\,{\rm d}u= 4e^{\pi i(\beta_i-\beta_j)}\sin\pi \beta_i\sin\pi \beta_j\sideset{}{_{x_i}^{x_j}}\int (u-x_i)^{\beta_i}(x_j-u)^{\beta_j}\dotsm\,{\rm d}u,\quad\text{Re}\,\beta_i,\text{Re}\,\beta_j>-1\ee
(figure \ref{PochhammerContour}).  Thus, we call $x_i$ and $x_j$ ``endpoints" of $\mathscr{P}(x_i,x_j)$ in this article.  If $x_i<x_j$ (resp.\ $x_i>x_j$), then we say that $\mathscr{P}(x_i,x_j)$ has \emph{rightward orientation} (resp.\ \emph{leftward orientation}.) Finally, we note from the decomposition (\ref{Pochtostraight}) that $\beta_i\in\mathbb{Z}^+\cup\{0\}$ or $\beta_j\in\mathbb{Z}^+\cup\{0\}$ is a simple zero of the left side of (\ref{Pochtostraight}). This fact is useful to item \ref{kappaextend} in the proof of lemma \ref{mainlem} below.

\begin{figure}[b]
\centering
\includegraphics[scale=0.27]{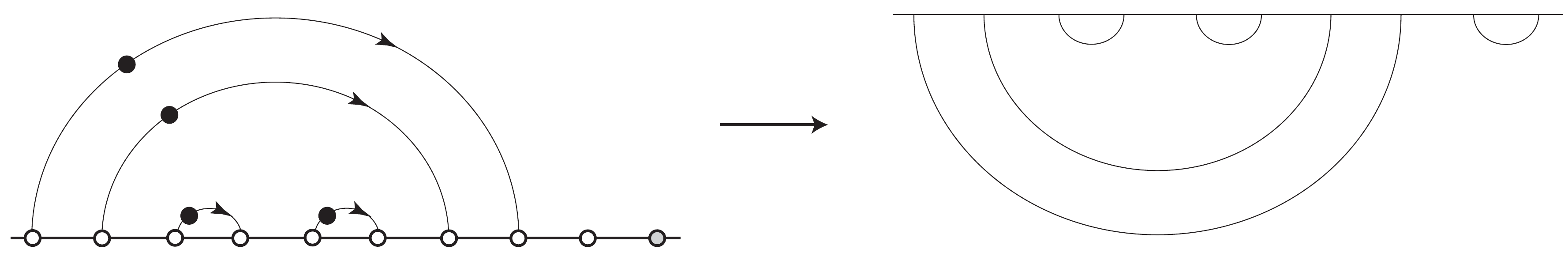}
\caption{The correspondence between the integration contours for $\mathcal{F}_{c,\vartheta}$ (left) and the half-plane diagram for $\mathcal{F}_{c,\vartheta}$ (right).  The latter is the reflection of the half-plane diagram for $[\mathscr{L}_\vartheta]$ into the lower half-plane.  The gray point bears the conjugate charge.}
\label{Halfplane}
\end{figure}

\section{A basis for $\mathcal{S}_N$ and the meander matrix}\label{MeanderMatrix}

Having proven that $\dim\mathcal{S}_N\leq C_N$ in \cite{florkleb,florkleb2}, we next prove that $\dim\mathcal{S}_N=C_N$ by showing that a certain subset of $C_N$ Coulomb gas solutions is linearly independent.  Because such a set is a basis for $\mathcal{S}_N$, we thus achieve goals \ref{item1} and \ref{item2} listed in the introduction \ref{intro}.

\begin{defn}\label{ndefn} We call the function $n:\mathbb{C}\setminus\{0\}\rightarrow\mathbb{C}$ with the following formula the \emph{O($n$)-model fugacity function:}
\be\label{fugacity}n(\kappa):=-2\cos(4\pi/\kappa).\ee
\end{defn}

\begin{figure}[p]
\centering
\includegraphics[scale=0.27]{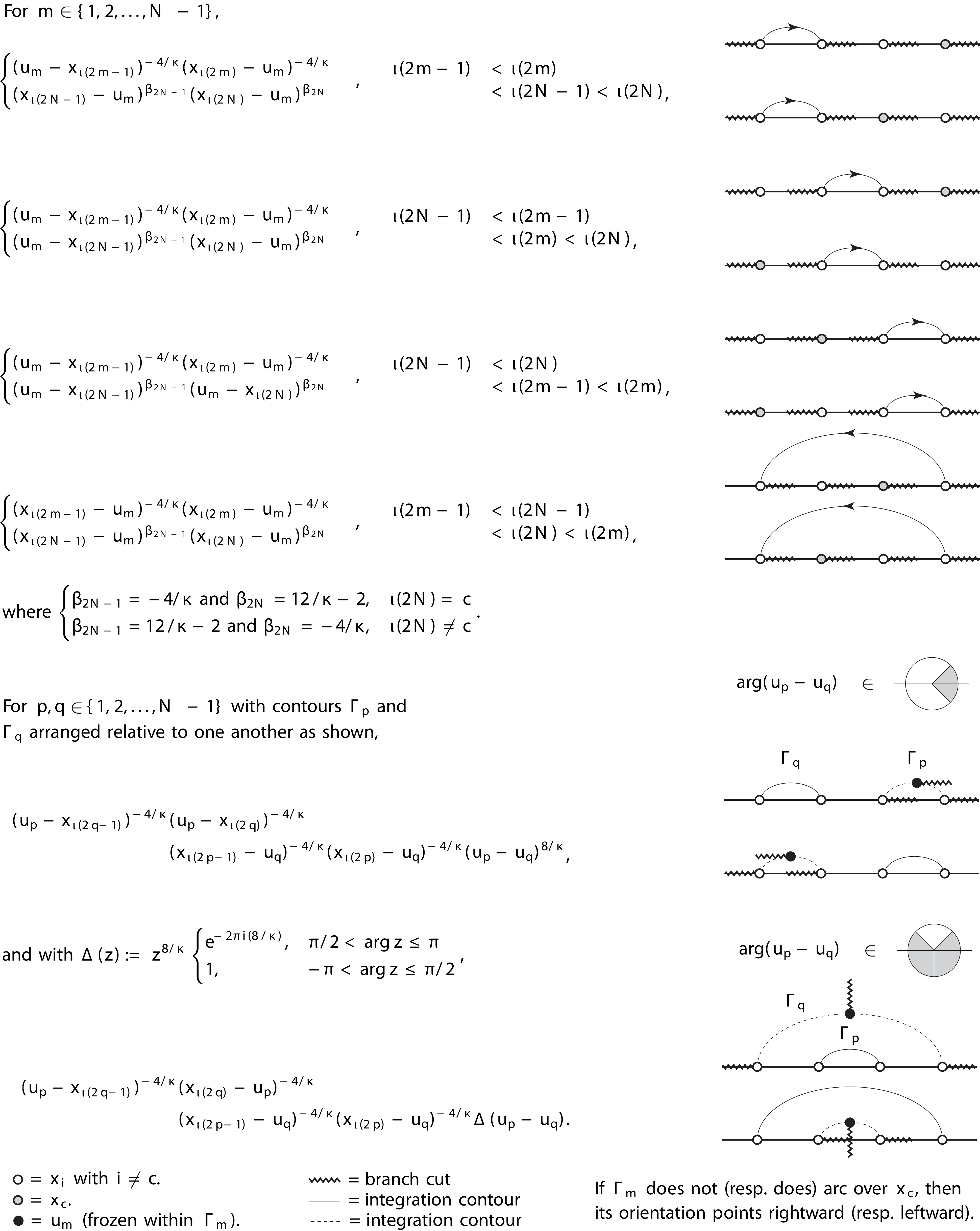}
\caption{Factors of the integrand for the Coulomb gas integral appearing in the formula (\ref{firstFexplicit1}) for $\mathcal{F}_{c,\vartheta}$.  To indicate that we are using the conventions shown in this figure, we enclose the integrand of (\ref{firstFexplicit1}) between the brackets of $\mathcal{N}[\,\,\ldots\,\,]$.}
\label{Orderings}
\end{figure}

The function $n$ inherits its name from its realization as the loop fugacity of an O$(n)$ model from statistical mechanics whose closed loops are conjectured to be (locally) statistically identical to SLE$_\kappa$ curves.  Technically, this connection between SLE$_\kappa$ and the O$(n)$ model applies only for $\kappa\in[2,8]$ \cite{gruz, rgbw, smir4,smir}.  Nonetheless, we find the notation $n(\kappa)$ useful for all $\kappa \in \mathbb{C}\setminus\{0\}$.  
\begin{defn}\label{Fkdefn}For each $c\in\{1,2,\ldots,2N\}$ and $\vartheta\in\{1,2,\ldots,C_N\}$, we let $\mathcal{F}_{c,\vartheta}:[(0,8)\times i\mathbb{R}]\times\Omega_0\rightarrow\mathbb{R}$ be the Coulomb gas function (\ref{CGsolns}) with the following details and modifications:
\begin{enumerate}
\item $\mathcal{F}_{c,\vartheta}(\kappa\,|\,\boldsymbol{x})$ is of the form (\ref{CGsolns}) multiplied by
\be\label{firstprefactor}n(\kappa)\left[\frac{n(\kappa)\Gamma(2-8/\kappa)}{4\sin^2(4\pi/\kappa)\Gamma(1-4/\kappa)^2}\right]^{N-1},\quad \text{with $n(\kappa)$ given by (\ref{fugacity}).}\ee
\item\label{third}The integration contours $\Gamma_1$, $\Gamma_2,\ldots,\Gamma_{N-1}$ are non-intersecting Pochhammer contours bent to lie completely in the upper half-plane (except for where they wrap around their endpoints) and specified as follows (figure \ref{Halfplane}):
\begin{enumerate}
\item Each contour shares its endpoints with a unique arc in the half-plane diagram for $[\mathscr{L}_\vartheta]$ that has neither of its endpoints at $x_c$.  So except for one excluded arc, those arcs correspond one-to-one with the contours.
\item\label{2cit}If an arc in the half-plane diagram for $[\mathscr{L}_\vartheta]$ does not (resp.\ does) pass over $x_c$, then its corresponding contour is oriented rightward (resp.\ leftward).  (See the comments beneath (\ref{Pochtostraight}).)
\item\label{simplifyformula} If $\text{Re}\,\kappa>4$, then identity (\ref{Pochtostraight}) allows us to simplify the formula for $\mathcal{F}_{c,\vartheta}(\kappa\,|\,\boldsymbol{x})$.  We replace each contour by a simple curve (that bends into the upper half-plane) with the same endpoints, and we replace (\ref{firstprefactor}) with
\be\label{secondprefactor}n(\kappa)\left[\frac{n(\kappa)\Gamma(2-8/\kappa)}{\Gamma(1-4/\kappa)^2}\right]^{N-1},\quad \text{with $n(\kappa)$ given by (\ref{fugacity}).}\ee
\item If $N=1$, then there are no integration contours. (See (\red{16}) and the surrounding discussion in \cite{florkleb}.)  Instead, we set $\mathcal{F}_{1,1}(\kappa\,|\,x_1,x_2)=\mathcal{F}_{2,1}(\kappa\,|\,x_1,x_2)=n(\kappa)(x_2-x_1)^{1-6/\kappa}$.
\end{enumerate}
For $m\in\{1,2,\ldots,N-1\}$, we let $\iota_{c,\vartheta}(2m-1)<\iota_{c,\vartheta}(2m)$ be indices of the endpoints of $\Gamma_m$, and for $m=N$, of the two points with no contour entwining them (one of which is $c$).  (For concision, we write $\iota$ in place of $\iota_{c,\vartheta}$.)
\item\label{4thitem}We order the differences in the integrand of (\ref{eulerintegralch2}) as figure \ref{Orderings} shows.  In this article and in \cite{florkleb4}, we indicate this ordering by enclosing the integrand for (\ref{eulerintegralch2}) between the square brackets of $\mathcal{N}[\,\,\ldots\,\,]$.
\item\label{5thitem} We order the differences in the factors multiplying $\mathcal{J}^{(N-1,2N)}$ in (\ref{CGsolns}) so each is positive.  With item \ref{4thitem}, this ordering ensures that $\mathcal{F}_{c,\vartheta}$ is real-valued if $\kappa>0$.  (The proof of lemma \ref{reallem} in appendix \ref{reallemproof} justifies this claim.)
\item\label{item2a} If $c=2N$ in (\ref{CGsolns}, \ref{eulerintegralch2}), then we let $\mathcal{F}_\vartheta:=\mathcal{F}_{2N,\vartheta}$ and $\mathcal{B}_N:=\{\mathcal{F}_1,\mathcal{F}_2,\ldots,\mathcal{F}_{C_N}\}\subset\mathcal{S}_N$.  We call $\mathcal{B}_N$ the \emph{Temperley-Lieb set} of $\mathcal{S}_N$ (due to its close relation to the Temperley-Lieb algebra, see the proof of lemma \ref{mainlem}).
\end{enumerate}
Finally, we define the \emph{exterior arc polygon (resp.\ half-plane) diagram for $\mathcal{F}_{c,\vartheta}$} (or more simply, the \emph{diagram for $\mathcal{F}_{c,\vartheta}$}) to be the diagram for $[\mathscr{L}_{\vartheta}]$, but with all interior arcs replaced by \emph{exterior arcs} drawn outside the $2N$-sided polygon (figure \ref{Fk}) (resp.\ drawn inside the lower half-plane (figure \ref{Halfplane})).  We call either diagram an \emph{exterior arc connectivity diagram}.  (We note that each exterior arc in the half-plane diagram for $\mathcal{F}_{c,\vartheta}$, except that with an endpoint at $x_c$, corresponds to an integration contour.)
\end{defn}

\begin{figure}[t]
\centering
\includegraphics[scale=0.3]{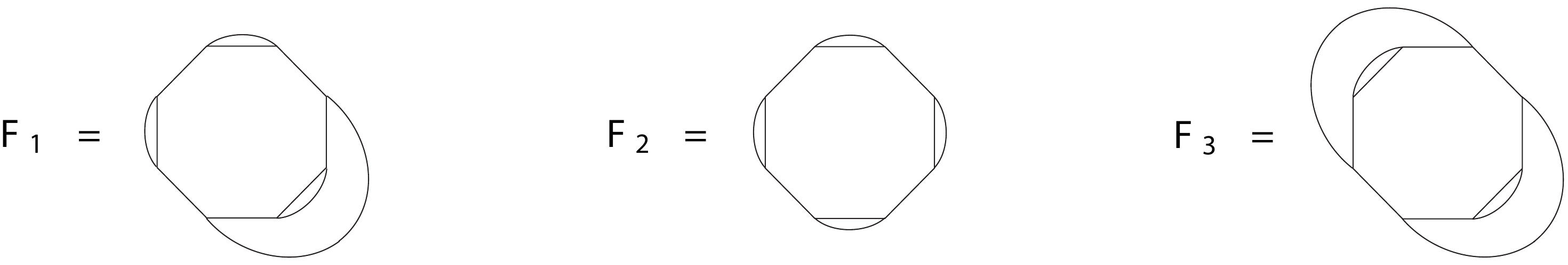}
\caption{Polygon diagrams for three different elements of the Temperley-Lieb set $\mathcal{B}_4$.  We find the other $C_4-3=11$ diagrams by rotating one of these three.}
\label{Fk}
\end{figure}

With the integration contours $\Gamma_1$, $\Gamma_2,\ldots,\Gamma_{N-1}$ and the symbol $\mathcal{N}[\,\,\ldots\,\,]$ determined by items \ref{4thitem} and figure \ref{Orderings}, the explicit formula for $\mathcal{F}_{c,\vartheta}$ is therefore\begin{multline}\label{firstFexplicit1}\mathcal{F}_{c,\vartheta}(\kappa\,|\,\boldsymbol{x})=n(\kappa)\left[\frac{n(\kappa)\Gamma(2-8/\kappa)}{4\sin^2(4\pi/\kappa)\Gamma(1-4/\kappa)^2}\right]^{N-1}\Bigg(\prod_{\substack{j<k \\ j,k\neq c}}^{2N}(x_k-x_j)^{2/\kappa}\Bigg)\Bigg(\prod_{\substack{k=1 \\ k\neq c}}^{2N}|x_c-x_k|^{1-6/\kappa}\Bigg)\oint_{\Gamma_{N-1}}{\rm d}u_{N-1}\dotsm\\ 
\oint_{\Gamma_2}{\rm d}u_2\,\,\oint_{\Gamma_1}{\rm d}u_1\,\,\mathcal{N}\Bigg[\Bigg(\prod_{\substack{l=1 \\ l\neq c}}^{2N}\prod_{m=1}^{N-1}(x_l-u_m)^{-4/\kappa}\Bigg)\Bigg(\prod_{m=1}^{N-1}(x_c-u_m)^{12/\kappa-2}\Bigg)\Bigg(\prod_{p<q}^{N-1}(u_p-u_q)^{8/\kappa}\Bigg)\Bigg].\end{multline}
If $\text{Re}\,\kappa>4$, then as per item \ref{simplifyformula} above, we may simplify this formula by replacing $\Gamma_m$ by a simple contour with the same endpoints as $\Gamma_m$ and bent into the upper half-plane, followed by dropping each factor of $4\sin^2(4\pi/\kappa)$.

In order for $\mathcal{F}_{c,\vartheta}$ to be an element of $\mathcal{S}_N$, it must be real-valued.  In fact, the ordering  required by items \ref{4thitem} (figure \ref{Orderings}) and \ref{5thitem} of definition \ref{Fkdefn} guarantees that this is true if $\kappa>0$, and we prove this claim in appendix \ref{reallemproof}.  Moreover, this ordering has one peculiarity.  With $f(z)=z^{8/\kappa}$, the integrand of (\ref{eulerintegralch2}) contains a factor of $f(u_p-u_q)$ for each $p,q\in\{1,2,\ldots,N-1\}$ such that the integration contour $\Gamma_q$ either lies to the left of $\Gamma_p$ or passes over it.  In the former case, the difference $u_p-u_q$ never touches the branch cut of $f$, which lies along the negative real axis.  But  in the latter case, it eventually does cross this branch cut from beneath.  Inserting the replacement (figure \ref{Orderings})
\be\label{Deltadefn}f(u_p-u_q)\mapsto\Delta(u_p-u_q),\quad\Delta(z):=z^{8/\kappa}\times\begin{cases}e^{-2\pi i(8/\kappa)}, & \pi/2<\arg z\leq\pi\\ 1, & -\pi<\arg z\leq\pi/2 \end{cases},\ee
shows the phase factor acquired as the difference $u_p-u_q$ crosses this branch cut and passes onto a different Riemann sheet of $f$.  This difference never touches the branch cut of $\Delta$, which lies along the positive imaginary axis.

Next, we show that $\mathcal{F}_{c,\vartheta}$ is analytic on $[(0,8)\times i\mathbb{R}]\times\Omega_0$, a property that we use in the proof of theorem \ref{maintheorem} below.  The Coulomb gas function (\ref{CGsolns}) clearly has this property, but the prefactor (\ref{firstprefactor}) that multiplies it to give $\mathcal{F}_{c,\vartheta}(\kappa\,|\,\boldsymbol{x})$ (\ref{firstFexplicit1}) is singular at $\kappa=8/r$ for some integer $r>1$.  If $r$ is odd, then the bracketed factor in (\ref{firstprefactor}) analytically extends to $\kappa=8/r$, so $\mathcal{F}_{c,\vartheta}$ is analytic there.  (In fact, $\mathcal{F}_{c,\vartheta}(\kappa=8/r)$ is zero due to the vanishing outer factor of $n(\kappa)$ in (\ref{firstprefactor}).  To avoid the trivial solution in this case, we drop this factor from (\ref{firstFexplicit1}).  See the proof of theorem \ref{maintheorem}.)  And if $r$ is even, or more simply, if $\kappa=4/r$ with $r\in\mathbb{Z}^+$, then in formula (\ref{firstFexplicit1}) for $\mathcal{F}_{c,\vartheta}(\kappa\,|\,\boldsymbol{x})$ with $\Gamma_m=\mathscr{P}(x_{\iota(2m-1)},x_{\iota(2m)})$, we find
\be\label{Freplace}\frac{n(\varkappa)\Gamma(2-8/\varkappa)}{4\sin^2(4\pi/\varkappa)\Gamma(1-4/\varkappa)^2}\oint_{\Gamma_m}\dotsm\quad\xrightarrow[\varkappa\rightarrow\kappa=4/r,\,\,r\in\mathbb{Z}^+]{}\quad\frac{(-1)^{r-1}(r-1)!^2}{2\pi i(2r-2)!}\left(\oint_{x_{\iota(2m-1)}}\dotsm\quad-\quad\oint_{x_{\iota(2m)}}\dotsm\quad\right),\ee
thanks to (\ref{PochDecomp}).  (If $\Gamma_m=\mathscr{P}(x_{\iota(2m)},x_{\iota(2m-1)})$ because $x_{\iota(2m-1)}<x_c<x_{\iota(2m)}$, then as per item \ref{2cit} in definition \ref{Fkdefn}, the terms in the difference on the right side of (\ref{Freplace}) switch.)  Here, the subscript $x_j$ on the contour integral on the right side of (\ref{Freplace}) signifies that the integration contour is a circle centered at $x_j$ with some arbitrarily small radius.  After inserting (\ref{Freplace}) into (\ref{firstFexplicit1}), we also note that the branch points $x_1$, $x_2,\ldots,x_{2N}$ of the integrand are now poles.  (Indeed, this happens only if $4/\kappa\in\mathbb{Z}^+$.)  Thus, we use the Cauchy integral formula \cite{kod} to evaluate all contour integrals on the right side of (\ref{Freplace}), finding (here, $x_{\iota(2m-1)}<x_{\iota(2m)}$ are endpoints of the $m$th arc in the $\vartheta$th connectivity)
\begin{multline}\label{explicit}\mathcal{F}_{c,\vartheta}\biggl(\kappa=\frac{4}{r}\,\bigg|\,\boldsymbol{x}\biggr)=2(-1)^{(r-1)N}\left(\frac{(r-1)!}{(2r-2)!}\right)^{N-1}\Biggl(\prod_{k\neq c}^{2N}|x_c-x_k|^{1-\frac{3r}{2}}\Biggr)\Biggl(\prod_{\substack{ i< j \\ i,j\neq c}}^{2N}(x_j-x_i)^{\frac{r}{2}}\Biggr)\\
\begin{aligned}&\times\,\sum_{\substack{\{s_1,s_2,\ldots,s_{N-1}\} \\ s_n\in\{2n-1,2n\}}}\omega\Big(\{s_n\}\Big)\,\,\partial_{u_1}^{r-1}\partial_{u_2}^{r-1}\,\dotsm\,\partial_{u_{N-1}}^{r-1}\Bigg[\Biggl(\prod_{m=1}^{N-1}|x_c-u_m|^{3r-2}\Biggr)\Biggl(\prod_{m=1}^{N-1}\prod_{l\neq\iota(s_m),\,c}^{2N}|x_l-u_m|^{-r}\Biggr)\\
&\times\,\Biggl(\prod_{p<q}^{N-1}|u_p-u_q|^{2r}\Biggr)\Bigg]_{u_m= x_{\iota(s_m)}},\quad
\omega\Big(\{s_n\}\Big):=\begin{cases}(-1)^{s_1+s_2+\dotsm+s_{N-1}+N-1}, & \text{$r$ even}\\ 1, & \text{$r$ odd}\end{cases}.\end{aligned}\end{multline}
From this formula, it is evident that $\mathcal{F}_{c,\vartheta}$ is analytic at $\kappa=4/r$ for all $r\in\mathbb{Z}^+$.  From these observations, we then conclude that $\mathcal{F}_{c,\vartheta}$ is analytic in $[(0,8)\times i\mathbb{R}]\times\Omega_0$.  This formula (\ref{explicit}) may have applications to the Gaussian free field ($r=1$) and the loop-erased random walk ($r=2$).  In particular, J.\ Dub\'edat gives a determinant formula for an element of $\mathcal{S}_N$ with $\kappa=2$ in \cite{dub}.  According to theorem \ref{maintheorem} below, this solution must equal an appropriate linear combination of the functions in (\ref{explicit}) with $\vartheta\in\{1,2,\ldots,C_N\}$, $c=2N$, and $r=2$.

\begin{table}[t]
\centering
\begin{tabular}{llllll}
SLE$_\kappa$ speed \hspace{.75cm} & Rational \hspace{.75cm} & Exceptional \hspace{.4cm} & $c(\kappa)$ (\ref{central}) a central charge \hspace{.4cm} & Indicial power of Frobenius  \hspace{.4cm} & All elements of \\
$\kappa\in(0,8)$ & speed & speed & of a CFT minimal model & series \cite{florkleb4} differ by an integer & $\mathcal{B}_N$ algebraic \\
\hline
$8/\kappa\in2\mathbb{Z}^+$ & $\quad$\checkmark & $\qquad\times$ & $\quad\qquad\qquad\times$ & $\qquad\qquad\qquad$\checkmark & $\qquad$\checkmark \\
$8/\kappa\in2\mathbb{Z}^++1$ & $\quad$\checkmark & $\qquad$\checkmark & $\quad\qquad\qquad$\checkmark &  $\qquad\qquad\qquad$\checkmark & $\qquad$\checkmark \\
$\kappa=\kappa_{q,q'}$, $q>2$ & $\quad$\checkmark & $\qquad$\checkmark & $\quad\qquad\qquad$\checkmark & $\qquad\qquad\qquad\times$ & $\qquad$? \\
$\kappa\not\in\mathbb{Q}$ & $\quad\times$ & $\qquad\times$ & $\quad\qquad\qquad\times$ & $\qquad\qquad\qquad\times$ & $\qquad?$
\end{tabular}
\caption{A table of all SLE$_\kappa$ speeds $\kappa\in(0,8)$ collected into disjoint groups with common properties.  Here, (\ref{exceptional}) defines $\kappa_{q,q'}$, and $?$ may be $\checkmark$ or $\times$, depending on the value of $\kappa$.  We prove columns four and five in sections \red{V} and \red{II} of \cite{florkleb4} respectively.}
\label{tablespeeds}
\end{table}

From here until the end of the proof of theorem \ref{maintheorem}, we work strictly with the elements $\mathcal{F}_\vartheta$ of the Temperley-Lieb set $\mathcal{B}_N$.  Their formulas are given by (\ref{firstFexplicit1}) with $c=2N$ (item \ref{item2a} in definition \ref{Fkdefn}).  (However, we later prove that $\mathcal{F}_{c,\vartheta}=\mathcal{F}_\vartheta$ for all $c\in\{1,2,\ldots,2N-1\}$.  See corollary \ref{moveconjchargecor} below.)  Now, we prove that if $\kappa\in(0,8)$, then $\mathcal{B}_N$ is linearly independent if and only if $\kappa$ is not among a certain subset of the speeds given in the following definition.
\begin{defn}\label{exceptionalspeed} We call an SLE$_\kappa$ speed $\kappa$ an \emph{exceptional speed} if it is of the form
\be\label{exceptional}\kappa_{q,q'}:=4q/q',\quad\text{with $q>1$ and $\{q,q'\}$  a pair of coprime positive integers.}\ee
\end{defn}
\noindent
More simply, the exceptional speeds are the positive rational numbers $\kappa\in\mathbb{Q}$ not equaling $4/r$ for some $r\in\mathbb{Z}^+$.  In section \red{IV} of \cite{florkleb4}, we show that the exceptional speeds $\kappa\in(0,8)$ correspond with the CFT minimal models \cite{bpz,fms,henkel} and propose a reason for this correspondence.  Table \ref{tablespeeds} groups speeds $\kappa\in(0,8)$ according to some common properties.

Lemma \red{15} of \cite{florkleb} implies that if $\kappa\in(0,8)$, then $\mathcal{B}_N$ is linearly independent if and only if the set $v(\mathcal{B}_N):=\{v(\mathcal{F}_1),v(\mathcal{F}_2),\ldots,v(\mathcal{F}_{C_N})\}$ is linearly independent, where $v$ is the linear injective map
\be v:\mathcal{S}_N\rightarrow\mathbb{R}^{C_N},\quad v(F)_\varsigma:=[\mathscr{L}_\varsigma]F.\ee  
Therefore, to determine the rank of $\mathcal{B}_N$, it suffices to determine the rank of $v(\mathcal{B}_N$).  The latter task involves calculating $[\mathscr{L}_\varsigma]\mathcal{F}_\vartheta$ for all $\mathcal{F}_\vartheta\in\mathcal{B}_N$ and all $[\mathscr{L}_\varsigma]\in\mathscr{B}_N^*$, and, as we will see, we may treat this calculation as a certain product of the interior and exterior arc diagrams for $[\mathscr{L}_\varsigma]$ and $\mathcal{F}_\vartheta$ respectively.

To motivate this approach, we start with a sample calculation.  We choose some $\mathcal{F}_\vartheta\in\mathcal{B}_N$ and $[\mathscr{L}_\varsigma]\in\mathscr{B}_N^*$ and an arc in the diagram for $[\mathscr{L}_\varsigma]$ that links a pair of adjacent points $x_i$ and $x_{i+1}$ among $x_1,x_2,\ldots,x_{2N-1}$.  For topological reasons, at least one such arc must exist, and we choose an element of $[\mathscr{L}_\varsigma]$ whose first limit 
\begin{multline}\label{thelim}\bar{\ell}_1\mathcal{F}_\vartheta(\kappa\,|\,x_1,x_2,\ldots,x_{i-1},x_{i+2},\ldots,x_{2N})\\
=\lim_{x_{i+1}\rightarrow x_i}(x_{i+1}-x_i)^{6/\kappa-1}\mathcal{F}_\vartheta(\kappa\,|\, \boldsymbol{x}),\quad i\in\{1,2,\ldots,2N-2\},\quad\kappa\in(0,8)\end{multline}
pulls its endpoints together.  As we will now discover, the value of $\bar{\ell}_1\mathcal{F}_\vartheta$ depends on whether or not $x_i$ and $x_{i+1}$ are endpoints of integration contours of $\mathcal{F}_\vartheta$.  To realize this, we consider some different cases.

First, if no contour has its endpoints at $x_i$ or $x_{i+1}$ (according to definition \ref{Fkdefn}, this is not possible because $i+1\neq2N$, but we consider this case anyway because it appears as a consequence of deforming integration contours later), then the integrand (\ref{eulerintegralch2}) of $\mathcal{F}_\vartheta$ approaches a finite value uniformly over $\Gamma_1$, $\Gamma_2,\ldots,\Gamma_{N-1}$ as $x_{i+1}\rightarrow x_i$, so the limit of the entire Coulomb gas integral in the formula (\ref{firstFexplicit1}) for $\mathcal{F}_\vartheta$ is finite.  Hence, this formula shows that $\mathcal{F}_\vartheta(\kappa\,|\,\boldsymbol{x})=O((x_{i+1}-x_i)^{2/\kappa})$, so $\bar{\ell}_1\mathcal{F}_\vartheta$ (\ref{thelim}) is zero if $\kappa\in(0,8)$.  Evidently, $(x_i,x_{i+1})$ is a two-leg interval of $\mathcal{F}_\vartheta$.  (See definition \red{13} in \cite{florkleb}.)

Next, we suppose that $x_i$ and $x_{i+1}$ are endpoints of one common contour, say $\Gamma_1=\mathscr{P}(x_i,x_{i+1})$.  By sorting all factors of $\mathcal{F}_\vartheta$ involving the endpoints and integration variable of $\Gamma_1$ into groups and finding the asymptotic behavior of each group as $x_{i+1}\rightarrow x_i$, we determine the limit $\bar{\ell}_1\mathcal{F}_\vartheta$ (\ref{thelim}).  For the first $N-2$ groups, we choose $q\in\{2,3,\ldots,N-1\}$ and combine the factors involving the endpoints and integration variables of the contours $\Gamma_1$ and $\Gamma_q$ together.  If $\Gamma_q$ passes over $\Gamma_1$, then these factors are (figure \ref{Orderings})
\begin{multline}\label{firstgroup}(x_i-x_{\iota(2q-1)})^{2/\kappa}(x_{i+1}-x_{\iota(2q-1)})^{2/\kappa}(x_{\iota(2q)}-x_i)^{2/\kappa}(x_{\iota(2q)}-x_{i+1})^{2/\kappa}\\
\times(u_1-x_{\iota(2q-1)})^{-4/\kappa}(x_{\iota(2q)}-u_1)^{-4/\kappa}(x_i-u_q)^{-4/\kappa}(x_{i+1}-u_q)^{-4/\kappa}\Delta(u_1-u_q),\end{multline}
with $\Delta$ defined in (\ref{Deltadefn}) and $x_{\iota(1)}=x_i$ and $x_{\iota(2)}=x_{i+1}$.  This product approaches one uniformly over $(u_1,u_q)\in\Gamma_1\times\Gamma_q$ as $x_{i+1}\rightarrow x_i$ because $u_1\rightarrow x_i$ too.
Alternatively, if $\Gamma_1$ lies to the right of $\Gamma_q$, then these factors are (figure \ref{Orderings})
\begin{multline}\label{thirdtolastgroup}(x_i-x_{\iota(2q-1)})^{2/\kappa}(x_{i+1}-x_{\iota(2q-1)})^{2/\kappa}(x_i-x_{\iota(2q)})^{2/\kappa}(x_{i+1}-x_{\iota(2q)})^{2/\kappa}\\
\times(u_1-x_{\iota(2q-1)})^{-4/\kappa}(u_1-x_{\iota(2q)})^{-4/\kappa}(x_i-u_q)^{-4/\kappa}(x_{i+1}-u_q)^{-4/\kappa}(u_1-u_q)^{8/\kappa},\end{multline}
and again, this product approaches one uniformly over $(u_1,u_q)\in\Gamma_1\times\Gamma_q$ as $x_{i+1}\rightarrow x_i$.  (The result is the same if $\Gamma_1$ lies to the left of $\Gamma_q$.) In the penultimate group, we consider the factors involving $x_{\iota(2N-1)}$ and $x_{\iota(2N)}$ (figure \ref{Orderings}),
\be\label{secondtolastgroup}(x_{\iota(2N-1)}-x_i)^{2/\kappa}(x_{\iota(2N-1)}-x_{i+1})^{2/\kappa}(x_{\iota(2N)}-x_i)^{1-6/\kappa}(x_{\iota(2N)}-x_{i+1})^{1-6/\kappa}(x_{\iota(2N-1)}-u_1)^{-4/\kappa}(x_{\iota(2N)}-u_1)^{12/\kappa-2}\ee
if $x_{\iota(2)}<x_{\iota(2N-1)}$, and a similar expression if otherwise.  Again, this product approaches one uniformly over $\Gamma_1$ as $x_{i+1}\rightarrow x_i$.  In the final group, we only consider factors involving $x_i$, $x_{i+1}$, and $u_1$.  With the integration along $\Gamma_1$, these are
\be\label{lastgroup}(x_{i+1}-x_i)^{2/\kappa}\oint_{\Gamma_1}(u_1-x_i)^{-4/\kappa}(x_{i+1}-u_1)^{-4/\kappa}\,{\rm d}u_1.\ee
After extracting a factor of $(x_{i+1}-x_i)^{1-8/\kappa}$ from the integrand of (\ref{lastgroup}) via the substitution $u_1(t)=(1-t)x_i+tx_{i+1}$, we find that as $x_{i+1}\rightarrow x_i$, this expression (\ref{lastgroup}) goes to the analytic continuation of the beta-function integral \cite{witt}
\be\label{betafunc}\oint_{\mathscr{P}(0,1)}t^{-4/\kappa}(1-t)^{-4/\kappa}\,{\rm d}t=4\sin^2\left(\frac{4\pi}{\kappa}\right)\frac{\Gamma(1-4/\kappa)^2}{\Gamma(2-8/\kappa)}\ee
multiplied by $(x_{i+1}-x_i)^{1-6/\kappa}$.  This latter factor cancels the factor of $(x_{i+1}-x_i)^{6/\kappa-1}$ that accompanies $\bar{\ell}_1$ (\ref{lim}, \ref{thelim}).  After inserting these results into the formula (\ref{firstFexplicit1}) (with $c=2N$) for $\mathcal{F}_\vartheta$ and then into (\ref{thelim}), we find that
\begin{multline}\label{firstlim}\bar{\ell}_1\mathcal{F}_\vartheta(\kappa\,|\,x_1,x_2,\ldots,x_{i-1},x_{i+2},\ldots,x_{2N})\,\,=\,\,n(\kappa)\,\,\times\\
\left\{\begin{aligned}&n(\kappa)\left[\frac{n(\kappa)\Gamma(2-8/\kappa)}{4\sin^2(4\pi/\kappa)\Gamma(1-4/\kappa)^2}\right]^{N-2}\Bigg(\prod_{\substack{j<k \\ j,k\neq i,i+1}}^{2N-1}(x_k-x_j)^{2/\kappa}\Bigg)\Bigg(\prod_{\substack{k=1 \\ k\neq i,i+1}}^{2N-1}(x_{2N}-x_k)^{1-6/\kappa}\Bigg)\oint_{\Gamma_{N-1}}{\rm d}u_{N-1}\dotsm\\ 
&\dotsm\,\oint_{\Gamma_3}{\rm d}u_3\,\,\oint_{\Gamma_2}{\rm d}u_2\,\,\mathcal{N}\Bigg[\Bigg(\prod_{\substack{l=1 \\ l\neq i,i+1}}^{2N-1}\prod_{m=2}^{N-1}(x_l-u_m)^{-4/\kappa}\Bigg)\Bigg(\prod_{m=2}^{N-1}(x_{2N}-u_m)^{12/\kappa-2}\Bigg)\Bigg(\prod_{2\leq p<q}^{N-1}(u_p-u_q)^{8/\kappa}\Bigg)\Bigg]\end{aligned}\right\}.\end{multline}
The bracketed function of (\ref{firstlim}) is the element of $\mathcal{B}_{N-1}$ generated by dropping from the formula (\ref{firstFexplicit1}) for $\mathcal{F}_\vartheta$ all factors involving $x_i$, $x_{i+1}$, and $u_1$, dropping the integration along $\Gamma_1$, and reducing the power in (\ref{firstprefactor}) by one.  Removing the arc connecting $x_i$ with $x_{i+1}$ in the half-plane diagram for $\mathcal{F}_\vartheta$ creates the half-plane diagram for this element of $\mathcal{B}_{N-1}$.

Incidentally, if $8/\kappa\not\in2\mathbb{Z}^++1$, then the limit (\ref{firstlim}) is not zero, so $(x_i,x_{i+1})$ is not a two-leg interval of $\mathcal{F}_\vartheta$.  If in addition $8/\kappa\not\in2\mathbb{Z}^+$, then $(x_i,x_{i+1})$ is an identity interval of $\mathcal{F}_\vartheta$ because $(x_{i+1}-x_i)^{6/\kappa-1}\mathcal{F}(\kappa\,|\,\boldsymbol{x})$ is analytic at $x_{i+1}=x_i$.  (If $8/\kappa\in2\mathbb{Z}^+$, then $(x_i,x_{i+1})$ is still an identity interval of $\mathcal{F}_\vartheta$, according to definition \red{8} of \cite{florkleb4}.)  On the other hand, if $8/\kappa\in2\mathbb{Z}^++1$, then as we previously discussed above (\ref{Freplace}), we replace $\mathcal{F}_\vartheta(\kappa)=0$ with $\mathcal{F}^{\scaleobj{0.75}{\bullet}}_\vartheta(\kappa):=\lim_{\varkappa\rightarrow\kappa}n(\varkappa)^{-1}\mathcal{F}_\vartheta(\varkappa)$.  Now, because $n(\kappa)$ (\ref{fugacity}) vanishes, the limit (\ref{thelim}) (equaling (\ref{firstlim}) with one factor of $n(\kappa)$ between the braces dropped) vanishes too, so $(x_i,x_{i+1})$ is a two-leg interval of $\mathcal{F}^{\scaleobj{0.75}{\bullet}}_\vartheta(\kappa)$.

Finally, exactly one of $x_i$ or $x_{i+1}$ may be an endpoint of one contour among $\Gamma_1$, $\Gamma_2,\ldots,\Gamma_{N-1}$, or both $x_i$ and $x_{i+1}$ may be endpoints of different contours.  In appendix \ref{asymp}, we prove that $(x_i,x_{i+1})$ is again not a two-leg interval of $\mathcal{F}_\vartheta$ in these last cases for all $\kappa\in(0,8)$.  In general, it seems that an interval sharing (resp.\ not sharing) its endpoints with integration contours of $\mathcal{F}_\vartheta$ is not (resp.\ is) a two-leg interval of $\mathcal{F}_\vartheta$.  This observation is fundamental to calculating $v(\mathcal{B}_N)$ in the proof of the following lemma and ultimately to proving theorem \ref{maintheorem}, our main result, below.

\begin{lem}\label{mainlem}Suppose that $\kappa\in(0,8)$.  Then the Temperley-Lieb set $\mathcal{B}_N$ (item \ref{item2a} of definition \ref{Fkdefn}) is linearly independent if and only if $\kappa$ is not an exceptional speed (\ref{exceptional}) with $q\leq N+1$.
\end{lem}

\begin{proof}  To prove the lemma, we show that $v(\mathcal{B}_N):=\{v(\mathcal{F}_1),v(\mathcal{F}_2),\ldots,v(\mathcal{F}_{C_N})\}$ is linearly independent if and only if $\kappa\in(0,8)$ is not an exceptional speed with $q\leq N+1$, and then we invoke lemma \red{15} of \cite{florkleb}.  In order to prove the former claim, we must calculate $[\mathscr{L}_\varsigma]\mathcal{F}_\vartheta$ for all $\varsigma,\vartheta\in\{1,2,\ldots,C_N\}$. After choosing arbitrary $\varsigma$ and $\vartheta$ and noting that the diagram of $[\mathscr{L}_\varsigma]$ has at least one arc with its endpoints at $x_i$ and $x_{i+1}$ for some $i\in\{1,2,\ldots,2N-2\}$, we choose an element $\mathscr{L}_\varsigma$ of this equivalence class whose first limit $\bar{\ell}_1$ sends $x_{i+1}\rightarrow x_i$.  As we noted earlier, the value of the limit $\bar{\ell}_1\mathcal{F}_\vartheta$ (\ref{thelim}) depends on whether or not $x_i$ or $x_{i+1}$ is an endpoint of an integration contour of $\mathcal{F}_\vartheta$.  There are four cases to consider (figure \ref{Cases}), and we explain the calculation of $\bar{\ell}_1\mathcal{F}_\vartheta$ (\ref{thelim}) in all four cases for $8/\kappa\not\in\mathbb{Z}^+$ respectively in items \ref{firstcase}--\ref{fourthcase} below, deferring details to appendix \ref{asymp}.  Afterwards, we extend our results to $8/\kappa\in\mathbb{Z}^+$ below (\ref{LkFk}).

\begin{enumerate}
\item\label{firstcase} 
\textbf{Configuration:} In case \ref{firstcase}, neither $x_i$ nor $x_{i+1}$ are endpoints of an integration contour of $\mathcal{F}_\vartheta$ (figure \ref{Cases}).  (This case does not arise at first because only one point among $x_1$, $x_2,\ldots,x_{2N-1}$ may not be an endpoint of any contour of $\mathcal{F}_\vartheta$.  However, this case does occur after we deform integration contours in cases \ref{thirdcase} and \ref{fourthcase} below.)

\noindent
\textbf{Calculation:}  Section \ref{s1} in appendix \ref{asymp} presents the calculation of the limit $\bar{\ell}_1\mathcal{F}_\vartheta$ (\ref{thelim}).  To summarize, we set $x_{i+1}=x_i$ in the formula (\ref{firstFexplicit1}) with $c=2N$.  The factor of $(x_{i+1}-x_i)^{2/\kappa}$ in this formula multiplies the factor of $(x_{i+1}-x_i)^{6/\kappa-1}$ that accompanies $\bar{\ell}_1$ to give $(x_{i+1}-x_i)^{8/\kappa-1}$, which vanishes as $x_{i+1}\rightarrow x_i$.  Because the Coulomb gas integral and all other factors of $\mathcal{F}_\vartheta$ approach finite values as $x_{i+1}\rightarrow x_i$, the limit $\bar{\ell_1}\mathcal{F}_\vartheta$ (\ref{thelim}) is zero.

\noindent
\textbf{Main result:} In case \ref{firstcase}, the limit $\bar{\ell}_1\mathcal{F}_\vartheta$ (\ref{thelim}) is zero (figure \ref{Cases}).  Thus, $(x_i,x_{i+1})$ is a two-leg interval of $\mathcal{F}_\vartheta$.

\begin{figure}[b!]
\centering
\includegraphics[scale=0.27]{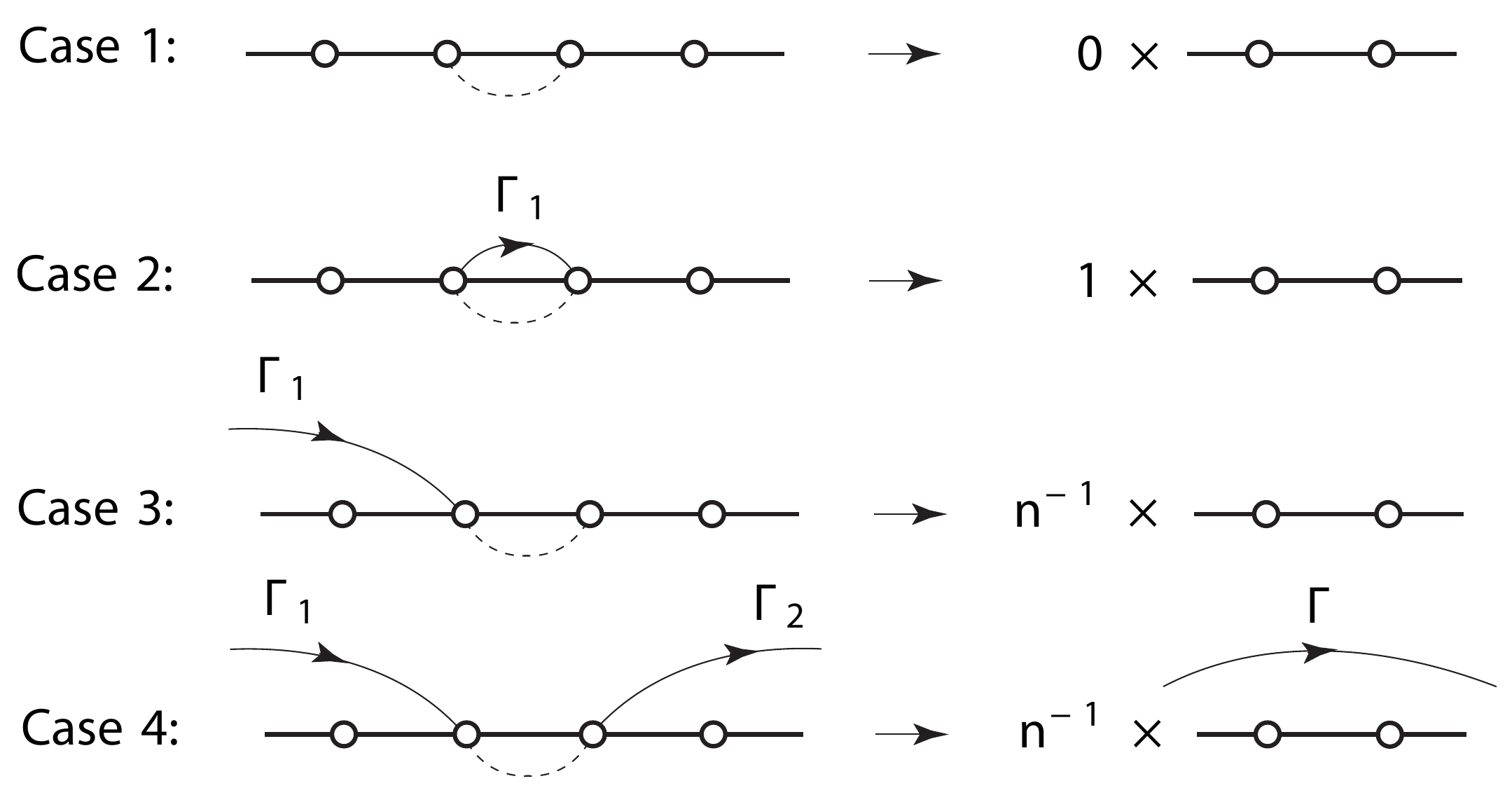}
\caption{Cases \ref{firstcase}--\ref{fourthcase} of an interval collapse.  The dashed curve connects the endpoints of the intervals to be collapsed, and the solid curves are the integration contours.  Figure \ref{Case4} shows two other contour arrangements that fall under case \ref{fourthcase}.}
\label{Cases}
\end{figure}

\item\label{secondcase} \textbf{Configuration:} In case \ref{secondcase}, both $x_i$ and $x_{i+1}$ are endpoints of a single, common integration contour $\Gamma_1$ of $\mathcal{F}_\vartheta$.  Hence, this contour is $\Gamma_1=\mathscr{P}(x_i,x_{i+1})$ or $[x_i,x_{i+1}]^+$ (figure \ref{Cases}).  (The superscript $+$ indicates that we form the contour $[x_i,x_{i+1}]^+$ by slightly bending $[x_i,x_{i+1}]$ into the upper half-plane, keeping the endpoints fixed.)

\noindent
\textbf{Calculation:} The discussion in the above paragraph with (\ref{firstgroup}--\ref{betafunc}) outlines the calculation of the limit $\bar{\ell}_1\mathcal{F}_\vartheta$ (\ref{thelim}) (figure \ref{Cases}), and the last paragraphs of section \ref{s2} in appendix \ref{asymp} present this calculation in full detail.  (See the ``main result" in that section.)

\noindent
\textbf{Main result:} In case \ref{secondcase}, the limit $\bar{\ell}_1\mathcal{F}_\vartheta$ (\ref{thelim}) equals $n$ (\ref{fugacity}) times the element of $\mathcal{B}_{N-1}$ generated from the formula (\ref{firstFexplicit1}) for $\mathcal{F}_\vartheta$ by dropping all factors involving $x_i$, $x_{i+1}$, and $u_1$, dropping the integration along $\Gamma_1$, and reducing the prefactor power in (\ref{firstprefactor}) or (\ref{secondprefactor}) by one.  Also, $(x_i,x_{i+1})$ is an identity interval of $\mathcal{F}_\vartheta$ if $8/\kappa\not\in2\mathbb{Z}^++1$.

\item\label{thirdcase} 
\textbf{Configuration:} In case \ref{thirdcase}, either $x_i$ or $x_{i+1}$ is an endpoint of a single integration contour $\Gamma_1$ of $\mathcal{F}_\vartheta$, but the other is not an endpoint of any contour (figure \ref{Cases}).  Thus, the latter point is the left endpoint of the arc terminating at $x_{2N}$ in the half-plane diagram for $\mathcal{F}_\vartheta$.

\noindent
\textbf{Initial assumptions:} We assume that $i>1$, $\kappa>4$, $x_i$ is an endpoint of $\Gamma_1$, and $\Gamma_1$ does not pass over the interval $(x_i,x_{i+1})$.  Under ``further details" below, we explain why these assumptions hold, or we extend our proof to situations in which they do not.

\noindent
\textbf{Calculation:} With $\kappa>4$, we simplify the formula (\ref{firstFexplicit1}) with $c=2N$ as per item \ref{simplifyformula} of definition \ref{Fkdefn}.  Thus, $\Gamma_1$ is a simple contour, and we decompose it into one simple contour $\Gamma_1'$ with its right endpoint at $x_{i-1}$ and another $\Gamma_1''$ with its endpoints at $x_i$ and $x_{i-1}$.  (We might have $\Gamma_1'=\emptyset$ and $\Gamma_1=\Gamma_1''$.)  Also, the limit (\ref{thelim}) breaks into
\begin{multline}\label{limdecomp}\lim_{x_{i+1}\rightarrow x_i}(x_{i+1}-x_i)^{6/\kappa-1}\mathcal{F}_\vartheta(\kappa\,|\,\boldsymbol{x})\,\,\,=\lim_{x_{i+1}\rightarrow x_i}(x_{i+1}-x_i)^{6/\kappa-1}\big(\mathcal{F}_\vartheta\big|_{\Gamma_1\mapsto\Gamma_1'}\big)(\kappa\,|\,\boldsymbol{x})\\
+\lim_{x_{i+1}\rightarrow x_i}(x_{i+1}-x_i)^{6/\kappa-1}\big(\mathcal{F}_\vartheta\big|_{\Gamma_1\mapsto\Gamma_1''}\big)(\kappa\,|\,\boldsymbol{x}).\end{multline}
The first limit on the right side of (\ref{limdecomp}) falls under case \ref{firstcase} and therefore vanishes.  Meanwhile, the second limit on the right side of (\ref{limdecomp}) still falls under case \ref{thirdcase}, and we compute it in section \ref{s3} of appendix \ref{asymp}.  (See the ``main result" in that section.)  There, we deform $\Gamma_1''$ in a way that generates terms only falling under cases \ref{firstcase} and \ref{secondcase}.  Only the latter type of term has a non-vanishing limit as $x_{i+1}\rightarrow x_i$, and a factor of $n^{-1}$ accompanies it (figure \ref{Cases}).  Thus, the second limit on the right side of (\ref{limdecomp}) equals that found in case \ref{secondcase} multiplied by this extra factor.

\noindent
\textbf{Main result:} In case \ref{thirdcase}, the limit $\bar{\ell}_1\mathcal{F}_\vartheta$ (\ref{thelim}) equals the element of $\mathcal{B}_{N-1}$ generated from the formula (\ref{firstFexplicit1}) for $\mathcal{F}_\vartheta$ by dropping all factors involving $x_i$, $x_{i+1}$, and $u_1$, dropping the integration along $\Gamma_1$, and reducing the prefactor power in (\ref{firstprefactor}) or (\ref{secondprefactor}) by one.

\noindent
\textbf{Further details:} Above, we assume that $i>1$, $\kappa>4$, $x_i$ is an endpoint of $\Gamma_1$, and $\Gamma_1$ does not pass over $(x_i,x_{i+1})$.  Here, we justify these assumptions, or we extend our proof to situations in which they do not hold.
\begin{enumerate}
\item\label{ineq1}We have $i>1$.  For if $i=1$, then $x_1$, and not $x_2$, must be the other endpoint of the arc terminating at $x_{2N}$ in the half-plane diagram for $\mathcal{F}_\vartheta$.  (Indeed, one endpoint of this arc must have an odd index while the other has the even index $2N$.)  But then, $x_1$ is an endpoint of the arc corresponding to $\Gamma_1$ too, an impossibility.
\item $\Gamma_1$ may not pass over $(x_i,x_{i+1})$.  Indeed, if it did, then its arc would cross the arc joining an endpoint of $(x_i,x_{i+1})$ with $x_{2N}$ in the half-plane diagram for $\mathcal{F}_\vartheta$, an impossibility.  (See definition \ref{Fkdefn}.)
\item If $x_{i+1}$ is an endpoint of $\Gamma_1$ but $x_i$ is not (here, we may have $i=1$), then repeating the above analysis yields the same result.  (But if $i=2N-2$ and $\kappa<4$, then a subtlety arises, which we explore next.)
\item\label{kappaextend} To extend our results to $\kappa\in(0,4]$, we revert the formula for $\mathcal{F}_\vartheta$ back to (\ref{firstFexplicit1}) with $c=2N$ on the left side of (\ref{limdecomp}) by replacing the simple contours with Pochhammer contours entwining the same endpoints and the prefactor (\ref{secondprefactor}) with (\ref{firstprefactor}).  (Here, the integration contour $\Gamma$ for $\int_\Gamma$ is simple, but the contour for $\oint_\Gamma$ is Pochhammer with the same endpoints as the former simple contour.  Although these two contours are different, we use the same symbol $\Gamma$ for both.)
\begin{multline}\label{adjustment}n(\kappa)\left[\frac{n(\kappa)\Gamma(2-8/\kappa)}{\Gamma(1-4/\kappa)^2}\right]^{N-1}\sideset{}{_{\Gamma_{N-1}}}\int\dotsm\sideset{}{_{\Gamma_2}}\int\sideset{}{_{\Gamma_1}}\int\\
\quad\longmapsto\quad n(\kappa)\left[\frac{n(\kappa)\Gamma(2-8/\kappa)}{4\sin^2(4\pi/\kappa)\Gamma(1-4/\kappa)^2}\right]^{N-1}\sideset{}{_{\Gamma_{N-1}}}\oint\dotsm\sideset{}{_{\Gamma_2}}\oint\sideset{}{_{\Gamma_1}}\oint.\end{multline}
Next, we adjust the terms on the right side of (\ref{limdecomp}) in exactly the same way as (\ref{adjustment}) shows, with $\Gamma_1$ replaced by $\Gamma_1'$ or $\Gamma_1''$, and these changes do not alter these terms if $\kappa>4$, thanks to identity (\ref{Pochtostraight}).  Because this adjusted version of (\ref{limdecomp}) is true for $\kappa>4$ and all of its terms are analytic functions of $\kappa\in(0,8)\times i\mathbb{R}$, we arrive with the analytic continuation of (\ref{limdecomp}) to this strip.  Performing the same analysis on this analytic continuation of (\ref{limdecomp}) gives the same main result that we previously found.  (See section \ref{s3} of appendix \ref{asymp}.)

There is one subtlety worth mentioning.  If $x_{i+1}=x_{2N-1}$ is an endpoint of $\Gamma_1$, then $x_{i+2}=x_{2N}$ is an endpoint of both $\Gamma_1'$ and $\Gamma_1''$.  Now, the point $x_{2N}$ is exceptional because it bears the conjugate charge.  Indeed, the integrand of $\mathcal{F}_\vartheta$ accumulates a factor of $e^{12\pi i/\kappa}$ as we cycle around it, instead of $e^{-4\pi i/\kappa}$ as we cycle around any $x_j$ with $j<2N$.  As such, some factors in the adjustment (\ref{adjustment}) to the integration along $\Gamma_1''$ and $\Gamma_1'$ in (\ref{limdecomp}) are different.  Indeed, identity (\ref{Pochtostraight}) implies that the adjustment to this former term is
\begin{multline}\label{nextadjustment}n(\kappa)\left[\frac{n(\kappa)\Gamma(2-8/\kappa)}{\Gamma(1-4/\kappa)^2}\right]^{N-1}\sideset{}{_{\Gamma_{N-1}}}\int\dotsm\sideset{}{_{\Gamma_2}}\int\sideset{}{_{\Gamma_1''}}\int(u_1-x_{2N-1})^{-4/\kappa}(x_{2N}-u_1)^{12/\kappa-2}\dotsm\\
\begin{aligned}\longmapsto\qquad&\overbrace{n(\kappa)\left[\frac{n(\kappa)\Gamma(2-8/\kappa)}{4\sin^2(4\pi/\kappa)\Gamma(1-4/\kappa)^2}\right]^{N-2}\sideset{}{_{\Gamma_{N-1}}}\oint\dotsm\sideset{}{_{\Gamma_3}}\oint\sideset{}{_{\Gamma_2}}\oint\dotsm}^{1.}\\
&\underbrace{\Bigg[\frac{e^{16\pi i/\kappa}n(\kappa)\Gamma(2-8/\kappa)}{4\sin(-4\pi/\kappa)\Gamma(1-4/\kappa)^2}\Bigg]}_{2.}\,\,\underbrace{\frac{1}{\sin(12\pi/\kappa)}\sideset{}{_{\Gamma_1''}}\oint(u_1-x_{2N-1})^{-4/\kappa}(x_{2N}-u_1)^{12/\kappa-2}\dotsm}_{3.}\,\,,\end{aligned}\end{multline}
while the adjustment to the latter term of (\ref{limdecomp}) is almost identical.  In the discussion surrounding (\ref{Freplace}), we show that the first factor in (\ref{nextadjustment}) is analytic in $\kappa\in(0,8)\times i\mathbb{R}$.  With (\ref{fugacity}), it is easy to see that the second factor in (\ref{nextadjustment}) is analytic in this strip too.  Finally, because the sine function in the third factor has a simple zero at $\kappa=12/(r+2)$ for all $r\in\mathbb{Z}$ but the contour integral it multiplies has a zero at these same locations for all $r\in\mathbb{Z}^+\cup\{0\}$ (see (\ref{PochDecomp}) with $\beta_j=r\in\mathbb{Z}^+\cup\{0\}$), the third factor is analytic in $\kappa\in(0,8)\times i\mathbb{R}$ too.
\end{enumerate}

\item\label{fourthcase} \textbf{Configuration:} In case \ref{fourthcase}, the most complicated case, $x_i$ is an endpoint of one contour $\Gamma_1$ of $\mathcal{F}_\vartheta$, and $x_{i+1}$ is an endpoint of a different contour $\Gamma_2$ (figure \ref{Cases}).

\noindent
\textbf{Initial assumptions:} We assume that $i>1$ and $\kappa>4$.  Under ``further details" below, we extend our proof to situations in which these assumptions do not hold.

\noindent
\textbf{Calculation:} With $\kappa>4$, we simplify the formula (\ref{firstFexplicit1}) with $c=2N$ as in item \ref{simplifyformula} of definition \ref{Fkdefn}.  Thus, $\Gamma_1$ and $\Gamma_2$ are simple contours.  We decompose $\Gamma_1$ (resp.\ $\Gamma_2$) into one simple contour $\Gamma_1'$ (resp.\ $\Gamma_2'$) with an endpoint at $x_{i-1}$ (resp.\ $x_{i+2}$) and another $\Gamma_1''$ (resp.\ $\Gamma_2''$) with its endpoints at $x_{i-1}$ and $x_i$ (resp.\ $x_{i+1}$ and $x_{i+2}$).   (In some cases, we might have $\Gamma_1'=\emptyset$ and $\Gamma_1''=\Gamma_1$ (resp.\ $\Gamma_2'=\emptyset$ and $\Gamma_2''=\Gamma_2$).)  This decomposition contains within it three sub-cases: neither $\Gamma_1$ nor $\Gamma_2$ passes over $(x_i,x_{i+1})$, only $\Gamma_2$ passes over $(x_i,x_{i+1})$, or only $\Gamma_1$ passes over $(x_i,x_{i+1})$ (figure \ref{Case4}).  Similarly, the limit (\ref{thelim}) decomposes into
\be\begin{aligned}\label{biggerlimdecomp}\lim_{x_{i+1}\rightarrow x_i}(x_{i+1}-x_i)^{6/\kappa-1}\mathcal{F}_\vartheta(\kappa\,|\,\boldsymbol{x})\,\,&=\lim_{x_{i+1}\rightarrow x_i}(x_{i+1}-x_i)^{6/\kappa-1}\big(\mathcal{F}_\vartheta\big|_{(\Gamma_1,\Gamma_2)\mapsto(\Gamma_1',\Gamma_2')}\big)(\kappa\,|\,\boldsymbol{x})\\
&+\lim_{x_{i+1}\rightarrow x_i}(x_{i+1}-x_i)^{6/\kappa-1}\big(\mathcal{F}_\vartheta\big|_{(\Gamma_1,\Gamma_2)\mapsto(\Gamma_1'',\Gamma_2')}\big)(\kappa\,|\,\boldsymbol{x})\\
&+\lim_{x_{i+1}\rightarrow x_i}(x_{i+1}-x_i)^{6/\kappa-1}\big(\mathcal{F}_\vartheta\big|_{(\Gamma_1,\Gamma_2)\mapsto(\Gamma_1',\Gamma_2'')}\big)(\kappa\,|\,\boldsymbol{x})\\
&+\lim_{x_{i+1}\rightarrow x_i}(x_{i+1}-x_i)^{6/\kappa-1}\big(\mathcal{F}_\vartheta\big|_{(\Gamma_1,\Gamma_2)\mapsto(\Gamma_1'',\Gamma_2'')}\big)(\kappa\,|\,\boldsymbol{x}).\end{aligned}\ee
The first limit on the right side of (\ref{biggerlimdecomp}) falls under case \ref{firstcase} and therefore vanishes.  The second (resp.\ third) limit on the right side of (\ref{biggerlimdecomp}) falls under case \ref{thirdcase} and therefore equals the element of $\mathcal{B}_{N-1}$ with contours $\Gamma_2',$ (resp.\ $\Gamma_1'$) $\Gamma_3,\ldots,\Gamma_{N-1}$.  Finally, the fourth limit on the right side of (\ref{biggerlimdecomp}) still falls under case \ref{fourthcase}, and we compute it in section \ref{s4} of appendix \ref{asymp}.  (See the ``main result" in that section.)  There, we deform $\Gamma_1''$ and $\Gamma_2''$ in a way that generates terms only falling under cases \ref{firstcase} and \ref{secondcase}.  Only the latter type of term has a non-vanishing limit as $x_{i+1}\rightarrow x_i$, and a factor of $n^{-1}$ accompanies it.  Thus, the last limit on the right side of (\ref{biggerlimdecomp}) is the element of $\mathcal{B}_{N-1}$ with contours $\Gamma_0':=[x_{i-1},x_{i+2}]^+$, $\Gamma_3,\Gamma_4,\ldots,\Gamma_{N-1}$.  After summing all four terms, we find that the right side of (\ref{biggerlimdecomp}) equals the element of $\mathcal{B}_{N-1}$ with contours $\Gamma:=\Gamma_0'+\Gamma_1'+\Gamma_2',$ and $\Gamma_3,$ $\Gamma_4,\ldots,\Gamma_{N-1}$ (figure \ref{Cases}).

\begin{figure}[p]
\centering
\includegraphics[scale=0.27]{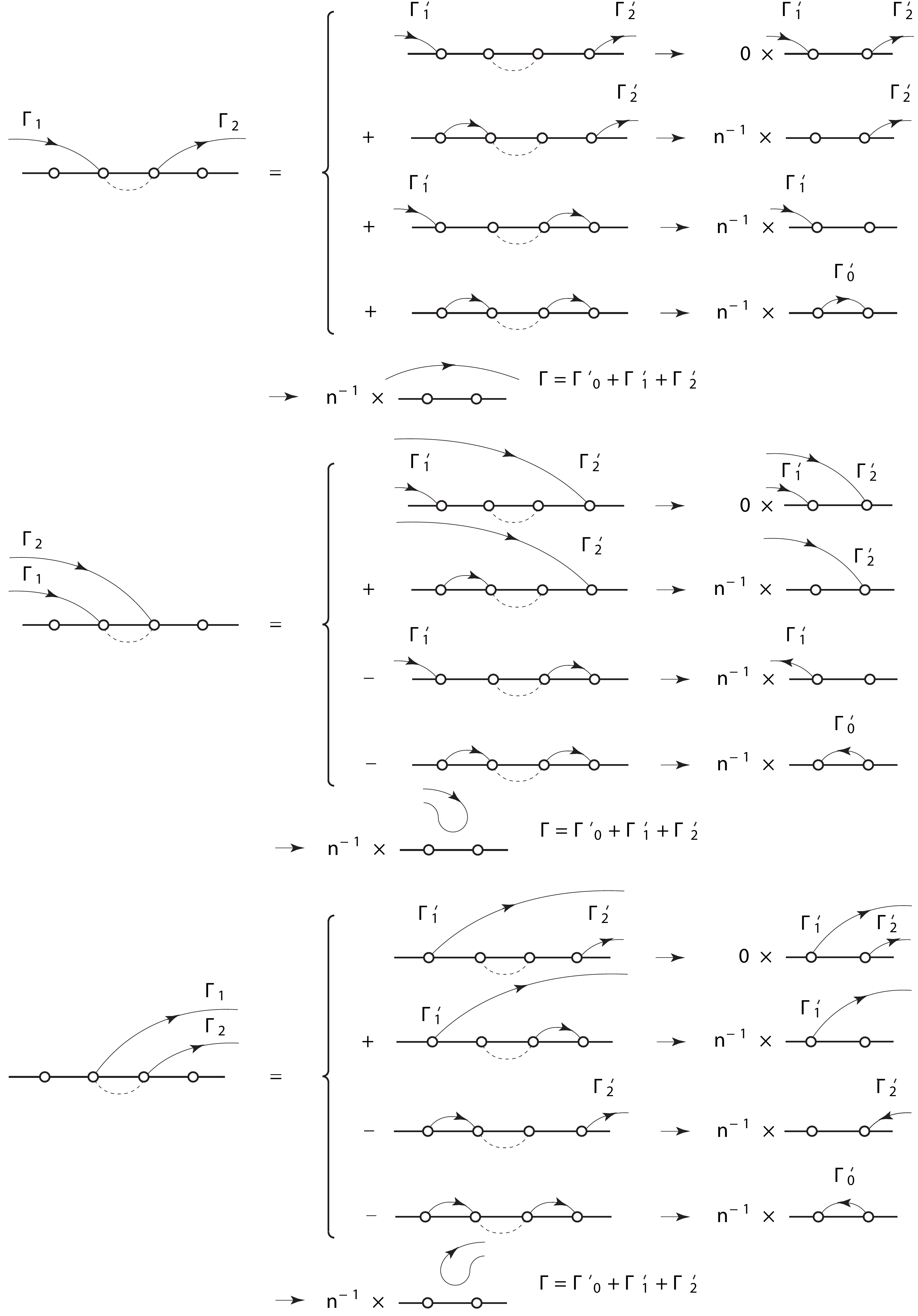}
\caption{The decomposition of case \ref{fourthcase} into cases \ref{firstcase}--\ref{thirdcase} and a simpler version of case \ref{fourthcase}.  The top and middle two terms to the right of each bracket fall under cases \ref{firstcase} and \ref{thirdcase} respectively, and the bottom term to the right of each bracket falls under case \ref{fourthitem}.}
\label{Case4}
\end{figure}

\noindent
\textbf{Main result:} In case \ref{fourthcase}, the limit $\bar{\ell}_1\mathcal{F}_\vartheta$ (\ref{thelim}) equals the element of $\mathcal{B}_{N-1}$ generated from the formula (\ref{firstFexplicit1}) for $\mathcal{F}_\vartheta$ by dropping all factors involving $x_i$, $x_{i+1}$, and $u_1$, dropping the integration along $\Gamma_1$, integrating $u_2$ along the contour $\Gamma$ generated by pulling $x_i$ and $x_{i+1}$ together to join $\Gamma_1$ with $\Gamma_2$, and reducing the prefactor power in (\ref{firstprefactor}) or (\ref{secondprefactor}) by one.

\noindent
\textbf{Further details:} Above, we assume that $i>1$ and $\kappa>4$.  Here, we extend our proof to situations in which they do not hold.
\begin{enumerate}
\item\label{i-1=0}We may have $i=1$.  Here, we informally identify the index $i-1=0$ with $2N$ and treat this scenario identically to scenarios with $i>1$.  More formally, we expand $\Gamma_1$, with its endpoints fixed, into the upper half-plane and deform it into the shape of a semicircle with a very large radius $R$ and with its base flush against the real axis.  As $R\rightarrow\infty$, the integration along the arc of the semicircle vanishes like $R^{-1}$, and the integration path along the base decomposes into three segments: $[-\infty,x_1]^+$, $[x_{2N},\infty]^+$, and the remainder of the base, which we call $\Gamma_1'$.  Because, the integrand of $\mathcal{F}_\vartheta$, as a function of $u_1$, vanishes like $u_1^{-2}$ as $u_1\rightarrow\pm\infty$ (\ref{firstFexplicit1}), the integrations along the first two segments converge, and we join them into one integration along $\Gamma_1''=[x_{2N},x_1]^+$, where the interval $[x_{2N},x_1]$ contains infinity.

\item\label{lastb}To extend our results to $\kappa\in(0,4]$, we use the analytic continuation described above in \ref{kappaextend}.  (We note that the point $x_{2N}$ may be an endpoint of $\Gamma_2'$ and $\Gamma_2''$ but never of $\Gamma_1'$ or $\Gamma_1''$, and this happens only if $i=2N-2$.)
\end{enumerate}
\end{enumerate}
To summarize, we compute the limit $\bar{\ell}_1\mathcal{F}_\vartheta$ (\ref{thelim}) for each of the four possible cases (figure \ref{Cases}) in which an integration contour may surround the points $x_i$ or $x_{i+1}$.  Items \ref{firstcase}--\ref{fourthcase} above respectively give these calculations for these cases.

\begin{figure}[b]
\centering
\includegraphics[scale=0.3]{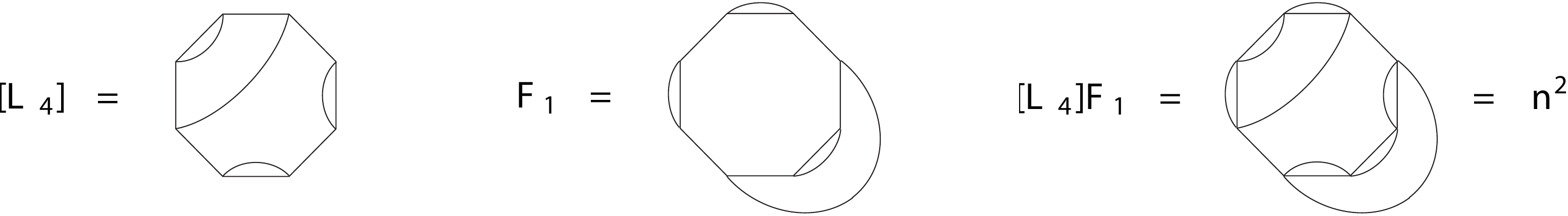}
\caption{The diagram for $[\mathscr{L}_4]\in\mathscr{B}_4^*$, for $\mathcal{F}_1\in\mathcal{B}_4$, and for their product $[\mathscr{L}_4]\mathcal{F}_1\in\mathbb{R}$.  The product diagram contains two loops and therefore evaluates to $n^2$.}
\label{innerproduct}
\end{figure}

We recall that $\bar{\ell}_1$ (\ref{thelim}) is the first of a collection $\{\bar{\ell}_1,\bar{\ell}_2,\ldots,\bar{\ell}_N\}$ of $N$ limits for some element $\mathscr{L}_\varsigma:=\bar{\ell}_N\dotsm\bar{\ell}_2\bar{\ell}_1$ of the equivalence class $[\mathscr{L}_\varsigma]$, and computing $\bar{\ell}_1\mathcal{F}_\vartheta$ is our first step toward our ultimate goal of finding $[\mathscr{L}_\varsigma]\mathcal{F}_\vartheta$.  Now with this first limit determined, computing $\bar{\ell}_2\bar{\ell}_1\mathcal{F}_\vartheta$ is the next step.  But with $\bar{\ell}_1\mathcal{F}_\vartheta\in\mathcal{S}_{N-1}$ thanks to lemma \red{5} of \cite{florkleb}, the calculation of this second limit is identical to the calculation of the first, with $N$ replaced by $N-1$.  The same is true of the other $N-2$ limits that compose $\mathscr{L}_\varsigma$.  Exploiting the similarities of these calculations, we use a diagrammatic method introduced in \cite{js} to facilitate our calculation of $[\mathscr{L}_\varsigma]\mathcal{F}_\vartheta$.  We draw the polygon diagram for $[\mathscr{L}_\varsigma]$ and that for $\mathcal{F}_\vartheta$ on the same polygon (figure \ref{innerproduct}) and call the result the diagram for $[\mathscr{L}_\varsigma]\mathcal{F}_\vartheta$.  The interior and exterior arcs of this diagram respectively represent the limits of $[\mathscr{L}_\varsigma]$ to be taken and the integration contours of $\mathcal{F}_\vartheta$  (except for the exterior arc with an endpoint at $x_{2N}$, which has no associated integration contour). Now, each vertex of the $2N$-sided polygon $\mathcal{P}$ in this diagram is the endpoint of a unique exterior arc and a unique interior arc.  Thus, starting on an arbitrary interior arc, we may follow it in a given (say clockwise) direction, passing onto an exterior arc, and then another interior arc, etc., until we return to our starting point.  All of the arcs thus traversed  join to form a loop that dodges in and out of $\mathcal{P}$ through its vertices.  If an arc in the diagram for $[\mathscr{L}_\varsigma]\mathcal{F}_\vartheta$ is not a part of this loop, then we repeat the process starting with that arc, and we continue this until all arcs are included in a loop.  Thus, all of the arcs in the diagram for $[\mathscr{L}_\varsigma]\mathcal{F}_\vartheta$ join to form $l_{\varsigma,\vartheta}\in\mathbb{Z}^+$ loops. 

We recall from the first paragraph of this proof that $x_i$ and $x_{i+1}$ are endpoints of a common interior arc in the half-plane diagram for $[\mathscr{L}_\varsigma]$.  As such, the corresponding polygon vertices $w_i$ and $w_{i+1}$ are endpoints of a common interior arc in the diagram for $[\mathscr{L}_\varsigma]\mathcal{F}_\vartheta$.  Now one of two possibilities may happen.
\begin{enumerate}[I.]
\item The vertices $w_i$ and $w_{i+1}$ may be endpoints of the same exterior arc in the diagram for $[\mathscr{L}_\varsigma]\mathcal{F}_\vartheta$, joining with their interior arc to form a loop that intersects the polygon $\mathcal{P}$ of that diagram only at those vertices.  We identify this situation with case \ref{secondcase} above.  Collapsing the interval $(x_i,x_{i+1})$ deletes this loop and the side it surrounds from $\mathcal{P}$, fusing its adjacent sides together to create a $(2N-2)$-sided polygon $\mathcal{P}'$.  This modification sends $\mathcal{F}_\vartheta$ to $n$ times the element of $\mathcal{B}_{N-1}$ whose diagram is given by the remaining $N-1$ exterior arcs attached to $\mathcal{P}'$.    
\item The vertices $w_i$ and $w_{i+1}$ may not be endpoints of the same exterior arc in the diagram for $[\mathscr{L}_\varsigma]\mathcal{F}_\vartheta$.  We identify this situation with either case \ref{thirdcase} or \ref{fourthcase} above.  Collapsing the interval $(x_i,x_{i+1})$ deletes the corresponding side and its attached interior arc from $\mathcal{P}$, fusing its adjacent sides together to create a $(2N-2)$-sided polygon $\mathcal{P}'$, and joining the two exterior arcs with an endpoint at $w_i$ or $w_{i+1}$ into one exterior arc.  This modification sends $\mathcal{F}_\vartheta$ to the element of $\mathcal{B}_{N-1}$ whose diagram is given by the remaining $N-1$ exterior arcs attached to $\mathcal{P}'$. 
\end{enumerate}
  
We repeat collapsing the sides of $\mathcal{P}$ this way another $N-1$ more times.  As we do this, we eventually contract away each loop in the diagram for $[\mathscr{L}_\varsigma]\mathcal{F}_\vartheta$ (with the polygon deleted), finding a factor of $n$ in its wake.  Thus (figure \ref{innerproduct}),
\be\label{LkFk}\langle[\mathscr{L}_\varsigma],[\mathscr{L}_{\vartheta}]\rangle:=[\mathscr{L}_\varsigma]\mathcal{F}_\vartheta=n^{l_{\varsigma,\vartheta}}.\ee
J.\ Simmons independently discovered this result (\ref{LkFk}) before us for $N\in\{1,2,3\}$, and he published the case $N=2$ in his article on percolation and logarithmic CFT \cite{js}.  (See figure \red{7} of that article.)  Prior to publication, he shared these results with us, and from that, we anticipated the more general result (\ref{LkFk}) for all $N\in\mathbb{Z}^+$.

So far, we have proven (\ref{LkFk}) only for all $\kappa\in(0,8)$ with $8/\kappa\not\in\mathbb{Z}^+$.  To remove this latter restriction, we first note that for some $\epsilon>0$, each limit in every element of $[\mathscr{L}_\varsigma]$ is uniform over $\mathcal{K}:=(\kappa-\epsilon,\kappa+\epsilon)$ with $8/\kappa\in\mathbb{Z}^+$.  We may prove this claim by sending $\delta\downarrow0$ in (\red{65}) and (\red{60}) of \cite{florkleb} (after taking the supremum of the latter over $\mathcal{K}$).  (See the proof of lemma \red{4} of \cite{florkleb} for context, keeping in mind that $\mathcal{K}$ has different meaning in that proof, although this does not matter here.)  Thus, we may commute the limit $\varkappa\rightarrow\kappa$ with each limit of every element of $[\mathscr{L}_\varsigma]$.  So by sending $\varkappa\rightarrow\kappa$ on both sides of (\ref{LkFk}) and commuting this limit with $[\mathscr{L}_\varsigma]$, we prove (\ref{LkFk}) for $8/\kappa\in\mathbb{Z}^+$ too.

Equation (\ref{LkFk}) defines an inner product on the space spanned by the elements of $\mathscr{B}_N^*$, and this inner product is identical to one for the Temperley-Lieb algebras $TL_N(n)$ \cite{tl} studied in \cite{fgg,difranc,fgut,franc}.  
(In particular, see figure \red{39} of \cite{fgut}.)  The Gram matrix $M_N\circ n$ of this inner product, whose $(\varsigma,\vartheta)$th entry is given by (\ref{LkFk}), is called the \emph{meander matrix} \cite{fgg,difranc,fgut,franc}.  In our application, the vectors of $v(\mathcal{B}_N)$ form the columns of $M_N\circ n$.   We conclude from lemma \red{15} of \cite{florkleb} that $\mathcal{B}_N$ is linearly independent if and only if the determinant of $M_N\circ n$ is not zero.

The determinant of this Gram matrix is the \emph{meander determinant}, and P.\ Di Francesco, O.\ Golinelli, and E.\ Guitter compute it in \cite{fgg} (see also \cite{difranc,fgut,franc}), giving the formula
\bea\det(M_N\circ n)(\kappa)&=&\prod_{q=1}^N [(U_q\circ n)(\kappa)]^{a(N,q)}\\
\label{meanderzeros}&=&\prod_{\mathclap{1\leq q''<q\leq N+1}}^N\,\,[n(\kappa)-n_{q,q''}]^{a(N,q-1)},\quad \text{$n_{q,q''}:=-2\cos\left(\frac{\pi q''}{q}\right)$ with $q,q''\in\mathbb{Z}^+$ and $q''<q$,}\eea
where $U_q(n)$ is the $q$th Chebychev polynomial of the second kind \cite{fgg}, and the power $a(N,q)$ is given by 
\be\label{as}a(N,q)=\binom{2N}{N-q}-2\binom{2N}{N-q-1}+\binom{2N}{N-q-2}.\ee
Because $n_{q,q''}$ (\ref{meanderzeros}) only depends on the ratio $q''/q$, we adopt the convention that the pair $\{q,q''\}$ labeling it is coprime.  (Table \ref{thezeros} shows a list of the first few $n_{q,q''}$ and their correspondence to critical lattice models via the O$(n)$ model.)

Each zero $\kappa$ of the meander determinant (\ref{meanderzeros}) satisfies the equation $n(\kappa)=n_{q,q''}$ for some pair $\{q,q''\}$ of coprime positive integers with $q''<q\leq N+1$, and the solutions of this equation are $\kappa_{q,q'}$ with $q'=2mq\pm q''$ for any $m\in\mathbb{Z}$.  All of these solutions are rational, and almost all rational numbers are solutions, with only those of the form $4/r$ for some $r\in\mathbb{Z}\setminus\{0\}$ odd (resp.\ even) excluded by the condition $q''<q$ (resp.\ $q''\neq0$) and zero excluded by the condition $q\neq0$.  Furthermore, we are only interested in the positive solutions, and all of these have $q'=q''$ or $q'=2mq\pm q''$ for some $m\in\mathbb{Z}^+$.  We note that every such positive solution is an exceptional speed (\ref{exceptional}), and every exceptional speed is one such positive solution.  Thus, because the positive solution $\kappa_{q,q'}$ is a zero of the meander determinant (\ref{meanderzeros}) if and only if $q\leq N+1$, the lemma follows.
\end{proof}

\begin{table}[b]\footnotesize\label{MeanderZeros}
\centering
\begin{tabular}{>{\centering}p{1.5cm}|>{\centering}p{1.5cm}>{\centering}p{1.5cm}>{\centering}p{1.5cm}>{\centering}p{1.5cm}>{\centering}p{1.5cm}>{\centering}p{1.5cm}}
\diagbox{$q''$}{$q$} 
& 1 & 2 & 3 & 4 & 5 & 6 
\tabularnewline
\hline
\rule{0pt}{.5cm} 1 & $\times$ & 0 & $-1$ & $-\sqrt{2}$ & $\displaystyle{-\frac{1+\sqrt{5}}{2}}$ & $-\sqrt{3}$
\tabularnewline
\rule{0pt}{.5cm} 2 & $\times$ & $\times$ & $1$ & $\times$ & $\displaystyle{\frac{1-\sqrt{5}}{2}}$ & $\times$ 
\tabularnewline
\rule{0pt}{.5cm} 3 & $\times$ & $\times$ & $\times$ & $\sqrt{2}$ & $\displaystyle{\frac{-1+\sqrt{5}}{2}}$ & $\times$
\tabularnewline
\rule{0pt}{.5cm} 4 & $\times$& $\times$ & $\times$ & $\times$ & $\displaystyle{\frac{1+\sqrt{5}}{2}}$ & $\times$ 
\tabularnewline
\rule{0pt}{.5cm} 5 & $\times$ & $\times$ & $\times$ & $\times$ & $\times$ & $\sqrt{3}$ 
\end{tabular}
\caption{The first zeros $n_{q,q''}$ of $\det M_N\circ n$.  Descending the superdiagonal, we have the dense-phase O$(n)$-model loop fugacity of the uniform spanning tree, percolation, $Q=2$ FK clusters, the tri-critical Ising model, and $Q=3$ FK clusters.}
\label{thezeros}
\end{table}

The proof of lemma \ref{mainlem} establishes this useful corollary.
\begin{cor}\label{meadnertheorem} Suppose that $\kappa\in(0,8)$.  Then $\text{rank}\,\mathcal{B}_N=\text{rank}\,M_N\circ n$.\end{cor}
\noindent
If $n(\kappa)\neq n_{q,q''}$ for coprime integers $1\leq q''<q\leq N+1$, then the nullity of $(M_N\circ n)(\kappa)$ equals zero.  Otherwise, the nullity equals the multiplicity $d_N(q,q'')$ of the zero $n_{q,q''}$ (\ref{meanderzeros}) of the meander determinant \cite{franc}.  Hence,
\be\label{rankMN}\text{rank}\,\mathcal{B}_N(\kappa)=\text{rank}\,(M_N\circ n)(\kappa)=\left\{\begin{array}{ll} C_N-d_N(q,q''), & \text{$\kappa=\kappa_{q,q'}$ (\ref{exceptional}) with $q\leq N+1$} \\ C_N, & \text{otherwise}\end{array}\right.\ee
thanks to the dimension theorem and corollary \ref{meadnertheorem}.  In (\ref{rankMN}), the precise relationship between $q''$ and $q'$ is irrelevant because the multiplicity $d_N(q,q'')$ is independent of $q''$.  Indeed, we see this from the following formula derived in \cite{fgg}:
\bea\label{rankdeg}d_N(q,q'')\,\,\,\,=\,\,\,\,\sum_{p=1}^{\mathclap{\lfloor(N+1)/q\rfloor}}\,\, a(N,pq-1)\,\,\,\,=\,\,\,\,C_N-\frac{1}{2q}\sum_{p=1}^{q-1}\left(2\sin\frac{\pi p}{q}\right)^2\left(2\cos\frac{\pi p}{q}\right)^{2N}.\eea

Now we use lemma \ref{mainlem} to prove most of the following theorem, which is the main result of this article.
\begin{theorem}\label{maintheorem} Suppose that $\kappa\in(0,8)$.  Then the following are true.
\begin{enumerate}
\item\label{firstitem} $\mathcal{B}_N$ is a basis for $\mathcal{S}_N$ if and only if $\kappa$ is not an exceptional speed (\ref{exceptional}) with $q\leq N+1$.
\item\label{seconditem}  $\dim\mathcal{S}_N=C_N,$ with $C_N$ the $N$th Catalan number (\ref{catalan}).
\item\label{thirditem} $\mathcal{S}_N$ has a basis consisting entirely of real-valued Coulomb gas solutions.  (See definition \ref{CGsolnsdef}.)
\item\label{fourthitem} The map $v:\mathcal{S}_N\rightarrow\mathbb{R}^{C_N}$ with $v(F)_\varsigma:=[\mathscr{L}_\varsigma]F$ is a vector-space isomorphism.
\item\label{fifthitem} $\mathscr{B}_N^*:=\{[\mathscr{L}_1],[\mathscr{L}_2],\ldots,[\mathscr{L}_{C_N}]\}$ is a basis for $\mathcal{S}_N^*$.
\end{enumerate}\end{theorem}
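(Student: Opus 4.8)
The plan is to derive all five parts from three facts already in hand: the bound $\dim\mathcal{S}_N\leq C_N$, valid for every $\kappa\in(0,8)$ (proved in \cite{florkleb} under conjecture \red{14} via the injectivity of $v$); the inclusion $\mathcal{B}_N\subset\mathcal{S}_N$ (theorem \ref{vertexop}); and lemma \ref{mainlem} together with the identity $v(F_\vartheta)=(\text{the }\vartheta\text{th column of }M_N\circ n)$ established in its proof (equation (\ref{LkFk})). Part \ref{firstitem} is then immediate: if $\kappa$ is not an exceptional speed (\ref{exceptional}) with $q\leq N+1$, lemma \ref{mainlem} makes $\mathcal{B}_N$ a linearly independent subset of $\mathcal{S}_N$ of cardinality $C_N$, so the bound $\dim\mathcal{S}_N\leq C_N$ forces it to be a basis; and if $\kappa$ is such a speed, then $\mathcal{B}_N$ is linearly dependent and hence not a basis.

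For parts \ref{seconditem} and \ref{thirditem} I would show $\dim\mathcal{S}_N\geq C_N$ with a spanning set of real-valued Coulomb gas solutions, treating separately the speeds already covered by part \ref{firstitem} (where $\mathcal{B}_N$ itself works) and an exceptional speed $\kappa'=\kappa_{q,q'}$ with $q\leq N+1$, where $\mathcal{B}_N$ degenerates. Fix such a $\kappa'$. Since $n(\kappa)=-2\cos(4\pi/\kappa)$ is real-analytic near $\kappa'$ and the entries of $M_N$ are integer polynomials in $n$, the Gram matrix $M_N\circ n(\kappa)$ is real-analytic near $\kappa'$ with determinant not identically zero, so over the ring of germs at $\kappa'$ of real-analytic functions of $\kappa$ (a discrete valuation ring, hence a principal ideal domain) it has a Smith normal form $M_N\circ n=P(\kappa)D(\kappa)Q(\kappa)$ with $P,Q$ real-analytic and invertible at $\kappa'$ and $D(\kappa)=\operatorname{diag}\bigl((\kappa-\kappa')^{k_1},\ldots,(\kappa-\kappa')^{k_{C_N}}\bigr)$. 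I would then set
\[\tilde F_j(\kappa\,|\,\boldsymbol{x}):=\sum_{\vartheta=1}^{C_N}\bigl[Q(\kappa)^{-1}\bigr]_{\vartheta j}F_\vartheta(\kappa\,|\,\boldsymbol{x}),\]
which is real-valued (the Smith data is real), analytic in $\kappa$, and — using that each $F_\vartheta$ is analytic in $\kappa$ near $\kappa'$ (established before lemma \ref{mainlem}) — satisfies $v(\tilde F_j(\kappa))=(\kappa-\kappa')^{k_j}P(\kappa)\boldsymbol{e}_j$. Writing $\tilde F_j=(\kappa-\kappa')^{l_j}\hat F_j$ with $\hat F_j$ analytic in $\kappa$ and $\hat F_j(\kappa'\,|\,\cdot)\not\equiv0$, and applying the functionals $[\mathscr{L}_\varsigma]$ — which, as shown in the proof of lemma \ref{mainlem}, may be interchanged with $\kappa\to\kappa'$ — I would show $l_j=k_j$: if $l_j<k_j$ then $\hat F_j(\kappa')$ would be a nonzero element of $\mathcal{S}_N$ annihilated by $v$, contradicting injectivity; if $l_j>k_j$ then $v(\hat F_j(\kappa))=(\kappa-\kappa')^{k_j-l_j}P(\kappa)\boldsymbol{e}_j$ would blow up while its left side stays finite. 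Hence $\hat F_j$ extends analytically across $\kappa'$; the limit $\hat F_j(\kappa')$ lies in $\mathcal{S}_N$, is a real-valued Coulomb gas solution in the extended sense of definition \ref{CGsolnsdef}, and has $v(\hat F_j(\kappa'))=P(\kappa')\boldsymbol{e}_j$. As $P(\kappa')$ is invertible, $\{\hat F_j(\kappa')\}_{j=1}^{C_N}$ has linearly independent image under $v$, so it is a linearly independent family of $C_N$ Coulomb gas solutions in $\mathcal{S}_N$; with $\dim\mathcal{S}_N\leq C_N$ this yields parts \ref{seconditem} and \ref{thirditem} at $\kappa'$, and hence for all $\kappa\in(0,8)$.

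Parts \ref{fourthitem} and \ref{fifthitem} then follow formally. By part \ref{seconditem}, $v$ is an injective linear map between real vector spaces of equal finite dimension $C_N$, hence a vector-space isomorphism, which is part \ref{fourthitem}. For part \ref{fifthitem}, the components of $v$ are precisely the functionals $[\mathscr{L}_\varsigma]$, so if $\sum_\varsigma a_\varsigma[\mathscr{L}_\varsigma]=0$ in $\mathcal{S}_N^*$ then $\langle\boldsymbol{a},v(F)\rangle=0$ for all $F\in\mathcal{S}_N$; surjectivity of $v$ forces $\boldsymbol{a}=0$, so $\mathscr{B}_N^*$ is a linearly independent subset of $\mathcal{S}_N^*$ of cardinality $C_N=\dim\mathcal{S}_N^*$, i.e.\ a basis.

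I expect the one genuinely delicate step to be the passage to the limit $\kappa\to\kappa'$ in the second paragraph, that is, verifying that $\hat F_j(\kappa')$ is a classical solution of (\ref{nullstate}, \ref{wardid}) at $\kappa=\kappa'$ obeying the power-law bound (\ref{powerlaw}), so that it lies in $\mathcal{S}_N$ and the injectivity of $v$ may be invoked. Being a Taylor coefficient in $\kappa$ of an analytic-in-$\kappa$ family of solutions, $\hat F_j(\kappa')$ is again a classical solution: the coefficients of the system are continuous in $\kappa$, and the convergence is locally uniform together with all derivatives. A Cauchy estimate on a small complex circle $\lvert\kappa-\kappa'\rvert=\rho$ bounds $\lvert\hat F_j(\kappa'\,|\,\boldsymbol{x})\rvert$ by $\rho^{-k_j}$ times the supremum over that circle of $\lvert\tilde F_j(\kappa\,|\,\boldsymbol{x})\rvert$, which inherits a bound of the form (\ref{powerlaw}) with $C$ and $p$ uniform over the (compact) circle from the explicit Pochhammer-contour formula (\ref{CGsolns}) for the $F_\vartheta$. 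The remaining points — that $\tilde F_j$ is real-valued, that the extended-sense terminology of definition \ref{CGsolnsdef} indeed applies, and that exceptional speeds with $q>N+1$ require no separate treatment (lemma \ref{mainlem}) — are routine.
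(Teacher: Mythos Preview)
Your proof is correct and shares the paper's overarching strategy; items \ref{firstitem}, \ref{fourthitem}, and \ref{fifthitem} match essentially verbatim (your item \ref{fifthitem} is the paper's maximal-subset argument rephrased directly via surjectivity of $v$). The genuine difference lies in the treatment of exceptional speeds for items \ref{seconditem} and \ref{thirditem}. The paper picks a basis $\{\boldsymbol a_1,\ldots,\boldsymbol a_d\}$ for $\ker M_N\circ n(\kappa')$, extends it to an invertible matrix $A$, renormalizes the first $d$ combinations by powers $[n(\kappa)-n(\kappa')]^{m_\varrho}$, and then uses a determinant count to force every $m_\varrho=1$, whence $\det M_N'\circ n(\kappa')\neq0$. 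Your Smith-normal-form construction over the DVR of real-analytic germs is a cleaner and more robust packaging of the same renormalization idea: it diagonalizes the problem so that $v(\hat F_j(\kappa'))=P(\kappa')\boldsymbol e_j$ falls out directly, without ever needing the values of the $k_j$. What the paper's route buys in exchange is the explicit conclusion that all orders of vanishing equal one (equivalently $k_j\in\{0,1\}$), a fact it later exploits in the proof of theorem \ref{frobseriescor} (equation (\ref{Laurent})) and in section \ref{minmodelsect}. Both arguments lean on the same delicate point you correctly isolate---commuting $[\mathscr L_\varsigma]$ with $\kappa\to\kappa'$ for the renormalized functions and verifying membership of the limit in $\mathcal S_N$---and the paper handles it just as you propose, by invoking the uniformity estimates from \cite{florkleb} and checking that the leading Taylor coefficient in $\kappa$ solves the system.
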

\noindent
(According to definition \ref{CGsolnsdef}, we could equivalently state item \ref{thirditem} of this theorem as ``every element of $\mathcal{S}_N$ is a real-valued Coulomb gas solution.")

\begin{proof} Item \ref{firstitem} follows immediately from lemma \ref{mainlem} after we recall that $|\mathcal{B}_N|=C_N$ and $\dim\mathcal{S}_N\leq C_N$ (according to lemma \red{15} in \cite{florkleb}).  If $\kappa$ is not an exceptional speed (\ref{exceptional}) with $q\leq N+1$, then items \ref{seconditem} and \ref{thirditem} immediately follow from item \ref{firstitem}, but if $\kappa$ is such a speed, then a separate proof of items \ref{seconditem} and \ref{thirditem} is needed, which we provide below.  Because $v$ is linear and injective (according to lemma \red{15} in \cite{florkleb}), its rank equals $\dim\mathcal{S}_N$  by the dimension theorem.  Combining this with item \ref{seconditem}, we have $\text{rank}\,v=\dim\mathcal{S}_N=\dim\mathbb{R}^{C_N}$, the dimension of the codomain of $v$.  Hence, $v$ is surjective too, and this proves item \ref{fourthitem}.  Therefore, all that remains is to prove items \ref{seconditem} and \ref{thirditem} for $\kappa$ an exceptional speed (\ref{exceptional}) with $q\leq N+1$ and to prove item \ref{fifthitem} in general.

To prove items \ref{seconditem} and \ref{thirditem} for $\kappa$ an exceptional speed (\ref{exceptional}) with $q\leq N+1$, we construct a new set of $C_N$ real-valued Coulomb gas solutions that is linearly independent for such $\kappa$ and invoke the fact that $\dim\mathcal{S}_N\leq C_N$ (according to lemma \red{15} in \cite{florkleb}).  If $q''<q\leq N+1$ are positive coprime integers such that $n(\kappa)=n_{q,q''}$, then because $\mathcal{B}_N(\kappa)$ has rank $C_N-d_N(q,q'')$ (\ref{rankMN}), where $d_N(q,q'')$ is the multiplicity of the zero $n_{q,q''}$ (\ref{meanderzeros}) of the meander determinant \cite{fgg}, the solutions in $\mathcal{B}_N(\kappa)$ satisfy exactly $d:=d_N(q,q'')$ different linear dependencies.  We write each as
\be\label{lindep}\sum_{\vartheta=1}^{C_N} a_{\varsigma,\vartheta}\mathcal{F}_\vartheta(\kappa)=0,\quad \varsigma\in\{1,2,\ldots,d\},\ee
where the set $\{\boldsymbol{a}_1,\boldsymbol{a}_2,\ldots,\boldsymbol{a}_d\}$ of vectors $\boldsymbol{a}_\varsigma:=(a_{\varsigma,1},a_{\varsigma,2},\ldots,a_{\varsigma,C_N})\in\mathbb{R}^{C_N}$ is a basis for the kernel of $(M_N\circ n)(\kappa)$.

Next, we construct a new set $\mathcal{B}_N^{\scaleobj{0.75}{\bullet}}(\varkappa)$ of cardinality $C_N$.  With $A$ any $C_N\times C_N$ invertible matrix whose first $d$ columns are $\boldsymbol{a}_1$, $\boldsymbol{a}_2,\ldots,\boldsymbol{a}_d$, we consider the set of solutions
\begin{multline}\label{set}\left\{\sideset{}{_\vartheta}\sum a_{1,\vartheta}\mathcal{F}_\vartheta(\varkappa),\quad\sideset{}{_\vartheta}\sum a_{2,\vartheta}\mathcal{F}_\vartheta(\varkappa),\quad\ldots,\right.\\
\left.\sideset{}{_\vartheta}\sum a_{d,\vartheta}\mathcal{F}_\vartheta(\varkappa),\quad\sideset{}{_\vartheta}\sum a_{d+1,\vartheta}\mathcal{F}_\vartheta(\varkappa),\quad\ldots,\quad\sideset{}{_\vartheta}\sum a_{C_N,\vartheta}\mathcal{F}_\vartheta(\varkappa)\right\}.\end{multline}
If $\varkappa\neq\kappa$, then this new set is also linearly independent because $\det A\neq0$.  But if $\varkappa=\kappa$, then the first $d$ entries are zero while the others collectively form a linearly independent set of full rank $C_N -d$.  Because each $\mathcal{F}_\vartheta(\varkappa)$ is analytic at $\varkappa=\kappa$, for each $\varrho\in\{1,2,\ldots,d\}$, the $\varrho$th element of (\ref{set}) is $a_\varrho(\varkappa-\kappa)^{m_\varrho}+O((\varkappa-\kappa)^{m_\varrho+1})$ as $\varkappa\rightarrow\kappa$ for some $m_\varrho\in\mathbb{Z}^+$ and $a_\varrho\neq0$.  Now, we adjust the set (\ref{set}) so all of its elements remain finite and nonzero as $\varkappa\rightarrow\kappa$ thus:
\begin{multline}\label{BNprime}\mathcal{B}_N^{\scaleobj{0.75}{\bullet}}(\varkappa):=\left\{(\varkappa-\kappa)^{-m_1}\sideset{}{_\vartheta}\sum a_{1,\vartheta}\mathcal{F}_\vartheta(\varkappa),\,(\varkappa-\kappa)^{-m_2}\sideset{}{_\vartheta}\sum a_{2,\vartheta}\mathcal{F}_\vartheta(\varkappa),\,\ldots\right.\\
\left.(\varkappa-\kappa)^{-m_d}\sideset{}{_\vartheta}\sum a_{d,\vartheta}\mathcal{F}_\vartheta(\varkappa),\,\sideset{}{_\vartheta}\sum a_{d+1,\vartheta}\mathcal{F}_\vartheta(\varkappa),\,\ldots,\,\sideset{}{_\vartheta}\sum a_{C_N,\vartheta}\mathcal{F}_\vartheta(\varkappa)\right\}.\end{multline}
This new set $\mathcal{B}_N^{\scaleobj{0.75}{\bullet}}(\varkappa)$ comprises $C_N$ Coulomb gas solutions, and we let $\mathcal{F}_\varrho^{\scaleobj{0.75}{\bullet}}(\varkappa)$ be its $\varrho$th element.  According to item \ref{itemdef2} of definition \ref{CGsolnsdef}, $\mathcal{F}_\varrho^{\scaleobj{0.75}{\bullet}}(\kappa)$ is a Coulomb gas solution, and according to theorem \ref{vertexop}, it is an element of $\mathcal{S}_N(\kappa)$.

Now, to finally prove items \ref{seconditem} and \ref{thirditem} for $\kappa$ an exceptional speed (\ref{exceptional}) with $q\leq N+1$, we show that $\mathcal{B}_N^{\scaleobj{0.75}{\bullet}}(\kappa)$ (\ref{BNprime}) is a basis for $\mathcal{S}_N(\kappa)$.  Pursuing the strategy of lemma \ref{mainlem}, we define $v(\mathcal{B}_N^{\scaleobj{0.75}{\bullet}}):=\{v(\mathcal{F}_1^{\scaleobj{0.75}{\bullet}}),v(\mathcal{F}_2^{\scaleobj{0.75}{\bullet}}),\ldots,v(\mathcal{F}_{C_N}^{\scaleobj{0.75}{\bullet}})\}$, and we let $M_N^{\scaleobj{0.75}{\bullet}}\circ n$ be the matrix whose $\varrho$th column is the $\varrho$th element of this set.  Now for $\varkappa\neq\kappa$ and some $C\neq0$, we have
\bea\label{detMN'}\det (M_N^{\scaleobj{0.75}{\bullet}}\circ n)(\varkappa)&=&(\varkappa-\kappa)^{-m_1-m_2-\dotsm-m_d}\det A\det (M_N\circ n)(\varkappa)\nonumber\\
\label{O}&=&(\varkappa-\kappa)^{d-m_1-m_2-\dotsm-m_d}[C+O(\varkappa-\kappa)],\eea
where $d$ is the multiplicity of the zero $n(\kappa)$ of the meander determinant (\ref{meanderzeros}).  Furthermore, in the discussion following (\ref{LkFk}), we note that the limit $\varkappa\rightarrow\kappa$ commutes with all $[\mathscr{L}_\varsigma]\in\mathscr{B}_N^*$.  Therefore, $(M_N^{\scaleobj{0.75}{\bullet}}\circ n)(\kappa)=\lim_{\varkappa\rightarrow\kappa}(M_N^{\scaleobj{0.75}{\bullet}}\circ n)(\varkappa)$, so we find the determinant of $(M_N^{\scaleobj{0.75}{\bullet}}\circ n)(\kappa)$ by sending $\varkappa\rightarrow\kappa$ in (\ref{detMN'}).  Now, because this determinant is necessarily finite and each power in (\ref{detMN'}) is a positive integer, we must have $m_\varrho=1$ for all $\varrho\in\{1,2,\ldots,d\}$.  Then it follows that $\det(M_N^{\scaleobj{0.75}{\bullet}}\circ n)(\kappa)\neq0$, and $v(\mathcal{B}_N^{\scaleobj{0.75}{\bullet}}(\kappa))$ is thus linearly independent.  Finally, because $v$ is injective and $|\mathcal{B}_N^{\scaleobj{0.75}{\bullet}}|=C_N$, we conclude that $\mathcal{B}_N^{\scaleobj{0.75}{\bullet}}(\kappa)$ is a basis for $\mathcal{S}_N(\kappa)$.  This proves items \ref{seconditem} and \ref{thirditem} for $\kappa$ an exceptional speed (\ref{exceptional}) with $q\leq N+1$.

Item \ref{seconditem} implies that $\dim\mathcal{S}_N^*=C_N$.  To prove item \ref{fifthitem}, we let $\mathcal{M}:=\{[\mathscr{L}_1],[\mathscr{L}_2],\ldots,[\mathscr{L}_M]\}$ be a maximal linearly independent subset of $\mathscr{B}_N^*$, and we prove that $M:=|\mathcal{M}|=C_N$.  To prove that $\mathcal{M}$ is nonempty first, we show that no element of $\mathscr{B}_N^*$ is the zero-functional.  If one element $[\mathscr{L}_\varsigma]\in\mathscr{B}_N^*$ were the zero functional and $\kappa$ is not (resp.\ is) an exceptional speed (\ref{exceptional}) with $q\leq N+1$, then the $\varsigma$th row of the matrix $M_N\circ n$ (resp.\ $M_N^{\scaleobj{0.75}{\bullet}}\circ n$) would have each entry zero.  But then the determinant of this matrix would vanish, a contradiction.  Thus, $\mathcal{M}$ is nonempty.

Now we suppose that $M<C_N$.  Then by item \ref{seconditem}, $\dim\mathcal{S}_N^*=C_N$, and $\mathcal{S}_N^*$ has a finite basis for which $\mathcal{M}$ may serve as a proper subset.  We let
\be\begin{gathered} B_N^*=\{[\mathscr{L}_1],[\mathscr{L}_2],\ldots,[\mathscr{L}_M],f_{M+1},f_{M+2},\ldots,f_{C_N}\},\\
B_N=\{\Pi_1,\Pi_2,\ldots, \Pi_M,\Pi_{M+1},\Pi_{M+2},\ldots, \Pi_{C_N}\}\end{gathered}\ee
be dual bases for $\mathcal{S}_N^*$ and $\mathcal{S}_N$ respectively, so $[\mathscr{L}_\varsigma]\Pi_\vartheta=0$ for all $\vartheta>M$ because $\varsigma\leq M$.  Moreover, the elements $[\mathscr{L}_{M+1}]$, $[\mathscr{L}_{M+2}],\ldots,[\mathscr{L}_{C_N}]$ of $\mathscr{B}_N^*$ that are not in $\mathcal{M}$ must be linear combinations of those in $\mathcal{M}$ because $\mathcal{M}$ is maximal, so they annihilate $\Pi_\vartheta$ for all $\vartheta>M$ too.   Then $v(\Pi_\vartheta)=0$ for all $\vartheta>M$, and because $v$ is injective, $\Pi_\vartheta$ is therefore zero for all $\vartheta>M$.  But this contradicts the fact that each $\Pi_\vartheta$ is an element of a basis.  We therefore conclude that $M=C_N$, proving item \ref{fifthitem}.
\end{proof}

If $N=2$, then we may write $\mathcal{F}_{c,\vartheta}$ in terms of a hypergeometric function by using the contour integral definition of the latter.  (Equations (\red{17}--\red{19}) of \cite{florkleb} indicate that this is always possible.)  After doing so, appropriate hypergeometric identities show that $\mathcal{F}_{c,1}$ is the same function for all $c\in\{1,2,3,4\}$, and similarly for $\mathcal{F}_{c,2}$.  Extending this observation, we might expect that for any $N\geq2$, $\mathcal{F}_{c,\vartheta}$ is independent of our choice of $c\in\{1,2,\ldots,2N\}$ for the following reason.  CFT seems to suggest that a correlation function of $2N$ one-leg boundary operators with only, for example, the identity fusion channel propagating between each pair of operators whose points are joined by an arc in the $\vartheta$th connectivity is unique.  (This $2N$-point function is sometimes called a ``conformal block.")  But the Coulomb gas formalism gives $2N$ ostensibly different formulas $\mathcal{F}_{1,\vartheta},$ $\mathcal{F}_{2,\vartheta},\ldots,\mathcal{F}_{2N,\vartheta}$ that vary by the location of the point $x_c$ bearing the conjugate charge.  To prove that these formulas indeed give the same function directly from the formulas themselves seems to be too difficult.  So instead, we use the methods of this article to prove this fact indirectly.

\begin{cor}\label{moveconjchargecor}Suppose that $\kappa\in(0,8)$.  Then $\mathcal{F}_{c,\vartheta}=\mathcal{F}_{c',\vartheta}$ for all $c,c'\in\{1,2,\ldots,2N\}$ and all $\vartheta\in\{1,2,\ldots,C_N\}$.
\end{cor}

\begin{proof} 
We prove the corollary by showing that $v(\mathcal{F}_{c,\vartheta})=v(\mathcal{F}_{\vartheta})$ for all $c\in\{1,2,\ldots,2N-1\}$ and $\vartheta\in\{1,2,\ldots,C_N\}$.  (We recall from item \ref{item2a} of definition \ref{Fkdefn} that $\mathcal{F}_\vartheta:=\mathcal{F}_{2N,\vartheta}$.)  Then the corollary follows from item \ref{fourthitem} of theorem \ref{maintheorem}.

To prove that $v(\mathcal{F}_{c,\vartheta})=v(\mathcal{F}_{\vartheta})$ for all $c\in\{1,2,\ldots,2N-1\}$ and $\vartheta\in\{1,2,\ldots,C_N\}$, we compute $[\mathscr{L}_\varsigma]\mathcal{F}_{c,\vartheta}$ for each $\varsigma\in\{1,2,\ldots,C_N\}$ and verify that it equals $[\mathscr{L}_\varsigma]\mathcal{F}_\vartheta$, whose value is given by (\ref{LkFk}).  If we imitate the method and reasoning in the proof of lemma \ref{mainlem}, then this computation uses an element of $[\mathscr{L}_\varsigma]$ whose last limit involves the point $x_c$ bearing the conjugate charge.  Now, the many limits $\ell_1$, $\ell_2,\ldots,\ell_N$ that compose this element act on functions in $\mathcal{S}_N$ in one of two ways.  The first way that the first limit $\ell_1$ may act on $\mathcal{F}_{c,\vartheta}$ is via (\ref{lim}, \ref{thelim})
\begin{multline}\label{nextthelim}\bar{\ell}_1\mathcal{F}_{c,\vartheta}(\kappa\,|\,x_1,x_2,\ldots,x_{i-1},x_{i+2},\ldots,x_{2N})\\=\lim_{x_{i+1}\rightarrow x_i}(x_{i+1}-x_i)^{6/\kappa-1}\mathcal{F}_{c,\vartheta}(\kappa\,|\, \boldsymbol{x}),\quad i\in\{1,2,\ldots,2N-1\}\setminus\{c-1,c\},\quad\kappa\in(0,8),\end{multline}
and in this case, we write $\ell_1=\bar{\ell}_1$.  By following the proof of lemma \ref{mainlem} but instead using the first main result of sections \ref{s50}--\ref{s53} in appendix \ref{asymp}, we find that $\bar{\ell}_1\mathcal{F}_{c,\vartheta}$ equals an element of $\mathcal{S}_{N-1}$ of the kind falling under definition \ref{Fkdefn}.  In fact, we generate the half-plane diagram for this element by joining together the two arcs in the half-plane diagram for $\mathcal{F}_{c,\vartheta}$ whose endpoints are brought together by the limit $\bar{\ell}_1$.  Furthermore, this new diagram matches that of $\bar{\ell}_1\mathcal{F}_\vartheta$, a fact useful for this proof.  The second way that the first limit $\ell_1$ may act on $\mathcal{F}_{c,\vartheta}$ is via
\be\label{inflim}\underline{\ell}_1\mathcal{F}_{c,\vartheta}(\kappa\,|\,x_2,x_3,\ldots,x_{2N-1})=\lim_{R\rightarrow\infty}(2R)^{6/\kappa-1}\mathcal{F}_{c,\vartheta}(\kappa\,|\,x_1=-R,x_2,\ldots,x_{2N}=R),\quad c\not\in\{1,2N\},\quad\kappa\in(0,8),\ee
and in this case, we write $\ell_1=\underline{\ell}_1$.  (See lemma \red{5} and definition \red{9} in \cite{florkleb}.)  Little of our analysis in \cite{florkleb} and none in this article so far have involved this second type of limit.

As we observe in the proof of lemma \ref{mainlem}, if $c=2N$ (and also by symmetry, if $c=1$), then there is always an element of $[\mathscr{L}_\varsigma]$ whose last limit involves $x_c$ and with none of its limits $\ell_1$, $\ell_2,\ldots,\ell_N$ of the second type (\ref{inflim}).  Thus, we may complete the proof of lemma \ref{mainlem} without computing limits of this latter kind.  However, if $c\not\in\{1,2N\}$, then there are $\vartheta\in\{1,2,\ldots,C_N\}$ where this is not the case, so to compute $[\mathscr{L}_\varsigma]\mathcal{F}_{c,\vartheta}$ in this situation, we must calculate this second type of limit $\underline{\ell}_1\mathcal{F}_{c,\vartheta}$ (\ref{inflim}).  As previously, there are four cases to consider, and each mirrors a case from the proof of lemma \ref{mainlem}.  In the following, we recall that, according to item \ref{2cit} in definition \ref{Fkdefn}, we orient all integration contours of $\mathcal{F}_{c,\vartheta}$ that pass (resp.\ do not pass) over the point $x_c$ bearing the conjugate charge leftward (resp.\ rightward).
\begin{enumerate}
\item\label{firstcase2} 
\textbf{Configuration:} In case \ref{firstcase2}, neither $x_1$ nor $x_{2N}$ is an endpoint of an integration contour of $\mathcal{F}_{c,\vartheta}$.

\noindent
\textbf{Calculation:}  With $x_{2N}=-x_1=R$, we estimate $\mathcal{F}_{c,\vartheta}(\kappa\,|\,\boldsymbol{x})$ for $R$ very large and then send $R\rightarrow\infty$.  (See the second main result in section \ref{s50} of appendix \ref{asymp}.)

\noindent
\textbf{Main result:} In case \ref{firstcase2}, the limit (\ref{inflim}) is zero.

\item\label{secondcase2} 
\textbf{Configuration:} In case \ref{secondcase2}, both $x_1$ and $x_{2N}$ are endpoints of a single, common integration contour of $\mathcal{F}_{c,\vartheta}$, $\Gamma_1=\mathscr{P}(x_{2N},x_1)$ or $[x_{2N},x_1]^+$, following a semicircular arc in the upper half-plane that passes over the points $x_2,$ $x_3,\ldots,x_{2N-1}$ and all other integration contours of $\mathcal{F}_{c,\vartheta}$.

\noindent
\textbf{Calculation:} The last paragraphs of section \ref{s51} in appendix \ref{asymp} present the calculation of the limit $\underline{\ell}_1\mathcal{F}_{c,\vartheta}$ (\ref{inflim}) in full detail.

\noindent
\textbf{Main result:} In case \ref{secondcase2}, the limit $\underline{\ell}_1\mathcal{F}_{c,\vartheta}$ (\ref{inflim}) equals $n$ (\ref{fugacity}) times an element of $\mathcal{S}_{N-1}$ of the form in definition \ref{Fkdefn}.  We generate it from the formula (\ref{firstFexplicit1}) for $\mathcal{F}_{c,\vartheta}$ by dropping all factors involving $x_1$, $x_{2N}$, and $u_1$, dropping the integration along $\Gamma_1$, and reducing the prefactor power in (\ref{firstprefactor}) or (\ref{secondprefactor}) by one.

\item\label{thirdcase2} 
\textbf{Configuration:} In case \ref{thirdcase2}, either $x_1$ or $x_{2N}$ is an endpoint of a single integration contour $\Gamma_1$ of $\mathcal{F}_{c,\vartheta}$, but the other is not an endpoint of any contour.  Thus, the latter point is an endpoint of the arc terminating at $x_c$ in the half-plane diagram for $\mathcal{F}_{c,\vartheta}$.

\noindent
\textbf{Initial assumptions:} We assume that $\kappa>4$, $x_1$ is the endpoint of $\Gamma_1$, and $\Gamma_1$ has rightward orientation.

\noindent
\textbf{Calculation:} We repeat the calculation for case \ref{thirdcase} in the proof of lemma \ref{mainlem} with $i=2N$ and $x_{i+1}$ identified with $x_1$.  While that previous calculation uses the main result of section \ref{s3} in appendix \ref{asymp}, the present calculation uses the second main result of section \ref{s52} in that appendix.

\noindent
\textbf{Main result:} In case \ref{thirdcase2}, the limit $\underline{\ell}_1\mathcal{F}_{c,\vartheta}$ (\ref{inflim}) equals an element of $\mathcal{S}_{N-1}$ of the form in definition \ref{Fkdefn}.  We generate it from the formula (\ref{firstFexplicit1}) for $\mathcal{F}_{c,\vartheta}$ by dropping all factors involving $x_1$, $x_{2N}$, and $u_1$, dropping the integration along $\Gamma_1$, and reducing the prefactor power in (\ref{firstprefactor}) or (\ref{secondprefactor}) by one.

\noindent
\textbf{Further details:} Above, we assume that $\kappa>4$, $x_1$ is the endpoint of $\Gamma_1$, and $\Gamma_1$ has rightward orientation.  Here, we justify these assumptions, or we extend our proof to situations in which they do not hold.
\begin{enumerate}
\item\label{rightitem} $\Gamma_1$ has rightward orientation.  Indeed, $\Gamma_1$ may not pass over $x_c$, or else its arc crosses the arc with endpoints at $x_c$ and $x_{2N}$ in the half-plane diagram for $\mathcal{F}_{c,\vartheta}$.  The claim then follows from item \ref{2cit} in definition \ref{Fkdefn}.
\item If $x_{2N}$ and not $x_1$ is an endpoint of $\Gamma_1$, then repeating the above analysis yields the same result.  Again, $\Gamma_1$ has rightward orientation for a reason almost identical to that presented in item \ref{rightitem} above.
\item The extension of this result to $\kappa\in(0,4]$ is identical to that presented in item \ref{kappaextend} in the proof of lemma \ref{mainlem}, appropriately adapted to the present situation with $c\not\in\{1,2N\}$.
\end{enumerate}

\item\label{fourthcase2} \textbf{Configuration}: In case \ref{fourthcase2}, the most complicated case, $x_{2N}$ is an endpoint of one contour $\Gamma_1$ of $\mathcal{F}_{c,\vartheta}$, and $x_1$ is an endpoint of a different contour $\Gamma_2$.

\noindent
\textbf{Initial assumptions:} We assume that $\kappa>4$ and both $\Gamma_1$ and $\Gamma_2$ have rightward orientation.

\noindent
\textbf{Calculation:} We repeat the calculation of case \ref{fourthcase} in the proof of lemma \ref{mainlem} with $i=2N$ and $x_{i+1}$ and $x_{i+2}$ identified with $x_1$ and $x_2$ respectively.  While that previous calculation uses the main result of section \ref{s4} in appendix \ref{asymp}, the present calculation uses the second main result of section \ref{s53} in that appendix.

\noindent
\textbf{Main result:} In case \ref{fourthcase2}, the limit $\underline{\ell}_1\mathcal{F}_{c,\vartheta}$ (\ref{inflim}) equals an element of $\mathcal{S}_{N-1}$ of the form in definition \ref{Fkdefn}.  We generate it from the formula (\ref{firstFexplicit1}) for $\mathcal{F}_{c,\vartheta}$ by dropping all factors involving $x_{2N}$, $x_1$, and $u_1$, dropping the integration along $\Gamma_1$, integrating $u_2$ along the contour $\Gamma$ generated by disconnecting $\Gamma_1$ and $\Gamma_2$ from their respective endpoints at $x_{2N}$ and $x_1$ and joining their dangling ends together in the upper half-plane, and reducing the prefactor power in (\ref{firstprefactor}) or (\ref{secondprefactor}) by one.  The new contour $\Gamma$ has leftward orientation (as it should because it passes over $x_c$).

\noindent
\textbf{Further details:} Above, we assume that $\kappa>4$ and both $\Gamma_1$ and $\Gamma_2$ have rightward orientation.  Here, we extend our proof to situations in which this is not true.
\begin{enumerate}
\item According to item \ref{2cit} in definition \ref{Fkdefn}, whether or not an integration contour passes over $x_c$ determines that contour's orientation.  Because $\Gamma_1$ and $\Gamma_2$ may not simultaneously pass over $x_c$ (or else they would intersect), there are two possibilities.  First, neither passes over $x_c$, so both have rightward orientation.  We already considered this case above.  Second, one of them passes over $x_c$, but the other does not.  Thus, the former (resp.\ latter) has leftward (resp.\ rightward) orientation.  If we reverse the former's orientation and compute the limit $\underline{\ell}_1\mathcal{F}_{c,\vartheta}$, then we find the opposite of what is described under ``main result" above.  However, the new contour $\Gamma$ does not pass over $x_c$ yet has leftward orientation.  By reversing the orientation of $\Gamma$ to conform with what item \ref{2cit} in definition \ref{Fkdefn} specifies, we absorb the extra minus sign into the limit.
\item The extension of this result to $\kappa\in(0,4]$ is identical to that presented in item \ref{kappaextend} in the proof of lemma \ref{mainlem}.
\end{enumerate}
\end{enumerate}
In cases \ref{thirdcase2} and \ref{fourthcase2}, we find that the limit $\underline{\ell}_1\mathcal{F}_{c,\vartheta}$ (\ref{inflim}) equals a function in $\mathcal{S}_{N-1}$ that falls under definition \ref{Fkdefn}.  Furthermore, we create the diagram for this function by disconnecting the leftmost and rightmost arcs in the half-plane diagram of $\mathcal{F}_{c,\vartheta}$ from their respective endpoints at $x_1$ and $x_{2N}$ and joining their dangling ends together to form one arc.  In case \ref{secondcase}, we find the same limit, but multiplied by an extra factor of $n$ (\ref{fugacity}).  These findings for the limit $\underline{\ell}_1\mathcal{F}_{c,\vartheta}$ (\ref{inflim}) are identical to those of the similar limit $\bar{\ell}_1\mathcal{F}_{c,\vartheta}$ (\ref{nextthelim}).  Thus, for any $c\in\{1,2,\ldots,2N-1\}$, the part of the proof of lemma \ref{mainlem} that follows item \ref{lastb} also justifies (\ref{LkFk}) with $\mathcal{F}_\vartheta$ replaced by $\mathcal{F}_{c,\vartheta}$, so $v(\mathcal{F}_{c,\vartheta})=v(\mathcal{F}_\vartheta)$.
\end{proof}

\section{Summary}

The purpose of this article and its predecessors \cite{florkleb, florkleb2} is to completely and rigorously determine a solution space $\mathcal{S}_N$ for the system of ``null-state" PDEs (\ref{nullstate}) with the ``conformal Ward identities" (\ref{wardid}) that govern multiple-SLE$_{\kappa}$ partition functions and CFT $2N$-point correlation functions of one-leg boundary operators (\ref{2Npoint}).  Our main result, theorem \ref{maintheorem}, states that the vector space $\mathcal{S}_N$ of all classical solutions for this system that satisfy the growth bound (\ref{powerlaw}) has dimension $C_N$ and is spanned by real-valued Coulomb gas solutions (definition \ref{CGsolnsdef} and (\ref{CGsolns}, \ref{eulerintegralch2})).  Here, $C_N$ is the $N$th Catalan number (\ref{catalan}).  To obtain this result, we first construct a linear injective mapping $v:\mathcal{S}_N\rightarrow\mathbb{R}^{C_N}$ and use it to prove that $\dim\mathcal{S}_N\leq C_N$ in \cite{florkleb, florkleb2}.  Then in sections \ref{CGsolnsSect} and \ref{MeanderMatrix} of this article, we construct a set $\mathcal{B}_N:=\{\mathcal{F}_1,\mathcal{F}_2,\ldots,\mathcal{F}_{C_N}\}\subset\mathcal{S}_N$ of $C_N$ explicit solutions using the Coulomb gas (contour integral) formalism of conformal field theory (CFT).  Next, we prove lemma \ref{mainlem}, which states that $v(\mathcal{B}_N):=\{v(\mathcal{F}_1),v(\mathcal{F}_2),\ldots,v(\mathcal{F}_{C_N})\}$, and therefore $\mathcal{B}_N$, is linearly independent if and only if the Schramm-L\"{o}wner evolution (SLE$_\kappa$) speed $\kappa \in(0,8)$ is not an exceptional speed (\ref{exceptional}) with $q\leq N+1$.  To reach this conclusion, we  identify the matrix formed by the columns of $v(\mathcal{B}_N)$ with the Gram matrix of an inner product on the Temperley-Lieb algebra $TL_N(n)$, called the ``meander matrix," and we use results of \cite{fgg,difranc,fgut,franc} to show that the zeros of its determinant correspond with these exceptional speeds.  If $\kappa$ is one of these exceptional speeds, then we use $\mathcal{B}_N$ to construct an alternative set $\mathcal{B}_N^{\scaleobj{0.75}{\bullet}}\subset\mathcal{S}_N$ of $C_N$ real Coulomb gas solutions that is linearly independent, completing the proof of theorem \ref{maintheorem}.  The calculations that support this proof are complicated, and appendix \ref{asymp} presents all of their details.  In particular, lemma \ref{reallem} in appendix \ref{reallemproof} proves that the elements of $\mathcal{B}_N$ (and therefore of $\mathcal{B}_N^{\scaleobj{0.75}{\bullet}}$, see the proof of theorem \ref{maintheorem}) are real-valued.  Finally, we remind the reader that, although the system (\ref{nullstate}, \ref{wardid}) arises in CFT in a way that is typically non-rigorous, our treatment of this system here and in \cite{florkleb,florkleb2,florkleb4} is completely rigorous.

\section{Acknowledgements}

We thank J.\ J.\ H.\ Simmons for insightful conversations and for sharing some of his unpublished results, in particular, his calculation of $v(\mathcal{B}_N)$ for $N\in\{1,2,3\}$ \cite{js}.  We also thank K.\ Kyt\"ol\"a for helpful conversations concerning, among other things, the proof in appendix \ref{reallemproof}, and  S.\ Fomin for showing us the connection between the arc connectivity diagrams and the Temperley-Lieb algebra, an observation that led us to the calculation of the meander determinant by P.\ Di Francesco, O.\ Golinelli, and E.\ Guitter in \cite{fgg}.  In addition, we are grateful to C.\ Townley Flores for carefully proofreading the manuscript.

During the writing of this article, we learned that K.\ Kyt\"ol\"a and E.\ Peltola recently obtained results very similar to ours by using a completely different approach based on quantum group methods \cite{kype,kype2}.

This work was supported by National Science Foundation Grants Nos.\ PHY-0855335 (SMF) and DMR-0536927 (PK and SMF).

\appendix{}

\section{Asymptotic behavior of Coulomb gas integrals under interval collapse}\label{asymp}

In this appendix, we justify the claims of items \ref{firstcase}--\ref{fourthcase} in the proof of lemma \ref{mainlem} (and later, of corollary \ref{moveconjchargecor}).  The main thrust of the proof is to calculate the limit
(\ref{thelim}) for all $\mathcal{F}_\vartheta\in\mathcal{B}_N$ and $\kappa\in(0,8)$ with $8/\kappa\not\in\mathbb{Z}^+$.  The explicit formula for $\mathcal{F}_\vartheta(\kappa\,|\,\boldsymbol{x})$ is (\ref{firstFexplicit1}) with $c=2N$, or
\begin{multline}\label{Fexplicit1}\mathcal{F}_\vartheta(\kappa\,|\,\boldsymbol{x})=n(\kappa)\left[\frac{n(\kappa)\Gamma(2-8/\kappa)}{4\sin^2(4\pi/\kappa)\Gamma(1-4/\kappa)^2}\right]^{N-1}\Bigg(\prod_{j<k}^{2N-1}(x_k-x_j)^{2/\kappa}\Bigg)\Bigg(\prod_{k=1}^{2N-1}(x_{2N}-x_k)^{1-6/\kappa}\Bigg)\\ 
\underbrace{\oint_{\Gamma_{N-1}}{\rm d}u_{N-1}\dotsm\oint_{\Gamma_2}{\rm d}u_{2}\,\,\oint_{\Gamma_1}{\rm d}u_1\,\,\mathcal{N}\Bigg[\Bigg(\prod_{l=1}^{2N-1}\prod_{m=1}^{N-1}(x_l-u_m)^{-4/\kappa}\Bigg)\Bigg(\prod_{m=1}^{N-1}(x_{2N}-u_m)^{12/\kappa-2}\Bigg)\Bigg(\prod_{p<q}^{N-1}(u_p-u_q)^{8/\kappa}\Bigg)\Bigg],}_{\mathcal{J}^{(N-1,2N)}(\boldsymbol{x})}\end{multline}
where we have indicated the Coulomb gas integral $\mathcal{J}^{(N-1,2N)}(\boldsymbol{x})$ (\ref{eulerintegralch2}) with braces, and where we have the following.
\begin{enumerate}[label=(\alph*),ref=\alph*]
\item Each $\Gamma_m$ is a Pochhammer contour entwining two ``endpoints" among $x_1$, $x_2,\ldots,x_{2N}$.  (See (\ref{PochDecomp}) and figure \ref{BreakDown}.)
\item Item \ref{4thitem} of definition \ref{Fkdefn} explains the symbol $\mathcal{N}[\,\,\ldots\,\,]$.  (If no contour of $\mathcal{F}_\vartheta$ passes over another, then $\mathcal{N}[\,\,\ldots\,\,]$ orders the terms of the differences of the integrand so the enclosed power functions are real-valued.)
\item\label{itemc} If $\text{Re}\,\kappa>4$, then we use identity (\ref{Pochtostraight}) to simplify (\ref{Fexplicit1}).  We replace each Pochhammer contour by a simple contour with the same endpoints and orientation and drop each factor of $4\sin^2(4\pi/\kappa)$ in the denominator of the prefactor.
\end{enumerate}
According to item \ref{itemc}, if the integration contours of $\mathcal{F}_\vartheta$ are Pochhammer (resp.\ simple), then the prefactor for $\mathcal{F}_\vartheta$ is (\ref{firstprefactor}) (resp.\ (\ref{secondprefactor})).  Formula (\ref{Fexplicit1}) only shows the Pochhammer contour version.

To compute the limit (\ref{thelim}), we must determine the asymptotic behavior of the definite integral $\mathcal{J}^{(N-1,2N)}(\boldsymbol{x})$ in (\ref{Fexplicit1}) as $x_{i+1}\rightarrow x_i$.  As we noted in the proof of lemma \ref{mainlem}, there are four cases to consider: 
\begin{enumerate}
\item\label{sc1} Neither $x_i$ nor $x_{i+1}$ are endpoints of an integration contour.
\item\label{sc2} Both $x_i$ and $x_{i+1}$ are endpoints of one common contour, say $\Gamma_1$.
\item\label{sc3} $x_i$ (resp.\ $x_{i+1}$) is an endpoint of one contour, say $\Gamma_1$, and $x_{i+1}$ (resp.\ $x_i$) is not an endpoint of any contour.
\item\label{sc4} $x_i$ is an endpoint of one contour, say $\Gamma_1$, and $x_{i+1}$ is an endpoint of a different contour, say $\Gamma_2$.
\end{enumerate}
We study these four cases in items \ref{firstcase}--\ref{fourthcase} of the proof of lemma \ref{mainlem} respectively.  By deforming the integration contours in that proof, we find that we may replace items \ref{sc3} and \ref{sc4} directly above with these respective scenarios.
\begin{enumerate}[label=\arabic*$''$., ref=\arabic*$''$]
\setcounter{enumi}{2}
\item\label{sc3new} $x_i$ and $x_{i-1}$ (resp.\ $x_{i+1}$ and $x_{i+2}$) are endpoints of one common contour, say $\Gamma_1''$, and $x_{i+1}$ (resp.\ $x_i$) is not an endpoint of any contour.  (According to item \ref{thirdcase} in the proof of lemma \ref{mainlem}, we may assume $i>1$.)
\item\label{sc4new} $x_i$ and $x_{i-1}$ are endpoints of one common contour, say $\Gamma_1''$, and $x_{i+1}$ and $x_{i+2}$ are endpoints of one other common contour, say $\Gamma_2''$.  (If $i=1$, then we identify $i-1$ with $2N$.  
See section \ref{s4} for further details.)
\end{enumerate}
Figure \ref{asympcases} illustrates cases \ref{sc1} and \ref{sc2} above, and the simpler cases \ref{sc3new} and \ref{sc4new} directly above.  We determine the asymptotic behavior of $\mathcal{J}^{(N-1,2N)}(\boldsymbol{x})$ as $x_{i+1}\rightarrow x_i$ for these four cases in sections \ref{s1}--\ref{s4} respectively.  Afterwards in section \ref{s5}, we determine the asymptotic behavior of the similar Coulomb gas integral in (\ref{firstFexplicit1}) for any $c\in\{1,2,\ldots,2N\}$ as $x_{i+1}\rightarrow x_i$ and, in particular, as $-x_1=x_{2N}=R\rightarrow\infty$.  The latter behavior is useful for finding the limit (\ref{inflim}).

\begin{figure}[t]
\centering
\includegraphics[scale=0.27]{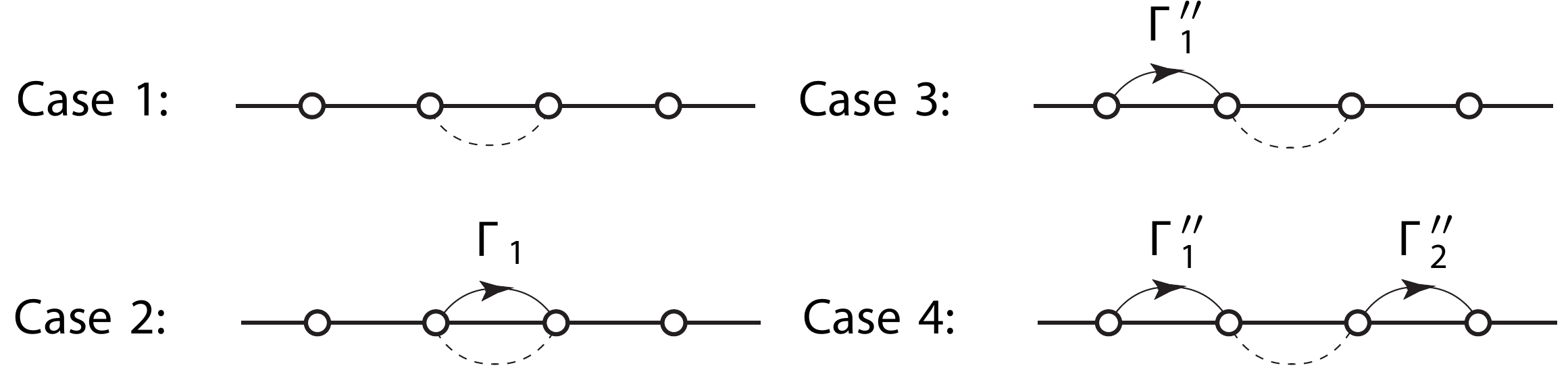}
\caption{The four cases of interval collapse.  The dashed curves connect the endpoints of the interval to be collapsed, and the solid curves are the integration contours.}
\label{asympcases}
\end{figure}

Determining the asymptotic behavior of $\mathcal{J}^{(N-1,2N)}(\boldsymbol{x})$ as $x_{i+1}\rightarrow x_i$ in cases \ref{sc1} and \ref{sc2} is straightforward.  In the former, we simply set $x_{i+1}=x_i$, and in the latter, we use a beta-function identity.  However, determining this behavior in cases \ref{sc3new} and \ref{sc4new} is more involved.  Indeed, if $x_i$ is an endpoint of $\Gamma_1''$, then the integration variable $u_1$ is not bounded away from $x_i$.  Because the difference $x_i-u_1$ is always much smaller than $x_{i+1}-u_1$ for some $u_1\in\Gamma_1''$ even as we send $x_{i+1}\rightarrow x_i$, we may not simply set $x_{i+1}=x_i$ in the integration with respect to $u_1$ to determine the asymptotic behavior.  In order to handle these cases, we cast them into cases \ref{sc1} and \ref{sc2} by deforming any integration contour with an endpoint at $x_i$ or $x_{i+1}$ into one that has neither or both of its endpoints at these locations.

In order to track the phase factors arising from these deformations, we adopt the convention that $-\pi<\arg z\leq\pi$ for all complex $z$, which implies the following identity.  If $x,u,\epsilon,\beta\in\mathbb{R}$ with $x<u$ and $\epsilon>0$, then for $\epsilon\ll |u-x|$,
\be\label{factorout}(x-(u\pm i\epsilon))^\beta=e^{\mp\pi i\beta}(u\pm i\epsilon-x)^\beta.\ee
Figure \ref{phases} shows the relative phase factors accrued by the left side of (\ref{factorout}) as the complex variable $u\pm i\epsilon$ passes over or under or circles around the real branch point $x$, as a consequence of (\ref{factorout}).

\begin{figure}[t]
\centering
\includegraphics[scale=0.27]{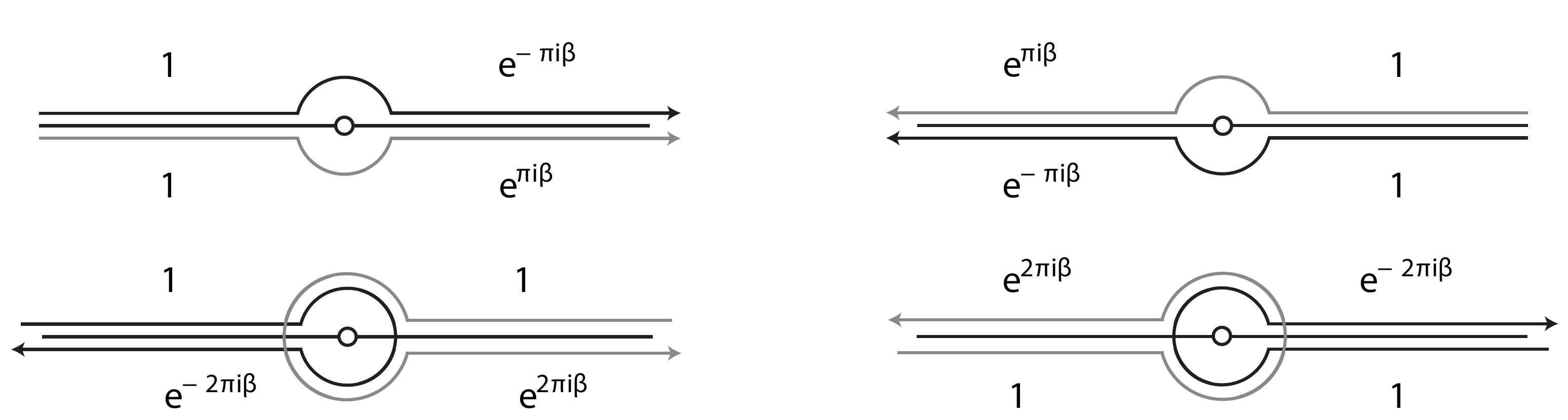}
\caption{Monodromy factors accrued by $(x-u)^\beta$ as $u$ passes over or under (upper illustration) or circles around (lower illustration) the branch point $x$ (open circle on the real axis).}
\label{phases}
\end{figure}

As in the proof of lemma \ref{mainlem}, we let $[x_i,x_{i+1}]^+$ be the simple contour formed by bending $[x_i,x_{i+1}]$ into the upper half-plane while keeping its endpoints fixed.

\subsection{Proof of lemma \ref{mainlem}, item \ref{firstcase}}\label{s1}

The purpose of this section is to complete the argument for item \ref{firstcase} in the proof of lemma \ref{mainlem}.  We do this in two steps.  First, we determine the asymptotic behavior as $x_{i+1}\rightarrow x_i$ of the Coulomb gas integral $\mathcal{J}^{(N-1,2N)}(\boldsymbol{x})$ in (\ref{Fexplicit1}) for all $\kappa\in(0,8)$.  As stated above in case \ref{sc1}, neither $x_i$ nor $x_{i+1}$ is an endpoint of any integration contour.  (Actually, items \ref{third} and \ref{item2a} of definition \ref{Fkdefn} show that only one point among $x_1$, $x_2,\ldots,x_{2N-1}$ is not an endpoint of any contour of $\mathcal{F}_\vartheta$.  Therefore, this case, strictly speaking, does not occur.  However, it does arise from deforming integration contours in cases \ref{sc3new} and \ref{sc4new}, and for that reason, we study it here.)   After we find the asymptotic behavior of $\mathcal{J}^{(N-1,2N)}(\boldsymbol{x})$, we use it to compute the limit $\bar{\ell}_1\mathcal{F}_\vartheta$ (\ref{thelim}), our second step.

To generate this case, we replace all contours, call them $\Gamma_1$ and/or $\Gamma_2$ (there are at most two), with endpoints at $x_i$ or $x_{i+1}$ by new contours $\Gamma_1'$ and/or $\Gamma_2'$ that do not surround or touch these points.  Then the limit as $x_{i+1}\rightarrow x_i$ of each factor of $(x_{i+1}-u_m)^{-4/\kappa}$ with $m\in\{1,2,\ldots,N-1\}$ in the integrand of $\mathcal{J}^{(N-1,2N)}(\boldsymbol{x})$ is uniform over $u_m\in\Gamma_m$.  So by setting $x_{i+1}=x_i$, we find the limit of $\mathcal{J}^{(N-1,2N)}(\boldsymbol{x})$ as $x_{i+1}\rightarrow x_i$.

To finish, we justify item \ref{firstcase} in the proof of lemma \ref{mainlem}.  In that part of the proof, we require the limit $\bar{\ell}_1\mathcal{F}_\vartheta$ (\ref{thelim}) for $\kappa\in(0,8)$, with $\Gamma_1$ and/or $\Gamma_2$ replaced by $\Gamma_1'$ and or $\Gamma_2'$ as described above.  This limit is
\begin{multline}\label{case1lim}\bar{\ell}_1\big(\mathcal{F}_\vartheta\big|_{(\Gamma_1,\Gamma_2)\mapsto(\Gamma_1',\Gamma_2')}\big)\,(\kappa\,|\,\boldsymbol{x})\,\,\,:=\lim_{x_{i+1}\rightarrow x_i}(x_{i+1}-x_i)^{6/\kappa-1}\,\,\times\\
(\mathcal{F}_\vartheta|_{(\Gamma_1,\Gamma_2)\mapsto(\Gamma_1',\Gamma_2')})\,(\kappa\,|\,\boldsymbol{x})\\ \parallel \\
\boxed{\begin{aligned}&n(\kappa)\left[\frac{n(\kappa)\Gamma(2-8/\kappa)}{4\sin^2(4\pi/\kappa)\Gamma(1-4/\kappa)^2}\right]^{N-1}\,\Bigg(\prod_{j<k}^{2N-1}(x_k-x_j)^{2/\kappa}\Bigg)\Bigg(\prod_{k=1}^{2N-1}(x_{2N}-x_k)^{1-6/\kappa}\Bigg)\overbrace{\oint_{\Gamma_{N-1}}{\rm d}u_{N-1}\dotsm}^{\mathcal{J}^{(N-1,2N)}(\boldsymbol{x})}\\ 
&\underbrace{\dotsm\,\oint_{\Gamma_2'}{\rm d}u_{2}\,\,\oint_{\Gamma_1'}{\rm d}u_1\,\,\mathcal{N}\Bigg[\Bigg(\prod_{l=1}^{2N-1}\prod_{m=1}^{N-1}(x_l-u_m)^{-4/\kappa}\Bigg)\Bigg(\prod_{m=1}^{N-1}(x_{2N}-u_m)^{12/\kappa-2}\Bigg)\Bigg(\prod_{p<q}^{N-1}(u_p-u_q)^{8/\kappa}\Bigg)\Bigg].}_{\mathcal{J}^{(N-1,2N)}(\boldsymbol{x})}\end{aligned}}\end{multline}
According to the previous paragraph, we find the limit of $\mathcal{J}^{(N-1,2N)}(\boldsymbol{x})$ as $x_{i+1}\rightarrow x_i$ by setting $x_{i+1}=x_i$.  We similarly find the limit as $x_{i+1}\rightarrow x_i$ of the algebraic factors multiplying this Coulomb gas integral.  In particular, one of these factors is $(x_{i+1}-x_i)^{2/\kappa}$ because $i<2N-1$ (\ref{thelim}), and it multiplies the factor of $(x_{i+1}-x_i)^{6/\kappa-1}$ accompanying $\bar{\ell}_1$ to give an overall factor of $(x_{i+1}-x_i)^{8/\kappa-1}$.  Because $\kappa\in(0,8)$, the limit of this factor as $x_{i+1}\rightarrow x_i$ is zero while the limits of all others in (\ref{case1lim}) are finite. We thus have our main result, for use in item \ref{firstcase} in the proof of lemma \ref{mainlem}.
\begin{quote}\textbf{Main result:} In case \ref{sc1}, where we replace all integration contours with endpoints at $x_i$ or $x_{i+1}$ by contours that do not surround of touch these points in the formula (\ref{Fexplicit1}) for $\mathcal{F}_\vartheta$, the limit $\bar{\ell}_1\mathcal{F}_\vartheta$ (\ref{thelim}) vanishes.
\end{quote}

\subsection{Proof of lemma \ref{mainlem}, item \ref{secondcase}}\label{s2}

The purpose of this section is to complete the argument for item \ref{secondcase} in the proof of lemma \ref{mainlem}.  Again, we do this in two steps.  First, we determine the asymptotic behavior as $x_{i+1}\rightarrow x_i$ of the Coulomb gas integral $\mathcal{J}^{(N-1,2N)}(\boldsymbol{x})$ in (\ref{Fexplicit1}) for all $\kappa\in(0,8)$.  In case \ref{sc2}, $x_i$ and $x_{i+1}$ are endpoints of only one common integration contour:
\be\label{Gamma1}\Gamma_1=\begin{cases}\mathscr{P}(x_i,x_{i+1}), & 0<\kappa\leq4\\
 [x_i,x_{i+1}]^+, & 4<\kappa<8\end{cases}.\ee
After we find the asymptotic behavior of $\mathcal{J}^{(N-1,2N)}(\boldsymbol{x})$, we use it to compute the limit $\bar{\ell}_1\mathcal{F}_\vartheta$ (\ref{thelim}), our second step.

To determine the behavior of $\mathcal{J}^{(N-1,2N)}(\boldsymbol{x})$ as $x_{i+1}\rightarrow x_i$, we follow these steps first.  The subsequent calculation is fairly straightforward.
\begin{enumerate}
\item\label{oneit}We order the integrations of $\mathcal{J}^{(N-1,2N)}(\boldsymbol{x})$ via Fubini's theorem so $u_1$ is integrated first, followed by integration with respect to $u_2,$ $u_3,\ldots,u_{N-1}$.
\item\label{twoit} The limit as $x_{i+1}\rightarrow x_i$ of $(x_{i+1}-u_m)^{-4/\kappa}$ is uniform over $u_m\in\Gamma_m$ only if $m>1$.  Hence, we set $x_{i+1}=x_i$ in every such factor in the integrand of $\mathcal{J}^{(N-1,2N)}(\boldsymbol{x})$ with $m>1$ to find its limit.
\item\label{threeit} Thus, we need only determine the behavior as $x_{i+1}\rightarrow x_i$ of the integration in $\mathcal{J}^{(N-1,2N)}(\boldsymbol{x})$ with respect to $u_1$.  The corresponding definite integral is a function of  $x_1$, $x_2,\ldots,x_{2N}$, $u_2$, $u_3 ,\ldots,u_{N-1}$, and it is given by (\ref{eulerintegralch2}) with $M=1$ and $K=3N-2$ (and with the integrand enclosed by the symbol $\mathcal{N}[\,\,\ldots\,\,]$) (\ref{I1here}).   
\item\label{fourit} If any contour $\Gamma_m$ with $m>1$ and endpoints at, say, $x_j<x_k$, passes over $\Gamma_1$, then we replace it with two contours in the upper half-plane that do not pass over $\Gamma_1$ and with orientation opposite that of $\Gamma_m$.  The first has its endpoints at $x_j$ and minus infinity, and the second has its endpoints at positive infinity and $x_k$. 
\item After step \ref{fourit}, no contour passes over $\Gamma_1$.  Furthermore, each product $(x_i-u_m)^{-4/\kappa}(x_{i+1}-u_m)^{-4/\kappa}\Delta(u_1-u_m)$ with $m>2$ in the integrand of $\mathcal{J}^{(N-1,2N)}(\boldsymbol{x})$ now equals (figure \ref{Orderings}, (\ref{Deltadefn}))
\be\label{firstordering}(x_i-u_m)^{-4/\kappa}(x_{i+1}-u_m)^{-4/\kappa}(u_1-u_m)^{8/\kappa}\ee
in the left new contour with an endpoint at minus infinity, and
\be\label{secondordering}(u_m-x_i)^{-4/\kappa}(u_m-x_{i+1})^{-4/\kappa}(u_m-u_1)^{8/\kappa}\ee
in the right new contour with an endpoint at plus infinity.  The ordering of the terms in the differences in (\ref{firstordering}, \ref{secondordering}) agrees with what the symbol $\mathcal{N}$ prescribes for pairs of un-nested contours (figure \ref{Orderings}, item \ref{4thitem} of definition \ref{Fkdefn}).
\item We note below (\ref{eulerintegralch2}) that the contours of $\mathcal{J}^{(N-1,2N)}(\boldsymbol{x})$ may intersect because $\gamma=8/\kappa>0$.  Thus, we push all integration contours flush against the real axis (except for the circular integrations of figure \ref{BreakDown}).
\item\label{secondlastit} Restricting our attention to the integration in $\mathcal{J}^{(N-1,2N)}(\boldsymbol{x})$ with respect to $u_1$, we freeze the other integration variables $u_2$, $u_3,\ldots,u_{N-1}$ at arbitrary locations within their respective contours.
\item\label{lastit} With all of the variables $x_1$, $x_2,\ldots,x_{2N}$, $u_2$, $u_3,\ldots,u_{N-1}$ real-valued, we re-index them in increasing order as $x_1<x_2<\ldots<x_K$, with $K=3N-2$ to simplify the notation of the integration with respect to $u_1$.  (The order depends on where we freeze each $u_m$ in its contour $\Gamma_m$.  Also, this re-indexing likely changes the value of the index $i$, as $x_i$ remains the left endpoint of the interval we are collapsing.  We note that if $i+1<2N$ before this re-indexing, as it does in (\ref{thelim}), then $i+1<K$ after this re-indexing.)
\end{enumerate}

Now we explain step \ref{fourit} further.  With its endpoints  fixed, we deform $\Gamma_m$ into a semicircle with large radius $R$ and its base flush against the real axis.  According to the Cauchy integral theorem (Thm.\ 2.3 of \cite{kod}), this alteration does not change the value of $\mathcal{J}^{(N-1,2N)}(\boldsymbol{x})$.  Furthermore, the integration along the arc of the semicircle vanishes like $R^{-1}$ as $R\rightarrow\infty$ thanks to (\ref{powers1}--\ref{powers3}), so only the integration along the contours replacing $\Gamma_m$ in step \ref{fourit} remain.  Finally, these improper integrals converge because their integrands vanish like $u_m^{-2}$ as $u_m\rightarrow\infty$.

Steps \ref{oneit}--\ref{threeit} show that we only need to determine the behavior of the integration with respect to $u_1$ in order to find the asymptotic behavior of $\mathcal{J}^{(N-1,2N)}(\boldsymbol{x})$ as $x_{i+1}\rightarrow x_i$.  After steps \ref{fourit}--\ref{lastit}, this integration has the form
\be\label{I1here}\mathcal{J}^{(1,K)}\Big(\{\beta_j\}\,\Big|\,\,\Gamma_1\,\,\Big|\,x_1,x_2,\ldots,x_K\Big)=\sideset{}{_{\Gamma_1}}\int\mathcal{N}\Bigg[\prod^K_{j=1}(u_1-x_j)^{\beta_j}\Bigg]\,{\rm d}u_1,\quad x_j\not\in(x_i,x_{i+1}),\ee
with $K=3N-2$, $\Gamma_1$ given in (\ref{Gamma1}), and the symbol $\mathcal{N}[\,\,\ldots\,\,]$ ordering the terms of the differences in the function it encloses so this function is real-valued over the domain of integration.

In (\ref{I1here}), we have relabeled each power $\gamma$ of the factors $(u_1-u_m)^\gamma$ in $\mathcal{J}^{(N-1,2N)}(\boldsymbol{x})$ with $m>1$ as $\beta_j$ for some index $j$ for convenience.  After identifying the integration with respect to $u_1$ in (\ref{Fexplicit1}) with (\ref{I1here}), we find that
\be\label{betai}\text{$\beta_j\in\{-4/\kappa,8/\kappa,12/\kappa-2\}$ for all $j\in\{1,2,\ldots,K\}$},\quad\beta_i=\beta_{i+1}=-4/\kappa.\ee

Now we determine the asymptotic behavior of (\ref{I1here}) as $x_{i+1}\rightarrow x_i$.  First, if $\kappa>4$, then we have $\Gamma_1=[x_i,x_{i+1}]^+$.  After substituting $u_1(t)=(1-t)x_i+tx_{i+1}$ in (\ref{I1here}) and identifying the beta function in the result, we find
\begin{multline}\label{result2.5}\mathcal{J}^{(1,K)}\Big(\{\beta_j\}\,\Big|\,[x_i,x_{i+1}]^+\,\Big|\,x_1,x_2,\ldots,x_K\Big)\\
\begin{aligned}&\underset{x_{i+1}\rightarrow x_i}{\sim}(x_{i+1}-x_i)^{\beta_i+\beta_{i+1}+1}\mathcal{N}\Bigg[\prod_{j\neq i,i+1}^K(x_{i}-x_j)^{\beta_j}\Bigg]\sideset{}{_0^1}\int t^{\beta_i}(1-t)^{\beta_{i+1}}\,{\rm d}t\hspace{1in}\\
&=\frac{\Gamma(\beta_i+1)\Gamma(\beta_{i+1}+1)}{\Gamma(\beta_i+\beta_{i+1}+2)}(x_{i+1}-x_i)^{\beta_i+\beta_{i+1}+1}\mathcal{N}\Bigg[\prod_{j\neq i,i+1}^K(x_{i}-x_j)^{\beta_j}\Bigg].\end{aligned}\end{multline}
(Here, $\mathcal{N}[\,\,\ldots\,\,]$ orders the terms of the differences between its brackets so the function it encloses is real-valued.)

On the other hand, if $\kappa\in(0,4]$ (in which case $\beta_i,\beta_{i+1}\leq-1$ thanks to (\ref{betai}), so the improper integral (\ref{I1here}) with $\Gamma_1=[x_i,x_{i+1}]^+$ diverges), then we have $\Gamma_1=\mathscr{P}(x_i,x_{i+1})$ instead.  After the same substitution as before, we find 
\begin{multline}\label{result2}\mathcal{J}^{(1,K)}\Big(\{\beta_j\}\,\Big|\,\mathscr{P}(x_i,x_{i+1})\,\Big|\,x_1,x_2,\ldots,x_K\Big)\\\begin{aligned}&\underset{x_{i+1}\rightarrow x_i}{\sim}(x_{i+1}-x_i)^{\beta_i+\beta_{i+1}+1}\mathcal{N}\Bigg[\prod_{j\neq i,i+1}^K(x_{i}-x_j)^{\beta_j}\Bigg]\sideset{}{_{\mathclap{\hspace{1cm}\mathscr{P}(0,1)}}}\oint \hspace{.5cm}t^{\beta_i}(1-t)^{\beta_{i+1}}\,{\rm d}t\\
& =4e^{\pi i(\beta_i-\beta_{i+1})}\sin\pi\beta_i\sin\pi\beta_{i+1}\frac{\Gamma(\beta_i+1)\Gamma(\beta_{i+1}+1)}{\Gamma(\beta_i+\beta_{i+1}+2)}(x_{i+1}-x_i)^{\beta_i+\beta_{i+1}+1}\mathcal{N}\Bigg[\prod_{j\neq i,i+1}^K(x_{i}-x_j)^{\beta_j}\Bigg],\end{aligned}\end{multline}
where we have used an analytic continuation of the beta function \cite{witt} to evaluate the contour integral in (\ref{result2}).

Because (\ref{result2}) is just the analytic continuation of (\ref{result2.5}) (viewing the latter as a complex-analytic function of $\beta_i$ or $\beta_{i+1}$) to $\text{Re}\,\beta_i\leq-1$ or $\text{Re}\,\beta_{i+1}\leq-1$, displaying (\ref{result2.5}) with (\ref{result2}) might seem redundant.  However, in sections \ref{s3} and \ref{s4}, we find that working with simple contours first and then analytically continuing our results with Pochhammer contours simplifies our analysis considerably.  Hence, we need both types of integration contours in those sections, so we display (\ref{result2.5}) in addition to (\ref{result2}).

To finish, we use (\ref{result2.5}, \ref{result2}) to justify item \ref{secondcase} in the proof of lemma \ref{mainlem}.  In that part of the proof, we require the limit $\bar{\ell}_1\mathcal{F}_\vartheta$ (\ref{thelim}) for $\kappa\in(0,8)$, with $\mathcal{F}_\vartheta$ given by (\ref{Fexplicit1}) and $\Gamma_1$ given by (\ref{Gamma1}) (so the sine functions drop from the prefactor (\ref{firstprefactor}) in (\ref{Fexplicit1}) if $\kappa>4$).  For $\kappa\leq4$, this limit is 
\begin{multline}\label{thebiglimit}\bar{\ell}_1\mathcal{F}_\vartheta(\kappa\,|\,x_1,x_2,\ldots,x_{i-1},x_{i+2},\ldots,x_{2N})\,\,=\lim_{x_{i+1}\rightarrow x_i}(x_{i+1}-x_i)^{6/\kappa-1}\,\,\times\\ \mathcal{F}_\vartheta(\kappa\,|\,\boldsymbol{x}) \\ \parallel \\
\boxed{\begin{aligned}&n(\kappa)\left[\frac{n(\kappa)\Gamma(2-8/\kappa)}{4\sin^2(4\pi/\kappa)\Gamma(1-4/\kappa)^2}\right]^{N-1}\Bigg(\prod_{\substack{1\leq j<k\\j,k\neq i,i+1}}^{2N-1}(x_k-x_j)^{2/\kappa}\Bigg)\Bigg(\prod_{\substack{k=1 \\ k\neq i,i+1}}^{2N-1}(x_{2N}-x_k)^{1-6/\kappa}\Bigg)\sideset{}{_{\Gamma_{N-1}}}\oint {\rm d}u_{N-1}\\
&\dotsm\,\sideset{}{_{\Gamma_3}}\oint {\rm d}u_3\,\, \sideset{}{_{\Gamma_2}}\oint {\rm d}u_2\,\,\mathcal{N}\Bigg[\Bigg(\prod_{l\neq i,i+1}^{2N-1}\prod_{m=2}^{N-1}(x_l-u_m)^{-4/\kappa}\Bigg)\Bigg(\prod_{m=2}^{N-1}(x_{2N}-u_m)^{12/\kappa-2}\Bigg)\Bigg(\,\,\prod_{\mathclap{2\leq p<q}}^{N-1}(u_p-u_q)^{8/\kappa}\Bigg)\\
&\Bigg(\prod_{l=i}^{i+1}\prod_{m=2}^{N-1}(x_l-u_m)^{-4/\kappa}\Bigg)\Bigg]\Bigg(\prod_{j\neq i,i+1}^{2N-1}|x_j-x_i|^{2/\kappa}|x_j-x_{i+1}|^{2/\kappa}\Bigg)\Big(x_{2N}-x_i\Big)^{1-6/\kappa}\Big(x_{2N}-x_{i+1}\Big)^{1-6/\kappa}\\
&\hspace{.33cm}\Big(x_{i+1}-x_i\Big)^{2/\kappa}\underbrace{\sideset{}{_{\mathscr{P}(x_i,x_{i+1})}}\oint
{\rm d}u_1\,\,\mathcal{N}\Bigg[\Bigg(\prod_{l=1}^{2N-1}(x_l-u_1)^{-4/\kappa}\Bigg)\Bigg(\prod_{m=2}^{N-1}(u_m-u_1)^{8/\kappa}\Bigg)\Big(x_{2N}-u_1\Big)^{12/\kappa-2}\Bigg].}_{\mathcal{J}^{(1,K)}}\end{aligned}}\end{multline}
(If $\kappa>4$, then we adjust (\ref{thebiglimit}) as per item \ref{itemc} in the introduction of this appendix.)  We have rewritten the formula (\ref{Fexplicit1}) for $\mathcal{F}_\vartheta(\kappa\,|\,\boldsymbol{x})$ slightly to clarify the calculation, and we indicate the contour integral $\mathcal{J}^{(1,K)}$ (\ref{I1here}) with braces.

Now we find the limit (\ref{thebiglimit}).  After doing steps \ref{oneit}--\ref{lastit}, setting $x_{i+1}=x_i$ in all factors of (\ref{Fexplicit1}) without $u_1$, identifying the definite integral with respect to $u_1$ with (\ref{Gamma1}, \ref{I1here}, \ref{betai}), and replacing it by the right side of (\ref{result2}), we find 
\begin{multline}\label{asympinsert}(x_{i+1}-x_i)^{6/\kappa-1}\mathcal{F}_\vartheta(\kappa\,|\,\boldsymbol{x})\underset{x_{i+1}\rightarrow x_i}{\sim}(x_{i+1}-x_i)^{6/\kappa-1}\,\,\times\\
\boxed{\begin{aligned}&n(\kappa)\left[\frac{n(\kappa)\Gamma(2-8/\kappa)}{4\sin^2(4\pi/\kappa)\Gamma(1-4/\kappa)^2}\right]^{N-1}\Bigg(\prod_{\substack{1\leq j<k\\j,k\neq i,i+1}}^{2N-1}(x_k-x_j)^{2/\kappa}\Bigg)\Bigg(\prod_{\substack{k=1 \\ k\neq i,i+1}}^{2N-1}(x_{2N}-x_k)^{1-6/\kappa}\Bigg)\sideset{}{_{\Gamma_{N-1}}}\oint {\rm d}u_{N-1}\\
&\dotsm\sideset{}{_{\Gamma_3}}\oint {\rm d}u_3\,\, \sideset{}{_{\Gamma_2}}\oint {\rm d}u_2\,\,\,\mathcal{N}\Bigg[\Bigg(\prod_{l\neq i,i+1}^{2N-1}\prod_{m=2}^{N-1}(x_l-u_m)^{-4/\kappa}\Bigg)\Bigg(\prod_{m=2}^{N-1}(x_{2N}-u_m)^{12/\kappa-2}\Bigg)\Bigg(\,\,\prod_{\mathclap{2\leq p<q}}^{N-1}(u_p-u_q)^{8/\kappa}\Bigg)\\
&\Bigg(\prod_{m=2}^{N-1}(x_i-u_m)^{-8/\kappa}\Bigg)\Bigg]\Bigg(\prod_{j\neq i,i+1}^{2N-1}|x_j-x_i|^{4/\kappa}\Bigg)\Big(x_{2N}-x_i\Big)^{2-12/\kappa}\Big(x_{i+1}-x_i\Big)^{2/\kappa}\overbrace{4\sin^2\left(\frac{4\pi}{\kappa}\right)}^{\text{right side of (\ref{result2})}}\\
&\hspace{.25cm}\underbrace{\frac{\Gamma(1-4/\kappa)^2}{\Gamma(2-8/\kappa)}\Big(x_{i+1}-x_i\Big)^{1-8/\kappa}\mathcal{N}\Bigg[\Bigg(\prod_{j\neq i,i+1}^{2N-1}(x_j-x_i)^{-4/\kappa}\Bigg)\Big(x_{2N}-x_i\Big)^{12/\kappa-2}\Bigg(\prod_{m=2}^{N-1}(x_i-u_m)^{8/\kappa}\Bigg)\Bigg].}_{\text{right side of (\ref{result2})}}\end{aligned}}\end{multline}
(If $\kappa>4$ and (\ref{thebiglimit}) is adjusted as per item \ref{itemc} in the introduction of this appendix, then we replace the definite integral with respect to $u_1$ with the right side of (\ref{result2.5}) instead, finding (\ref{asympinsert}) again, but with all factors of $4\sin^2(4\pi/\kappa)$ dropped and all contours simple.)  After some simplification, we finally send $x_{i+1}\rightarrow x_i$ in (\ref{asympinsert}) to find
\begin{multline}\label{firstlimprime}\bar{\ell}_1\mathcal{F}_\vartheta(\kappa\,|\,x_1,x_2,\ldots,x_{i-1},x_{i+2},\ldots,x_{2N})\,\,=\,\,n(\kappa)\,\,\times\\
\left\{\begin{aligned}&n(\kappa)\left[\frac{n(\kappa)\Gamma(2-8/\kappa)}{4\sin^2(4\pi/\kappa)\Gamma(1-4/\kappa)^2}\right]^{N-2}\Bigg(\prod_{\substack{1\leq j<k \\ j,k\neq i,i+1}}^{2N-1}(x_k-x_j)^{2/\kappa}\Bigg)\Bigg(\prod_{\substack{k=1 \\ k\neq i,i+1}}^{2N-1}(x_{2N}-x_k)^{1-6/\kappa}\Bigg)\oint_{\Gamma_{N-1}} {\rm d}u_{N-1}\\ 
&\dotsm\,\oint_{\Gamma_3}{\rm d}u_3\,\,\oint_{\Gamma_2}{\rm d}u_2\,\,\mathcal{N}\Bigg[\Bigg(\prod_{l\neq i,i+1}^{2N-1}\prod_{m=2}^{N-1}(x_l-u_m)^{-4/\kappa}\Bigg)\Bigg(\prod_{m=2}^{N-1}(x_{2N}-u_m)^{12/\kappa-2}\Bigg)\Bigg(\prod_{2\leq p<q}^{N-1}(u_p-u_q)^{8/\kappa}\Bigg)\Bigg]\end{aligned}\right\}.\end{multline}
(Again, if $\kappa>4$, then we find the same result, but with all factors of $4\sin^2(4\pi/\kappa)$ dropped and with all Pochhammer contours replaced by simple contours with the same endpoints and orientation.)

Now, the quantity of (\ref{firstlimprime}) in brackets is the element of $\mathcal{B}_{N-1}$ described in the following conclusion.  We thus have our main result, for use in item \ref{secondcase} in the proof of lemma \ref{mainlem}.
\begin{quote}\textbf{Main result:} In case \ref{sc2}, where $\Gamma_1$ is given by (\ref{Gamma1}), the limit $\bar{\ell}_1\mathcal{F}_\vartheta$ (\ref{thelim}) equals $n$ (\ref{fugacity}) times the element of $\mathcal{B}_{N-1}$ generated from the formula (\ref{Fexplicit1}) for $\mathcal{F}_\vartheta$ by dropping all factors involving $x_i$, $x_{i+1}$, and $u_1$, dropping the integration along $\Gamma_1$, and reducing the power of the prefactor (\ref{firstprefactor}) or (\ref{secondprefactor}) by one.
\end{quote} 
We extend this result to the value $i=2N$ in section \ref{s51} below, for use in item \ref{secondcase2} in the proof of corollary \ref{moveconjchargecor}.

\subsection{Proof of lemma \ref{mainlem}, item \ref{thirdcase}}\label{s3}

The purpose of this section is to complete the argument for item \ref{thirdcase} in the proof of lemma \ref{mainlem}.  We do this in two steps.  First, we determine the asymptotic behavior as $x_{i+1}\rightarrow x_i$ of the Coulomb gas integral $\mathcal{J}^{(N-1,2N)}(\boldsymbol{x})$ in (\ref{Fexplicit1}) for all $\kappa\in(0,8)$ with $8/\kappa\not\in\mathbb{Z}^+$.  In case \ref{sc3new}, $x_i$ and $x_{i-1}$ (resp.\ $x_{i+1}$ and $x_{i+2}$) are endpoints of one contour $\Gamma_1''$ (replacing $\Gamma_1$ in item \ref{sc3} of this appendix), and $x_{i+1}$ (resp.\ $x_i$) is not an endpoint of any contour.   Among these two possibilities, we choose without loss of generality
\be\label{Gamma1''}\Gamma_1''=\begin{cases}\mathscr{P}(x_{i-1},x_i), & 0<\kappa\leq4\\
[x_{i-1},x_i]^+, & 4<\kappa<8\end{cases}.\ee 
(According to item \ref{ineq1} in the proof of lemma \ref{mainlem}, we have $i>1$.)  After we find the asymptotic behavior of $\mathcal{J}^{(N-1,2N)}(\boldsymbol{x})$, we use it to compute the limit $\bar{\ell}_1\mathcal{F}_\vartheta$ (\ref{thelim}), our second step.

Determining the asymptotic behavior of $\mathcal{J}^{(N-1,2N)}(\boldsymbol{x})$ in this case is more delicate than determining this behavior in cases \ref{sc1} and \ref{sc2}.  With $x_i$ an endpoint of $\Gamma_1''$, the integration variable $u_1$ is not bounded away from $x_i$ as $x_{i+1}\rightarrow x_i$.  Then because the difference $x_i-u_1$ is always much smaller than $x_{i+1}-u_1$ for some $u_1\in\Gamma_1''$ even as we send $x_{i+1}\rightarrow x_i$, we may not simply set $x_{i+1}=x_i$ in the integration with respect to $u_1$ to determine this asymptotic behavior.  Instead, we deform the contour $\Gamma_1''$ into contours that fall under cases \ref{sc1} and \ref{sc2} (figure \ref{case3}), and we apply the results of sections \ref{s1} and \ref{s2} to determine the asymptotic behavior of the new terms that thus appear.

To determine the asymptotic behavior of $\mathcal{J}^{(N-1,2N)}(\boldsymbol{x})$ as $x_{i+1}\rightarrow x_i$, we execute steps \ref{oneit}--\ref{lastit} of section \ref{s2} first.  Actually, steps \ref{oneit}--\ref{threeit} show that we only need to determine the behavior of the integration in $\mathcal{J}^{(N-1,2N)}(\boldsymbol{x})$ with respect to $u_1$ as $x_{i+1}\rightarrow x_i$. After executing steps \ref{fourit}--\ref{lastit} in section \ref{s2}, we find that this integration is
\be\label{prescenario3}\mathcal{J}^{(1,K)}\Big(\{\beta_j\}\,\Big|\,\,\Gamma_1''\,\,\Big|\,x_1,x_2,\ldots,x_K\Big)=\sideset{}{_{\Gamma_1''}}\int\mathcal{N}\Bigg[\prod_{j=1}^K(u_1-x_j)^{\beta_j}\Bigg]\,{\rm d}u_1,\quad x_j\not\in(x_{i-1},x_{i+1}),\ee
with $K=3N-2$, $\Gamma_1''$ given in (\ref{Gamma1''}), and the symbol $\mathcal{N}[\,\,\ldots\,\,]$ ordering the terms of the differences in the function it encloses so this function is real-valued over the domain of integration.

In (\ref{prescenario3}), we have relabeled each power $\gamma$ of the factors $(u_1-u_m)^\gamma$ in $\mathcal{J}^{(N-1,2N)}(\boldsymbol{x})$ with $m>1$ as $\beta_j$ for some index $j$ for convenience.  After identifying the integration with respect to $u_1$ in (\ref{Fexplicit1}) with (\ref{prescenario3}), we find that
\be\begin{gathered}\label{applicationsc3} 
s:=\sideset{}{_{j=1}^K}\sum\beta_j=-2,\qquad\text{$\beta_j\in\{-4/\kappa,8/\kappa,12/\kappa-2\}$ for all $j\in\{1,2,\ldots,K\},$}\\
\quad\beta_{i-1}=\beta_i=\beta_{i+1}=-4/\kappa,\qquad\beta_{i+2}\in\{-4/\kappa,12/\kappa-2\}.
\end{gathered}\ee
Initially, we assume $\kappa\in(4,8)$ in our calculations, and in this range, we automatically have $8/\kappa\not\in\mathbb{Z}^+$.  Combined with the former restriction, (\ref{applicationsc3}) implies the weaker conditions
\gdef\thesubequation{\theequation \textit{a,b}}
\begin{subeqnarray}
\label{extracond0}&&s:=\sideset{}{_{j=1}^K}\sum\beta_j\in\mathbb{Z}^-\setminus\{-1\},\qquad\text{$\beta_j>-1$ for all $j\in\{1,2,\ldots,K\},$}\\
\refstepcounter{equation}
\gdef\thesubequation{\theequation \textit{a,b}}
\label{extracond}
&&\hspace{2.5cm}\beta_i+\beta_{i+1}\not\in\mathbb{Z},\qquad\beta_i+\beta_{i+1}<-1,
\end{subeqnarray}
which we use exclusively in the rest of this section.  We note that the branch cuts of the integrand anchored to $x_1$, $x_2,\ldots,x_{K-1}$ collectively end at $x_K$ because $s\in\mathbb{Z}$.

Thus, our present goal is to determine the asymptotic behavior of $\mathcal{J}^{(1,K)}(\boldsymbol{x})$ (\ref{prescenario3}, \ref{Gamma1''}) as $x_{i+1}\rightarrow x_i$, first for $\kappa\in(4,8)$ with conditions (\ref{extracond0}, \ref{extracond}).  For this purpose, we define for each $k\in\{1,2,\ldots,K\}$ the definite integral
\be\label{Ikintegrals}
I_k(x_1,x_2,\ldots,x_K):=\sideset{}{_{x_k}^{x_{k+1}}}\int\mathcal{N}\Bigg[\prod_{j=1}^K(u_1-x_j)^{\beta_j}\Bigg]\,{\rm d}u_1,\quad\text{with}\,\,\sideset{}{_{x_K}^{x_{K+1}}}\int\,\,:=\,\,\sideset{}{_{x_K}^\infty}\int\,\,+\,\,\sideset{}{_{-\infty}^{x_1}}\int\,\,\text{ if $k=K$.}\ee
(Conditions (\ref{extracond0}) imply that any improper integral among (\ref{Ikintegrals}) converges.)  Here, $I_{i-1}(\boldsymbol{x})$ is precisely the definite integral (\ref{prescenario3}) with $\Gamma_1''=[x_{i-1},x_i]^+$ whose asymptotic behavior as $x_{i+1}\rightarrow x_i$ we wish to determine under conditions (\ref{extracond0}, \ref{extracond}).  Also, the definite integrals $I_{i\pm1}$ fall under case \ref{sc3new}, $I_i$ falls under case \ref{sc2}, and the others fall under case \ref{sc1}.

\begin{figure}[b]
\centering
\includegraphics[scale=0.27]{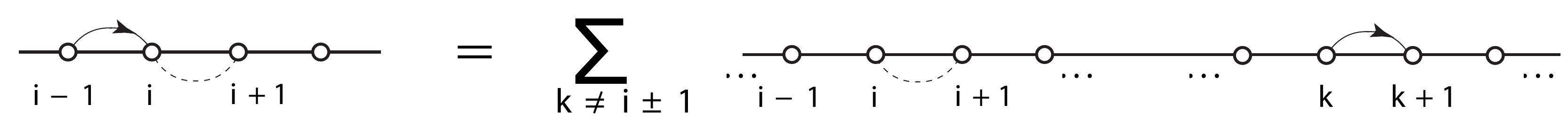}
\caption{The third case.  The dashed curve connects the endpoints of the interval to be collapsed.  We push the integration contour from $[x_{i-1},x_i]$ onto all intervals except $[x_{i+1},x_{i+2}]$.}
\label{case3}
\end{figure}

Now, to determine the asymptotic behavior of $I_{i-1}(\boldsymbol{x})$ (\ref{prescenario3}) as $x_{i+1}\rightarrow x_i$ under conditions (\ref{extracond0}, \ref{extracond}), we express this quantity as a linear combination of definite integrals $I_k$ (\ref{Ikintegrals}) with $k\neq i\pm1$ (figure \ref{case3}).  These latter definite integrals fall under cases \ref{sc1} and \ref{sc2}, which we study in sections \ref{s1} and \ref{s2} respectively.  To this end, we replace the integration contour in (\ref{Ikintegrals}) by a large semicircle of radius $R$, with counterclockwise (resp.\ clockwise) orientation in the upper (resp.\ lower) half-plane, and with its base on the real axis.  By the Cauchy integral theorem, the integration around this semicircle gives zero.  Furthermore, the integration along the arc of the semicircle behaves like $R^{s+1}$ and thus vanishes as $R\rightarrow\infty$, thanks to (\ref{extracond0}\red{a}).  Thus, we find the two equations (with the $-$ (resp.\ $+$) sign corresponding with the upper (resp.\ lower) half-plane setting)
\be\label{case3int}\sum_{k=1}^Ke^{\pm\pi i\sum_{l=1}^k\beta_l}I_k=0.\ee
The phase factors in (\ref{case3int}) arise from passing over or under the branch points $x_j$ of the function enclosed within the brackets of the integrand in (\ref{Ikintegrals}) (figure \ref{phases}).Using the two equations (\ref{case3int}) to solve for $I_{i-1}$ in terms of $I_k$ with $k\neq i\pm1$ (figure \ref{case3}), we find (recalling that $i+1<K$ thanks to step \ref{lastit} of section \ref{s2})
\be\label{result}I_{i-1}=-\sum_{k=1}^{i-2}\frac{\sin\pi\sum_{l=k+1}^{i+1}\beta_l}{\sin\pi(\beta_i+\beta_{i+1})}I_k+\sum_{k=i+2}^K\frac{\sin\pi\sum_{l=i+2}^k\beta_l}{\sin\pi(\beta_i+\beta_{i+1})}I_k-\frac{\sin\pi\beta_{i+1}}{\sin\pi(\beta_i+\beta_{i+1})}I_i.\ee
(We revise this equation to (\ref{resulti2}) if $i=2$ in section \ref{s52} below, without affecting any of the following results.)  On the right side of (\ref{result}), the definite integral $I_k$ with $k\neq i$ (resp.\ $k=i$) falls under case \ref{sc1} (resp.\ \ref{sc2}).  Thus, after inserting $x_{i+1}=x_i$ (resp.\ (\ref{result2.5})) in $I_k$ with $k\neq i$ (resp.\ $k=i$), we find ((\ref{Ikintegrals}) defines integration from $x_K$ to $x_{K+1}$)
\be\begin{aligned}\label{preresult}I_{i-1}(x_1,x_2,\ldots,x_K)\underset{x_{i+1}\rightarrow x_i}{\sim}&-\sum_{k=1}^{i-2}\frac{\sin\pi\sum_{l=k+1}^{i+1}\beta_l}{\sin\pi(\beta_i+\beta_{i+1})}\sideset{}{_{x_k}^{x_{k+1}}}\int\mathcal{N}\Bigg[(u-x_i)^{\beta_i+\beta_{i+1}}\prod_{j\neq i,i+1}^K(u_1-x_j)^{\beta_j}\Bigg]\,{\rm d}u_1\\
&+\sum_{k=i+2}^K\frac{\sin\pi\sum_{l=i+2}^k\beta_l}{\sin\pi(\beta_i+\beta_{i+1})}\sideset{}{_{x_k}^{x_{k+1}}}\int\mathcal{N}\Bigg[(u-x_i)^{\beta_i+\beta_{i+1}}\prod_{j\neq i,i+1}^K(u_1-x_j)^{\beta_j}\Bigg]\,{\rm d}u_1&\\
&-\frac{\sin\pi\beta_{i+1}\,\Gamma(\beta_i+1)\Gamma(\beta_{i+1}+1)}{\sin\pi(\beta_i+\beta_{i+1})\Gamma(\beta_i+\beta_{i+1}+2)}(x_{i+1}-x_i)^{\beta_i+\beta_{i+1}+1}\mathcal{N}\Bigg[\prod_{j\neq i,i+1}^K(x_i-x_j)^{\beta_j}\Bigg],&\end{aligned}\ee
assuming conditions (\ref{extracond0}, \ref{extracond}\red{a}).  (The denominators of (\ref{preresult}) are not zero, thanks to (\ref{extracond}\red{a}).)  Condition (\ref{extracond}\red{b}) shows that only the last term of (\ref{preresult}) blows up as $x_{i+1}\rightarrow x_i$.  Thus, after changing notation from $I_{i-1}$ to (\ref{prescenario3}) with $\Gamma_1''=[x_{i-1},x_i]^+$, we have
\begin{multline}\label{result3}\mathcal{J}^{(1,K)}\Big(\{\beta_j\}\,\Big|\,[x_{i-1},x_i]^+\,\Big|\,x_1,x_2,\ldots,x_K\Big)\underset{x_{i+1}\rightarrow x_i}{\sim}\\
-\frac{\sin\pi\beta_{i+1}\,\Gamma(\beta_i+1)\Gamma(\beta_{i+1}+1)}{\sin\pi(\beta_i+\beta_{i+1})\Gamma(\beta_i+\beta_{i+1}+2)}(x_{i+1}-x_i)^{\beta_i+\beta_{i+1}+1}\mathcal{N}\Bigg[\prod_{j\neq i,i+1}^K(x_i-x_j)^{\beta_j}\Bigg],\end{multline}
assuming conditions (\ref{extracond0}, \ref{extracond}).  We note that (\ref{result3}) is identical to (\ref{result2.5}), except that the integration contours differ and a ratio of sine functions and factor of negative one multiplies the right side of the former.  Thanks to (\ref{applicationsc3}), the product of these extra factors equals $n(\kappa)^{-1}$ (\ref{fugacity}), and it justifies the factors of $n(\kappa)^{-1}$ appearing in case \ref{thirdcase} of figure \ref{Cases} and in the middle two lines of each bracketed collection in figure \ref{Case4}.

By solving the two equations (\ref{case3int}) for $I_{i+1}$ in terms of $I_k$ with $k\neq i\pm1$, we find that the right side of (\ref{result3}) with $\beta_i$ and $\beta_{i+1}$ switched gives the asymptotic behavior of (\ref{prescenario3}) in the other situation with $\Gamma_1''=[x_{i+1},x_{i+2}]^+$.

Now we extend our findings from $\kappa\in(4,8)$ to $\kappa\in(0,8)$ with $8/\kappa\not\in\mathbb{Z}^+$.  If $\kappa\in(0,4]$, then (\ref{applicationsc3}) implies that $\beta_k\leq-1$ for several $k\in\{1,2,\ldots,K\}$, so the improper integral $I_k$ (\ref{Ikintegrals}) diverges.  If we extend $I_k$ to an analytic function of $\text{Re}\,\beta_k>-1$, then inserting the replacement, inspired by (\ref{Pochtostraight}),
\be\label{analyticcont}\int_{x_k}^{x_{k+1}}\quad\longmapsto\quad\frac{1}{4e^{\pi i(\beta_k-\beta_{k+1})}\sin\pi\beta_k\sin\pi\beta_{k+1}}\sideset{}{_{\mathscr{P}(x_k,x_{k+1})}}\oint\ee
into $I_k$ (\ref{Ikintegrals}) analytically continues this function to all complex $\beta_k\not\in\mathbb{Z}^-$, where it has simple poles.  This replacement also analytically continues $I_k$ from $\text{Re}\,\beta_{k+1}>-1$ to all complex $\beta_{k+1}\not\in\mathbb{Z}^-$ (with $\beta_{K+1}:=\beta_1$).  To avoid the poles at the negative integers, we weaken condition (\ref{extracond0}\red{b}) to
\be\label{extracondmod}\text{$\beta_j\not\in\mathbb{Z}^-$ for all $j\in\{1,2,\ldots,K\}$},\qquad(\beta_{K+1}:=\beta_1).\ee
Indeed, if $\kappa\in(0,8)$ and $8/\kappa\not\in\mathbb{Z}^+$, then (\ref{applicationsc3}) implies this new condition (\ref{extracondmod}).  Equation (\ref{result}) holds after we alter each $I_k$ as in (\ref{analyticcont}) because both of its sides are analytic on $\text{Re}\,\beta_j>-1$.  In particular, if $\kappa\in(0,4]$ so $\beta_i\leq-1$ (\ref{applicationsc3}), then after solving (this analytic continuation of) (\ref{result}) for $I_{i-1}$ (which is also (\ref{prescenario3}) with $\Gamma_1''=\mathscr{P}(x_{i-1},x_i)$), we find 
\begin{multline}\label{result3.5}\mathcal{J}^{(1,K)}\Big(\{\beta_j\}\,\Big|\,\mathscr{P}(x_{i-1},x_i)\,\Big|\,x_1,x_2,\ldots,x_K\Big)\underset{x_{i+1}\rightarrow x_i}{\sim}4e^{\pi i(\beta_{i-1}-\beta_i)}\sin\pi\beta_{i-1}\sin\pi\beta_i\\
\times(-1)\frac{\sin\pi\beta_{i+1}\,\Gamma(\beta_i+1)\Gamma(\beta_{i+1}+1)}{\sin\pi(\beta_i+\beta_{i+1})\Gamma(\beta_i+\beta_{i+1}+2)}(x_{i+1}-x_i)^{\beta_i+\beta_{i+1}+1}\mathcal{N}\Bigg[\prod_{j\neq i,i+1}^K(x_i-x_j)^{\beta_j}\Bigg],\end{multline}
assuming (\ref{extracond0}\red{a}, \ref{extracond}, \ref{extracondmod}).  Thanks to (\ref{Pochtostraight}), this new result (\ref{result3.5}) includes our previous result (\ref{result3}) for $\beta_{i-1},\beta_i>-1$.  Or if, instead of (\ref{Gamma1''}), $x_{i+1}$ and $x_{i+2}$ are endpoints of $\Gamma_1''$ and $\kappa\in(0,4]$ so $\beta_{i+1}\leq-1$ (\ref{applicationsc3}), then after solving (this analytic continuation of) (\ref{result}) for $I_{i+1}$ (which is also (\ref{prescenario3}) with $\Gamma_1''=\mathscr{P}(x_{i+1},x_{i+2})$), we find
\begin{multline}\label{result3.75}\mathcal{J}^{(1,K)}\Big(\{\beta_j\}\,\Big|\,\mathscr{P}(x_{i+1},x_{i+2})\,\Big|\,x_1,x_2,\ldots,x_K\Big)\underset{x_{i+1}\rightarrow x_i}{\sim}4e^{\pi i(\beta_{i+1}-\beta_{i+2})}\sin\pi\beta_{i+1}\sin\pi\beta_{i+2}\\
\times(-1)\frac{\sin\pi\beta_i\,\Gamma(\beta_i+1)\Gamma(\beta_{i+1}+1)}{\sin\pi(\beta_i+\beta_{i+1})\Gamma(\beta_i+\beta_{i+1}+2)}(x_{i+1}-x_i)^{\beta_i+\beta_{i+1}+1}\mathcal{N}\Bigg[\prod_{j\neq i,i+1}^K(x_i-x_j)^{\beta_j}\Bigg],\end{multline}
again, assuming (\ref{extracond0}\red{a}, \ref{extracond}, \ref{extracondmod}).

To finish, we use (\ref{result3}, \ref{result3.5}) to justify item \ref{thirdcase} in the proof of lemma \ref{mainlem}.  In that part of the proof, we require the limit $\bar{\ell}_1\mathcal{F}_\vartheta$ (\ref{thelim}) for $\kappa\in(0,8)$ and $8/\kappa\not\in\mathbb{Z}^+$, with $\mathcal{F}_\vartheta$ given by (\ref{Fexplicit1}) and $\Gamma_1$ replaced by $\Gamma_1''$ (\ref{Gamma1''}) (so the sine functions drop from the prefactor (\ref{firstprefactor}) in (\ref{Fexplicit1}) if $\kappa>4$). This calculation is almost identical to the calculation spanning (\ref{thebiglimit}--\ref{firstlimprime}) in section \ref{s2}, with one difference.  After we insert the asymptotic result (\ref{result3.5}) (with each $\beta_j$ assigned its appropriate value among those in (\ref{applicationsc3})) into the formula (\ref{thebiglimit}) for $\mathcal{F}_\vartheta$, we again find (\ref{asympinsert}), but with an extra factor of $n(\kappa)^{-1}$ (\ref{fugacity}) arising from the factor of negative one and the ratio of sine functions on the right side of (\ref{result3.5}).  (The condition $8/\kappa\not\in\mathbb{Z}^+$ implies that $n(\kappa)\neq0$.)   Absent from the right side of our previous case \ref{sc2} result (\ref{result2}), this extra factor removes the factor of $n(\kappa)$ that appears outside the braces on the right side of (\ref{firstlimprime}).  Thus, we arrive with our main result, for use in item \ref{thirdcase} of the proof of lemma \ref{mainlem}.  (Again, we assume here that $8/\kappa\not\in\mathbb{Z}^+$, but the result remains true even if this condition is not met.  See the last paragraph of section \ref{s4}.)
\begin{quote}\textbf{Main result:} In case \ref{sc3new}, where $\Gamma_1''$ (\ref{Gamma1''}) replaces $\Gamma_1$ in the formula (\ref{Fexplicit1}) for $\mathcal{F}_\vartheta$, the limit $\bar{\ell}_1\mathcal{F}_\vartheta$ (\ref{thelim}) equals the element of $\mathcal{B}_{N-1}$ generated from the formula for $\mathcal{F}_\vartheta$ by dropping all factors involving $x_i$, $x_{i+1}$, and $u_1$, dropping the integration along $\Gamma_1$, and reducing the power of the prefactor (\ref{firstprefactor}) or (\ref{secondprefactor}) by one.
\end{quote}
As mentioned, if $i=2$, then we modify (\ref{result}) and derive this main result with the modified equation in section \ref{s52}.

On the other hand, if $x_{i+1}$ and $x_{i+2}$ are the endpoints of $\Gamma_1''$, then we repeat the analysis of the above paragraph, using (\ref{result3.75}) in place of (\ref{result3.5}), and ultimately find the same result.  There is one exceptional detail worth mentioning though.  If $\beta_{i+2}=12/\kappa-2$ (\ref{applicationsc3}) and $\Gamma_1''=\mathscr{P}(x_{i+1},x_{i+2})$, then according to (\ref{nextadjustment}, \ref{analyticcont}), as we replace $\Gamma_1$ with $\Gamma_1''$, we must also replace one factor of $4\sin^2(4\pi/\kappa)$ in (\ref{thebiglimit}) with $4e^{-16\pi i/\kappa}\sin(-4\pi/\kappa)\sin(12\pi/\kappa)$.  With $\beta_{i+1}=-4/\kappa$ too  (\ref{applicationsc3}), this factor exactly cancels the factor of $4e^{\pi i(\beta_{i+1}-\beta_{i+2})}\sin\pi\beta_{i+1}\sin\pi\beta_{i+2}$ appearing in (\ref{result3.75}), just as what happens if $\beta_{i+2}=-4/\kappa$ (\ref{applicationsc3}) instead.

\subsection{Proof of lemma \ref{mainlem}, item \ref{fourthcase}}\label{s4}

The purpose of this section is to complete the argument for item \ref{fourthcase} in the proof of lemma \ref{mainlem}.  We do this in two steps.  First, we determine the asymptotic behavior as $x_{i+1}\rightarrow x_i$ of the Coulomb gas integral $\mathcal{J}^{(N-1,2N)}(\boldsymbol{x})$ in (\ref{Fexplicit1}) for all $\kappa\in(0,8)$ with $8/\kappa\not\in\mathbb{Z}^+$.  In case \ref{sc4new}, $x_i$ and $x_{i-1}$ are endpoints of one contour $\Gamma_1''$ (replacing $\Gamma_1$ in item \ref{sc4} of this appendix), and $x_{i+1}$ and $x_{i+2}$ are endpoints of a different contour $\Gamma_2''$ (replacing $\Gamma_2$ in item \ref{sc4} of this appendix).  Thus,
\be\label{Gamma12''}\Gamma_1''=\begin{cases}\mathscr{P}(x_{i-1},x_i), & 0<\kappa\leq4\\
[x_{i-1},x_i]^+, & 4<\kappa<8\end{cases},\qquad\Gamma_2''=\begin{cases}\mathscr{P}(x_{i+1},x_{i+2}), & 0<\kappa\leq4\\
[x_{i+1},x_{i+2}]^+, & 4<\kappa<8\end{cases}.\ee 
(According to item \ref{i-1=0} in the proof of lemma \ref{mainlem}, we identify the index $i-1=0$ with $2N$.)  After we find the asymptotic behavior of $\mathcal{J}^{(N-1,2N)}(\boldsymbol{x})$, we use it to compute the limit $\bar{\ell}_1\mathcal{F}_\vartheta$ (\ref{thelim}), our second step.

Determining the behavior of $\mathcal{J}^{(N-1,2N)}(\boldsymbol{x})$ in this case is more delicate than determining this behavior in cases \ref{sc1} and \ref{sc2} for the reason given at the beginning of section \ref{s3}.  Just as in section \ref{s3}, we deform the contours $\Gamma_1''$ and $\Gamma_2''$ into contours that fall under cases \ref{sc1} and \ref{sc2} (figure \ref{case4alt}), and we apply the results of sections \ref{s1} and \ref{s2} to determine the asymptotic behavior of all of the new terms that thus appear.

To determine the asymptotic behavior of $\mathcal{J}^{(N-1,2N)}(\boldsymbol{x})$ as $x_{i+1}\rightarrow x_i$, we follow steps very similar to steps \ref{oneit}--\ref{lastit} of section \ref{s2}:
\begin{enumerate}
\item\label{oneit2}We order the integrations of $\mathcal{J}^{(N-1,2N)}(\boldsymbol{x})$ (\ref{Fexplicit1}) via Fubini's theorem so $u_1$ is integrated first, followed by $u_2$, followed by integration with respect to $u_3,$ $u_4,\ldots,u_{N-1}$.
\item\label{twoit2}The limit as $x_{i+1}\rightarrow x_i$ of $(x_{i+1}-u_m)^{-4/\kappa}$ is uniform over $u_m\in\Gamma_m$ only if $m>2$.  Hence, we set $x_{i+1}=x_i$ in every such factor in the integrand of $\mathcal{J}^{(N-1,2N)}(\boldsymbol{x})$ with $m>2$ to find its limit.
\item\label{threeit2} Thus, we only need to determine the behavior as $x_{i+1}\rightarrow x_i$ of the integration in $\mathcal{J}^{(N-1,2N)}(\boldsymbol{x})$ (\ref{Fexplicit1}) with respect to $u_1$ and $u_2$.  The corresponding definite integral is a function of  $x_1$, $x_2,\ldots,x_{2N}$, $u_3$, $u_4,\ldots,u_{N-1}$, and it is given by (\ref{eulerintegralch2}) with $M=2$ and $K=3N-3$ (with the integrand enclosed by the symbol $\mathcal{N}[\,\,\ldots\,\,]$) (\ref{scenario4}).
\item\label{fourit2} If any contour $\Gamma_m$ with $m>2$ and endpoints at $x_j<x_k$ passes over $\Gamma_1$ and $\Gamma_2$, then we replace it with two contours in the upper half-plane that do not pass over $\Gamma_1$ or $\Gamma_2$ and with orientation opposite that of $\Gamma_m$.  The first has its endpoints at $x_j$ and minus infinity, and the second has its endpoints at positive infinity and $x_k$.
\item After step \ref{fourit2}, no contour passes over $\Gamma_1$ or $\Gamma_2$.  Furthermore, each product $(x_i-u_q)^{-4/\kappa}(x_{i+1}-u_q)^{-4/\kappa}\Delta(u_p-u_q)$ with $p\in\{1,2\}$ and $q>2$ in the integrand of $\mathcal{J}^{(N-1,2N)}(\boldsymbol{x})$ now equals (figure \ref{Orderings}, (\ref{Deltadefn}))
\be\label{firstordering2}(x_i-u_q)^{-4/\kappa}(x_{i+1}-u_q)^{-4/\kappa}(u_p-u_q)^{8/\kappa}\ee
in the left new contour with an endpoint at minus infinity, and 
\be\label{secondordering2}(u_q-x_i)^{-4/\kappa}(u_q-x_{i+1})^{-4/\kappa}(u_q-u_p)^{8/\kappa}\ee
in the right new contour with an endpoint at plus infinity.  The ordering of the terms in the differences in (\ref{firstordering2}, \ref{secondordering2}) agrees with what the symbol $\mathcal{N}$ prescribes for pairs of un-nested contours (figure \ref{Orderings}, item \ref{4thitem} of definition \ref{Fkdefn}).
\item We note below (\ref{eulerintegralch2}) that the contours of $\mathcal{J}^{(N-1,2N)}(\boldsymbol{x})$ may intersect because $\gamma=8/\kappa>0$.  Thus, we push all integration contours flush against the real axis (except for the circular integrations of figure \ref{BreakDown}).
\item\label{secondlastit2} Restricting our attention to the integrations in $\mathcal{J}^{(N-1,2N)}(\boldsymbol{x})$ with respect to $u_1$ and $u_2$, we freeze the other integration variables $u_3$, $u_4,\ldots,u_{N-1}$ at arbitrary locations within their respective contours.
\item\label{lastit2} With all of the variables $x_1$, $x_2,\ldots,x_{2N}$, $u_3$, $u_4,\ldots,u_{N-1}$ real-valued, we re-index them in increasing order as $x_1<x_2<\ldots<x_K$, with $K=3N-3$ to simplify the notation of the integration with respect to $u_1$ and $u_2$.  (The order depends on where we freeze $u_m$ in its contour $\Gamma_m$.  Also, this re-indexing likely changes the value of the index $i$, as $x_i$ remains the left endpoint of the interval we are collapsing.  We note that if $i+1<2N$ before this re-indexing, as it does in (\ref{thelim}), then $i+1<K$ after this re-indexing.)
\end{enumerate}
Steps \ref{oneit2}--\ref{threeit2} show that we only need to determine the behavior of the integration with respect to $u_1$ and $u_2$ in order to find the asymptotic behavior of $\mathcal{J}^{(N-1,2N)}(\boldsymbol{x})$ (\ref{Fexplicit1}) as $x_{i+1}\rightarrow x_i$.  After steps \ref{fourit}--\ref{lastit}, this integration has the form
\begin{multline}\label{scenario4} 
\mathcal{J}^{(2,K)}\Big(\{\beta_j\};\gamma\,\Big|\,\,\Gamma_1'',\,\Gamma_2''\,\,\Big|\,x_1,x_2,\ldots,x_K\Big)\\
=\sideset{}{_{\Gamma_1''}}\int\sideset{}{_{\Gamma_2''}}\int\mathcal{N}\Bigg[\prod_{j=1}^K(u_1-x_j)^{\beta_j}(u_2-x_j)^{\beta_j}(u_2-u_1)^\gamma\Bigg]\,{\rm d}u_2\,{\rm d}u_1,\quad x_j\not\in(x_{i-1},x_{i+2}),
\end{multline}
with $K=3N-3$, $\Gamma_1$ and $\Gamma_2$ given in (\ref{Gamma12''}), and the symbol $\mathcal{N}[\,\,\ldots\,\,]$ ordering the terms of the differences in the function it encloses so this function is real-valued over the domain of integration.

In (\ref{scenario4}), we have relabeled each power $\gamma$ of the factors $(u_p-u_q)^\gamma$ in $\mathcal{J}^{(N-1,2N)}(\boldsymbol{x})$ with $p\in\{1,2\}$ and $q>2$ (\ref{eulerintegralch2}) as $\beta_j$ for some index $j$ for convenience.  After identifying the integration with respect to $u_1$ and $u_2$ in (\ref{Fexplicit1}) with (\ref{scenario4}), we find that
\be\label{application}\begin{gathered}s:=\sideset{}{_{j=1}^K}\sum\beta_j+\gamma=-2,\qquad\text{$\beta_j\in\{-4/\kappa,8/\kappa,12/\kappa-2\}$ for all $j\in\{1,2,\ldots,K\},$}\\
\beta_{i-1}=\beta_i=\beta_{i+1}=-4/\kappa,\qquad\beta_{i+2}\in\{-4/\kappa,12/\kappa-2\},\qquad\gamma=8/\kappa.\end{gathered}\ee
Initially, we assume $\kappa\in(4,8)$ in our calculations, and in this range, we automatically have $8/\kappa\not\in\mathbb{Z}^+$.  Combined with this restriction, (\ref{application}) implies the weaker conditions
\gdef\thesubequation{\theequation \textit{a,b,c}}
\begin{subeqnarray}
\label{extracond1}
&&s:=\sideset{}{_{j=1}^K}\sum\beta_j+\gamma\in\mathbb{Z}^-\setminus\{-1\},\qquad\beta_i+\beta_{i+1}+\gamma/2\not\in\mathbb{Z},\qquad \text{$\beta_j>-1$ for all $j\in\{1,2,\ldots,K\}$},\hspace{1cm}\\
\gdef\thesubequation{\theequation \textit{a,b,c,d}}
\refstepcounter{equation}
\label{extracond2}
&&\hspace{2.5cm}\beta_i+\beta_{i+1}<-1,\qquad\beta_i+\gamma/2=0,\qquad\beta_i+\beta_{i+1}\not\in\mathbb{Z},\qquad\beta_i=\beta_{i+1},\hspace{2cm}
\end{subeqnarray}
which we use exclusively in the rest of this section.  We note that the branch cuts of the integrand anchored to $x_1$, $x_2,\ldots,x_{K-1}$ collectively end at $x_K$ because $s\in\mathbb{Z}$.

Thus, our present goal is to determine the asymptotic behavior of $\mathcal{J}^{(2,K)}(\boldsymbol{x})$ (\ref{Gamma12''}, \ref{scenario4}) as $x_{i+1}\rightarrow x_i$, first for $\kappa\in(4,8)$ with conditions (\ref{extracond1}, \ref{extracond2}).  For this purpose, we define for each $m,k\in\{1,2,\ldots,K\}$ with $m\neq k$ the definite integral
\begin{multline}\label{Ijkintegrals}
I_{m,k}(x_1,x_2,\ldots,x_K):=\sideset{}{_{x_m}^{x_{m+1}}}\int\sideset{}{_{x_k}^{x_{k+1}}}\int\mathcal{N}\Bigg[\prod_{j=1}^K(u_1-x_j)^{\beta_j}(u_2-x_j)^{\beta_j}(u_2-u_1)^\gamma\Bigg]\,{\rm d}u_2\,{\rm d}u_1,\\
\text{with}\,\,\sideset{}{_{x_K}^{x_{K+1}}}\int\,\,:=\,\,\sideset{}{_{x_K}^\infty}\int\,\,+\,\,\sideset{}{_{-\infty}^{x_1}}\int\,\,\text{ if $m=K$ or $k=K$.}\end{multline}
(Conditions (\ref{extracond1}\red{a,b}) imply that any improper integral among (\ref{Ijkintegrals}) converges.)  Here, $I_{i-1,i+1}(\boldsymbol{x})$ is precisely the definite integral in (\ref{scenario4}) with $\Gamma_1''=[x_{i-1},x_i]^+$ and $\Gamma_2''=[x_{i+1},x_{i+2}]^+$ whose asymptotic behavior as $x_{i+1}\rightarrow x_i$ we wish to determine under conditions (\ref{extracond1}, \ref{extracond2}).  (If $i=1$, then we identify the index $m,k=i-1=0$ with $K$.  We consider this case separately in item \ref{itemi=1} in section \ref{s53} below.)  For $m=k$, we define
\begin{multline}\label{Ijjintegrals}I_{k,k}(x_1,x_2,\ldots,x_K):=\sideset{}{_{x_{k-1}}^{x_k}}\int\sideset{}{_{x_{k-1}}^{u_1}}\int\mathcal{N}\Bigg[\prod_{j=1}^K(u_1-x_j)^{\beta_j}(u_2-x_j)^{\beta_j}(u_1-u_2)^\gamma\Bigg]\,{\rm d}u_2\,{\rm d}u_1\\
\text{with}\,\,\sideset{}{_{x_K}^{x_{K+1}}}\int\sideset{}{_{x_K}^{u_1}}\int\,\,:=\,\,\lim_{R\rightarrow\infty}\Bigg[\sideset{}{_{x_K}^R}\int\sideset{}{_{x_K}^{u_1}}\int\,\,+\,\,\sideset{}{_{-R}^{x_1}}\int\Bigg(\sideset{}{_{x_K}^R}\int\,\,+\,\,\sideset{}{_{-R}^{u_1}}\int\Bigg)\Bigg]\,\,\text{ if $k=K$.}\end{multline}
Thanks to the ordering symbol $\mathcal{N}[\,\,\ldots\,\,]$, the integrand $f(u_1,u_2)$ of (\ref{Ijjintegrals}), as a function of $u_1$ and $u_2$, has the symmetry property $f(u_1,u_2)=f(u_2,u_1)$ for all $(u_1,u_2)\in[x_k,x_{k+1}]^2$.  Thus, we also have
\begin{multline}\label{Ijjintegrals2}I_{k,k}(x_1,x_2,\ldots,x_K)=\sideset{}{_{x_{k-1}}^{x_k}}\int\sideset{}{_{u_1}^{x_k}}\int\mathcal{N}\Bigg[\prod_{j=1}^K(u_1-x_j)^{\beta_j}(u_2-x_j)^{\beta_j}(u_1-u_2)^\gamma\Bigg]\,{\rm d}u_2\,{\rm d}u_1\\
\text{with}\,\,\sideset{}{_{x_K}^{x_{K+1}}}\int\sideset{}{_{u_1}^{x_{K+1}}}\int\,\,:=\,\,\lim_{R\rightarrow\infty}\Bigg[\sideset{}{_{x_K}^R}\int\Bigg(\sideset{}{_{u_1}^R}\int\,\,+\,\,\sideset{}{_{-R}^{x_1}}\int\Bigg)\,\,+\,\,\sideset{}{_{-R}^{x_1}}\int\sideset{}{_{u_1}^{x_1}}\int\Bigg]\,\,\text{ if $k=K$.}\end{multline}

Now, to determine the asymptotic behavior of $I_{i-1,i+1}(\boldsymbol{x})$ (\ref{scenario4}) as $x_{i+1}\rightarrow x_i$ under conditions (\ref{extracond1}, \ref{extracond2}), we follow the strategy used in section \ref{s3} and express this quantity as a linear combination of definite integrals that fall under cases \ref{sc1} and \ref{sc2} of sections \ref{s1} and \ref{s2} respectively.  To this end, we replace the integration contour $[x_{k+1},x_{k+2}]$ in (\ref{Ijkintegrals}) with $m=i-1$ by a large semicircle of radius $R$, with counterclockwise (resp.\ clockwise) orientation in the upper (resp.\ lower) half-plane, and with its base on the real axis.  By the Cauchy integral theorem, the integration around this semicircle gives zero.  Furthermore, the integration along the arc of the semicircle behaves like $R^{s+1}$ and thus vanishes as $R\rightarrow\infty$, thanks to (\ref{extracond1}\red{a}).  Similar to (\ref{case3int}), we find the two equations (with the $-$ (resp.\ $+$) sign corresponding with the upper (resp.\ lower) half-plane setting)
\be\label{overunder}\sum_{k=1}^{i-2}e^{\pm\pi i\sum_{l=1}^k\beta_l}I_{i-1,k}+e^{\pm\pi i\sum_{l=1}^{i-1}\beta_l}(1+e^{\pm\pi i\gamma})I_{i-1,i-1}+\sum_{k=i}^Ke^{\pm\pi i(\sum_{l=1}^k\beta_l+\gamma)}I_{i-1,k}=0.\ee
The phase factors in (\ref{overunder}) arise from passing over or under the branch points $x_j$ of the function enclosed within the brackets of the integrand in (\ref{Ijkintegrals}) (figure \ref{phases}).  Using the two equations (\ref{overunder}) to solve for $I_{i-1,i+1}$ in terms of $I_{i-1,k}$ with $k\neq i\pm1$ (figure \ref{case4alt}), we find (recalling that $i+1<K$, thanks to step \ref{lastit2}, and $\beta_i+\beta_{i+1}+\gamma/2\not\in\mathbb{Z}$ (\ref{extracond1}\red{b}))
\begin{multline}\label{result1}I_{i-1,i+1}=\sum_{k=1}^{i-2}\frac{\sin\pi(\sum_{l=k+1}^{i-1}\beta_l+\gamma/2)}{\sin\pi(\beta_i+\beta_{i+1}+\gamma/2)}I_{i-1,k}\\
-\sum_{k=i+2}^K\frac{\sin\pi(\sum_{l=i}^k\beta_l+\gamma/2)}{\sin\pi(\beta_i+\beta_{i+1}+\gamma/2)}I_{i-1,k}-\frac{\sin\pi(\beta_i+\gamma/2)}{\sin\pi(\beta_i+\beta_{i+1}+\gamma/2)}I_{i-1,i}.\end{multline}
(We replace (\ref{result1}), and later (\ref{result3.55}--\ref{result4}), with new equations in section \ref{s53} below for $i\in\{1,2\}$.  This does not affect any of the following results.)

\begin{figure}[t]
\centering
\includegraphics[scale=0.27]{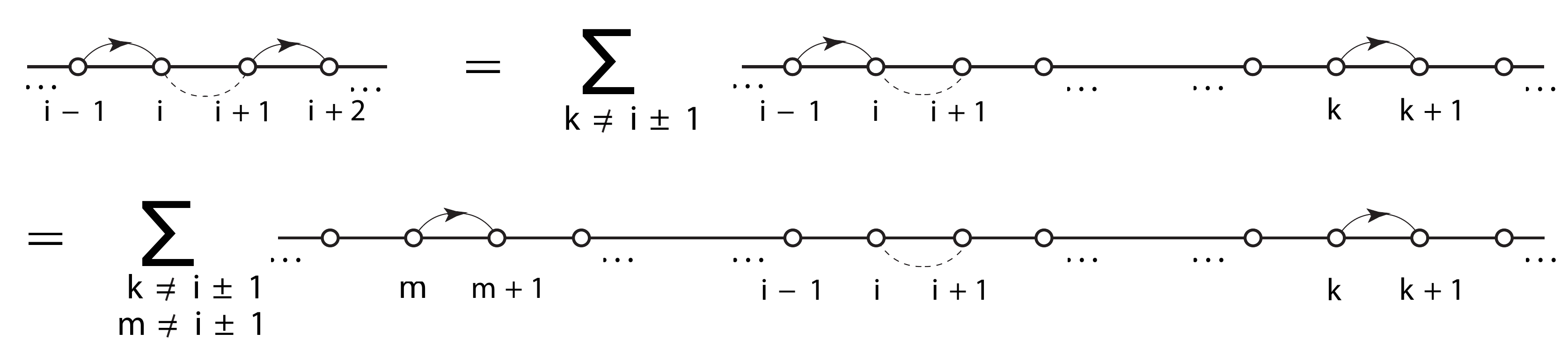}
\caption{The fourth case.  The dashed curve connects the endpoints of the interval to be collapsed.  We push the left (resp.\ right) integration contour from $[x_{i-1},x_i]$ (resp.\ $[x_{i+1},x_{i+2}]$) onto all intervals except $[x_{i+1},x_{i+2}]$ (resp.\ $[x_{i-1},x_i]$).}
\label{case4alt}
\end{figure}

Our goal is still to determine the behavior of the definite integral $I_{i-1,i+1}(\boldsymbol{x})$ as $x_{i+1}\rightarrow x_i$ under conditions (\ref{extracond1}, \ref{extracond2}).  We achieve this goal after expressing each $I_{i-1,k}$ with $k\neq i\pm1$ on the right side of (\ref{result1}) in terms of all $I_{m,k}$ with $m,k\neq i\pm1$.  Indeed, because these latter terms fall under cases \ref{sc1} and \ref{sc2}, we may use the results of sections \ref{s1} and \ref{s2} to determine their behavior as $x_{i+1}\rightarrow x_i$.  To this end, we replace the integration contour $[x_{i-1},x_i]^+$ of each term on the right side of (\ref{result1})  by a large semicircle of radius $R$, with counterclockwise (resp.\ clockwise) orientation in the upper (resp.\ lower) half-plane, and with its base on the real axis.  Again, the integration along the arc of the semicircle behaves like $R^{s+1}$ and thus vanishes as $R\rightarrow\infty$, thanks to (\ref{extracond1}\red{a}).  Therefore, after sending $R\rightarrow\infty$ and using the Cauchy integral theorem, we find the two equations
\be\label{result123}\sum_{m=1}^{k-1}e^{\pm\pi i\sum_{n=1}^m\beta_n}I_{m,k}+e^{\pm\pi i\sum_{n=1}^k\beta_n}(1+e^{\pm\pi i\gamma})I_{k,k}+\sum_{m=k+1}^Ke^{\pm\pi i(\sum_{n=1}^m\beta_n+\gamma)}I_{m,k}=0.\ee
Using the two equations (\ref{result123}), we solve for each $I_{i-1,k}$ with $k\neq i\pm1$ appearing on the right side of (\ref{result1}) in terms of $I_{m,k}$ with $m,k\neq i\pm1$ and substitute the result into (\ref{result1}) as desired (figure \ref{case4alt}).  With $k\neq i\pm1$, two cases arise.
\begin{enumerate}[I.] 
\item $k=i$: We find that $I_{i-1,i}$ equals a sum of many terms, each with a contour falling under case \ref{sc2}.  Because we have $\beta_i+\beta_{i+1}<-1$ (\ref{extracond2}\red{a}), all of these terms diverge as $x_{i+1}\rightarrow x_i$ thanks to (\ref{result2.5}), so simultaneously tracking all of them is difficult.  However, condition (\ref{extracond2}\red{b}) implies that the coefficient of $I_{i-1,i}$ in (\ref{result1}) vanishes, fortunately.
\item $k\not\in\{i,i\pm1\}$: Following the calculation in section \ref{s3} of appendix \ref{asymp}, we find that $I_{i-1,k}$ equals a sum of terms with $I_{m,k}$ and $m,k\not\in\{i,i\pm1\}$, falling under case \ref{sc1}, and a term with $I_{i,k}$ and $k\not\in\{i,i\pm1\}$, falling under case \ref{sc2}.  The appearance of this expression for $I_{i-1,k}$ is very similar to that for $I_{i-1}$ in (\ref{result}).  According to section \ref{s1}, the case \ref{sc1} terms are bounded in the limit $x_{i+1}\rightarrow x_i$, and according to result (\ref{result2.5}) of section \ref{s2}, the case \ref{sc2} term diverges in this limit because $\beta_i+\beta_{i+1}<1$ (\ref{extracond2}\red{a}).  Furthermore, this case \ref{sc2} term is identical to the last term on the right side of (\ref{result}) after we replace $I_i$ with $I_{i,k}$.  Therefore, we have
\be\label{condasymp}I_{i-1,k}(x_1,x_2,\ldots,x_K)\underset{x_{i+1}\rightarrow x_i}{\sim}-\frac{\sin\pi\beta_{i+1}}{\sin\pi(\beta_i+\beta_{i+1})}I_{i,k}(x_1,x_2,\ldots,x_K),\quad k\not\in\{i,i\pm1\}.\ee
Because $\beta_i+\beta_{i+1}\not\in\mathbb{Z}$ (\ref{extracond2}\red{c}), the denominator on the right side of (\ref{condasymp}) is not zero.
\end{enumerate}
The definite integral from $x_i$ to $x_{i+1}$ within $I_{i,k}(\boldsymbol{x})$ of (\ref{condasymp}) falls under case \ref{sc2}, so (\ref{result2.5}) gives its asymptotic behavior as $x_{i+1}\rightarrow x_i$.  After inserting the right side of (\ref{result2.5}) into (\ref{condasymp}), then inserting what results into the summations on the right side of (\ref{result1}) for each $k\not\in\{i,i\pm1\}$, and finally recalling that the term in (\ref{result1}) with $I_{i-1,i}$ vanishes, we find
\begin{multline}\label{result3.55}I_{i-1,i+1}(x_1,x_2,\ldots,x_K)\underset{x_{i+1}\rightarrow x_i}{\sim}\\
\begin{aligned}&-\frac{\sin\pi\beta_{i+1}\Gamma(\beta_i+1)\Gamma(\beta_{i+1}+1)}{\sin\pi(\beta_i+\beta_{i+1})\Gamma(\beta_i+\beta_{i+1}+2)}(x_{i+1}-x_i)^{\beta_i+\beta_{i+1}+1}\mathcal{N}\Bigg[\prod_{j\neq i,i+1}^K(x_i-x_j)^{\beta_j}\Bigg]\\
&\times\Bigg[\sum_{k=1}^{i-2}\frac{\sin\pi(\sum_{l=k+1}^{i-1}\beta_l+\gamma/2)}{\sin\pi(\beta_i+\beta_{i+1}+\gamma/2)}-\sum_{k=i+2}^K\frac{\sin\pi(\sum_{l=i}^k\beta_l+\gamma/2)}{\sin\pi(\beta_i+\beta_{i+1}+\gamma/2)}\Bigg]\\
&\times\,\,\sideset{}{_{x_k}^{x_{k+1}}}\int\mathcal{N}\Bigg[(u_2-x_i)^{\beta_i+\beta_{i+1}+\gamma}\prod_{j\neq i,i+1}^K(u_2-x_j)^{\beta_j}\Bigg]\,{\rm d}u_2,\end{aligned}\end{multline}
where so far we have assumed conditions (\ref{extracond1}\red{a}--\red{c}, \ref{extracond2}\red{a}--\red{c}) but not yet condition (\ref{extracond2}\red{d}).  (We revise (\ref{result3.55}) for $i\in\{1,2\}$ in section \ref{s53} below.)

Condition (\ref{extracond2}\red{d}) allows us to simplify our result (\ref{result3.55}) considerably.  First, after joining the intervals $[x_{i-1},x_i],$ $[x_i,x_{i+1}]$, and $[x_{i+1},x_{i+2}]$ together into one interval $[x_{i-1},x_{i+2}]$, we consider the collection of definite integrals
\begin{gather}\label{Ii+2'}I_{i-1}'(x_1,x_2,\ldots,x_{i-1},x_{i+2},\ldots,x_K):=\sideset{}{_{x_{i-1}}^{x_{i+2}}}\int\mathcal{N}\Bigg[\prod_{j\neq i,i+1}^K(u_2-x_j)^{\beta_j}\Bigg]\,{\rm d}u_2,\\
\label{Ik'}\begin{gathered}I_k'(x_1,x_2,\ldots,x_{i-1},x_{i+2},\ldots,x_K):=\sideset{}{_{x_k}^{x_{k+1}}}\int\mathcal{N}\Bigg[\prod_{j\neq i,i+1}^K(u_2-x_j)^{\beta_j}\Bigg]\,{\rm d}u_2,\\
 k\in\{1,2,\ldots,i-2,i+2,\ldots,K\}.\end{gathered}\end{gather}
(Again, we revise this equation and the following for $i\in\{1,2\}$ in section \ref{s53} below.)  The prime signifies that $I_{i-1}'$ and $I_k'$ are functions of only $x_1$, $x_2,\ldots,x_{i-1},$ $x_{i+2},\ldots,x_K$.  We note that the branch cuts of the integrand anchored to $x_1$, $x_2,\ldots,x_{K-1}$ collectively end at $x_K$ because
\be\label{sprime} s':=\sideset{}{_{j={i,i+1}}^K}\sum\beta_j\quad\Longrightarrow\quad s'=s-\beta_i-\beta_{i+1}-\gamma=s\in\mathbb{Z}^-\setminus\{-1\},\ee
thanks to conditions (\ref{extracond1}\red{a}, \ref{extracond2}\red{b,d}).  Also, the bottom line of (\ref{Ijkintegrals}) defines the integration for $I_K$.

Now we find a useful expression for the definite integral $I_{i-1}'$ in terms of the $I_k'$ with $k\neq i-1$.  Following the derivation of (\ref{case3int}), we replace the integration contour in (\ref{Ii+2'}, \ref{Ik'}) by a large semicircle of radius $R$, with counterclockwise (resp.\ clockwise) orientation in the upper (resp.\ lower) half-plane, and with its base on the real axis.  The integration along the arc of the semicircle behaves like $R^{s'+1}$ and thus vanishes as $R\rightarrow\infty$, thanks to (\ref{sprime}).  Therefore, after sending $R\rightarrow\infty$ and using the Cauchy integral theorem, we find the two equations
\be\label{intabovebelow} A^\pm:=\sideset{}{'}\sum_{k=1}^K e^{\pm\pi i\sum_{l=1}^{'k}\beta_l}I_{k}'=0,\ee
where the prime indicates summation over all indices $k,l\not\in\{i,i+1\}$.  Now after isolating the definite integral $I_{i-1}'$ from the linear combination
\be\label{Alincmb} e^{-\pi i\sum_{l=1}^{i-1}\beta_l}e^{\pi i(\beta_i+\beta_{i+1}+\gamma/2)}A^+-e^{\pi i\sum_{l=1}^{i-1}\beta_l}e^{-\pi i(\beta_i+\beta_{i+1}+\gamma/2)}A^-=0,\ee
we find (here, the bottom line of (\ref{Ijkintegrals}) defines integration from $x_K$ to $x_{K+1}$ and condition (\ref{extracond1}\red{b}) says that $\beta_i+\beta_{i+1}+\gamma/2\not\in\mathbb{Z}$)
\begin{multline}\label{primed}I_{i-1}'(x_1,x_2,\ldots,x_{i-1},x_{i+2},\ldots,x_K)=\\
\Bigg[\sum_{k=1}^{i-2}\frac{\sin\pi(\sum_{l=k+1}^{i-1}\beta_l-\beta_i-\beta_{i+1}-\gamma/2)}{\sin\pi(\beta_i+\beta_{i+1}+\gamma/2)}-\sum_{k=i+2}^K\frac{\sin\pi(\sum_{l=i}^k\beta_l+\gamma/2)}{\sin\pi(\beta_i+\beta_{i+1}+\gamma/2)}\Bigg]\\
\times\,\sideset{}{_{x_k}^{x_{k+1}}}\int\mathcal{N}\Bigg[\prod_{j\neq i,i+1}^K(u_2-x_j)^{\beta_j}\Bigg]\,{\rm d}u_2.\end{multline}
With $\beta_i=\beta_{i+1}=-\gamma/2$ thanks to (\ref{extracond2}\red{b,d}), the numerators of the fractions in the first summations of (\ref{primed}) and (\ref{result3.55}) agree.  Thus, the entire right side of (\ref{primed}) equals the difference of summations on the right side of (\ref{result3.55}).  After replacing the latter with the former, (\ref{result3.55}) becomes
\begin{multline}\label{result4}I_{i-1,i+1}(x_1,x_2,\ldots,x_K)\underset{x_{i+1}\rightarrow x_i}{\sim}-\frac{\sin\pi\beta_{i+1}\Gamma(\beta_i+1)\Gamma(\beta_{i+1}+1)}{\sin\pi(\beta_i+\beta_{i+1})\Gamma(\beta_i+\beta_{i+1}+2)}(x_{i+1}-x_i)^{\beta_i+\beta_{i+1}+1}\\
\times\,\mathcal{N}\Bigg[\prod_{j\neq i,i+1}^K(x_i-x_j)^{\beta_j}\Bigg]\sideset{}{_{x_{i-1}}^{x_{i+2}}}\int\mathcal{N}\Bigg[\prod_{j\neq i,i+1}^K(u_2-x_j)^{\beta_j}\Bigg]\,{\rm d}u_2,\quad \beta_i=\beta_{i+1},\end{multline}
where we have now assumed all conditions among (\ref{extracond1}\red{a}--\red{c}, \ref{extracond2}\red{a}--\red{d}).  We note that the prefactor of sine and Gamma functions on the right side of (\ref{result4}) equals that of (\ref{result3}).  After substituting $\beta_{i+1}=\beta_i$ in (\ref{result4}) and changing notation from $I_{i-1,i+1}$ to the notation $\mathcal{J}^{(2,K)}$ (\ref{eulerintegralch2}, \ref{scenario4}), we finally have (assuming conditions (\ref{extracond1}, \ref{extracond2}))
\begin{multline}\label{result4.5}\mathcal{J}^{(2,K)}\Big(\{\beta_j\};\gamma\,\Big|\,[x_{i-1},x_i]^+,[x_{i+1},x_{i+2}]^+\,\Big|\,x_1,x_2\ldots,x_K\Big)\\
\begin{aligned}&\underset{x_{i+1}\rightarrow x_i}{\sim}\frac{\Gamma(\beta_i+1)^2}{-2\cos\pi\beta_i\,\Gamma(2\beta_i+2)}(x_{i+1}-x_i)^{2\beta_i+1}\\
&\times\,\mathcal{N}\Bigg[\prod_{j\neq i,i+1}^K(x_i-x_j)^{\beta_j}\Bigg]\sideset{}{_{x_{i-1}}^{x_{i+2}}}\int\mathcal{N}\Bigg[\prod_{j\neq i,i+1}^K(u_2-x_j)^{\beta_j}\Bigg]\,{\rm d}u_2.\end{aligned}\end{multline}
Remarkably, sending $x_{i+1}\rightarrow x_i$ joins the contours $\Gamma_1''=[x_{i-1},x_i]^+$ and $\Gamma_2''=[x_{i+1},x_{i+2}]^+$ of (\ref{scenario4}) into the single contour $\Gamma_0':=[x_{i-1},x_{i+2}]^+$ of (\ref{result4.5}).  The points $x_i$ and $x_{i+1}$ do not participate in the definite integral that remains in (\ref{result4.5}).  Also, condition (\ref{application}) implies that the factor of $-2\cos\pi\beta_i$ in the denominator of (\ref{result4.5}) again equals $n(\kappa)$ (\ref{fugacity}), and its presence justifies the factors of $n(\kappa)^{-1}$ that appear in the bottom line of figure \ref{Cases} and on the bottom line of each bracketed collection in figure \ref{Case4}.

\begin{figure}[t!]
\centering
\includegraphics[scale=0.27]{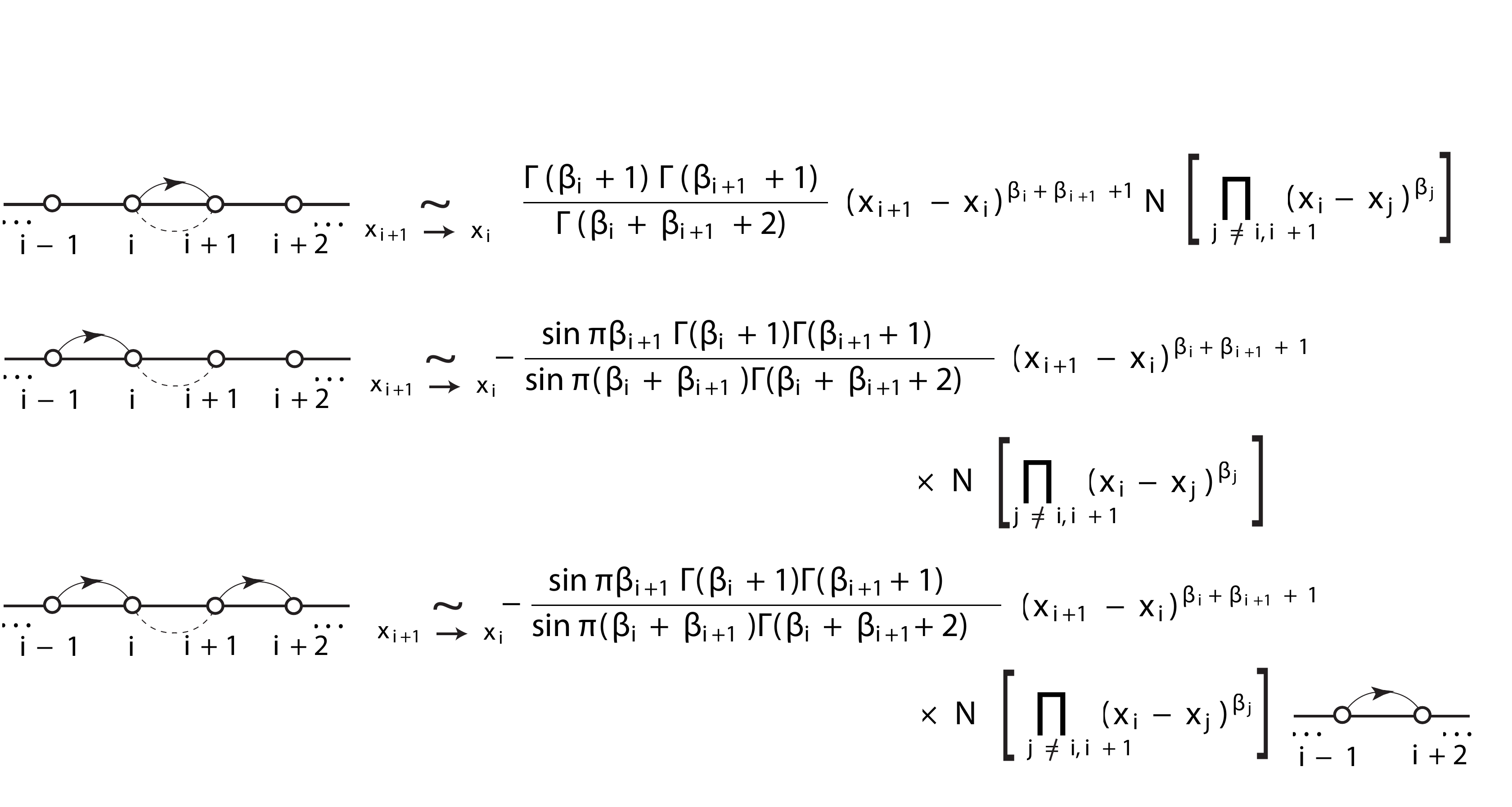}
\caption{Summary of cases \ref{sc2}, \ref{sc3new}, and \ref{sc4new}.  Case \ref{sc2} (top illustration) assumes $\beta_1+\beta_2+\dotsm+\beta_K\in\mathbb{Z}^-\setminus\{-1\}$.  Case \ref{sc3new} (middle illustration) assumes (\ref{extracond0}, \ref{extracond}).  Case \ref{sc4new} (bottom illustration) assumes (\ref{extracond1}, \ref{extracond2}).}
\label{Summary}
\end{figure}

Figure \ref{Summary} summarizes the asymptotic behaviors of the definite integrals studied in cases \ref{sc2}, \ref{sc3new}, and \ref{sc4new}.

Now we extend our findings from $\kappa\in(4,8)$ to $\kappa\in(0,8)$ with $8/\kappa\not\in\mathbb{Z}^+$.  If $\kappa\in(0,4]$, then (\ref{application}) implies that $\beta_k\leq-1$ for several $k\in\{1,2,\ldots,K\}$, so the improper integral $I_{m,k}$ (\ref{Ikintegrals}) diverges.  The end of section \ref{s3} presents the remedy.  To summarize, we replace condition (\ref{extracond1}\red{c}) with (\ref{extracondmod}) first.  Then in (\ref{result1}), we alter both integrations of each $I_{m,k}$ as (\ref{analyticcont}) shows.  In particular, if $\kappa\in(0,4]$ so $\beta_i,\beta_{i+1}\leq-1$ (\ref{application}), then after solving (this analytic continuation of) (\ref{result1}) for $I_{i-1,i+1}$ (which is also (\ref{scenario4}) with $\Gamma''_1=\mathscr{P}(x_{i-1},x_i)$ and $\Gamma_2''=\mathscr{P}(x_{i+1},x_{i+2})$), we find
\begin{multline}\label{result4.75}\mathcal{J}^{(2,K)}\Big(\{\beta_j\};\gamma\,\Big|\,\mathscr{P}(x_{i-1},x_i),\mathscr{P}(x_{i+1},x_{i+2})\,\Big|\,x_1,x_2\ldots,x_K\Big)\underset{x_{i+1}\rightarrow x_i}{\sim}4\sin^2\pi\beta_i\,\frac{\Gamma(\beta_i+1)^2}{-2\cos\pi\beta_i\,\Gamma(2\beta_i+2)}\\
\times\,(x_{i+1}-x_i)^{2\beta_i+1}\mathcal{N}\Bigg[\prod_{j\neq i,i+1}^K(x_i-x_j)^{\beta_j}\Bigg]\sideset{}{_{\mathscr{P}(x_{i-1},x_{i+2})}}\oint\mathcal{N}\Bigg[\prod_{j\neq i,i+1}^K(u_2-x_j)^{\beta_j}\Bigg]\,{\rm d}u_2,\end{multline}
assuming conditions (\ref{extracond1}, \ref{extracond2}).

To finish, we use (\ref{result4.5}, \ref{result4.75}) to justify item \ref{fourthcase} in the proof of lemma \ref{mainlem}.  In that part of the proof, we require the limit $\bar{\ell}_1\mathcal{F}_\vartheta$ (\ref{thelim}) for $\kappa\in(0,8)$ and $8/\kappa\not\in\mathbb{Z}^+$, with $\mathcal{F}_\vartheta$ given by (\ref{Fexplicit1}) and $\Gamma_1$ (resp.\ $\Gamma_2$) replaced by $\Gamma_1''$ (resp.\ $\Gamma_2''$) which is given in (\ref{Gamma12''}) (so the sine functions drop from the prefactor (\ref{firstprefactor}) in (\ref{Fexplicit1}) if $\kappa>4$).  For $\kappa\leq4$, this limit is 
\begin{multline}\label{thebiglimit2}\bar{\ell}_1\big(\mathcal{F}_\vartheta\big|_{(\Gamma_1,\Gamma_2)\mapsto(\Gamma_1'',\Gamma_2'')}\big)\,(\kappa\,|\,x_1,x_2,\ldots,x_{i-1},x_{i+2},\ldots,x_{2N})\,\,=\lim_{x_{i+1}\rightarrow x_i}(x_{i+1}-x_i)^{6/\kappa-1}\,\,\times\\ 
(\mathcal{F}_\vartheta|_{(\Gamma_1,\Gamma_2)\mapsto(\Gamma_1'',\Gamma_2'')})\,(\kappa\,|\,\boldsymbol{x}) \\ \parallel \\
\boxed{\begin{aligned}&n(\kappa)\left[\frac{n(\kappa)\Gamma(2-8/\kappa)}{4\sin^2(4\pi/\kappa)\Gamma(1-4/\kappa)^2}\right]^{N-1}\Bigg(\prod_{\substack{1\leq j<k\\j,k\neq i,i+1}}^{2N-1}(x_k-x_j)^{2/\kappa}\Bigg)\Bigg(\prod_{\substack{k=1 \\ k\neq i,i+1}}^{2N-1}(x_{2N}-x_k)^{1-6/\kappa}\Bigg)\sideset{}{_{\Gamma_{N-1}}}\oint {\rm d}u_{N-1}\\
&\dotsm\,\sideset{}{_{\Gamma_4}}\oint {\rm d}u_4\,\, \sideset{}{_{\Gamma_3}}\oint {\rm d}u_3\,\,\mathcal{N}\Bigg[\Bigg(\prod_{l\neq i,i+1}^{2N-1}\prod_{m=3}^{N-1}(x_l-u_m)^{-4/\kappa}\Bigg)\Bigg(\prod_{m=3}^{N-1}(x_{2N}-u_m)^{12/\kappa-2}\Bigg)\Bigg(\prod_{3\leq p<q}^{N-1}(u_p-u_q)^{8/\kappa}\Bigg)\\
&\Bigg(\prod_{m=3}^{N-1}(x_i-u_m)^{-4/\kappa}(x_{i+1}-u_m)^{-4/\kappa}\Bigg)\Bigg]
\Bigg(\prod_{j\neq i,i+1}^{2N-1}|x_j-x_i|^{2/\kappa}|x_j-x_{i+1}|^{2/\kappa}\Bigg)\Big(x_{i+1}-x_i\Big)^{2/\kappa}\\
&\begin{aligned}&\Big(x_{2N}-x_i\Big)^{1-6/\kappa} \\ &\,\,\,\Big(x_{2N}-x_{i+1}\Big)^{1-6/\kappa}\end{aligned}\underbrace{\left\{\begin{aligned}
&\sideset{}{_{\mathscr{P}(x_{i+1},x_{i+2})}}\oint {\rm d}u_2\,\,\sideset{}{_{\mathscr{P}(x_{i-1},x_i)}}\oint {\rm d}u_1
\,\,\mathcal{N}\Bigg[\Bigg(\prod_{l=1}^{2N-1}(x_l-u_1)^{-4/\kappa}(x_l-u_2)^{-4/\kappa}\Bigg)\Big(x_{2N}\\
&-u_1\Big)^{12/\kappa-2}\Big(x_{2N}-u_2\Big)^{12/\kappa-2}\Big(u_2-u_1\Big)^{8/\kappa}\Bigg(\prod_{m=3}^{N-1}(u_m-u_1)^{8/\kappa}(u_m-u_2)^{8/\kappa}\Bigg)\Bigg]\end{aligned}\right\}.}_{\mathcal{J}^{(2,K)}}\end{aligned}}\end{multline}
(If $\kappa>4$, then we adjust (\ref{thebiglimit2}) as per item \ref{itemc} in the introduction of this appendix.)  We have rewritten the formula (\ref{Fexplicit1}) for $\mathcal{F}_\vartheta(\kappa\,|\,\boldsymbol{x})$ slightly to clarify the calculation, and we indicate the contour integral $\mathcal{J}^{(2,K)}$ (\ref{scenario4}) with braces.

Now we find the limit (\ref{thebiglimit2}).  After doing steps \ref{oneit2}--\ref{lastit2}, setting $x_{i+1}=x_i$ in all factors of (\ref{Fexplicit1}) without $u_1$ or $u_2$, identifying the double integral with respect to $u_1$ and $u_2$ with (\ref{Gamma12''}, \ref{scenario4}, \ref{application}), and replacing it by the right side of (\ref{result4.75}), we find 
\begin{multline}\label{asympinsert2}(x_{i+1}-x_i)^{6/\kappa-1}\big(\mathcal{F}_\vartheta\big|_{(\Gamma_1,\Gamma_2)\mapsto(\Gamma_1'',\Gamma_2'')}\big)\,(\kappa\,|\,\boldsymbol{x})\underset{x_{i+1}\rightarrow x_i}{\sim}(x_{i+1}-x_i)^{6/\kappa-1}\,\,\times\\ 
\boxed{\begin{aligned}&n(\kappa)\left[\frac{n(\kappa)\Gamma(2-8/\kappa)}{4\sin^2(4\pi/\kappa)\Gamma(1-4/\kappa)^2}\right]^{N-1}\Bigg(\prod_{\substack{1\leq j<k\\j,k\neq i,i+1}}^{2N-1}(x_k-x_j)^{2/\kappa}\Bigg)\Bigg(\prod_{\substack{k=1 \\ k\neq i,i+1}}^{2N-1}(x_{2N}-x_k)^{1-6/\kappa}\Bigg)\sideset{}{_{\Gamma_{N-1}}}\oint {\rm d}u_{N-1}\\
&\dotsm\,\sideset{}{_{\Gamma_4}}\oint {\rm d}u_4\,\, \sideset{}{_{\Gamma_3}}\oint {\rm d}u_3\,\,\mathcal{N}\Bigg[\Bigg(\prod_{l\neq i,i+1}^{2N-1}\prod_{m=3}^{N-1}(x_l-u_m)^{-4/\kappa}\Bigg)\Bigg(\prod_{m=3}^{N-1}(x_{2N}-u_m)^{12/\kappa-2}\Bigg)\Bigg(\prod_{3\leq p<q}^{N-1}(u_p-u_q)^{8/\kappa}\Bigg)\\
&\Bigg(\prod_{m=3}^{N-1}(x_i-u_m)^{-8/\kappa}\Bigg)\Bigg]\Bigg(\prod_{j\neq i,i+1}^{2N-1}|x_j-x_i|^{4/\kappa}\Bigg)\Big(x_{i+1}-x_i\Big)^{2/\kappa}\,\,\,\overbrace{\frac{4\sin^2(4\pi/\kappa)\Gamma(1-4/\kappa)^2}{n(\kappa)\Gamma(2-8/\kappa)}}^{\text{right side of (\ref{result4.75})}}\\
&\Big(x_{2N}-x_i\Big)^{2-12/\kappa}\underbrace{\left\{\begin{aligned}\Big(x_{i+1}-x_i\Big)^{1-8/\kappa}&\mathcal{N}\Bigg[\Bigg(\prod_{j\neq i,i+1}^{2N-1}(x_j-x_i)^{-4/\kappa}\Bigg)\Big(x_{2N}-x_i\Big)^{12/\kappa-2}\Bigg(\prod_{m=3}^{N-1}(x_i-u_m)^{8/\kappa}\Bigg)\Bigg]\\
\sideset{}{_{\mathscr{P}(x_{i-1},x_{i+2})}}\oint {\rm d}u_2\,\,
&\mathcal{N}\Bigg[\Big(x_{2N}-u_2\Big)^{12/\kappa-2}\Bigg(\prod_{l=1}^{2N-1}(x_l-u_2)^{-4/\kappa}\Bigg)\Bigg(\prod_{m=3}^{N-1}(u_m-u_2)^{8/\kappa}\Bigg)\hspace{.15cm}\Bigg]\end{aligned}\right\}}_{\text{right side of (\ref{result4.75})}}.\end{aligned}}\end{multline}
(If $\kappa>4$ and we adjust (\ref{thebiglimit2}) as per item \ref{itemc} in the introduction of this appendix, then we replace the double integral with respect to $u_1$ and $u_2$ in (\ref{thebiglimit2}) with the right side of (\ref{result4.5}) instead, finding (\ref{asympinsert2}) again, but with factors of $4\sin^2(4\pi/\kappa)$ dropped and all contours simple.)  After some simplification, we finally send $x_{i+1}\rightarrow x_i$ in (\ref{asympinsert2}) to find
\begin{multline}\label{firstlimprime2}\bar{\ell}_1\big(\mathcal{F}_\vartheta\big|_{(\Gamma_1,\Gamma_2)\mapsto(\Gamma_1'',\Gamma_2'')}\big)\,(\kappa\,|\,x_1,x_2,\ldots,x_{i-1},x_{i+2},\ldots,x_{2N})\,\,=\\
\left\{\begin{aligned}&n(\kappa)\left[\frac{n(\kappa)\Gamma(2-8/\kappa)}{4\sin^2(4\pi/\kappa)\Gamma(1-4/\kappa)^2}\right]^{N-2}\,\,\Bigg(\prod_{\substack{1\leq j<k \\ j,k\neq i,i+1}}^{2N-1}(x_k-x_j)^{2/\kappa}\Bigg)\Bigg(\prod_{\substack{k=1 \\ k\neq i,i+1}}^{2N-1}(x_{2N}-x_k)^{1-6/\kappa}\Bigg)\oint_{\Gamma_{N-1}} {\rm d}u_{N-1}\dotsm\\ 
&\oint_{\Gamma_3}{\rm d}u_3\,\,\sideset{}{_{\hspace{1cm}\mathclap{\mathscr{P}(x_{i-1},x_{i+2})}}}\oint\hspace{.5cm}{\rm d}u_2\,\,\mathcal{N}\Bigg[\Bigg(\prod_{l\neq i,i+1}^{2N-1}\prod_{m=2}^{N-1}(x_l-u_m)^{-4/\kappa}\Bigg)\Bigg(\prod_{m=2}^{N-1}(x_{2N}-u_m)^{12/\kappa-2}\Bigg)\Bigg(\prod_{2\leq p<q}^{N-1}(u_p-u_q)^{8/\kappa}\Bigg)\Bigg]\end{aligned}\right\}.\end{multline}
(Again, if $\kappa>4$, then we find the same result, but with the factors of $4\sin^2(4\pi/\kappa)$ dropped and with the Pochhammer contours replaced by simple contours with the same endpoints.)  

Now, the quantity of (\ref{firstlimprime2}) in brackets is the element of $\mathcal{B}_{N-1}$ described in the following conclusion.  We thus have our main result, for use in item \ref{fourthcase} in the proof of lemma \ref{mainlem}.  (Again, we assume here that $8/\kappa\in\mathbb{Z}^+$, but the result remains true even if this condition is not met.  See the last paragraph of this section.)
\begin{quote}\textbf{Main result:} In case \ref{sc4new}, where $\Gamma_1''$ and $\Gamma_2''$ (\ref{Gamma12''}) respectively replace $\Gamma_1$ and $\Gamma_2$ in the formula (\ref{Fexplicit1}) for $\mathcal{F}_\vartheta$, the limit $\bar{\ell}_1\mathcal{F}_\vartheta$ (\ref{thelim}) equals the element of $\mathcal{S}_{N-1}$ generated from the formula for $\mathcal{F}_\vartheta$ by dropping all factors involving $x_i$, $x_{i+1}$, and $u_1$, dropping the integration along $\Gamma_1$, replacing $\Gamma_2$ by $\mathscr{P}(x_{i-1},x_{i+2})$ (or $[x_{i-1},x_{i+2}]^+$ if $\kappa>4$, as per item \ref{itemc} in the introduction of this appendix), and reducing the power of the prefactor (\ref{firstprefactor}) or (\ref{secondprefactor}) by one.
\end{quote}
As previously noted, (\ref{result1}, \ref{result3.55}--\ref{result4.5}) require revision if $i\in\{1,2\}$, so this main result remains to be proved if $i=1$ or $i=2$.  We provide this proof in items \ref{itemi=1} and \ref{itemi=2} of section \ref{s53} below.

So far, the analysis of sections \ref{s3} and \ref{s4} assumes that $8/\kappa\not\in\mathbb{Z}^+$, but as we show below (\ref{LkFk}), the main results of sections \ref{s1}--\ref{s4} (and \ref{s5} below) remain true even if this condition is not met.  This situation sorts into two groups, those with $4/\kappa\in\mathbb{Z}^+$, and those with $4/\kappa\not\in\mathbb{Z}^+$.  If $4/\kappa\in\mathbb{Z}^+$, then there is no reason to adapt the analysis of this appendix because we have algebraic formulas (\ref{explicit}) for the elements of $\mathcal{B}_N$. The required limits then follow straightforwardly.  If $4/\kappa\not\in\mathbb{Z}^+$, then we may adapt the analysis of sections \ref{s2} and \ref{s3} to this situation.  We do this in appendix \red{A} of \cite{florkleb4} as part of the proof of a theorem regarding Frobenius series expansions for elements of $\mathcal{S}_N$.

\subsection{Proof of corollary \ref{moveconjchargecor}, items \ref{firstcase2}--\ref{fourthcase2}}\label{s5}

The purpose of this section is to complete the argument for items \ref{firstcase2}--\ref{fourthcase2} in the proof of lemma \ref{mainlem} and extend some of the analysis in sections \ref{s3} and \ref{s4} to $i\in\{1,2,2N-1\}$.  As usual, we do this in two steps. First, we determine the asymptotic behavior of the Coulomb gas integral $\mathcal{J}^{(N-1,2N)}(\boldsymbol{x})$ in $\mathcal{F}_{c,\vartheta}(\kappa\,|\,\boldsymbol{x})$ (\ref{firstFexplicit1}) as $x_{i+1}\rightarrow x_i$ for $i\in\{1,2,\ldots,2N-1\}\setminus\{c-1,c\}$ or as $-x_1=x_{2N}=R\rightarrow\infty$ for $c\in\{1,2,\ldots,2N\}$, with $\kappa\in(0,8)$ and $8/\kappa\not\in\mathbb{Z}^+$.  After we find this behavior, we use it to compute the limits $\bar{\ell}_1\mathcal{F}_{c,\vartheta}$ (\ref{nextthelim}) and $\underline{\ell}_1\mathcal{F}_{c,\vartheta}$ (\ref{inflim}), our second step.  We need these limits for the proof of corollary \ref{moveconjchargecor} (and we need the limit $\bar{\ell}_1\mathcal{F}_{c,\vartheta}$ for item \ref{thirdcase} (resp.\ item \ref{fourthcase}) in the proof of lemma \ref{mainlem} if $c=2N$ and $i=2$ (resp.\ $i\in\{1,2\}$)).  Throughout this section, we assign new values to any indices outside the range $1$, $2,\ldots,2N$ using modular $2N$ arithmetic.  For example, if $i=1$ (resp.\ $i=2N$), then we assign $i-1$ the value $2N$ (resp.\ $i+1$ the value one).  In particular, if $i=2N$, then sending $x_{i+1}\rightarrow x_i$ as in (\ref{nextthelim}) becomes $-x_1=x_{2N}=R\rightarrow\infty$ as in (\ref{inflim}).

For the proof of corollary \ref{moveconjchargecor}, we determine the limit $\bar{\ell}_1\mathcal{F}_{c,\vartheta}$ (\ref{nextthelim}) in cases \ref{sc1}, \ref{sc2}, \ref{sc3new}, and \ref{sc4new} listed in the introduction of this appendix.  By repeating the analysis of sections \ref{s1} and \ref{s2}, we find the same main results for this limit in cases \ref{sc1} and \ref{sc2} respectively.  And by repeating the analysis of sections \ref{s3}  and \ref{s4} with $i\in\{3,4,\ldots,2N-2\}\setminus\{c-1,c\}$, we find the same main results for this limit in cases \ref{sc3new} and \ref{sc4new} respectively.  However, if $i\in\{1,2,2N-1\}$, then some equations in the analysis of cases \ref{sc3new} and \ref{sc4new} are different.  Although slight, these differences are too significant to overlook altogether and too involved to include in the previous sections.  Hence, we have postponed their consideration up to now.  This includes the case with $c=2N$ and $i=2$ (resp.\ $i\in\{1,2\}$) required for the main result of section \ref{s3} (resp.\ section \ref{s4}).

Also for the proof of corollary \ref{moveconjchargecor}, we determine the limit $\underline{\ell}_1\mathcal{F}_{c,\vartheta}$ (\ref{nextthelim}) in cases \ref{sc1}, \ref{sc2}, \ref{sc3new}, and \ref{sc4new} listed in the introduction of this appendix.  None of the previous sections in this appendix consider this limit, and although its calculation is similar to that of $\bar{\ell}_1\mathcal{F}_{c,\vartheta}$, it is different enough that we show it here.

\subsubsection{Proof of corollary \ref{moveconjchargecor}, item \ref{firstcase2}}\label{s50}

As we mentioned in the introduction of this section, the analysis of section \ref{s1} with $i\in\{1,2,\ldots,2N-1\}\setminus\{c-1,c\}$ immediately gives this result, for use in the proof of corollary \ref{moveconjchargecor}.
\begin{quote}\textbf{Main result 1:} In case \ref{sc1} with $i\in\{1,2,\ldots,2N-1\}\setminus\{c-1,c\}$, where we replace all integration contours with endpoints at $x_i$ or $x_{i+1}$ by contours that do not surround or touch these points in the formula (\ref{firstFexplicit1}) for $\mathcal{F}_{c,\vartheta}$, the limit $\bar{\ell}_1\mathcal{F}_{c,\vartheta}$ (\ref{nextthelim}) vanishes.
\end{quote}
This result includes the main result of section \ref{s2} as the special case $c=2N$, and it is similarly proven.  On the other hand, we may set $-x_1=x_{2N}=R$ in the formula (\ref{firstFexplicit1}) for $\mathcal{F}_{c,\vartheta}(\kappa\,|\,\boldsymbol{x})$ to find that this quantity times $(2R)^{6/\kappa-1}$ is $O(R^{1-8/\kappa})$ as $R\rightarrow\infty$.   Thus, we have the following result.
\begin{quote}\textbf{Main result 2:} In case \ref{sc1}, where we replace all integration contours with endpoints at $x_{2N}$ or $x_1$ by contours that do not surround or touch these points in the formula (\ref{firstFexplicit1}) for $\mathcal{F}_{c,\vartheta}$, the limit $\underline{\ell}_1\mathcal{F}_{c,\vartheta}$ (\ref{nextthelim}) vanishes.
\end{quote}
We use this second main result to repeat the analysis of sections \ref{s2}--\ref{s4} and to find the second main results of sections \ref{s51}--\ref{s53}.

\subsubsection{Proof of corollary \ref{moveconjchargecor}, item \ref{secondcase2}}\label{s51}

As we mentioned in the introduction of this section, the analysis of section \ref{s2} with $i\in\{1,2,\ldots,2N-1\}\setminus\{c-1,c\}$ immediately gives this result, for use in the proof of corollary \ref{moveconjchargecor}.
\begin{quote}\textbf{Main result 1:} In case \ref{sc2} with $i\in\{1,2,\ldots,2N-1\}\setminus\{c-1,c\}$, where $\Gamma_1$ is given by (\ref{Gamma1}), the limit $\bar{\ell}_1\mathcal{F}_{c,\vartheta}$ (\ref{nextthelim}) equals $n$ (\ref{fugacity}) times the element of $\mathcal{S}_{N-1}$ generated from the formula (\ref{firstFexplicit1}) for $\mathcal{F}_{c,\vartheta}$ by dropping all factors involving $x_i$, $x_{i+1}$, and $u_1$, dropping the integration along $\Gamma_1$, and reducing the power of the prefactor (\ref{firstprefactor}) or (\ref{secondprefactor}) by one.
\end{quote}
This result includes the main result of section \ref{s2} as the special case $c=2N$, and it is similarly proven.

We proceed as follows.  First, we determine the asymptotic behavior as $-x_1=x_{2N}=R\rightarrow\infty$ of the Coulomb gas integral $\mathcal{J}^{(N-1,2N)}(\boldsymbol{x})$ in $\mathcal{F}_{c,\vartheta}(\kappa\,|\,\boldsymbol{x})$ (\ref{firstFexplicit1}) for all $\kappa\in(0,8)$, with $x_1$ and $x_{2N}$ as endpoints of only one common integration contour $\Gamma_1$.  And after we find this behavior, we use it to compute the limit $\underline{\ell}_1\mathcal{F}_{c,\vartheta}$ (\ref{inflim}).  We may interpret the forthcoming analysis as an extension of that in section \ref{s2} to the value $i=2N$.

To begin, we execute steps \ref{oneit}--\ref{lastit} of section \ref{s2}.  (After re-indexing the coordinates as in step \ref{lastit}, $x_K$ becomes the rightmost coordinate of $\boldsymbol{x}\in\Omega_0$, previously called $x_{2N}$.)  This leaves us with finding the asymptotic behavior of $\mathcal{J}^{(1,K)}(\boldsymbol{x})$ (\ref{I1here}) with $K=3N-2$ as $-x_1=x_K=R\rightarrow\infty$.    Thus, the integration contour $\Gamma_1$ is 
\be\label{Gamma1i=2N}\Gamma_1=\begin{cases}\mathscr{P}(R,-R), & 0<\kappa\leq4\\
 [R,-R]^+, & 4<\kappa<8\end{cases},\ee  
and we bend it into the shape of a semicircular arc in the upper half-plane with endpoints at $-R$ and $R$.  Because it passes over $x_c$, $\Gamma_1$ has leftward orientation (item \ref{2cit} of definition \ref{Fkdefn}).  After identifying the contour integral with respect to $u_1$ in (\ref{firstFexplicit1}) with (\ref{I1here}), we find that (in the equation number, we assign the letters to the conditions by going from left to right, then down one line, and then from left to right again)
\gdef\thesubequation{\theequation \textit{a,b,c}}
\begin{subeqnarray}
\label{betai2}
&&\begin{gathered}\text{$\beta_j\in\{-4/\kappa,8/\kappa,12/\kappa-2\}$ for all $j\in\{1,2,\ldots,K\}$},\\
s:=\sideset{}{_{j=1}^K}\sum\beta_j=-2,\qquad\beta_K=\beta_1=-4/\kappa,
\end{gathered}\end{subeqnarray}
which is identical to (\ref{betai}), except that we have identified ``$\beta_{K+1}$" with $\beta_1$ and included a condition on the sum of the powers $\beta_j$.

Now we find the asymptotic behavior of $\mathcal{J}^{(1,K)}(\boldsymbol{x})$ (\ref{I1here}) as $-x_1=x_K=R\rightarrow\infty$.  After inserting the substitution $u_1(\theta)=Re^{i\theta}$, this becomes (item \ref{4thitem} of definition \ref{Fkdefn} defines the symbol $\mathcal{N}[\,\,\ldots\,\,]$ enclosing the integrand of (\ref{I1here}))
\newpage
\begin{multline}\label{Jint}\mathcal{J}^{(1,K)}\Big(\{\beta_j\}\,\Big|\,[R,-R]^+\,\Big|\,x_1=-R,x_2,\ldots,x_K=R\Big)\\
=\sideset{}{_0^\pi}\int(-R-Re^{i\theta})^{\beta_1}(R-Re^{i\theta})^{\beta_K}\prod_{j=2}^{K-1}(x_j-Re^{i\theta})^{\beta_j}Rie^{i\theta}\,{\rm d}\theta.\end{multline}
With $R$ large, we may drop $x_j$ from each difference in the product in (\ref{Jint}).  After doing this, simplifying that product, inserting $\beta_2+\beta_3+\dotsm+\beta_{K-1}=-2-\beta_1-\beta_K$ from condition (\ref{betai2}\red{b}) for the power, and simplifying further, we find 
\be\label{semicircJ}\mathcal{J}^{(1,K)}\Big(\{\beta_j\}\,\Big|\,[R,-R]^+\,\Big|\,x_1=-R,x_2,\ldots,x_K=R\Big)\underset{R\rightarrow\infty}{\sim}R^{-1}\sideset{}{_0^\pi}\int(1+e^{-i\theta})^{\beta_1}(1-e^{-i\theta})^{\beta_K}ie^{-i\theta}\,{\rm d}\theta.\ee
Finally, the substitution $t=(1-e^{-i\theta})/2$ converts the definite integral in (\ref{semicircJ}) into the beta-function integral in (\ref{result2.5}) with $\beta_i$ and $\beta_{i+1}$ replaced by $\beta_K$ and $\beta_1$ respectively.  Thus, we find
\be\label{semicircJfin}\mathcal{J}^{(1,K)}\Big(\{\beta_j\}\,\Big|\,[R,-R]^+\,\Big|\,x_1=-R,x_2,\ldots,x_K=R\Big)\underset{R\rightarrow\infty}{\sim}2^{\beta_K+\beta_1+1}R^{-1}\frac{\Gamma(\beta_K+1)\Gamma(\beta_1+1)}{\Gamma(\beta_K+\beta_1+2)}.\ee
On the other hand, if $\kappa\in(0,4]$ (in which case $\beta_K,\beta_1\leq-1$ thanks to (\ref{betai2}\red{c}), so the improper integral (\ref{Jint})  diverges), then we have $\Gamma_1=\mathscr{P}(R,-R)$ instead.  Using the analytic continuation (\ref{betafunc}) of the beta function \cite{witt}, we find
\begin{multline}\label{semicircJPoch}\mathcal{J}^{(1,K)}\Big(\{\beta_j\}\,\Big|\,\mathscr{P}(R,-R)\,\Big|\,x_1=-R,x_2,\ldots,x_K=R\Big)\\
\underset{R\rightarrow\infty}{\sim}4e^{\pi i(\beta_K-\beta_1)}\sin\pi\beta_K\sin\beta_1\,\,2^{\beta_K+\beta_1+1}R^{-1}\frac{\Gamma(\beta_K+1)\Gamma(\beta_1+1)}{\Gamma(\beta_K+\beta_1+2)}.\end{multline}

To finish, we use (\ref{semicircJfin}, \ref{semicircJPoch}) to justify item \ref{secondcase2} in the proof of corollary \ref{moveconjchargecor}.  In that part of the proof, we require the limit $\underline{\ell}_1\mathcal{F}_{c,\vartheta}$ (\ref{inflim}) for $\kappa\in(0,8)$, with $\mathcal{F}_{c,\vartheta}$ given by (\ref{firstFexplicit1}) and $\Gamma_1$ given by (\ref{Gamma1}) (so the sine functions drop from the prefactor (\ref{firstprefactor}) in (\ref{firstFexplicit1}) if $\kappa>4$).  For $\kappa\leq4$, this limit is 
\begin{multline}\label{thebiglimiti=2N}\underline{\ell}_1\mathcal{F}_{c,\vartheta}(\kappa\,|\,x_2,x_3,\ldots,x_{2N-1})\,\,=\lim_{R\rightarrow\infty}(2R)^{6/\kappa-1}\,\,\times\\ 
\mathcal{F}_{c,\vartheta}(\kappa\,|\,x_1=-R,x_2,\ldots,x_{2N}=R) \\ \parallel \\
\boxed{\begin{aligned}&n(\kappa)\left[\frac{n(\kappa)\Gamma(2-8/\kappa)}{4\sin^2(4\pi/\kappa)\Gamma(1-4/\kappa)^2}\right]^{N-1}\Bigg(\prod_{\substack{2\leq j<k\\j,k\neq c}}^{2N-1}(x_k-x_j)^{2/\kappa}\Bigg)\Bigg(\prod_{\substack{k=2 \\ k\neq c}}^{2N-1}|x_c-x_k|^{1-6/\kappa}\Bigg)\Big(2R\Big)^{2/\kappa}\sideset{}{_{\Gamma_{N-1}}}\oint {\rm d}u_{N-1}\\
&\dotsm\sideset{}{_{\Gamma_3}}\oint {\rm d}u_3\,\,\sideset{}{_{\Gamma_2}}\oint {\rm d}u_2\,\,\mathcal{N}\Bigg[\Bigg(\prod_{\substack{l=2 \\ l\neq c}}^{2N-1}\prod_{m=2}^{N-1}(x_l-u_m)^{-4/\kappa}\Bigg)\Bigg(\prod_{m=2}^{N-1}(x_c-u_m)^{12/\kappa-2}\Bigg)\Bigg(\prod_{2\leq p<q}^{N-1}(u_p-u_q)^{8/\kappa}\Bigg)\\
&\Bigg(\prod_{\substack{j=2\\ j\neq c}}^{2N-1}(x_j+R)^{2/\kappa}(R-x_j)^{2/\kappa}\Bigg)\Big(x_c+R\Big)^{1-6/\kappa}\Big(R-x_c\Big)^{1-6/\kappa}\Bigg(\prod_{m=2}^{N-1}(u_m+R)^{-4/\kappa}(R-u_m)^{-4/\kappa}\Bigg)\Bigg]\\
&\underbrace{\sideset{}{_{\mathscr{P}(R,-R)}}\oint {\rm d}u_1
\,\,\mathcal{N}\Bigg[(u_1+R)^{-4/\kappa}(R-u_1)^{-4/\kappa}\Bigg(\prod_{\substack{l=2 \\ l\neq c}}^{2N-1}(x_l-u_1)^{-4/\kappa}\Bigg)\Bigg(\prod_{m=2}^{N-1}(u_m-u_1)^{8/\kappa}\Bigg)\Big(x_c-u_1\Big)^{12/\kappa-2}\Bigg].}_{\mathcal{J}^{(1,K)}}\end{aligned}}\end{multline}
(If $\kappa>4$, then we adjust (\ref{thebiglimiti=2N}) as per item \ref{itemc} in the introduction of this appendix.)  We have rewritten the formula (\ref{firstFexplicit1}) for $\mathcal{F}_{c,\vartheta}(\kappa\,|\,\boldsymbol{x})$ slightly to clarify the calculation, and we indicate the contour integral $\mathcal{J}^{(1,K)}$ (\ref{I1here}) with braces.

Now we find the limit (\ref{thebiglimiti=2N}).  After doing steps \ref{oneit}--\ref{lastit} of section \ref{s2}, identifying the definite integral with respect to $u_1$ with (\ref{I1here}, \ref{Gamma1i=2N}, \ref{betai2}), and replacing it with the right side of (\ref{semicircJPoch}), we find 
\begin{multline}\label{asympinserti=2N}(2R)^{6/\kappa-1}\mathcal{F}_{c,\vartheta}(\kappa\,|\,x_1=-R,x_2,\ldots,x_{2N}=R)\,\,\underset{R\rightarrow\infty}{\sim}(2R)^{6/\kappa-1}\,\,\times\\ 
\boxed{\begin{aligned}&n(\kappa)\left[\frac{n(\kappa)\Gamma(2-8/\kappa)}{4\sin^2(4\pi/\kappa)\Gamma(1-4/\kappa)^2}\right]^{N-1}\Bigg(\prod_{\mathclap{\substack{2\leq j<k\\j,k\neq c}}}^{2N-1}(x_k-x_j)^{2/\kappa}\Bigg)\Bigg(\prod_{\substack{k=2 \\ k\neq c}}^{2N-1}|x_c-x_k|^{1-6/\kappa}\Bigg)\Big(2R\Big)^{2/\kappa}\sideset{}{_{\Gamma_{N-1}}}\oint {\rm d}u_{N-1}\\
&\dotsm\sideset{}{_{\Gamma_3}}\oint {\rm d}u_3\,\, \sideset{}{_{\Gamma_2}}\oint {\rm d}u_2\,\,\mathcal{N}\Bigg[\Bigg(\prod_{\substack{l=2 \\ l\neq c}}^{2N-1}\prod_{m=2}^{N-1}(x_l-u_m)^{-4/\kappa}\Bigg)\Bigg(\prod_{m=2}^{N-1}(x_c-u_m)^{12/\kappa-2}\Bigg)\Bigg(\prod_{2\leq p<q}^{N-1}(u_p-u_q)^{8/\kappa}\Bigg)\Bigg]R^{2-8/\kappa}\\
&\hspace{10cm}\underbrace{4\sin^2\left(\frac{4\pi}{\kappa}\right)\,\,2^{1-8/\kappa}R^{-1}\frac{\Gamma(1-4/\kappa)^2}{\Gamma(2-8/\kappa)}.}_{\text{right side of (\ref{semicircJPoch})}}\end{aligned}}\end{multline}
(If $\kappa>4$ and (\ref{thebiglimiti=2N}) is adjusted as per item \ref{itemc} in the introduction of this appendix, then we replace the definite integral with respect to $u_1$ with the right side of (\ref{semicircJfin}) instead, finding the same result but with all factors of $4\sin^2(4\pi/\kappa)$ dropped and all contours simple.)  After some simplification, we finally send $R\rightarrow\infty$ in (\ref{asympinserti=2N}) to find
\begin{multline}\label{firstlimprimei=2N}\underline{\ell}_1\mathcal{F}_{c,\vartheta}(\kappa\,|\,x_2,x_3,\ldots,x_{2N-1})=n(\kappa)\,\,\times\\
\left\{\begin{aligned}&n(\kappa)\left[\frac{n(\kappa)\Gamma(2-8/\kappa)}{4\sin^2(4\pi/\kappa)\Gamma(1-4/\kappa)^2}\right]^{N-2}\Bigg(\prod_{\substack{2\leq j<k \\ j,k\neq c}}^{2N-1}(x_k-x_j)^{2/\kappa}\Bigg)\Bigg(\prod_{\substack{k=2 \\ k\neq c}}^{2N-1}|x_c-x_k|^{1-6/\kappa}\Bigg)\oint_{\Gamma_{N-1}} {\rm d}u_{N-1}\\ 
&\dotsm\,\oint_{\Gamma_3}{\rm d}u_3\oint_{\Gamma_2}{\rm d}u_2\,\,\mathcal{N}\Bigg[\Bigg(\prod_{\substack{l=2 \\ l\neq c}}^{2N-1}\prod_{m=2}^{N-1}(x_l-u_m)^{-4/\kappa}\Bigg)\Bigg(\prod_{m=2}^{N-1}(x_c-u_m)^{12/\kappa-2}\Bigg)\Bigg(\prod_{2\leq p<q}^{N-1}(u_p-u_q)^{8/\kappa}\Bigg)\Bigg]\end{aligned}\right\}.\end{multline}
(Again, if $\kappa>4$, then we find the same result, but with all factors of $4\sin^2(4\pi/\kappa)$ dropped and with all Pochhammer contours replaced by simple contours with the same endpoints and orientation.)

Now, the quantity of (\ref{firstlimprimei=2N}) in brackets is the element of $\mathcal{S}_{N-1}$ described in the following conclusion.  We thus have our main result, for use in item \ref{secondcase2} in the proof of corollary \ref{moveconjchargecor}.
\begin{quote}\textbf{Main result 2:} In case \ref{sc2} with $i=2N$, where $\Gamma_1$ is given by (\ref{Gamma1i=2N}), the limit $\underline{\ell}_1\mathcal{F}_{c,\vartheta}$ (\ref{inflim}) equals $n$ (\ref{fugacity}) times the element of $\mathcal{S}_{N-1}$ generated from the formula (\ref{firstFexplicit1}) for $\mathcal{F}_{c,\vartheta}$ by dropping all factors involving $x_{2N}$, $x_1$, and $u_1$, dropping the integration along $\Gamma_1$, and reducing the power of the prefactor (\ref{firstprefactor}, \ref{secondprefactor}) by one.
\end{quote}
We note that this second main result is identical to the first if, with $i=2N$, we identify $x_{i+1}$ with $x_1$.

\subsubsection{Proof of corollary \ref{moveconjchargecor}, item \ref{thirdcase2}}\label{s52}

In case \ref{sc3new}, we assume without loss of generality that $x_i$ and $x_{i-1}$ are endpoints of one contour $\Gamma_1''$ (\ref{Gamma1''}) (replacing $\Gamma_1$ of case \ref{sc3}), and $x_{i+1}$ is not an endpoint of any contour.  To determine $\bar{\ell}_1\mathcal{F}_{c,\vartheta}$ (\ref{nextthelim}) for $i\in\{1,2,\ldots,2N-1\}\setminus\{c-1,c\}$, we repeat the analysis of section \ref{s3}, and to determine $\underline{\ell}_1\mathcal{F}_{c,\vartheta}$ (\ref{inflim}) (which we identify with the value $i=2N$), we follow a similar route, using the results of section \ref{s51} instead of \ref{s2}.  For either, we pick up the analysis of section \ref{s3} at (\ref{case3int}).  If $i\not\in\{1,2,2N-1,2N\}$, then we may solve the system (\ref{case3int}) of two equations to find (\ref{result}) again.   However, if $i=1$ (resp.\ 2, resp.\ $2N-1$, resp.\ $2N$), then we find that $i=1$ (resp.\ 2, resp.\ $K-1$, resp.\ $K$) in (\ref{result}) after re-indexing the points according to step \ref{lastit} of section \ref{s2}, and vice versa.  Now, if $i\in\{1,2,K-1,K\}$, then (\ref{result}) has an index outside of its allowed range or a lower summation index greater than an accompanying upper summation index.  Therefore, we must revise (\ref{case3int}) as follows.
\begin{enumerate}
\item\label{i1}If $i=1$, then we use the two equations (\ref{case3int}) to solve for $I_K$ (\ref{Ikintegrals}) in terms of $I_k$ with $k\not\in\{2,K\}$, finding
\be\label{resulti1}I_K=\sum_{k=3}^{K-1}\frac{\sin\pi\sum_{l=3}^k\beta_l}{\sin\pi(\beta_1+\beta_2)}I_k-\frac{\sin\pi\beta_2}{\sin\pi(\beta_1+\beta_2)}I_1.\ee
\item\label{i2} If $i=2$, then we use the two equations (\ref{case3int}) to solve for $I_1$ (\ref{Ikintegrals}) in terms of $I_k$ with $k\not\in\{1,3\}$, finding
\be\label{resulti2}I_1=\sum_{k=4}^K\frac{\sin\pi\sum_{l=4}^k\beta_l}{\sin\pi(\beta_2+\beta_3)}I_k-\frac{\sin\pi\beta_3}{\sin\pi(\beta_2+\beta_3)}I_2.\ee
\item\label{i3} If $i=K-1$, then we use the two equations (\ref{case3int}) to solve for $I_{K-2}$ (\ref{Ikintegrals}) in terms of $I_k$ with $k\not\in\{K-2,K\}$, finding 
\be\label{resultiK-1}I_{K-2}=-\sum_{k=1}^{K-3}\frac{\sin\pi\sum_{l=k+1}^K\beta_l}{\sin\pi(\beta_{K-1}+\beta_K)}I_k-\frac{\sin\pi\beta_K}{\sin\pi(\beta_{K-1}+\beta_K)}I_{K-1}.\ee
\item\label{i4} If $i=K$, then we use the two equations (\ref{case3int}) to solve for $I_{K-1}$ (\ref{Ikintegrals}) in terms of $I_k$ with $k\not\in\{1,K-1\}$, finding (here, it helps to recall from (\ref{applicationsc3}) that $\beta_1+\beta_2+\dotsm+\beta_K=-2$)
\be\label{resultiK}I_{K-1}=\sum_{k=2}^{K-2}\frac{\sin\pi\sum_{l=2}^k\beta_l}{\sin\pi(\beta_K+\beta_1)}I_k-\frac{\sin\pi\beta_1}{\sin\pi(\beta_K+\beta_1)}I_K.\ee
\end{enumerate}
Similar to (\ref{result}), each term in the summation on the right side of (\ref{resulti1}--\ref{resultiK}) falls under case \ref{sc1}, but the last term falls under case \ref{sc2}.  Thanks to condition (\ref{extracond}\red{b}), the latter term is asymptotically dominant over the former terms as $x_{i+1}\rightarrow x_i$.  By starting from items \ref{i1}--\ref{i3} immediately above and following the exact reasoning that takes us from (\ref{result}) to the ``main result" of section \ref{s3}, we find the following result, for use in the proof of corollary \ref{moveconjchargecor}. 
\begin{quote}\textbf{Main result 1:} In case \ref{sc3new} with $i\in\{1,2,\ldots,2N-1\}$, where $\Gamma_1''$ (\ref{Gamma1''}) (with $x_{i-1}:=x_{2N}$ if $i=1$) replaces $\Gamma_1$ in the formula (\ref{firstFexplicit1}) for $\mathcal{F}_{c,\vartheta}$, the limit $\bar{\ell}_1\mathcal{F}_{c,\vartheta}$ (\ref{nextthelim}) equals the element of $\mathcal{S}_{N-1}$ generated from the formula for $\mathcal{F}_{c,\vartheta}$ by dropping all factors involving $x_i$, $x_{i+1}$, and $u_1$, dropping the integration along $\Gamma_1$, and reducing the power of the prefactor (\ref{firstprefactor}) or (\ref{secondprefactor}) by one.
\end{quote}
This result includes the main result of section \ref{s3} as the special case $c=2N$.

The last item \ref{i4} pertains to the limit $\underline{\ell}_1\mathcal{F}_{c,\vartheta}$ (\ref{inflim}), so we are sending $-x_1=x_K\rightarrow\infty$ in (\ref{resultiK}).  In the last term of (\ref{resultiK}), the integration contour of $I_K$ is (\ref{Gamma1i=2N}), and (\ref{semicircJfin}, \ref{semicircJPoch}) determines the asymptotic behavior of $I_K$ as $R\rightarrow\infty$.  Moreover, (\ref{applicationsc3}) implies that $\beta_K=\beta_1=-4/\kappa$, so the ratio of sine functions and factor of negative one multiplying $I_K$ in (\ref{resultiK}) is again $n(\kappa)^{-1}$ (\ref{fugacity}).  After using these facts and the second main result of section \ref{s51}, we have the following result, for use in item \ref{thirdcase2} in the proof of corollary \ref{moveconjchargecor}.
\begin{quote}\textbf{Main result 2:} In case \ref{sc3new} with $i=2N$, where $\Gamma_1''$ (\ref{Gamma1''}) replaces $\Gamma_1$ in the formula (\ref{firstFexplicit1}) for $\mathcal{F}_{c,\vartheta}$, the limit $\underline{\ell}_1\mathcal{F}_{c,\vartheta}$ (\ref{nextthelim}) equals the element of $\mathcal{S}_{N-1}$ generated from the formula for $\mathcal{F}_{c,\vartheta}$ by dropping all factors involving $x_i$, $x_{i+1}$, and $u_1$, dropping the integration along $\Gamma_1$, and reducing the power of the prefactor (\ref{firstprefactor}) or (\ref{secondprefactor}) by one.
\end{quote}
(Again, we assume here that $8/\kappa\not\in\mathbb{Z}^+$, but these results remain true even if this condition is not met.  See the last paragraph of section \ref{s4}.)  We note that this second main result is identical to the first if, with $i=2N$, we identify $x_{i+1}$ with $x_1$.  Also, the analysis for the situation with $x_{i+1}$ and $x_{i+2}$ as endpoints of $\Gamma_1''$ is similar and leads to the same results.

\subsubsection{Proof of corollary \ref{moveconjchargecor}, item \ref{fourthcase2}}\label{s53}

In case \ref{sc4new}, the integration contours $\Gamma_1''$ and $\Gamma_2''$ (\ref{Gamma12''}) respectively replace the integration contours $\Gamma_1$ and $\Gamma_2$ of case \ref{sc4} in the formula (\ref{firstFexplicit1}) for $\mathcal{F}_{c,\vartheta}$.  After assigning any indices outside the allowed range $1$, $2,\ldots,2N$ values within this range using modular $2N$ arithmetic, we have
\begin{align}\label{contouri=1} &i=1:&\Gamma_1''=&\begin{cases}\mathscr{P}(x_{2N},x_1), & 0<\kappa\leq4\\
[x_{2N},x_1]^+, & 4<\kappa<8\end{cases},& &\Gamma_2''=\begin{cases}\mathscr{P}(x_2,x_3), & 0<\kappa\leq4\\
[x_2,x_3]^+, & 4<\kappa<8\end{cases},&\\
\label{contouri=2}&i\in\{2,3,\ldots,2N-2\}: &\Gamma_1''=&\begin{cases}\mathscr{P}(x_{i-1},x_i), & 0<\kappa\leq4\\
[x_{i-1},x_i]^+, & 4<\kappa<8\end{cases},& &\Gamma_2''=\begin{cases}\mathscr{P}(x_{i+1},x_{i+2}), & 0<\kappa\leq4\\
[x_{i+1},x_{i+2}]^+, & 4<\kappa<8\end{cases},&\\
\label{contouri=2N-1}&i=2N-1:&\Gamma_1''=&\begin{cases}\mathscr{P}(x_{2N-2},x_{2N-1}), & 0<\kappa\leq4\\
[x_{2N-2},x_{2N-1}]^+, & 4<\kappa<8\end{cases},& &\Gamma_2''=\begin{cases}\mathscr{P}(x_{2N},x_1), & 0<\kappa\leq4\\
[x_{2N},x_1]^+, & 4<\kappa<8\end{cases},&\\
\label{contouri=2N}&i=2N:&\Gamma_1''=&\begin{cases}\mathscr{P}(x_{2N-1},x_{2N}), & 0<\kappa\leq4\\
[x_{2N-1},x_{2N}]^+, & 4<\kappa<8\end{cases},& &\Gamma_2''=\begin{cases}\mathscr{P}(x_1,x_2), & 0<\kappa\leq4\\
[x_1,x_2]^+, & 4<\kappa<8\end{cases}.&\end{align}
To determine $\bar{\ell}_1\mathcal{F}_{c,\vartheta}$ (\ref{nextthelim}) for $i\in\{1,2,\ldots,2N-1\}\setminus\{c-1,c\}$, we repeat the analysis of section \ref{s4}, and to determine $\underline{\ell}_1\mathcal{F}_{c,\vartheta}$ (\ref{inflim}) (which we identify with $i=2N$), we follow a similar route, using the results of section \ref{s51} instead of \ref{s2}.  For either, we pick up the analysis of section \ref{s4} at (\ref{overunder}).  If $i\not\in\{1,2,2N-1,2N\}$, then this analysis proceeds as before.   Otherwise, many equations in this analysis require revision.  Although these revisions are similar for different values of $i$, they are different enough to warrant showing all of them for completeness.  Thus, we explicitly re-derive the main result of section \ref{s4} for $i\in\{1,2,2N-1,2N\}$.
\begin{enumerate}
\item\label{itemi=1} If $i=1$, then we solve for $I_{K,2}$ in terms of $I_{K,k}$ with $k\not\in\{2,K\}$.  In place of (\ref{result1}), we thus find
\be\label{IK2} I_{K,2}=-\sum_{k=3}^{K-1}\frac{\sin\pi(\sum_{l=1}^k\beta_l+\gamma/2)}{\sin\pi(\beta_1+\beta_2+\gamma/2)}I_{K,k}-\frac{\sin\pi(\beta_1+\gamma/2)}{\sin\pi(\beta_1+\beta_2+\gamma/2)}I_{K,1}.\ee
Again, (\ref{extracond2}\red{b}) implies that the second term on the right side of (\ref{IK2}) vanishes.  Meanwhile,  (\ref{condasymp}) with the index $i-1$ replaced by $K$ gives the asymptotic behavior of $I_{K,k}$ with $k\not\in\{1,2,K\}$ in terms of $I_{1,k}$ as $x_2\rightarrow x_1$ under condition (\ref{extracond2}\red{a}), and (\ref{result2.5}) with $i=1$ gives the behavior of the integration in $I_{1,k}$ (\ref{Ijkintegrals}) from $x_1$ to $x_2$ in this limit.  After inserting these into (\ref{IK2}), we find (assuming conditions (\ref{extracond1}\red{a}--\red{c}, \ref{extracond2}\red{a}--\red{c}) but not yet (\ref{extracond2}\red{d}))
\begin{multline}\label{result3.55IK2}I_{K,2}(x_1,x_2,\ldots,x_K)\underset{x_2\rightarrow x_1}{\sim}-\frac{\sin\pi\beta_2\,\Gamma(\beta_1+1)\Gamma(\beta_2+1)}{\sin\pi(\beta_1+\beta_2)\Gamma(\beta_1+\beta_2+2)}(x_2-x_1)^{\beta_1+\beta_2+1}\mathcal{N}\Bigg[\prod_{j=3}^K(x_1-x_j)^{\beta_j}\Bigg]\\
\times(-1)\sum_{k=3}^{K-1}\frac{\sin\pi(\sum_{l=1}^k\beta_l+\gamma/2)}{\sin\pi(\beta_1+\beta_2+\gamma/2)}\sideset{}{_{x_k}^{x_{k+1}}}\int\mathcal{N}\Bigg[(u_2-x_1)^{\beta_1+\beta_2+\gamma}\prod_{j=3}^K(u_2-x_j)^{\beta_j}\Bigg]\,{\rm d}u_2.\end{multline}
This equation replaces (\ref{result3.55}).  To simplify (\ref{result3.55IK2}), we include condition (\ref{extracond2}\red{d}) with the other conditions (\ref{extracond1}\red{a}--\red{c}, \ref{extracond2}\red{a}--\red{c}) already assumed and repeat the analysis from (\ref{Ii+2'}) to (\ref{result4}).  Thus, we replace (\ref{Ii+2'}) with
\be\label{Ii+2'i=1}I_K'(x_3,x_4,\ldots,x_K):=\Bigg(\sideset{}{_{x_K}^{\infty}}\int+\sideset{}{_{-\infty}^{x_3}}\int\Bigg)\mathcal{N}\Bigg[\prod_{j=3}^K(u_2-x_j)^{\beta_j}\Bigg]\,{\rm d}u_2,\ee
and we define $I_k'$ as in (\ref{Ik'}), but with the index $k\in\{3,4,\ldots,K-1\}$. With $A^\pm$ defined in (\ref{intabovebelow}), we isolate $I_K'$ from the linear combination (replacing (\ref{Alincmb}))
\be\label{Alincmbi=1} e^{-\pi i\sum_{l=3}^K\beta_l}e^{\pi i(\beta_1+\beta_2+\gamma/2)}A^+-e^{\pi i\sum_{l=3}^K\beta_l}e^{-\pi i(\beta_1+\beta_2+\gamma/2)}A^-=0\ee
to find the following formula for $I_K'$ (which replaces (\ref{primed})):
\be\label{primedi=1}I_K'(x_3,x_4,\ldots,x_K)=\sum_{k=3}^{K-1}\frac{\sin\pi(\sum_{l=k+1}^K\beta_l-\beta_1-\beta_2-\gamma/2)}{\sin\pi(\beta_1+\beta_2+\gamma/2)}\sideset{}{_{x_k}^{x_{k+1}}}\int\mathcal{N}\Bigg[\prod_{j=3}^K(u_2-x_j)^{\beta_j}\Bigg]\,{\rm d}u_2.\ee
After using conditions (\ref{extracond1}\red{a}, \ref{extracond2}\red{b,d}) to rewrite the numerator of the fraction in (\ref{primedi=1}), we find that the right side of (\ref{primedi=1}) matches the entire bottom line of (\ref{result3.55IK2}).  After replacing the latter expression by the former, (\ref{result3.55IK2}) becomes 
\begin{multline}\label{result4i=1}I_{K,2}(x_1,x_2,\ldots,x_K)\underset{x_2\rightarrow x_1}{\sim}-\frac{\sin\pi\beta_2\Gamma(\beta_1+1)\Gamma(\beta_2+1)}{\sin\pi(\beta_1+\beta_2)\Gamma(\beta_1+\beta_2+2)}(x_2-x_1)^{\beta_1+\beta_2+1}\\
\times\mathcal{N}\Bigg[\prod_{j=3}^K(x_1-x_j)^{\beta_j}\Bigg]\Bigg(\sideset{}{_{x_K}^{\infty}}\int+\sideset{}{_{-\infty}^{x_3}}\int\Bigg)\mathcal{N}\Bigg[\prod_{j=3}^K(u_2-x_j)^{\beta_j}\Bigg]\,{\rm d}u_2,\quad \beta_1=\beta_2,\end{multline}
which replaces (\ref{result4}).  The remaining analysis that leads from here to the main result of section \ref{s4} is identical to the analysis from (\ref{result4.5}) to (\ref{firstlimprime2}).
\item\label{itemi=2} If $i=2$, then we solve for $I_{1,3}$ in terms of $I_{1,k}$ with $k\not\in\{1,3\}$.  In place of (\ref{result1}), we thus find
\be\label{I13} I_{1,3}=-\sum_{k=4}^K\frac{\sin\pi(\sum_{l=2}^k\beta_l+\gamma/2)}{\sin\pi(\beta_2+\beta_3+\gamma/2)}I_{1,k}-\frac{\sin\pi(\beta_2+\gamma/2)}{\sin\pi(\beta_2+\beta_3+\gamma/2)}I_{1,2}.\ee
Again, (\ref{extracond2}\red{b}) implies that the second term on the right side of (\ref{I13}) vanishes.  Meanwhile,  (\ref{condasymp}) with $i=2$ gives the asymptotic behavior of $I_{1,k}$ with $k\not\in\{1,2,3\}$ in terms of $I_{2,k}$ as $x_3\rightarrow x_2$ under condition (\ref{extracond2}\red{a}), and (\ref{result2.5}) with $i=2$ gives the behavior of the integration in $I_{2,k}$ (\ref{Ijkintegrals}) from $x_2$ to $x_3$ in this limit.  After inserting these into (\ref{I13}), we find (assuming conditions (\ref{extracond1}\red{a}--\red{c}, \ref{extracond2}\red{a}--\red{c}) but not yet (\ref{extracond2}\red{d}))
\begin{multline}\label{result3.55I13}I_{1,3}(x_1,x_2,\ldots,x_K)\underset{x_3\rightarrow x_2}{\sim}-\frac{\sin\pi\beta_3\,\Gamma(\beta_2+1)\Gamma(\beta_3+1)}{\sin\pi(\beta_2+\beta_3)\Gamma(\beta_2+\beta_3+2)}(x_3-x_2)^{\beta_2+\beta_3+1}\mathcal{N}\Bigg[\prod_{j\neq2,3}^K(x_2-x_j)^{\beta_j}\Bigg]\\
\times(-1)\sum_{k=4}^K\frac{\sin\pi(\sum_{l=2}^k\beta_l+\gamma/2)}{\sin\pi(\beta_2+\beta_3+\gamma/2)}\sideset{}{_{x_k}^{x_{k+1}}}\int\mathcal{N}\Bigg[(u_2-x_2)^{\beta_2+\beta_3+\gamma}\prod_{j\neq2,3}^K(u_2-x_j)^{\beta_j}\Bigg]\,{\rm d}u_2.\end{multline}
This equation replaces (\ref{result3.55}).  To simplify (\ref{result3.55I13}), we repeat the analysis from (\ref{Ii+2'}) to (\ref{result4}).  Thus, we replace (\ref{Ii+2'}) with
\be\label{Ii+2'i=2}I_1'(x_1,x_4,x_5,\ldots,x_K):=\sideset{}{_{x_1}^{x_4}}\int\mathcal{N}\Bigg[\prod_{j\neq2,3}^K(u_2-x_j)^{\beta_j}\Bigg]\,{\rm d}u_2,\ee
and we define $I_k'$ as in (\ref{Ik'}), but with the index $k\in\{4,5\ldots,K\}$. With $A^\pm$ defined in (\ref{intabovebelow}), we isolate $I_1'$ from the linear combination (replacing (\ref{Alincmb}))
\be\label{Alincmbi=2} e^{-\pi i\beta_1}e^{\pi i(\beta_2+\beta_3+\gamma/2)}A^+-e^{\pi i\beta_1}e^{-\pi i(\beta_2+\beta_3+\gamma/2)}A^-=0\ee
to find the following formula for $I_K'$ (which replaces (\ref{primed})):
\be\label{primedi=2}I_K'(x_1,x_4,x_5,\ldots,x_K)=-\sum_{k=4}^K\frac{\sin\pi(\sum_{l=2}^k\beta_l+\gamma/2)}{\sin\pi(\beta_2+\beta_3+\gamma/2)}\sideset{}{_{x_k}^{x_{k+1}}}\int\mathcal{N}\Bigg[\prod_{j\neq2,3}^K(u_2-x_j)^{\beta_j}\Bigg]\,{\rm d}u_2.\ee
We notice that the right side of (\ref{primedi=2}) matches the entire bottom line of (\ref{result3.55I13}).  After replacing the latter expression by the former, (\ref{result3.55I13}) becomes 
\begin{multline}\label{result4i=2}I_{1,3}(x_1,x_2,\ldots,x_K)\underset{x_2\rightarrow x_1}{\sim}-\frac{\sin\pi\beta_3\Gamma(\beta_2+1)\Gamma(\beta_3+1)}{\sin\pi(\beta_2+\beta_3)\Gamma(\beta_2+\beta_3+2)}(x_3-x_2)^{\beta_2+\beta_3+1}\\
\times\mathcal{N}\Bigg[\prod_{j\neq2,3}^K(x_2-x_j)^{\beta_j}\Bigg]\sideset{}{_{x_1}^{x_4}}\int\mathcal{N}\Bigg[\prod_{j\neq2,3}^K(u_2-x_j)^{\beta_j}\Bigg]\,{\rm d}u_2,\end{multline}
which replaces (\ref{result4}).  (We notice that we did not need to assume (\ref{extracond2}\red{d}) to obtain this result.)  The remaining analysis that leads from here to the main result of section \ref{s4} is identical to the analysis from (\ref{result4.5}) to (\ref{firstlimprime2}).
\item\label{itemi=K-1} If $i=K-1$, then we solve for $I_{K-2,K}$ in terms of $I_{K,k}$ with $k\not\in\{K-2,K\}$.  In place of (\ref{result1}), we thus find
\be\label{IK-2K} I_{K-2,K}=\sum_{k=1}^{K-3}\frac{\sin\pi(\sum_{l=k+1}^{K-2}\beta_l+\gamma/2)}{\sin\pi(\beta_{K-1}+\beta_K+\gamma/2)}I_{K-2,k}-\frac{\sin\pi(\beta_{K-1}+\gamma/2)}{\sin\pi(\beta_{K-1}+\beta_K+\gamma/2)}I_{K-2,K-1}.\ee
Again, (\ref{extracond2}\red{b}) implies that the second term on the right side of (\ref{IK-2K}) vanishes.  Meanwhile,  (\ref{condasymp}) with $i=K-1$ gives the asymptotic behavior of $I_{K-2,k}$ with $k\not\in\{1,2,K\}$ in terms of $I_{K-1,k}$ as $x_K\rightarrow x_{K-1}$ under condition (\ref{extracond2}\red{a}), and (\ref{result2.5}) with $i=K-1$ gives the behavior of the integration in $I_{K-1,k}$ (\ref{Ijkintegrals}) from $x_{K-1}$ to $x_K$ in this limit.  After inserting these into (\ref{IK-2K}), we find (assuming conditions (\ref{extracond1}\red{a}--\red{c}, \ref{extracond2}\red{a}--\red{c}) but not yet (\ref{extracond2}\red{d}))
\begin{multline}\label{result3.55IK-2K}I_{K-2,K}(x_1,x_2,\ldots,x_K)\underset{x_K\rightarrow x_{K-1}}{\sim}\\
\begin{aligned}&-\frac{\sin\pi\beta_K\,\Gamma(\beta_{K-1}+1)\Gamma(\beta_K+1)}{\sin\pi(\beta_{K-1}+\beta_K)\Gamma(\beta_{K-1}+\beta_K+2)}(x_K-x_{K-1})^{\beta_{K-1}+\beta_K+1}\mathcal{N}\Bigg[\prod_{j=1}^{K-2}(x_{K-1}-x_j)^{\beta_j}\Bigg]\\
&\times\sum_{k=1}^{K-3}\frac{\sin\pi(\sum_{l=k+1}^{K-2}\beta_l+\gamma/2)}{\sin\pi(\beta_{K-1}+\beta_K+\gamma/2)}\sideset{}{_{x_k}^{x_{k+1}}}\int\mathcal{N}\Bigg[(u_2-x_{K-1})^{\beta_{K-1}+\beta_K+\gamma}\prod_{j=1}^{K-2}(u_2-x_j)^{\beta_j}\Bigg]\,{\rm d}u_2.\end{aligned}\end{multline}
This equation replaces (\ref{result3.55}).  To simplify (\ref{result3.55IK-2K}), we include condition (\ref{extracond2}\red{d}) with the other conditions (\ref{extracond1}\red{a}--\red{c}, \ref{extracond2}\red{a}--\red{c}) already assumed and repeat the analysis from (\ref{Ii+2'}) to (\ref{result4}).  Thus, we replace (\ref{Ii+2'}) with
\be\label{Ii+2'i=K-1}I_{K-2}'(x_1,x_2,\ldots,x_{K-2}):=\Bigg(\sideset{}{_{x_{K-2}}^{\infty}}\int+\sideset{}{_{-\infty}^{x_1}}\int\Bigg)\mathcal{N}\Bigg[\prod_{j=1}^{K-2}(u_2-x_j)^{\beta_j}\Bigg]\,{\rm d}u_2,\ee
and we define $I_k'$ as in (\ref{Ik'}), but with the index $k\in\{1,2,\ldots,K-3\}$. With $A^\pm$ defined in (\ref{intabovebelow}), we isolate $I_{K-2}'$ from the linear combination (replacing (\ref{Alincmb}))
\be\label{Alincmbi=K-1} e^{-\pi i\sum_{l=1}^{K-2}\beta_l}e^{\pi i(\beta_{K-1}+\beta_K+\gamma/2)}A^+-e^{\pi i\sum_{l=1}^{K-2}\beta_l}e^{-\pi i(\beta_{K-1}+\beta_K+\gamma/2)}A^-=0\ee
to find the following formula for $I_{K-2}'$ (which replaces (\ref{primed})):
\be\label{primedi=K-1}I_{K-2}'(x_1,x_2,\ldots,x_{K-2})=\sum_{k=1}^{K-3}\frac{\sin\pi(\sum_{l=k+1}^{K-2}\beta_l-\beta_{K-1}-\beta_K-\gamma/2)}{\sin\pi(\beta_{K-1}+\beta_K+\gamma/2)}\sideset{}{_{x_k}^{x_{k+1}}}\int\mathcal{N}\Bigg[\prod_{j=1}^{K-2}(u_2-x_j)^{\beta_j}\Bigg]\,{\rm d}u_2.\ee
After using conditions (\ref{extracond1}\red{a}, \ref{extracond2}\red{b,d}) to rewrite the numerator of the fraction in (\ref{primedi=K-1}), we find that the right side of (\ref{primedi=K-1}) matches the entire bottom line of (\ref{result3.55IK-2K}).  After replacing the latter expression by the former, (\ref{result3.55IK-2K}) becomes 
\begin{multline}\label{result4i=K-1}I_{K-2,K}(x_1,x_2,\ldots,x_K)\underset{x_2\rightarrow x_1}{\sim}-\frac{\sin\pi\beta_K\Gamma(\beta_{K-1}+1)\Gamma(\beta_K+1)}{\sin\pi(\beta_{K-1}+\beta_{K-2})\Gamma(\beta_{K-1}+\beta_K+2)}(x_K-x_{K-1})^{\beta_{K-1}+\beta_K+1}\\
\times\mathcal{N}\Bigg[\prod_{j=1}^{K-2}(x_{K-1}-x_j)^{\beta_j}\Bigg]\Bigg(\sideset{}{_{x_{K-2}}^{\infty}}\int+\sideset{}{_{-\infty}^{x_1}}\int\Bigg)\mathcal{N}\Bigg[\prod_{j=1}^{K-2}(u_2-x_j)^{\beta_j}\Bigg]\,{\rm d}u_2,\quad \beta_{K-1}=\beta_K,\end{multline}
which replaces (\ref{result4}).  The remaining analysis that leads from here to the main result of section \ref{s4} is identical to the analysis from (\ref{result4.5}) to (\ref{firstlimprime2}).
\item\label{itemi=K} If $i=K$, then we solve for $I_{K-1,1}$ in terms of $I_{K,k}$ with $k\not\in\{1,K-1\}$.  In place of (\ref{result1}), we thus find
\be\label{IK-11} I_{K-1,1}=\sum_{k=2}^{K-2}\frac{\sin\pi(\sum_{l=k+1}^{K-1}\beta_l+\gamma/2)}{\sin\pi(\beta_K+\beta_1+\gamma/2)}I_{K-1,k}-\frac{\sin\pi(\beta_K+\gamma/2)}{\sin\pi(\beta_K+\beta_1+\gamma/2)}I_{K-1,K}.\ee
Again, (\ref{extracond2}\red{b}) implies that the second term on the right side of (\ref{IK-11}) vanishes.  Meanwhile,  (\ref{condasymp}) with $i=K$ gives the asymptotic behavior of $I_{K-1,k}$ with $k\not\in\{K-1,K,1\}$ in terms of $I_{K,k}$ as $-x_1=x_K=R\rightarrow\infty$ under condition (\ref{extracond2}\red{a}) (with $\beta_{K+1}=\beta_1$), and (\ref{semicircJfin}) gives the behavior of the integration in $I_{K,k}$ (\ref{Ijkintegrals}) from $x_K$ to infinity and then from minus infinity to $x_1$ in this limit, also under condition (\ref{betai2}\red{b}).  After inserting these into (\ref{IK-11}), we find (assuming conditions (\ref{extracond1}\red{a}--\red{c}, \ref{extracond2}\red{a}--\red{c}) with $\beta_{K+1}=\beta_1$ but not yet (\ref{extracond2}\red{d}))
\begin{multline}\label{result3.55IK-11}I_{K-1,1}(x_1=-R,x_2,\ldots,x_K=R)\underset{R\rightarrow\infty}{\sim}-\frac{\sin\pi\beta_1\,\Gamma(\beta_K+1)\Gamma(\beta_1+1)}{\sin\pi(\beta_K+\beta_1)\Gamma(\beta_K+\beta_1+2)}2^{\beta_1+\beta_K+1}R^{\beta_K+\beta_1-1}\\
\times\sum_{k=2}^{K-2}\frac{\sin\pi(\sum_{l=k+1}^{K-1}\beta_l+\gamma/2)}{\sin\pi(\beta_K+\beta_1+\gamma/2)}\sideset{}{_{x_k}^{x_{k+1}}}\int\mathcal{N}\Bigg[\prod_{j=2}^{K-1}(u_2-x_j)^{\beta_j}\Bigg]\,{\rm d}u_2.\end{multline}
This equation replaces (\ref{result3.55}).  To simplify (\ref{result3.55IK-11}), we include condition (\ref{extracond2}\red{d}) with the other conditions (\ref{extracond1}\red{a}--\red{c}, \ref{extracond2}\red{a}--\red{c}) (with $\beta_{K+1}=\beta_1$) already assumed and repeat the analysis from (\ref{Ii+2'}) to (\ref{result4}).  Thus, we replace (\ref{Ii+2'}) with
\be\label{Ii+2'i=K}I_{K-1}'(x_2,x_3\ldots,x_{K-1}):=\Bigg(\sideset{}{_{x_{K-1}}^{\infty}}\int+\sideset{}{_{-\infty}^{x_2}}\int\Bigg)\mathcal{N}\Bigg[\prod_{j=2}^{K-1}(u_2-x_j)^{\beta_j}\Bigg]\,{\rm d}u_2,\ee
and we define $I_k'$ as in (\ref{Ik'}), but with the index $k\in\{2,3,\ldots,K-2\}$. With $A^\pm$ defined in (\ref{intabovebelow}) (where now, the indices $i=1$ and $i=K$ drop from the sum), we isolate $I_{K-1}'$ from the linear combination (replacing (\ref{Alincmb}))
\be\label{Alincmbi=K} e^{-\pi i\sum_{l=2}^{K-1}\beta_l}e^{\pi i(\beta_K+\beta_1+\gamma/2)}A^+-e^{\pi i\sum_{l=2}^{K-1}\beta_l}e^{-\pi i(\beta_K+\beta_1+\gamma/2)}A^-=0\ee
to find the following formula for $I_{K-1}'$ (which replaces (\ref{primed})):
\be\label{primedi=K}I_{K-1}'(x_2,x_3,\ldots,x_{K-1})=\sum_{k=2}^{K-2}\frac{\sin\pi(\sum_{l=k+1}^{K-1}\beta_l-\beta_K-\beta_1-\gamma/2)}{\sin\pi(\beta_K+\beta_1+\gamma/2)}\sideset{}{_{x_k}^{x_{k+1}}}\int\mathcal{N}\Bigg[\prod_{j=2}^{K-1}(u_2-x_j)^{\beta_j}\Bigg]\,{\rm d}u_2.\ee
After using conditions (\ref{extracond1}\red{a}, \ref{extracond2}\red{b,d}) (with $\beta_{K+1}=\beta_1$) to rewrite the numerator of the fraction (\ref{primedi=K}), we find that the right side of (\ref{primedi=K}) matches the entire bottom line of (\ref{result3.55IK-11}).  After replacing the latter expression by the former, (\ref{result3.55IK-11}) becomes 
\begin{multline}\label{result4i=K}I_{K-1,1}(x_1=-R,x_2,\ldots,x_K=R)\underset{R\rightarrow\infty}{\sim}-\frac{\sin\pi\beta_1\,\Gamma(\beta_K+1)\Gamma(\beta_1+1)}{\sin\pi(\beta_K+\beta_1)\Gamma(\beta_K+\beta_1+2)}2^{\beta_K+\beta_1+1}R^{\beta_K+\beta_1-1}\\
\times\Bigg(\sideset{}{_{x_{K-1}}^{\infty}}\int+\sideset{}{_{-\infty}^{x_2}}\int\Bigg)\mathcal{N}\Bigg[\prod_{j=2}^{K-1}(u_2-x_j)^{\beta_j}\Bigg]\,{\rm d}u_2,\quad \beta_K=\beta_1,\end{multline}
which replaces (\ref{result4}).  
\end{enumerate}

The results (\ref{result4i=1}, \ref{result4i=2}, \ref{result4i=K-1}) found above in items \ref{itemi=1}--\ref{itemi=K-1} respectively are essentially identical to (\ref{result4}).  Thus, by finishing the remaining analysis of section \ref{s4} from (\ref{result4.5}) to (\ref{firstlimprime2}), we find the following result for the limit $\bar{\ell}_1\mathcal{F}_{c,\vartheta}$, for use in the proof of corollary \ref{moveconjchargecor}.
\begin{quote}\textbf{Main result 1:} In case \ref{sc4new} with $i\in\{1,2,\ldots,2N-1\}$, where $\Gamma_1''$ and $\Gamma_2''$ (\ref{contouri=1}--\ref{contouri=2N-1}) respectively replace $\Gamma_1$ and $\Gamma_2$ in the formula (\ref{firstFexplicit1}) for $\mathcal{F}_{c,\vartheta}$, the limit $\bar{\ell}_1\mathcal{F}_{c,\vartheta}$ (\ref{nextthelim}) equals the element of $\mathcal{S}_{N-1}$ generated from the formula for $\mathcal{F}_{c,\vartheta}$ by dropping all factors involving $x_i$, $x_{i+1}$, and $u_1$, dropping the integration along $\Gamma_1$, replacing $\Gamma_2$ by a Pochhammer contour $\Gamma_0'$ (or simple contour with the same endpoints if $\kappa>4$, as per item \ref{itemc} in the introduction of this appendix) described below, and reducing the power of the prefactor (\ref{firstprefactor}) or (\ref{secondprefactor}) by one.  If $x_j\neq x_i$ and $x_k\neq x_{i+1}$ are the other endpoints of $\Gamma_1''$ and $\Gamma_2''$ respectively, then $\Gamma_0'=\mathscr{P}(x_j,x_k)$.
\end{quote}

The result (\ref{result4i=K}) found in item \ref{itemi=K} of this section is very similar to (\ref{result4}).  However, if $i=2N$, then we require the limit (\ref{thelim}) for $\kappa\in(0,8)$ and $8/\kappa\not\in\mathbb{Z}^+$.  For $\kappa\leq4$, this limit is 
\begin{multline}\label{thebiglimit2i=2N}\underline{\ell}_1\big(\mathcal{F}_{c,\vartheta}\big|_{(\Gamma_1,\Gamma_2)\mapsto(\Gamma_1'',\Gamma_2'')}\big)\,(\kappa\,|x_2,x_3,\ldots,x_{2N-1})\,\,=\lim_{R\rightarrow\infty}(2R)^{6/\kappa-1}\,\,\times\\ 
(\mathcal{F}_{c,\vartheta}|_{(\Gamma_1,\Gamma_2)\mapsto(\Gamma_1'',\Gamma_2'')})\,(\kappa\,|\,x_1=-R,x_2,\ldots,x_{2N}=R) \\ \parallel \\ 
\boxed{\begin{aligned}&n(\kappa)\left[\frac{n(\kappa)\Gamma(2-8/\kappa)}{4\sin^2(4\pi/\kappa)\Gamma(1-4/\kappa)^2}\right]^{N-1}\Bigg(\prod_{\substack{2\leq j<k\\j,k\neq c}}^{2N-1}(x_k-x_j)^{2/\kappa}\Bigg)\Bigg(\prod_{\substack{k=2 \\ k\neq c}}^{2N-1}|x_c-x_k|^{1-6/\kappa}\Bigg)\Big(2R\Big)^{2/\kappa}\sideset{}{_{\Gamma_{N-1}}}\oint {\rm d}u_{N-1}\\
&\dotsm\sideset{}{_{\Gamma_4}}\oint {\rm d}u_4\,\,\sideset{}{_{\Gamma_3}}\oint {\rm d}u_3\,\,\mathcal{N}\Bigg[\Bigg(\prod_{\substack{l=2 \\ l\neq c}}^{2N-1}\prod_{m=3}^{N-1}(x_l-u_m)^{-4/\kappa}\Bigg)\Bigg(\prod_{m=3}^{N-1}(x_c-u_m)^{12/\kappa-2}\Bigg)\Bigg(\prod_{3\leq p<q}^{N-1}(u_p-u_q)^{8/\kappa}\Bigg)\\
&\Bigg(\prod_{m=3}^{N-1}(u_m+R)^{-4/\kappa}(R-u_m)^{-4/\kappa}\Bigg)\Bigg(\prod_{\substack{j=2\\ j\neq c}}^{2N-1}(x_j+R)^{2/\kappa}(R-x_j)^{2/\kappa}\Bigg)\Big(x_c+R\Big)^{1-6/\kappa}\Big(R-x_c\Big)^{1-6/\kappa}\Bigg]\\
&\underbrace{\left\{\begin{aligned}
\sideset{}{_{\mathclap{\hspace{1.2cm}\mathscr{P}(x_1,x_2)}}}\oint\hspace{1cm}{\rm d}u_2\hspace{.1cm}\sideset{}{_{\mathclap{\hspace{2cm}\mathscr{P}(x_{2N-1},x_{2N})}}}\oint\hspace{1.8cm}{\rm d}u_1\,\,\mathcal{N}\Bigg[\Big(u_1+R\Big)^{-4/\kappa}\Big(R-u_1\Big)^{-4/\kappa}\Big(u_2+R\Big)^{-4/\kappa}\Big(R-u_2\Big)^{-4/\kappa}\Big(u_2-u_1\Big)^{8/\kappa}\\
\Big(x_c-u_1\Big)^{12/\kappa-2}\Big(x_c-u_2\Big)^{12/\kappa-2}\Bigg(\prod_{\substack{l=2 \\ l\neq c}}^{2N-1}(x_l-u_1)^{-4/\kappa}(x_l-u_2)^{-4/\kappa}\Bigg)\Bigg(\prod_{m=3}^{N-1}(u_m-u_1)^{8/\kappa}(u_m-u_2)^{8/\kappa}\Bigg)\Bigg]\end{aligned}\right\}.}_{\mathcal{J}^{(2,K)}}\end{aligned}}\end{multline}
(If $\kappa>4$, then we adjust (\ref{thebiglimit2i=2N}) as per item \ref{itemc} in the introduction of this appendix.)  We have rewritten the formula (\ref{firstFexplicit1}) for $\mathcal{F}_{c,\vartheta}(\kappa\,|\,\boldsymbol{x})$ slightly to clarify the calculation, and we indicate the contour integral $\mathcal{J}^{(2,K)}$ (\ref{scenario4}) with braces.

Now we find the limit (\ref{thebiglimit2i=2N}).  After doing steps \ref{oneit2}--\ref{lastit2} of section \ref{s4}, identifying the double integral with respect to $u_1$ and $u_2$ with (\ref{scenario4}, \ref{application}, \ref{contouri=2N}) (with $\beta_{K+1}=\beta_1$ and $\beta_{K+2}=\beta_2$), and replacing it by the right side of (\ref{result4i=K}), we find 
\begin{multline}\label{asympinsert2i=2N}(2R)^{6/\kappa-1}\big(\mathcal{F}_{c,\vartheta}\big|_{(\Gamma_1,\Gamma_2)\mapsto(\Gamma_1'',\Gamma_2'')}\big)\,(\kappa\,|\,x_1=-R,x_2,\ldots,x_{2N}=R)\underset{R\rightarrow\infty}{\sim}(2R)^{6/\kappa-1}\,\,\times\\ 
\boxed{\begin{aligned}n(\kappa)\left[\frac{n(\kappa)\Gamma(2-8/\kappa)}{4\sin^2(4\pi/\kappa)\Gamma(1-4/\kappa)^2}\right]^{N-1}\Bigg(\prod_{\substack{2\leq j<k\\j,k\neq c}}^{2N-1}(x_k-x_j)^{2/\kappa}\Bigg)\Bigg(\prod_{\substack{k=2 \\ k\neq c}}^{2N-1}|x_c-x_k|^{1-6/\kappa}\Bigg)\Big(2R\Big)^{2/\kappa}\sideset{}{_{\Gamma_{N-1}}}\oint {\rm d}u_{N-1}\\
\dotsm\,\sideset{}{_{\Gamma_4}}\oint {\rm d}u_4\,\, \sideset{}{_{\Gamma_3}}\oint {\rm d}u_3\,\,\mathcal{N}\Bigg[\Bigg(\prod_{\substack{l=2 \\ l\neq c}}^{2N-1}\prod_{m=3}^{N-1}(x_l-u_m)^{-4/\kappa}\Bigg)\Bigg(\prod_{m=3}^{N-1}(x_c-u_m)^{12/\kappa-2}\Bigg)\Bigg(\prod_{3\leq p<q}^{N-1}(u_p-u_q)^{8/\kappa}\Bigg)\Bigg]R^2\\
\underbrace{\left\{\begin{aligned}\frac{4\sin^2(4\pi/\kappa)\Gamma(1-4/\kappa)^2}{n(\kappa)\Gamma(2-8/\kappa)}\,\,2^{1-8/\kappa}\,\,R^{-8/\kappa-1}\sideset{}{_{\mathscr{P}(x_{2N-1},x_2)}}\oint {\rm d}u_2\,\,\mathcal{N}\Bigg[\Big(x_c-u_2\Big)^{12/\kappa-2}\hspace{2cm}\\
\Bigg(\prod_{\substack{l=2 \\ l\neq c}}^{2N-1}(x_l-u_2)^{-4/\kappa}\Bigg)\Bigg(\prod_{m=3}^{N-1}(u_m-u_2)^{8/\kappa}\Bigg)\Bigg]\end{aligned}\right\}.}_{\text{right side of (\ref{result4i=K})}}\end{aligned}}\end{multline}
(Because the integration contours in present use are Pochhammer contours, we must include the factor $4\sin^2\beta_i=4\sin^2(4\pi/\kappa)$ on the right side of (\ref{result4.75}) with this substitution.)  After some simplification, we finally send $R\rightarrow\infty$ in (\ref{asympinsert2i=2N}) to find
\begin{multline}\label{firstlimprime2i=2N}\underline{\ell}_1\big(\mathcal{F}_{c,\vartheta}\big|_{(\Gamma_1,\Gamma_2)\mapsto(\Gamma_1'',\Gamma_2'')}\big)\,(\kappa\,|\,x_2,x_3,\ldots,x_{2N-1})\,\,=\\
\left\{\begin{aligned}&n(\kappa)\left[\frac{n(\kappa)\Gamma(2-8/\kappa)}{4\sin^2(4\pi/\kappa)\Gamma(1-4/\kappa)^2}\right]^{N-2}\Bigg(\prod_{\substack{2\leq j<k \\ j,k\neq c}}^{2N-1}(x_k-x_j)^{2/\kappa}\Bigg)\Bigg(\prod_{\substack{k=2 \\ k\neq c}}^{2N-1}|x_c-x_k|^{1-6/\kappa}\Bigg)\oint_{\Gamma_{N-1}} {\rm d}u_{N-1}\dotsm\\ 
&\oint_{\Gamma_3}{\rm d}u_3\,\,\sideset{}{_{\mathclap{\hspace{2cm}\mathscr{P}(x_{2N-1},x_2)}}}\oint \hspace{1.2cm}{\rm d}u_2\,\,\mathcal{N}\Bigg[\Bigg(\prod_{\substack{l=2 \\ l\neq c}}^{2N-1}\prod_{m=2}^{N-1}(x_l-u_m)^{-4/\kappa}\Bigg)\Bigg(\prod_{m=2}^{N-1}(x_c-u_m)^{12/\kappa-2}\Bigg)\Bigg(\prod_{2\leq p<q}^{N-1}(u_p-u_q)^{8/\kappa}\Bigg)\Bigg]\end{aligned}\right\}.\end{multline}
(Again, if $\kappa>4$, then we find the same result, but with all factors of $4\sin^2(4\pi/\kappa)$ dropped and with all Pochhammer contours replaced by simple contours with the same endpoints and orientation.)

Now, the quantity of (\ref{firstlimprime2i=2N}) in brackets is the element of $\mathcal{S}_{N-1}$ described in the following conclusion.  We thus have our main result, for use in item \ref{fourthcase2} in the proof of corollary \ref{moveconjchargecor}.
\begin{quote}\textbf{Main result 2:} In case \ref{sc4new} with $i=2N$, where $\Gamma_1''$ and $\Gamma_2''$ (\ref{contouri=2N}) respectively replace $\Gamma_1$ and $\Gamma_2$ in the formula (\ref{firstFexplicit1}) for $\mathcal{F}_{c,\vartheta}$, the limit $\underline{\ell}_1\mathcal{F}_{c,\vartheta}$ (\ref{nextthelim}) equals the element of $\mathcal{S}_{N-1}$ generated from the formula for $\mathcal{F}_{c,\vartheta}$ by dropping all factors involving $x_{2N}$, $x_1$, and $u_1$, replacing $\Gamma_2$ by $\mathscr{P}(x_{2N-1},x_2)$ (or $[x_{2N-1},x_2]^+$ if $\kappa>4$, as per item \ref{itemc} in the introduction of this appendix), and reducing the power of the prefactor (\ref{firstprefactor}) or (\ref{secondprefactor}) by one.
\end{quote}
\noindent
(Again, we assume here that $8/\kappa\not\in\mathbb{Z}^+$, but these results remain true even if this condition is not met.  See the last paragraph of section \ref{s4}.)

\section{A proof that $\mathcal{F}_{c,\vartheta}$ is real-valued}\label{reallemproof}

The purpose of this section is to prove the following lemma.
\begin{lem}\label{reallem}  If $\kappa>0$, then for each $c\in\{1,2,\ldots,2N\}$ and $\vartheta\in\{1,2,\ldots,C_N\}$, the function $\mathcal{F}_{c,\vartheta}$ of definition \ref{Fkdefn} (with part of its domain naturally extended to $\kappa\geq8$) is real-valued.
\end{lem}

\begin{figure}[t]
\centering
\includegraphics[scale=0.27]{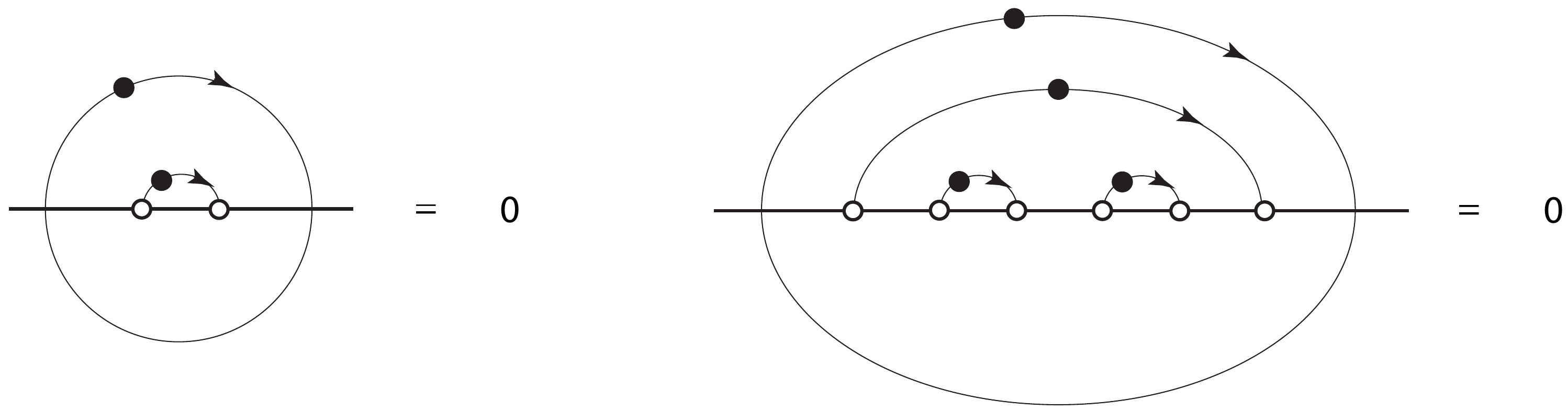}
\caption{If an integration contour surrounds a collection of other integration contours of $\mathcal{F}_{c,\vartheta}$, then the integration along all of these contours gives zero.}
\label{Oint}
\end{figure}

\begin{proof} K.\ Kyt\"ol\"a suggested to us the idea for this proof.  To begin, we suppose that $\kappa>4$ so the integration contours of $\mathcal{F}_{c,\vartheta}$ are simple.  In order to prove that $\mathcal{F}_{c,\vartheta}$ is real-valued, we deform each of its integration contours $\Gamma_m$, all of which are in the upper half-plane, into the mirror image $\bar{\Gamma}_m:=\{\bar{z}\in\mathbb{C}\,|\,z\in\Gamma_m\}$.  Then we show that the combined effect of these deformations is to send $\mathcal{F}_{c,\vartheta}(\kappa\,|\boldsymbol{x})$ to its complex conjugate and to leave it unchanged for all $(\kappa,\boldsymbol{x})\in\mathbb{R}\times\Omega_0$ at the same time.  Thus, we conclude that $\mathcal{F}_{c,\vartheta}(\kappa)$ is real-valued for all $\kappa>0$.

First, we continuously deform each contour $\Gamma_m$ of $\mathcal{F}_{c,\vartheta}$ into its mirror image $\bar{\Gamma}_m$ without changing the function $\mathcal{F}_{c,\vartheta}$.  This process invokes the Cauchy integral theorem and the following identity.  If $\kappa>4$, $\Gamma_1$ is a simple contour with endpoints at $x_i<x_j$ (we recall that $i\neq j\neq c$), and $\Gamma_2$ is a simple loop surrounding $\Gamma_1$, then (figure \ref{Oint})
\be\label{oint}\int_{\Gamma_1}\sideset{}{_{\Gamma_2}}\oint
[\,\ldots\,\text{the integrand of $\mathcal{F}_{c,\vartheta}$ (\ref{firstFexplicit1}) \,\ldots}]\,{\rm d}u_2\,{\rm d}u_1=0.\ee
To verify (\ref{oint}), we wrap $\Gamma_2$ tightly around $\Gamma_1$ and parameterize these contours as $u_1:[0,2]\rightarrow\mathbb{C}$ and $u_2:[0,1]\rightarrow\mathbb{C}$ so $u_1(t) \approx u_2(t) \approx u_2(2-t)$ for $t\in[0,1]$.  After choosing an arbitrary $s\in[0,1]$, we note that the four segments
\be\label{segments}u_2[0,s],\quad u_2[s,1],\quad u_2[1,2-s],\quad u_2[2-s,2]\ee
of $\Gamma_2$ are immediately above $u_1[0,s]$ and $u_1[s,1]$ and below $u_1[s,1]$ and $u_1[0,s]$ respectively (figure \ref{Loop}), and their near agreement with these segments of $\Gamma_1$ only improves as we wrap $\Gamma_2$ more tightly around $\Gamma_1$.  

Now we decompose the integration with respect to $u_2$ in (\ref{oint}) into integrations along the four segments of (\ref{segments}).  If $0\leq t<s$, then $u_1(s)-u_2(t)$ is in the first or fourth quadrant of the complex plane, so we have $\Delta(u_1(s)-u_2(t))=(u_1(s)-u_2(t))^{8/\kappa}$ (\ref{Deltadefn}).  After integrating $t$ over this range, we are left with 
\be\label{I1ex}I_1(s):=\sideset{}{_0^s}\int\left[\dotsm\begin{array}{l}\text{the integrand of $\mathcal{F}_{c,\vartheta}$ (\ref{firstFexplicit1}) with $u_1=u_1(s)$ and $u_2=$}\\
\text{$u_2(t)$ and $\Delta(u_1-u_2)$ replaced by $(u_1(s)-u_2(t))^{8/\kappa}$}\end{array}\dotsm\right]\,u_1'(s)\,u_2'(t)\,{\rm d}t.\\
\ee
Now as $t$ increases beyond $s$, the difference $u_1(s)-u_2(t)$ passes from the fourth quadrant into the third quadrant.  If $u_2-u_1$ is in either of these quadrants, then we may write (recalling that $-\pi<\arg z\leq\pi$ for all complex $z$)
\be\label{redo}\Delta(u_1-u_2)=e^{-8\pi i/\kappa}(u_2-u_1)^{8/\kappa}.\ee
After replacing $\Delta(u_1-u_2)$ with the right side of (\ref{redo}), but for the moment leaving off the phase factor of $\exp[-8\pi i/\kappa]$, in the integrand of $\mathcal{F}_{c,\vartheta}$ (\ref{firstFexplicit1}) and integrating the result over $s\leq t\leq1$, we are left with 
\be\label{I2ex}I_2(s):=\sideset{}{_s^1}\int\left[\dotsm\begin{array}{l}\text{the integrand of $\mathcal{F}_{c,\vartheta}$ (\ref{firstFexplicit1}) with $u_1=u_1(s)$ and $u_2=$}\\
\text{$u_2(t)$ and $\Delta(u_1-u_2)$ replaced by $(u_2(t)-u_1(s))^{8/\kappa}$}\end{array}\dotsm\right]\,u_1'(s)\,u_2'(t)\,{\rm d}t.\\
\ee
Now, by further contracting the top $u_2[0,1]$ and bottom $u_2[1,2]$ portions of $\Gamma_2$ so they perfectly agree with $\Gamma_1$, the integration of (\ref{oint}) with respect to only $u_2$ gives (figure \ref{Loop})
\be\label{ointdecomp}I_1(s)+e^{-8\pi i/\kappa}I_2(s)-I_2(s)-e^{-8\pi i/\kappa}I_1(s),\ee
with $u_2(t)=u_1(t)$ for all $t\in[0,1]$ and $u_2(t)=u_1(2-t)$ for all $t\in[1,2]$.  The first and second terms of (\ref{ointdecomp}) arise from integrating along $u_2[0,s]$ and $u_2[s,1]$ respectively.  The third arises from integrating along $u_2[1,2-s]$ after acquiring an additional phase factor of $\exp[-2\pi i(-4/\kappa)]$ from cycling clockwise around $x_j$.  The fourth arises from integrating along $u_2[2-s,2]$.  Its phase factor equals the product of $\exp[-2\pi i(-4/\kappa)]$, acquired from cycling clockwise around $x_j$, with $\exp[-2\pi i(8/\kappa)]$, acquired from crossing the branch cut of $\Delta(u_1-u_2)$ as $u_1-u_2$ passes from the second quadrant into the first. (This branch cut anchors to $u_1$ and points downward as we view $\Delta(u_1-u_2)$ as a function of $u_2$.)

\begin{figure}[t]
\centering
\includegraphics[scale=0.27]{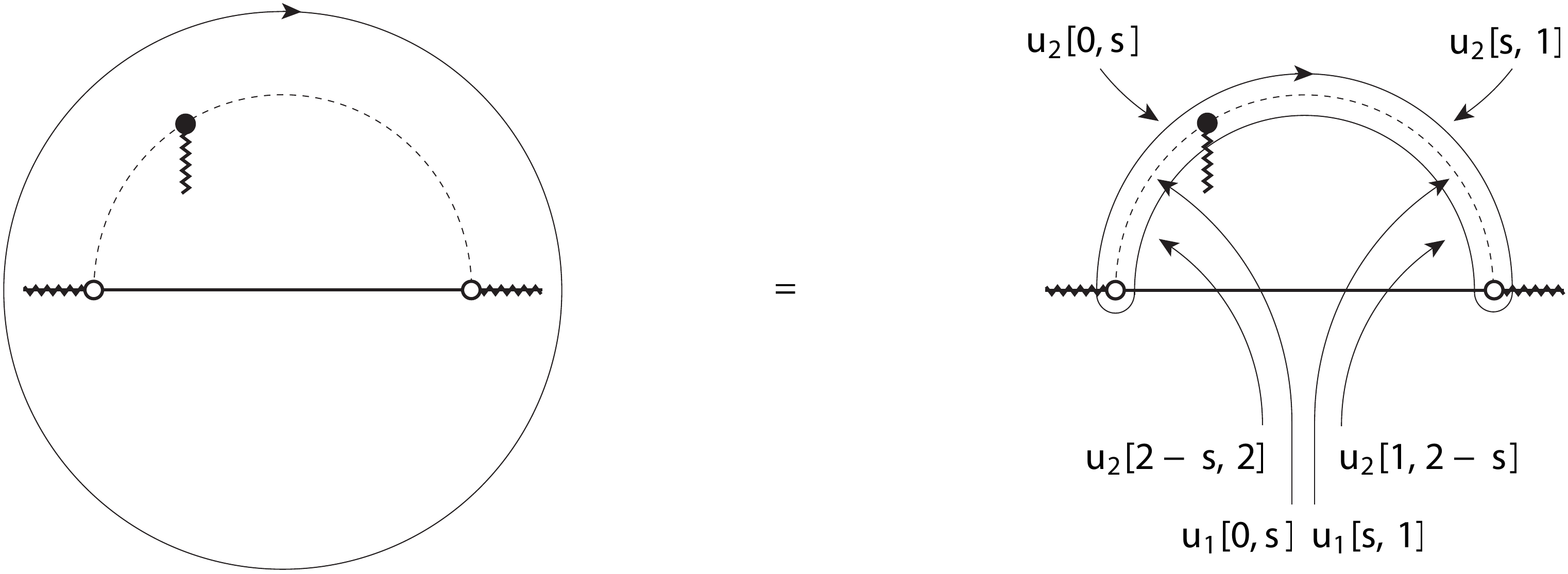}
\caption{Illustration of the proof of identity (\ref{oint}).}
\label{Loop}
\end{figure}

Finally, we note that the integrands of $I_1(s)$ (\ref{I1ex}) and $I_2(s)$ (\ref{I2ex}) exchange under the switch from $(s,t)\in[0,1]^2$ to $(t,s)\in[0,1]^2$.  Therefore, the definite integral of $I_1(s)$ over $0\leq s\leq1$ equals that of $I_2(s)$, so 
\be\label{intointdecomp}\int_{\Gamma_1}\sideset{}{_{\Gamma_2}}\oint[\,\ldots\,\text{the integrand of $\mathcal{F}_{c,\vartheta}$ (\ref{firstFexplicit1}) \ldots}\,]\,{\rm d}u_2\,{\rm d}u_1=(1-e^{-8\pi i/\kappa})\int_0^1[I_1(s)-I_2(s)]\,{\rm d}s=0,\ee
proving the identity (\ref{oint}) for $\kappa>4$.

In fact, if we suppose that $\kappa$ is complex with $\text{Re}\,\kappa>0$, then the identity (\ref{oint}), with $\Gamma_1$ replaced by the Pochhammer contour $\mathscr{P}(x_i,x_j)$, remains true.  (Though we do not use this fact in this proof, it is useful in other applications, so we mention it now.)  Indeed, thanks to identity (\ref{Pochtostraight}) with $[x_i,x_j]=\Gamma_1$, if $\text{Re}\,\kappa>4$, then
\begin{multline}\label{analyticcont2}\oint_{\mathscr{P}(x_i,x_j)}\sideset{}{_{\Gamma_2}}\oint[\,\ldots\,\text{the integrand of $\mathcal{F}_{c,\vartheta}$ (\ref{firstFexplicit1}) \ldots}\,]\,{\rm d}u_2\,{\rm d}u_1\\
=4\sin^2(4\pi/\kappa)\int_{\Gamma_1}\sideset{}{_{\Gamma_2}}\oint[\,\ldots\,\text{the integrand of $\mathcal{F}_{c,\vartheta}$ (\ref{firstFexplicit1}) \ldots}\,]\,{\rm d}u_2\,{\rm d}u_1.\end{multline}
The left side of (\ref{analyticcont2}) gives the analytic continuation of the right side to the half-plane $\{\kappa\in\mathbb{C}\,|\,\text{Re}(\kappa)>0\}$.  Because the right side of (\ref{analyticcont2}) vanishes for all (real) $\kappa>4$, it follows that both sides of (\ref{analyticcont2}) vanish on this entire half-plane.  Thus, if $\text{Re}\,\kappa>0$, then identity (\ref{oint}) with $\Gamma_1$ replaced by the Pochhammer contour $\mathscr{P}(x_i,x_j)$ remains true.

Furthermore, we may extend identity (\ref{oint}) as follows.  If $\Gamma_m$ is a simple loop surrounding the integration contours $\Gamma_1$, $\Gamma_2,\ldots,\Gamma_{m-1}$ of $\mathcal{F}_{c,\vartheta}$ with $m\leq N-1$, then (figure \ref{Oint})
\be\label{ointext}\sideset{}{_{\Gamma_1}}\int\sideset{}{_{\Gamma_2}}\int\dotsm\sideset{}{_{\Gamma_m}}\oint
[\,\ldots\,\text{the integrand of $\mathcal{F}_{c,\vartheta}$ (\ref{firstFexplicit1}) \ldots}\,]\,{\rm d}u_m\dotsm {\rm d}u_2\,{\rm d}u_1=0.\ee
We prove (\ref{ointext}) by decomposing $\Gamma_m$ into loops that each surround a single integration contour, using Fubini's theorem to re-order the integrations in each resulting term, and invoking identity (\ref{oint}) to show that each term is zero.  Again, by replacing simple contours with Pochhammer contours, we extend this identity to the half-plane $\{\kappa\in\mathbb{C}\,|\,\text{Re}(\kappa)>0\}$.

\begin{figure}[b]
\centering
\includegraphics[scale=0.27]{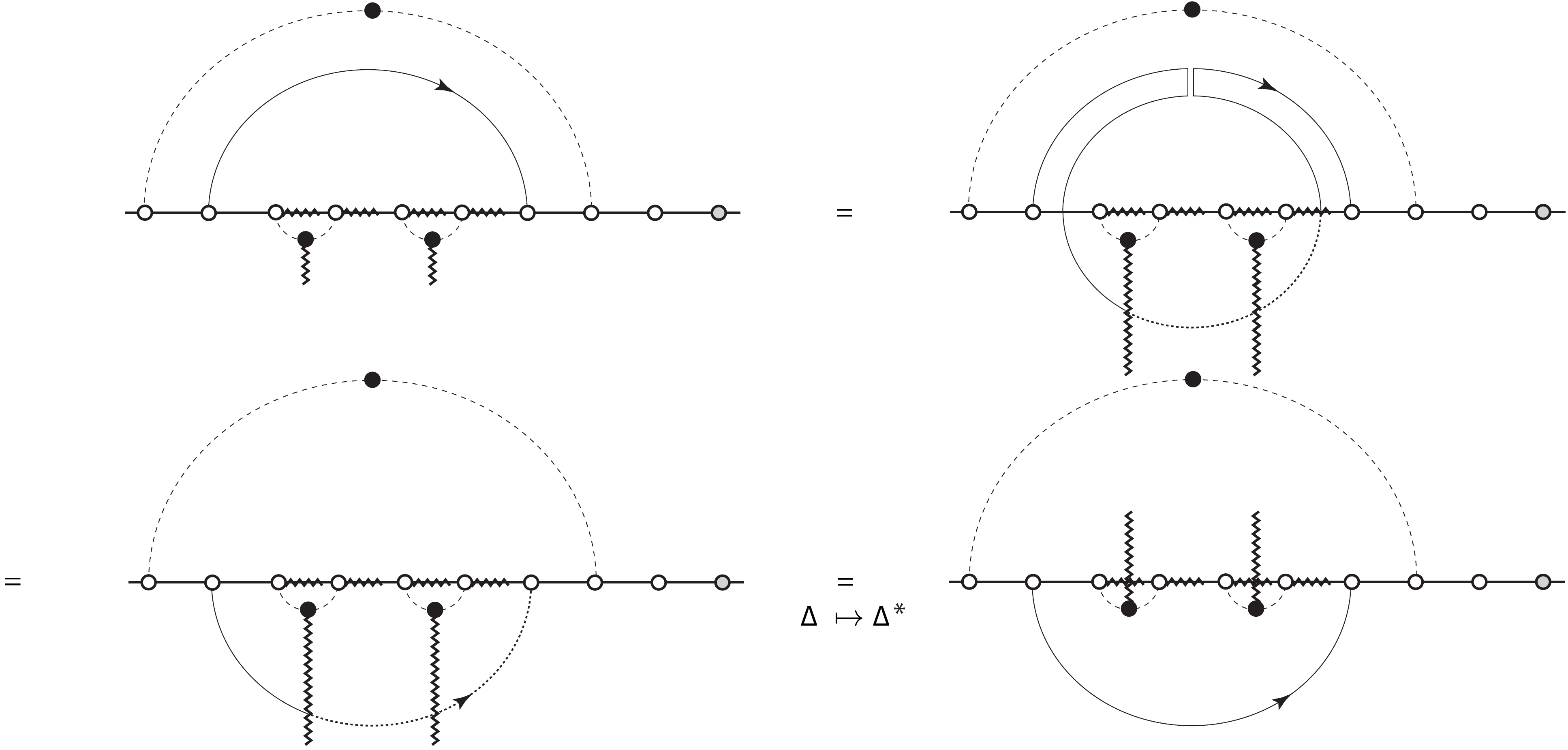}
\caption{Deforming an integration contour (solid curve) into the lower half-plane.  The coarse-dashed curves are other contours, the fine-dashed curve is the solid curve after it passes onto a different Riemann sheet, and the zig-zags indicate branch cuts.}
\label{Deform1}
\end{figure}

Now we deform every integration contour $\Gamma_m$ of $\mathcal{F}_{c,\vartheta}$ into its mirror image $\bar{\Gamma}_m$ in the lower half-plane.  (For simplicity, we again assume that $\kappa>4$ for now, so all contours are simple.)  Each deformation is a homotopy that fixes the endpoints of its contour.  We first deform all of those contours that do not pass over another integration contour.  Then, we deform all contours that pass over only the contours deformed in the previous step.  Then, we deform all contours that pass over only the contours deformed in the previous two steps, and so on.  We repeat until we have deformed all contours into their mirror images in the lower half-plane.

In order to deform one of the contours $\Gamma_m$ into its mirror image, the homotopy may need to push $\Gamma_m$ through some of the other contours of $\mathcal{F}_{c,\vartheta}$.  By using identities (\ref{oint}, \ref{ointext}), we do this without changing the function $\mathcal{F}_{c,\vartheta}$.  However, the homotopy must not collide the deformed contour with either $x_{\iota(2N-1)}$ or $x_{\iota(2N)}$ (one of these is $x_c$) because these points are not endpoints of any other contour.  Therefore, if $x_c$ does not lie between the endpoints of $\Gamma_m$, then its homotopy must pass between those endpoints, and if otherwise, then its homotopy must pass through the upper half-plane, through infinity, and into the lower half-plane.

In the first case (with $x_c$  not between the endpoints of $\Gamma_m$), we must deform a contour $\Gamma_q$ through the space between its endpoints.  If $\Gamma_q$ does not pass over another contour, then this homotopy is straightforward.  But if otherwise, then the nested contours seem to block the homotopy.  In order to maneuver around them, we insert into $\Gamma_q$ a large loop that surrounds all of these blocking contours, we pull this manipulated contour apart into left and right segments at the insertion point, and we continuously deform these two segments until their combination gives $\bar{\Gamma}_q$.  As a consequence of identities (\ref{oint}, \ref{ointext}) and the (strong form of the) Cauchy integral theorem (Thm.\ 2.3 of \cite{kod}), neither inserting this loop nor the subsequent homotopy alters $\mathcal{F}_{c,\vartheta}$.

The integrand (\ref{eulerintegralch2}) of $\mathcal{F}_{c,\vartheta}$ has many branch cuts as a function of the integration variable $u_q$.  Indeed, if the contour $\Gamma_q$ passes over another contour $\Gamma_p$, then a branch cut anchors to $u_p\in\bar{\Gamma}_p$  and points downward to its other endpoint at infinity.  (We recall that we deformed $\Gamma_p$ into its mirror image before deforming $\Gamma_q$ because the latter passes over the former.)  Furthermore, a branch cut anchors to each endpoint of $\Gamma_p$ and points rightwards to its other endpoint at infinity.  As we deform $\Gamma_q$, its deformed version, at every moment of the homotopy and followed in the direction of its orientation from left to right, crosses the branch cut anchored to $u_p$ first, then crosses the branch cuts anchored to the endpoints of $\Gamma_p$ next, and finally terminates at its right endpoint (figure \ref{Deform1}).  Furthermore, this deformed contour crosses only these mentioned branch cuts and in this order during its homotopy.

Still viewing the integrand of $\mathcal{F}_{c,\vartheta}$ as a function of $u_q$, we study what happens to it as the deformed version of $\Gamma_q$ crosses one of the branch cuts.   Starting at its left endpoint and following it rightward, this deformed contour initially resides on the same Riemann sheet of the factor $\Delta(u_p-u_q)$ (\ref{Deltadefn}) as that of the original contour $\Gamma_q$.  But as it crosses the branch cut anchored to $u_p$, it passes onto a new sheet, and the integrand acquires a phase factor of $\exp[2\pi i(8/\kappa)]$.  To show this acquisition explicitly, we insert the replacement 
\be\label{bardelta}\Delta(u_p-u_q)\mapsto\Delta^*(u_p-u_q),\qquad\Delta^*(z):=z^{8/\kappa}\times\begin{cases}1, & -\pi/2<\arg z\leq\pi \\ e^{2\pi i(8/\kappa)}, & -\pi<\arg z\leq-\pi/2\end{cases}\ee
in the integrand of $\mathcal{F}_{c,\vartheta}$.  We notice that the branch cut for $\Delta^*$ lies on the negative imaginary axis.  As such, this replacement (\ref{bardelta}) rotates the direction of the branch cut anchored to $u_p$ so it points upward (figure \ref{Deform1}).

\begin{figure}[t]
\centering
\includegraphics[scale=0.27]{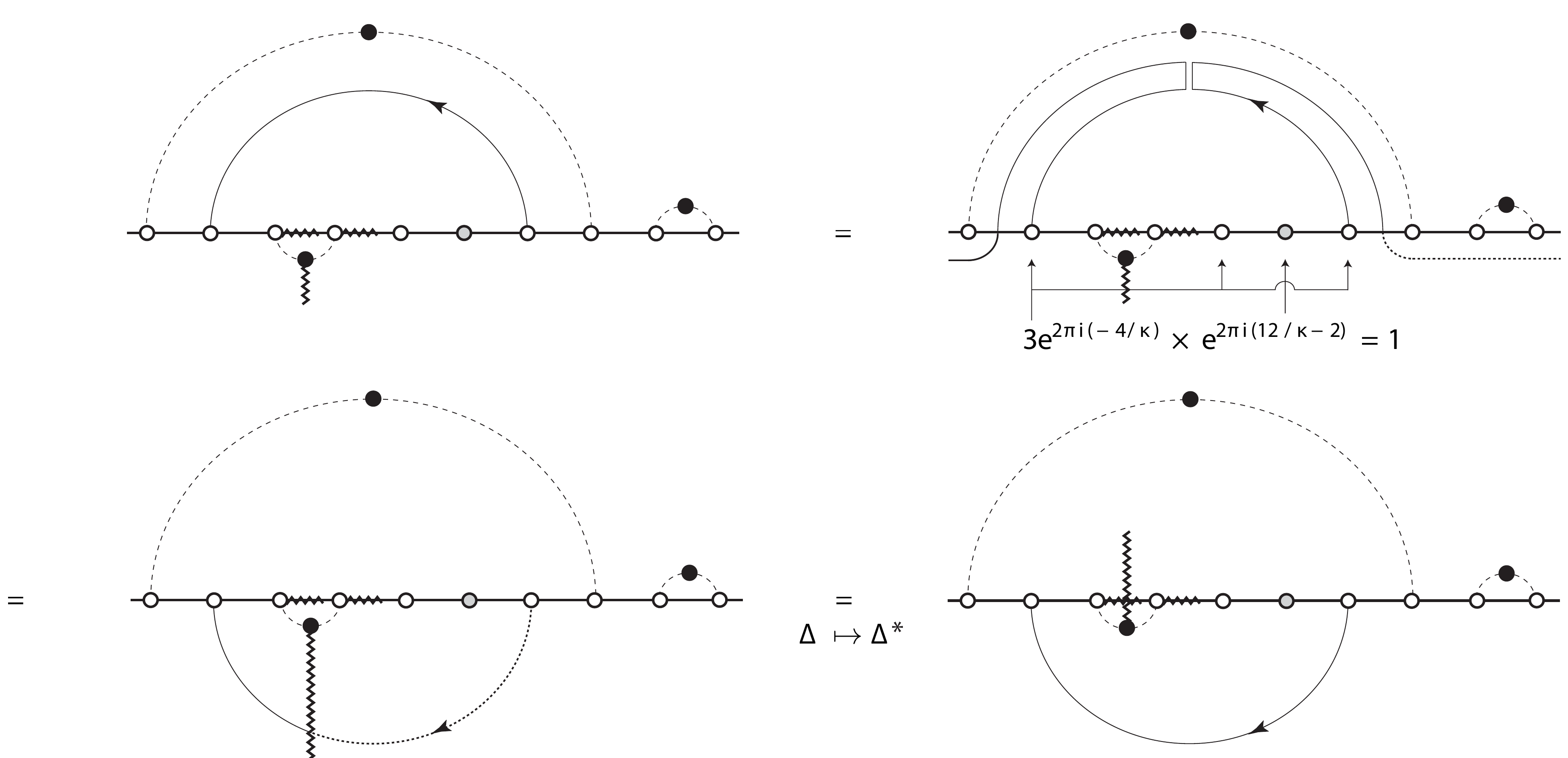}
\caption{Deforming an integration contour into the lower half-plane.  The key difference between this figure and figure \ref{Deform1} is that here, the contour is deformed outward through the upper half-plane, through infinity, and into the lower half-plane.}
\label{Deform2}
\end{figure}

In the second case (with $x_c$  between the endpoints of $\Gamma_m$), we must deform a contour $\Gamma_q$ through the upper half-plane and into the lower half-plane.  If $\Gamma_q$ passes over all of the other contours, then this homotopy is straightforward.  But if otherwise, then the un-nested contours seem to block the homotopy.  In order to maneuver around them, we insert into $\Gamma_q$ a large loop that surrounds all of these blocking contours, we pull this manipulated contour apart into left and right segments at the insertion point, giving it the approximate shape of a semicircle with its base just beneath the real axis and its arc in the upper half-plane, and we uniformly expand this contour outward through the upper half-plane (figure \ref{Deform2}).

If $R$ is the radius of the semicircle, then the integrand of $\mathcal{F}_{c,\vartheta}$ with $u_q$ restricted to the arc of that semicircle vanishes like $R^{-1}$ as $R\rightarrow\infty$.  (See section \ref{s3} of appendix \ref{asymp}.)  Therefore, we restrict our attention to the integration along the base.  If $R$ is very large, then the base essentially comprises two halves, one to the left of all other contours of $\mathcal{F}_{c,\vartheta}$ and one to the right.  Now, the left half resides on the same Riemann sheet as the original contour $\Gamma_q$, but the sheet on which the right half resides depends on whether $\Gamma_q$ passes over any other contours.  If it does, then the right half crosses all branch cuts anchored to the endpoints of those nested contours immediately upon leaving its left endpoint (figure \ref{Deform2}), and the integrand of $\mathcal{F}_{c,\vartheta}$ accumulates some phase factors explicitly described in the next paragraph.  The right half also crosses the branch cuts anchored to the endpoints of $\Gamma_q$ and the points $x_{\iota(2N-1)}$ and $x_{\iota(2N)}$ between them.  After this happens, the integrand accumulates three phase factors of $\exp[-2\pi i(-4/\kappa)]$ and one phase factor of $\exp[-2\pi i(12/\kappa-2)]$ from these crossings. However the product of these factors is one, so these four crossings do not cause the right half to pass onto yet another different Riemann sheet.

The limit $R\rightarrow\infty$ joins the distant endpoints of the semicircle's base at infinity, and after they meet, we retract them away from infinity, join them together with a semicircular arc in the lower half-plane, contract this semicircle inward, and deform the result into $\bar{\Gamma}_q$.  According to item \ref{third} of definition \ref{Fkdefn}, the orientation of $\Gamma_q$, and therefore of $\bar{\Gamma}_q$, points leftward because the point $x_c$ lies between its endpoints.  In light of this fact, the previous paragraph implies that if $\Gamma_q$ passes over another contour $\Gamma_p$, then its mirror image $\bar{\Gamma}_q$ crosses both of the branch cuts anchored to the endpoints of $\Gamma_p$ immediately after leaving its right endpoint, and the integrand of $\mathcal{F}_{c,\vartheta}$ acquires a phase factor of $\exp[-2\pi i(-4/\kappa)]$ from both crossings.  Subsequently, $\bar{\Gamma}_q$ crosses the branch cut anchored to $u_p$, giving the integrand of $\mathcal{F}_{c,\vartheta}$ another phase factor of $\exp[-2\pi i(8/\kappa)]$ that multiplies the phase factors of the two previous crossings.  Because this product equals one, the mirror-image contour $\bar{\Gamma}_q$ returns to the Riemann sheet on which the original contour $\Gamma_q$ resides.  Finally, $\bar{\Gamma}_q$ terminates at its left endpoint.  To show these acquired phase factors explicitly, we replace the factor of $\Delta(u_p-u_q)$ in the integrand of $\mathcal{F}_{c,\vartheta}$ with $\Delta^*(u_p-u_q)$ (\ref{bardelta}) for the entire integration along $\bar{\Gamma}_q$ (figure \ref{Deform2}).

Finally, we explain the combined effect of all of these contour homotopies on $\mathcal{F}_{c,\vartheta}$.  We have already showed that none change this function, thanks to identities (\ref{oint}, \ref{ointext}) and the (strong form of the) Cauchy integral theorem (Thm.\ 2.3 of \cite{kod}).  Now we show that their combined effect also sends this function to its complex conjugate.  Among the factors in the integrand (\ref{eulerintegralch2}) of $\mathcal{F}_{c,\vartheta}$, the homotopy of $\Gamma_m$ sends (we recall that $-\pi<\arg z\leq\pi$ for all complex $z$)
\be\label{switchit}(u_m-x_l)^{\beta_l}\xrightarrow[\Gamma_m\longrightarrow\bar{\Gamma}_m]{} (\bar{u}_m-x_l)^{\beta_l}=\overline{(u_m-x_l)^{\beta_l}},\quad u_m\in\Gamma_m,\quad x_l\in\mathbb{R},\ee
for each $m\in\{1,2,\ldots,N-1\}$ and $l\in\{1,2,\ldots,2N\}$ (and similarly if the terms of the difference in (\ref{switchit}) is switched).  Furthermore, if one contour $\Gamma_p$ lies completely to the right of another $\Gamma_q$, then their combined homotopies send
\be (u_p-u_q)^{8/\kappa}\xrightarrow[\Gamma_p\times\Gamma_q\longrightarrow\bar{\Gamma}_p\times\bar{\Gamma}_q]{}(\bar{u}_p-\bar{u}_q)^{8/\kappa}=\overline{(u_p-u_q)^{8/\kappa}},\quad (u_p,u_q)\in\Gamma_p\times\Gamma_q.\ee
Finally, if one contour $\Gamma_q$ passes over another $\Gamma_p$, then $\Delta(u_p-u_q)$ (\ref{Deltadefn}) replaces the factor $(u_p-u_q)^{8/\kappa}$ in the integrand of $\mathcal{F}_{c,\vartheta}$.  In light of the above discussion, we see that the homotopy of $\Gamma_p$ followed by the homotopy of $\Gamma_q$ sends
\be\label{deltadeltastar} \Delta(u_p-u_q)\xrightarrow[\Gamma_p\times\Gamma_q\longrightarrow\bar{\Gamma}_p\times\bar{\Gamma}_q]{}\Delta^*(\bar{u}_p-\bar{u}_q)=\overline{\Delta(u_p-u_q)},\quad (u_p,u_q)\in\Gamma_p\times\Gamma_q,\ee
with the equality in (\ref{deltadeltastar}) following from (\ref{Deltadefn}, \ref{bardelta}).  Thus, the homotopies sending the integration contours of $\mathcal{F}_{c,\vartheta}$ to their mirror images altogether send every factor in the integrand of $\mathcal{F}_{c,\vartheta}$, every  integration measure of $\mathcal{F}_{c,\vartheta}$, and thus $\mathcal{F}_{c,\vartheta}$ itself, to its complex conjugate.  But because none of these homotopies change the function itself, we conclude that $\mathcal{F}_{c,\vartheta}$ is real-valued.  This concludes the proof for the case $\kappa>4$.

According to definition \ref{Fkdefn}, if $\kappa\leq4$, then the contours of $\mathcal{F}_{c,\vartheta}(\kappa)$ are not simple, but are Pochhammer contours.  In spite of this change, we may use the above reasoning together with (\ref{PochDecomp}) to show that $\mathcal{F}_{c,\vartheta}(\kappa)$ is real-valued if $\kappa\in(0,4]$ too.  However, we find it is easier to use this fact instead: if $f(z)$ is analytic in the strip $-b<\text{Re}\,z<b$ with $b>0$ and real-valued for all (real) $z\in(0,b)$, then $f(z)$ is real-valued for all (real) $z\in(-b,b)$.  To prove this claim, we use the Taylor series
\be\label{Taylor}f(z)=\sum_{n=0}^\infty a_n z^n,\quad |z|<b\ee
for $f(z)$ centered on zero.  Here, the first coefficient $a_0$ is real because $a_0=\lim_{z\downarrow0}f(z)$ is real.  Thus, the function $g$, defined as
\be g(z):=\left\{\begin{array}{ll} \dfrac{f(z)-a_0}{z}, & z\neq0 \\ a_1, & z=0\end{array}\right\}\quad\Longrightarrow\quad g(z)=\sum_{n=0}^\infty a_{n+1} z^n,\quad |z|<b,\ee
satisfies the same conditions as those of $f(z)$, so the first coefficient $a_1$ of its Taylor series is real too.  Continuing, we see that each coefficient $a_n$ of (\ref{Taylor}) is real, so $f(z)$ is real-valued for all (real) $z\in(-b,b)$.  After setting $b=4$, $z=\kappa-4$, and $f(z)=\mathcal{F}_{c,\vartheta}(\kappa)$ (which the discussion surrounding (\ref{Freplace}, \ref{explicit}) shows to be analytic on the strip $\{\kappa\in\mathbb{C}\,|\,0<\text{Re}(\kappa)<8\}$), it then follows that $\mathcal{F}_{c,\vartheta}(\kappa)$ is real-valued if $\kappa\in(0,4]$ too.
\end{proof}

\section{A proof of theorem \ref{vertexop}}\label{proofappendix}

In this appendix, we complete the proof of theorem \ref{vertexop} in section \ref{CGsolnsSect} by showing that any Coulomb gas solution of item \ref{itemdef1} in definition \ref{CGsolnsdef} satisfies the system (\ref{nullstate}, \ref{wardid}) for $\kappa>0$.  The proof is due to J.\ Dub\'edat \cite{dub}.  Here, we present his proof in a way that directly indicates the relationship between the conformal Ward identities (\ref{wardid}) and the Coulomb gas neutrality condition (\ref{vertexcorrformula}).  Throughout, we let $x_{2N+k}:=u_k$, and without loss of generality, we use Coulomb gas notation consistent with the dense phase ($\kappa>4$).  Thus, from (\ref{alphars}, \ref{densedilute}), we have
\begin{gather}\label{thecharges1}\alpha^+=\sqrt{\kappa}/2,\quad\alpha^-=-2/\sqrt{\kappa},\quad 2\alpha_0=\alpha^++\alpha^-,\\
\label{thecharges2}\alpha_{1,2}^+=-\alpha^-/2=1/\sqrt{\kappa},\quad\alpha_{1,2}^-=\alpha^++3\alpha^-/2=(\kappa-6)/2\sqrt{\kappa}.\end{gather}
In the dilute phase ($\kappa\leq4$), we switch $\alpha^\pm\mapsto\alpha^\mp$ and $\alpha_{1,2}^\pm\mapsto\alpha^\mp_{2,1}$.  This change does not affect the powers (\ref{powers1}--\ref{powers3}) that appear in (\ref{eulerintegralch2}).

We begin with a different construction of the Coulomb gas solution (\ref{CGsolns}) that more directly suggests why these solutions satisfy the null-state PDEs (\ref{nullstate}).  Working with real numbers, or ``charges," $\alpha_1,\alpha_2,\ldots,\alpha_{2N+M}$, we define
\be\label{Phi}\Phi(x_1,x_2,\ldots,x_{2N+M}):=\prod_{j<k}^{\mathclap{2N+M}}(x_k-x_j)^{2\alpha_j\alpha_k}.\ee
In the CFT Coulomb gas formalism, $\alpha_j$ is the charge associated with a chiral operator located at the point $x_j$, and (\ref{Phi}) is the formula (\ref{vertexcorrformula}) for the correlation function of this collection of operators.  Our strategy is to choose the $\alpha_j$ and $M$ such that for all $j\in\{1,2\ldots,2N\}$, we have
\be\label{actonthis}\Bigg[\frac{\kappa}{4}\partial_j^2+\sum_{k\neq j}^{2N}\left(\frac{\partial_k}{x_k-x_j}-\frac{(6-\kappa)/2\kappa}{(x_k-x_j)^2}\right)\Bigg]\Phi(x_1,x_2,\ldots,x_{2N+M})\,\,=\,\,\sum_{\mathclap{k=2N+1}}^{\mathclap{2N+M}}\,\,\partial_k(\,\,\ldots\,\,),\ee
where the ellipses on the right side stand for some analytic function of $x_1$, $x_2,\ldots,x_{2N+M}$.  Once we have done this, we integrate the coordinates $x_{2N+1}$, $x_{2N+2},\ldots,x_{2N+M}$ on both sides of (\ref{actonthis}) around closed, nonintersecting contours $\Gamma_1,$ $\Gamma_2,\ldots,\Gamma_M$ (such as nonintersecting Pochhammer contours).   Because either side of (\ref{actonthis}) is absolutely integrable on each path, we may perform these integrations in any order according to Fubini's theorem.  Integrating the right side of (\ref{actonthis}) therefore gives zero.  Finally, because the contours do not intersect, we have sufficient continuity to use the Leibniz rule of integration to exchange the order of differentiation and integration on the left side of (\ref{actonthis}).  (If $\Gamma_p$ intersects $\Gamma_q$ but $2\alpha_p\alpha_q>0$, then the contour integral $\oint\Phi$ is not improper.  In this event, we may still use the Leibniz rule to perform this last step as long as we may continuously deform these contours so they do not intersect.)  We therefore find that $F:=\oint\Phi$ satisfies the null-state PDEs (\ref{nullstate}).  We note that $M$ counts the number of screening charges to be used in the Coulomb gas construction (\ref{chiralrep}).  This is the plan for the proof, which we now begin.

With some algebra, we find that for any positive integer $M$, any collection of real ``conformal weights" $h_1,$ $h_2,\ldots,h_{2N+M}$ and ``charges" $\alpha_1$, $\alpha_2,\ldots,\alpha_{2N+M}$, and for each $j\in\{1,2,\ldots,2N+M\}$, we have
\begin{multline}\label{algebrach2}\Bigg[\frac{\kappa}{4}\partial_j^2\,\,+\,\,\sum_{k\neq j}^{\mathclap{2N+M}}\,\,\left(\frac{\partial_k}{x_k-x_j}-\frac{h_k}{(x_k-x_j)^2}\right)\Bigg]\Phi(x_1,x_2,\ldots,x_{2N+M})\\
=\Bigg[\sum_{\substack{k,l\neq j\\ k\neq l}}^{2N+M}\frac{\alpha_k\alpha_l(\kappa\alpha_j^2-1)}{(x_k-x_j)(x_l-x_j)}\,\,+\,\,\sum_{k\neq j}^{\mathclap{2N+M}}\frac{\alpha_j\alpha_k(\kappa\alpha_j\alpha_k-\kappa/2+2)-h_k}{(x_k-x_j)^2}\Bigg]\Phi(x_1,x_2,\ldots,x_{2N+M}).\end{multline}
We choose $h_k=(6-\kappa)/2\kappa$ for $k\in\{1,2,\ldots,2N\}$ and $h_k=1$ for $k\in\{2N+1,2N+2,\ldots,2N+M\}$. (These are the conformal weight of a one-leg boundary operator and a chiral operator with charge $\alpha^\pm$ respectively.)  With this choice, we may write (\ref{algebrach2}) as
\begin{multline}\label{totalderiv+ch2}\Bigg[\frac{\kappa}{4}\partial_j^2+\sum_{k\neq j}^{2N}\left(\frac{\partial_k}{x_k-x_j}-\frac{(6-\kappa)/2\kappa}{(x_k-x_j)^2}\right)\Bigg]\Phi(x_1,x_2,\ldots,x_{2N+M})\,\,=\,\,\sum_{\mathclap{k=2N+1}}^{\mathclap{2N+M}}\,\,\partial_k\left(-\frac{\Phi(x_1,x_2,\ldots,x_{2N+M})}{x_k-x_j}\right)\\
+\Bigg[\sum_{\substack{k,l\neq j\\ k\neq l}}^{2N+M}\frac{\alpha_k\alpha_l(\kappa\alpha_j^2-1)}{(x_k-x_j)(x_l-x_j)}\,\,+\,\,\sum_{k\neq j}^{\mathclap{2N+M}}\,\,\frac{\alpha_j\alpha_k(\kappa\alpha_j\alpha_k-\kappa/2+2)-h_k}{(x_k-x_j)^2}\Bigg]\Phi(x_1,x_2,\ldots,x_{2N+M})\end{multline}
for all $j\in\{1,2,\ldots,2N\}$. We recognize the differential operator of the $j$th null-state PDE (\ref{nullstate}) on the left side of (\ref{totalderiv+ch2}).  Now we choose a particular $j\neq2N$.  Next, if we choose the values for $\alpha_j$ and the elements of $\{\alpha_k\}_{k\neq j}$ as
\be\label{branchesch2}\alpha_j=\alpha_{1,2}^+=1/\sqrt{\kappa},\quad\alpha_k^{\pm}=\alpha_0\pm\sqrt{\alpha_0^2+h_k},\quad k\neq j,\ee
then the term in brackets on the right side of (\ref{totalderiv+ch2}) vanishes (for either choice of sign for $\alpha_k^{\pm}$), casting (\ref{totalderiv+ch2}) in the desired form (\ref{actonthis}) for this particular $j$.  

Next, we search for a choice of $\pm$ signs for the charges $\alpha_1^\pm$, $\alpha_2^\pm,\ldots,\alpha_{2N+M}^\pm$ in (\ref{branchesch2}) such that we achieve the form (\ref{actonthis}) not just for the one selected $j\in\{1,2,\ldots,2N\}$ that appears, but for all indices in this set.  We note that for $k\in\{1,2,\ldots,2N\}$, the choice $h_k=(6-\kappa)/2\kappa$ and (\ref{branchesch2}) imply $\alpha_k^{\pm}=\alpha_{1,2}^\pm$, and for $k>2N$, the choice $h_k=1$ and (\ref{branchesch2}) imply $\alpha_k^{\pm}=\alpha^\pm$.  This opens the possibility of achieving the desired form (\ref{actonthis}) for all $k\in\{1,2,\ldots,2N\}$.  We highlight two possible choices.
\begin{enumerate}
\item\label{thefirstitem} If we choose the $+$ sign for all $\alpha_j$ with $j\in\{1,2,\ldots,2N\}$, then the bracketed term on the right side of (\ref{totalderiv+ch2}) vanishes, and we have
\be\label{totalderivreg}\Bigg[\frac{\kappa}{4}\partial_j^2+\sum_{k\neq j}^{2N}\left(\frac{\partial_k}{x_k-x_j}-\frac{(6-\kappa)/2\kappa}{(x_k-x_j)^2}\right)\Bigg]\Phi(x_1,x_2,\ldots,x_{2N+M})\,\,=\,\,\sum_{\mathclap{k=2N+1}}^{\mathclap{2N+M}}\,\,\partial_k\left(-\frac{\Phi(x_1,x_2,\ldots,x_{2N+M})}{x_k-x_j}\right)\ee
for all $j\in\{1,2,\ldots,2N\}$.  Thus, we attain the desired form (\ref{actonthis}) for all $j\in\{1,2,\ldots,2N\}$.  Presently, $M$ and the signs for the $\alpha_k^\pm$ with $k\in\{2N+1,2N+2,\ldots,2N+M\}$ are still unspecified.
\item\label{theseconditem} If $M=N-1$ and we choose the $+$ sign for all $\alpha_j^\pm$ with $j\in\{1,2,\ldots,2N-1\}$, the $-$ sign for $\alpha_{2N}^\pm$, and the $-$ sign for all $\alpha_k^\pm$ with $k\in\{2N+1,2N+2,\ldots,3N-1\}$, then we have (\ref{totalderivreg}) for $j\in\{1,2,\ldots,2N-1\}$.  Thus, we attain the desired form (\ref{actonthis}) for all indices $j$ in this range.  Furthermore, J.\ Dub\'edat proved that \cite{dub}  
\begin{multline}\label{totalderivconj}\Bigg[\frac{\kappa}{4}\partial_{2N}^2+\sum_{k=1}^{2N-1}\left(\frac{\partial_k}{x_k-x_{2N}}-\frac{(6-\kappa)/2\kappa}{(x_k-x_{2N})^2}\right)\Bigg]\Phi(x_1,x_2,\ldots,x_{2N+M})\,\,=\,\,\sum_{\mathclap{k=2N+1}}^{3N-1}\partial_k\left(-\frac{\Phi(x_1,x_2,\ldots,x_{2N+M})}{x_k-x_{2N}}\right)\\
+\frac{1}{2}\sum_{k=2N+1}^{3N-1}\partial_k\Bigg[\frac{8-\kappa}{x_k-x_{2N}}\Bigg(\prod_{l=1}^{2N-1}\frac{x_k-x_l}{x_{2N}-x_l}\prod_{\substack{m=2N+1\\m\neq k}}^{3N-1}\left(\frac{x_{2N}-x_m}{x_k-x_m}\right)^2\Bigg) \Phi(x_1,x_2,\ldots,x_{2N+M})\Bigg].\end{multline}
Because the right side of (\ref{totalderivconj}) equals a sum of derivatives with respect to $x_k$ with $k\in\{2N+1,2N+2,\ldots,3N-1\}$, we attain the desired form (\ref{actonthis}) with $j=2N$ too.
\end{enumerate}
As previously discussed, the function $F:=\oint\Phi$ is annihilated by the differential operator on the left side of (\ref{totalderivreg}, \ref{totalderivconj}) for all $j\in\{1,2,\ldots,2N\}$ provided that none of the $M$ integration contours intersect, thus giving a solution of all of the null-state PDEs  (\ref{nullstate}) in either case.

In addition to satisfying the null-state PDEs (\ref{nullstate}), $F$ must also satisfy the conformal Ward identities (\ref{wardid}) too.  These identities imply that the function (with the integration contours described beneath (\ref{actonthis}))
\begin{multline}\label{Psi} G(x_1,x_2,\ldots,x_{2N})\,\,:=\,\,\prod_{\mathclap{j=1,\,\,\text{odd}}}^{2N}\,\,(x_{j+1}-x_j)^{6/\kappa-1}F(x_1,x_2,\ldots,x_{2N})\\
=\,\,\prod_{\mathclap{j=1,\,\,\text{odd}}}^{2N}\,\,(x_{j+1}-x_j)^{6/\kappa-1}\sideset{}{_{\Gamma_M}}\oint\dotsm\oint_{\Gamma_2}\sideset{}{_{\Gamma_1}}\oint\Phi(x_1,x_2,\ldots,x_{2N+M})\,{\rm d}x_{2N+1}\,{\rm d}x_{2N+2}\dotsm\,{\rm d}x_{2N+M}\qquad\end{multline}
is invariant under M\"{o}bius transformations, or equivalently, depends only on a set of $2N-3$ independent cross-ratios that we may form from $x_1,$ $x_2,\ldots,x_{2N}$ \cite{bpz, fms, henkel, florkleb}.  We choose these cross-ratios to be
\be\label{f}\lambda_i=f(x_i)\quad\text{with}\quad f(x):=\frac{(x-x_1)(x_{2N}-x_{2N-1})}{(x_{2N-1}-x_1)(x_{2N}-x)},\ee
so $\lambda_1=0<\lambda_2<\lambda_3<\ldots<\lambda_{2N-2}<\lambda_{2N-1}=1<\lambda_{2N}=\infty$.  Then this condition is equivalent to $G$ satisfying
\be\label{PsiRequirement}G(x_1,x_2,x_3,\ldots,x_{2N-2},x_{2N-1},x_{2N})=G(0,\lambda_2,\lambda_3,\ldots,\lambda_{2N-2},1,\infty).\ee

Next, we motivate a choice of $\pm$ signs for the charges $\alpha_1^\pm$, $\alpha_2^\pm,\ldots,\alpha_{2N+M}^\pm$ in (\ref{branchesch2}) for $G$ to fulfill the identity (\ref{PsiRequirement}), and then we verify that it indeed is satisfied.  We anticipate that the possible choices we find agree with those suggested in items \ref{thefirstitem} and \ref{theseconditem} above. Because the right side of (\ref{PsiRequirement}) is necessarily finite, we ignore any infinite factors  that result from setting $x_{2N}=\infty$ for now.   From (\ref{Phi}) and (\ref{Psi}), we see that for all $m\in\{1,2,\ldots,M\}$, the $m$th integral on the right side of (\ref{PsiRequirement})  has the form  
\be\label{firstint}\int\lambda_l^{\beta_1}(1-\lambda_l)^{\beta_{2N-1}}\prod_{j=2}^{2N-2}(\lambda_j-\lambda_l)^{\beta_j}\prod_{\substack{k=2N+1 \\ k\neq l}}^{2N+M}(\lambda_k-\lambda_l)^{\beta_k}\,{\rm d}\lambda_l,\quad l:=2N+m,\ee
with $\beta_k := 2 \alpha_k \alpha_l$, and the $m$th integral on the left side of (\ref{PsiRequirement}) has the form
\be\label{secondint}\int\prod_{j=1}^{2N}(x_j-x_l)^{\beta_j}\prod_{\substack{k=2N+1 \\ k\neq l}}^{2N+M}(x_k-x_l)^{\beta_k}\,{\rm d}x_l,\quad l:=2N+m.\ee
We note that the integrand of (\ref{secondint}) contains an extra factor that was dropped in (\ref{firstint}) after we sent $x_{2N}$ to infinity.  The simplest condition that is ostensibly consistent with (\ref{PsiRequirement}) is for the integrals (\ref{firstint}) and (\ref{secondint}) to be the same up to algebraic prefactors.  After the change of variables $\lambda_j=f(x_j)$, (\ref{firstint}) transforms into 
\be\label{transint}\mathcal{P}(x_1,x_2,\ldots,x_{2N})\int\prod_{j=1}^{2N-1}\left(\frac{x_j-x_l}{x_{2N}-x_l}\right)^{\beta_j}\prod_{\substack{k=2N+1 \\ k\neq l}}^{2N+M}\left(\frac{x_k-x_l}{x_{2N}-x_l}\right)^{\beta_k}\frac{{\rm d}x_l}{(x_{2N}-x_l)^2},\quad l:=2N+m,\ee
where $\mathcal{P}(x_1,x_2,\ldots,x_{2N})$ is an algebraic prefactor. To match the integral in (\ref{transint}) with that in (\ref{secondint}), we must have
\be\label{betacond}\beta_{2N}\,\,=\,\,-\sum_{\mathclap{k\neq2N, 2N+m}}\,\,\beta_k-2.\ee
That is, the sum $\sigma_m$ of the powers in (\ref{secondint}), 
\be\label{preneut}\sigma_m\,\,:=\,\,\sum_{\mathclap{k\neq 2N+m}}\,\,\beta_k\,\,=\,\,\sum_{\mathclap{k\neq 2N+m}}\,\,2\alpha_k\alpha_{2N+m}=2\alpha_{2N+m}\left(\sum_k\alpha_k-\alpha_{2N+m}\right)\ee
must equal negative two.  We recall that $\alpha_{2N+m}=\alpha^\pm$ for some choice of $\pm$ because $2N+m>2N$.  Thus, using the identities $\alpha^++\alpha^-=2\alpha_0$ and $\alpha^+\alpha^-=-1$ following from (\ref{alphapm}) and (\ref{screeningcharges}), we find that the Coulomb gas neutrality condition discussed in section \ref{CGsolnsSect} is satisfied if and only if $\sigma_m=-2$ for some $m\in\{1,2,\ldots,M\}$.
\be \label{neut}\sigma_m=2\alpha^\pm\left(\sum_k\alpha_k-\alpha^\pm\right)=-2\qquad\Longleftrightarrow\qquad\sum_k\alpha_k=2\alpha_0.\ee
This in turn implies that if $\sigma_m=-2$ for some $m\in\{1,2,\ldots,M\}$, then $\sigma_m=-2$ for all $m$ in this range.
  
Now we search for sign choices for (\ref{branchesch2}) and a value for $M$ such that the neutrality condition (\ref{neut}) is satisfied.  Without loss of generality, we write
\be\label{signchoice}\alpha_k=\begin{cases}\alpha_{1,2}^+,&1\leq k\leq p\\
\alpha_{1,2}^-,& p+1\leq k\leq 2N\\
\alpha^-,&2N+1\leq k\leq 2N+q\\
\alpha^+,& 2N+q+1\leq k\leq 2N+M\end{cases}\ee
for some $p\in\{0,1,\ldots,2N\}$ and $q\in\{0,1,\ldots,M\}$.  Letting $p':=2N-p$ and $q':=M-q$, (\ref{neut}) with (\ref{thecharges1}, \ref{thecharges2}) gives
\bea\label{sigmafirst}\sigma_m&=&\begin{cases}2\alpha^-[p\alpha_{1,2}^++p'\alpha_{1,2}^-+(q-1)\alpha^-+q'\alpha^+],&1\leq m\leq q\\ 2\alpha^+[p\alpha_{1,2}^++p'\alpha_{1,2}^-+q\alpha^-+(q'-1)\alpha^+],&q+1\leq m\leq M\end{cases}\\
\label{sigmasecond}&=&\left.\begin{cases}4\kappa^{-1}[-p+3p'+2(q-1)]-2(p'+q'),&1\leq m\leq q\\
\kappa(p'+q'-1)/2-(-p+3p'+2q),&q+1\leq m\leq M\end{cases}\right\}=-2\eea
for all $m\in\{1,2,\ldots,M\}$ and $\kappa>0$.

First, we suppose that $q=M$, so $q'=0$ and the bottom line of (\ref{sigmasecond}) gives $p'=1$ and $p=2M+1$.  Then the top line of (\ref{sigmasecond}) is also satisfied, and we see that $M=N-1$.  That is, we use the $\alpha_{1,2}^+$ charge for the points $x_1$, $x_2,\ldots,x_{2N-1}$, we use the $\alpha_{1,2}^-$ ``conjugate charge" for $x_{2N}$, we use the $\alpha^-$ screening charge for all $N-1$ integration variables, and we use no $\alpha^+$ screening charges for any integration variable.  This situation falls under item \ref{theseconditem} above.  So far, we have simply predicted a choice of $p$ and $q$ in (\ref{signchoice}) such that $\oint\Phi$ should satisfy the Ward identities (\ref{wardid}).  To prove that $\oint\Phi$ does indeed solve them if we use this choice, we show that $G$, defined in (\ref{Psi}), satisfies condition (\ref{PsiRequirement}).  We may do this by changing integration variables on the right side of (\ref{PsiRequirement}) from $\lambda_j$ to $x_j$ via $f$ in (\ref{f}) as described above, and doing some straightforward but lengthy algebra.  We omit the details.  This proves that linear combinations of the functions (\ref{CGsolns}) with $c=2N$ satisfy the system (\ref{nullstate}, \ref{wardid}).  Because the system is invariant under permutation of the points $x_1,$ $x_2,\ldots,x_{2N}$, we see that (\ref{CGsolns}), with $c$ equaling any index among $1$, $2,\ldots,2N-1$, satisfies this system too.  This proves theorem \ref{vertexop}.

Now we suppose that $q<M$ so $q'>0$.  Then the bottom line of (\ref{sigmasecond}) implies that $p'+q'-1=0$ and $-p+3p'+2q=2$. The first of these equations implies that $p'=0$ and $q'=1$, or $p=2N$ and $q=M-1$, and with these conditions, the second implies that $M=N+2$.  That is, we use the $\alpha_{1,2}^+$ charge for the points $x_1$, $x_2,\ldots,x_{2N}$, we do not use the $\alpha_{1,2}^-$ ``conjugate charge" for any of these points, we use the $\alpha^-$ screening charges for $N+1$ of the $N+2$ integration variables, and we use the $\alpha^+$ screening charges for the remaining integration variable.  This situation falls under item \ref{thefirstitem} above.  Again, it is possible to prove that $\oint\Phi$, with this choice of $p$ and $q$ for (\ref{signchoice}), satisfies the Ward identities (\ref{wardid}).

We did not pay attention to these $q<M$ solutions in this article because theorem \ref{maintheorem} renders them extraneous.  For example, if $N=1$ so $M=3$, then (writing $u_1=x_5$, $u_2=x_4$, and $u_3=x_3$)
\begin{multline}F(x_1,x_2)=(x_2-x_1)^{2/\kappa}\oint_{\Gamma_3}\oint_{\Gamma_2}\oint_{\Gamma_1} (u_3-x_1)^{-4/\kappa}(u_2-x_1)^{-4/\kappa}(u_1-x_1)(x_2-u_3)^{-4/\kappa}\\
\times(x_2-u_2)^{-4/\kappa}(x_2-u_1)(u_3-u_2)^{8/\kappa}(u_3-u_1)^{-2}(u_2-u_1)^{-2}\,{\rm d}u_1\,{\rm d}u_2\,{\rm d}u_3\end{multline}
is an element of $\mathcal{S}_1$.  By substituting $u_k(t_k)=(1-t_k)x_1+t_kx_2$ for $k=1,2,$ and 3, we may factor the dependence of the triple contour integral on $x_2-x_1$ out of the integrand to find that $F(x_1,x_2)$ is proportional to $(x_2-x_1)^{1-6/\kappa}$. Hence, $F$ is indeed an element of $\mathcal{S}_1$.  (See (\red{16}) of \cite{florkleb}.)

In this article, we restrict our attention to Coulomb gas solutions that obey the Coulomb gas neutrality condition, discussed above, because such functions manifestly satisfy the conformal Ward identities (\ref{wardid}).  Interestingly, K.\ Kyt\"ol\"a and E.\ Peltola have discovered Coulomb gas solutions that do not satisfy the neutrality condition but do satisfy the conformal Ward identities (\ref{wardid}) anyway (in addition to the usual null-state PDEs (\ref{nullstate})) \cite{kype2}.  According to theorem \ref{maintheorem}, these solutions are necessarily linear combinations of Coulomb gas solutions that do satisfy the neutrality condition.  Nonetheless, their existence is interesting.  In our nomenclature for this appendix, the solutions of K.\ Kyt\"ol\"a and E.\ Peltola fall under item \ref{thefirstitem} above, with $M=N$ and the $-$ sign chosen for each $\alpha_k^\pm$ with $k\in\{2N+1,2N+2,\ldots,2N+M\}$.


\begin{thebibliography}{99}
\bibliographystyle{unsrt}

\bibitem{florkleb} S.\ M.\ Flores and P.\ Kleban, {\it A solution space for a system of null-state partial differential equations I}, preprint: \href{http://arxiv.org/abs/1212.2301}{arXiv:1212.2301} (2012).

\bibitem{florkleb2} S.\ M.\ Flores and P.\ Kleban, {\it A solution space for a system of null-state partial differential equations II}, preprint: \href{http://arxiv.org/abs/1404.0035}{arXiv:1404.0035} (2014).

\bibitem{florkleb4} S.\ M.\ Flores and P.\ Kleban, {\it A solution space for a system of null-state partial differential equations IV}, preprint: \href{http://arxiv.org/abs/1405.2747}{arXiv:1405.2747} (2014).

\bibitem{bpz} A.\ A.\ Belavin, A.\ M.\ Polyakov, and A.\ B.\ Zamolodchikov, {\it Infinite conformal symmetry in two-dimensional quantum field theory}, Nucl.\ Phys.\ B \textbf{241} (1984), 333--380.

\bibitem{fms} P.\ Di Francesco, R.\ Mathieu, and D\,S\'{e}n\'{e}chal, {\it Conformal Field Theory}, Springer-Verlag, New York (1997).

\bibitem{henkel} M.\ Henkel, {\it Conformal Invariance and Critical Phenomena}, Springer-Verlag, Berlin Heidelberg (1999).

\bibitem{bbk} M.\ Bauer, D.\ Bernard, and K.\ Kyt\"{o}l\"{a}, {\it Multiple Schramm-L\"{o}wner evolutions and statistical mechanics martingales}, J.\ Stat.\ Phys.\ \textbf{120} (2005), 1125.

\bibitem{dub2} J.\ Dub\'{e}dat, {\it Commutation relations for SLE}, Comm.\ Pure Appl.\ Math.\ \textbf{60} (2007), 1792--1847.

\bibitem{graham} K.\ Graham, {\it On multiple Schramm-L\"owner evolutions}, J.\ Stat.\ Mech.\ (2007), P03008.

\bibitem{kl} M.\ J.\ Kozdron and G.\ Lawler, {\it The configurational measure on mutually avoiding SLE paths}, Fields Institute Communications \textbf{50} (2007), 199--224.

\bibitem{sakai} K.\ Sakai, {\it Multiple Schramm-L\"owner evolutions for conformal field theories with Lie algebra symmetries}, Nucl.\ Phys.\ B \textbf{867} (2013), 429--447.

\bibitem{bauber} M.\ Bauer and D.\ Bernard, {\it Conformal field theories of stochastic L\"owner evolutions}, Comm.\ Math.\ Phys.\ \textbf{239} (2003), 493--521.

\bibitem{dots} V.\ S.\ Dotsenko, {\it Critical behavior and associated conformal algebra of the $Z_3$ Potts model}, Nucl.\ Phys.\ B \textbf{235} (1984), 54--74.

\bibitem{gruz} I.\ A.\ Gruzberg, {\it Stochastic geometry of critical curves, Schramm-L\"owner evolutions, and conformal field theory}, J.\ Phys.\ A \textbf{39} (2006), 12601--12656.

\bibitem{rgbw} I.\ Rushkin, E.\ Bettelheim, I.\ A.\ Gruzberg, and P.\ Wiegmann, {\it Critical curves in conformally invariant statistical systems}, J.\ Phys.\ A \textbf{40} (2007), 2165--2195.

\bibitem{c3} J.\ Cardy, {\it Critical percolation in finite geometries}, J.\ Phys.\ A: Math.\ Gen.\ \textbf{25} (1992), L201--L206.

\bibitem{c1} J.\ Cardy, {\it Conformal invariance and surface critical behavior}, Nucl.\ Phys.\ B \textbf{240} (1984), 514--532.

\bibitem{grim} G.\ Grimmett,\,{\it Percolation}, Springer-Verlag, New York (1989).

\bibitem{bax} R.\ Baxter, {\it Exactly Solved Models in Statistical Mechanics}, Academic Press, Inc.\ (1982).

\bibitem{wu} F.\ Y.\ Wu, {\it The Potts model}, Rev.\ Mod.\ Phys.\ \textbf{54} (1982), 235--268.

\bibitem{fk} C.\ M.\ Fortuin and P.\ W.\ Kasteleyn, {\it On the random cluster model I.\ Introduction and relation to other models}, Physica D \textbf{57} (1972), 536--564.

\bibitem{stan} H.\ E.\ Stanley, {\it Dependence of Critical Properties on Dimensionality of Spins}, Phys.\ Rev.\ Lett.\ \textbf{20} (1968), 589--592.

\bibitem{zcs}  R.\ M.\ Ziff, P.\ T.\ Cummings, and G.\ Stell, {\it Generation of percolation cluster perimeters by a random walk}, J.\ Phys.\ A: Math.\ Gen.\ \textbf{17} (1984), 3009--3017.

\bibitem{law1} G.\ Lawler, {\it A self-avoiding walk}, Duke Math.\ J.\ \textbf{47} (1980), 655--694.

\bibitem{schrsheff} O.\ Schramm and S.\ Sheffield, {\it The harmonic explorer and its convergence to SLE4}, Ann.\ Probab.\ \textbf{33} (2005), 2127--2148.

\bibitem{weintru} A.\ Weinrib and S.\ A.\ Trugman, {\it A new kinetic walk and percolation perimeters}, Phys.\ Rev.\ B \textbf{31} (1985), 2993--2997.

\bibitem{madraslade} G.\ Madra and G.\ Slade, {\it The Self-Avoiding Walk}, Birkh\"{a}user, Boston (1996).

\bibitem{df1} V.S.\ Dotsenko and V.A.\ Fateev, {\it Conformal algebra and multipoint correlation functions in 2D statistical models}, Nucl.\ Phys.\ B \textbf{240} (1984), 312--348.

\bibitem{df2} V.S.\  Dotsenko and V.A\,Fateev, {\it Four-point correlation functions and the operator algebra in 2D conformal invariant theories with central charge $c\leq1$}, Nucl.\ Phys.\ B \textbf{251} (1985), 691--673.

\bibitem{fgg} P.\ Di Francesco, O.\ Golinelli, and E.\ Guitter, {\it Meanders and the Temperley-Lieb algebra}, Comm.\ Math.\ Phys.\ \textbf{186} (1997), 1--59.

\bibitem{fgut} P.\ Di Francesco and E.\ Guitter, {\it Geometrically constrained statistical systems on regular and random lattices: from foldings to meanders}, Physics Reports \textbf{415} (2005), 1--88.

\bibitem{difranc} P.\ Di Francesco, {\it Meander Determinants}, Comm.\ Math.\ Phys.\ \textbf{191} (1998), 543--583.

\bibitem{franc} P.\ Di Francesco, {\it Truncated Meanders}, in {\it Recent Developments in Quantum Affine Algebras and Related Topics}, eds.\ N.\ Jing and K.\ Misra, Amer.\ Math.\ Soc.\ (1999), 135--161.

\bibitem{fkz} S.\ M.\ Flores, J.\ J.\ H.\ Simmons, P.\ Kleban, and  R.\ M.\ Ziff, {\it Partition functions and crossing probabilities for critical systems inside polygons}, in preparation.

\bibitem{fsk} S.\ M.\ Flores, J.\ J.\ H.\ Simmons, and P.\ Kleban, {\it Multiple-SLE$_\kappa$ connectivity weights for rectangles, hexagons, and octagons}, in preparation.

\bibitem{kangmak} N.\ G.\ Kang and N.\ Makarov, {\it Gaussian free field and conformal field theory}, Amer.\ Math.\ Soc.\ Asterisque series (2013).

\bibitem{skfz} J.\ J.\ H.\ Simmons, P.\ Kleban, S.\ M.\ Flores, and R.\ M.\ Ziff, {\it Cluster densities at 2--D critical points in rectangular geometries}, J.\ Phys.\ A: Math.\ Theor. \textbf{44} (2011), 385002.

\bibitem{pinchpt} S.\ M.\ Flores, P.\ Kleban, and R.\ M.\ Ziff, {\it Cluster pinch--point densities in polygons},  J.\ Phys.\ A: Math.\ Theor.\ \textbf{45} (2012), 505002.  

\bibitem{js} J.\ J.\ H.\ Simmons, {\it Logarithmic operator intervals in the boundary theory of critical percolation}, J.\ Phys.\ A: Math.\ Theor.\ \textbf{46} (2013), 494015.

\bibitem{kype2} K.\ Kyt\"ol\"a and E.\ Peltola, {\it Conformally covariant boundary correlation functions with a quantum group}, preprint: \href{http://arxiv.org/abs/1408.1384}{arXiv.org/abs/1408.1384} (2014).

\bibitem{dub} J.\ Dub\'{e}dat, {\it Euler integrals for commuting SLEs}, J.\ Stat.\ Phys.\ \textbf{123} (2006), 1183--1218.

\bibitem{kod} K.\ Kodaira, {\it Introduction to Complex Analysis},  Cambridge University Press (1984).

\bibitem{witt} E.\ T.\ Whittaker and G.\ N.\ Watson, {\it A Course of Modern Analysis}, Cambridge University Press (1940).

\bibitem{smir4} S.\ Smirnov, {\it Towards conformal invariance of 2D lattice models}, Proc.\ Int.\ Congr.\ Math.\ \textbf{2} (2006),1421--1451.

\bibitem{smir} H.\ Duminil-Copin and S.\ Smirnov, {\it Conformal invariance of lattice models}, in {\it Probability and Statistical Physics in Two and More Dimensions}, eds.\ D.\ Ellwood, C.\ Newman, V.\ Sidoravicius, and W.\ Werner, Clay Mathematics Proceedings \textbf{15} (2012), 213--276.

\bibitem{tl} N.\ Temperley and E.\ Lieb, {\it Relations between the `percolation' and `colouring' problem and other graph-theoretical problems associated with regular planar lattices: some exact results for the `percolation' problem}, Proc.\ R.\ Soc.\ A \textbf{322} (1971), 251--280.

\bibitem{kype} K.\ Kyt\"ol\"a and E.\ Peltola, {\it Pure geometries of multiple SLEs}, in preparation.

\end{thebibliography}
\end{document}